\newcommand{\version}{\today}
\numberwithin{equation}{section}
\newcommand{\bdm}{\begin{displaymath}}
\newcommand{\edm}{\end{displaymath}}
\newcommand{\bdn}{\begin{eqnarray}}
\newcommand{\edn}{\end{eqnarray}}
\newcommand{\bay}{\begin{array}{c}}
\newcommand{\eay}{\end{array}}
\newcommand{\ben}{\begin{enumerate}}
\newcommand{\een}{\end{enumerate}}
\newcommand{\beq}{\begin{equation}}
\newcommand{\eeq}{\end{equation}}
\newcommand{\bml}[1]{\begin{multline} #1 \end{multline}}
\newcommand{\bmln}[1]{\begin{multline*} #1 \end{multline*}}
\newcommand{\disp}{\displaystyle}
\newcommand{\tx}{\textstyle}
\newcommand{\lf}{\left}
\newcommand{\ri}{\right}
\newcommand{\bra}[1]{\lf\langle #1\ri|}
\newcommand{\ket}[1]{\lf|#1 \ri\rangle}
\newcommand{\braket}[2]{\lf\langle #1|#2 \ri\rangle}
\newcommand{\xv}{\mathbf{x}}
\newcommand{\rv}{\mathbf{r}}
\newcommand{\sv}{\mathbf{s}}
\newcommand{\aavo}{\mathbf{A}_{\omega}}
\newcommand{\aavoo}{\mathbf{A}_{\Omega}}
\newcommand{\taavoo}{\tilde{\mathbf{A}}}
\newcommand{\diff}{\mathrm{d}}
\newcommand{\eps}{\varepsilon}
\newcommand{\ete}{\eta_{\varepsilon}}
\newcommand{\delte}{\delta_{\eps}}
\newcommand{\omegac}{\omega_{\mathrm{c}}}
\newcommand{\Omegac}{\Omega_{\mathrm{c}}}
\newcommand{\Ao}{\A_{\omega}}
\newcommand{\Aoo}{\A_{\Omega}}
\newcommand{\Ofirst}{\Omega_{\mathrm{c_1}}}
\newcommand{\Osec}{\Omega_{\mathrm{c_2}}}
\newcommand{\Othird}{\Omega_{\mathrm{c_3}}}
\newcommand{\ba}{\mathcal{B}}
\newcommand{\fsup}{f_{\mathrm{sup}}}
\newcommand{\fsub}{f_{\mathrm{sub}}}
\newcommand{\gpf}{\mathcal{E}^{\mathrm{GP}}}
\newcommand{\gpe}{E^{\mathrm{GP}}}
\newcommand{\gpm}{\psi^{\mathrm{GP}}}
\newcommand{\chemGP}{\mu^{\mathrm{GP}}}
\newcommand{\gpd}{\rho^{\mathrm{GP}}}
\newcommand{\tgpd}{\tilde\rho^{\mathrm{GP}}}
\newcommand{\gpdom}{\mathscr{D}^{\mathrm{GP}}}
\newcommand{\ggpf}{\mathcal{E}^{\mathrm{GP}}_{\mathrm{phys}}}
\newcommand{\gpfo}{\gpf_{\omega}}
\newcommand{\gpeo}{E^{\mathrm{GP}}_{\omega}}
\newcommand{\gpmo}{\psi^{\mathrm{GP}}_{\omega}}
\newcommand{\chemo}{\mu^{\mathrm{GP}}_{\omega}}
\newcommand{\gpfoo}{\gpf_{\Omega}}
\newcommand{\gpeoo}{E^{\mathrm{GP}}_{\Omega}}
\newcommand{\gpmoo}{\psi^{\mathrm{GP}}_{\Omega}}
\newcommand{\chemoo}{\mu^{\mathrm{GP}}_{\Omega}}
\newcommand{\hgpdom}{\hat{\mathscr{D}}^{\mathrm{GP}}}
\newcommand{\hgpfo}{\hat{\mathcal{E}}^{\mathrm{GP}}_{\omega}}
\newcommand{\hgpeo}{\hat{E}^{\mathrm{GP}}_{\omega}}
\newcommand{\hchemo}{\hat{\mu}_{\omega}^{\mathrm{GP}}}
\newcommand{\hgpfoo}{\hat{\mathcal{E}}_{\Omega}^{\mathrm{GP}}}
\newcommand{\hgpeoo}{\hat{E}_{\Omega}^{\mathrm{GP}}}
\newcommand{\hchemoo}{\hat{\mu}_{\Omega}^{\mathrm{GP}}}
\newcommand{\gvf}{\mathcal{E}^{\mathrm{gv}}}
\newcommand{\gve}{E^{\mathrm{gv}}}
\newcommand{\gvm}{g_{\mathrm{gv}}}
\newcommand{\gvchem}{\mu^{\mathrm{gv}}}
\newcommand{\gvdom}{\mathscr{D}^{\mathrm{gv}}}
\newcommand{\tgvm}{\tilde{g}_{\mathrm{gv}}}
\newcommand{\gvfe}{\mathcal{E}^{\mathrm{gv}}_{\ete}}
\newcommand{\gvee}{E^{\mathrm{gv}}_{\ete}}
\newcommand{\gvme}{g_{\ete}}
\newcommand{\gvcheme}{\mu^{\mathrm{gv}}_{\ete}}
\newcommand{\gvdome}{\mathscr{D}^{\mathrm{gv}}_{\ete}}
\newcommand{\gvmaxe}{x_{\mathrm{max}}}
\newcommand{\tgvme}{\tilde{g}_{\ete}}
\newcommand{\gvmet}{\tilde{g}_{\ete}}
\newcommand{\gveet}{\tilde{E}^{\mathrm{gv}}_{\ete}}
\newcommand{\gvfet}{\tilde{\mathcal{E}}^{\mathrm{gv}}_{\ete}}
\newcommand{\Ee}{\mathcal{E}_{\ete}}
\newcommand{\Fe}{\mathcal{F}_{\ete}}
\newcommand{\Es}{\mathcal{E}_{\sqrt{\ete}}}
\newcommand{\gsym}{g_{\mathrm{sym}}}
\newcommand{\esym}{E^{\mathrm{sym}}}
\newcommand{\esymf}{\mathcal{E}^{\mathrm{sym}}}
\newcommand{\domsym}{\mathscr{D}_{\mathrm{sym}}}
\newcommand{\chemsym}{\mu^{\mathrm{sym}}}
\newcommand{\gsymt}{\tilde{g}_{\mathrm{sym}}}
\newcommand{\esymt}{\tilde{E}^{\mathrm{sym}}}
\newcommand{\esymft}{\tilde{\mathcal{E}}^{\mathrm{sym}}}
\newcommand{\hosc}{H_{\mathrm{osc}}}
\newcommand{\gosc}{g_{\mathrm{osc}}}
\newcommand{\curl}{{\rm curl}}
\newcommand{\set}{\mathcal{S}}
\newcommand{\tff}{\mathcal{E}^{\mathrm{TF}}}
\newcommand{\tfe}{E^{\mathrm{TF}}}
\newcommand{\tfm}{\rho^{\mathrm{TF}}}
\newcommand{\tfdom}{\mathscr{D}^{\mathrm{TF}}}
\newcommand{\tffo}{\mathcal{E}_{\omega}^{\mathrm{TF}}}
\newcommand{\tfeo}{E_{\omega}^{\mathrm{TF}}}
\newcommand{\tfmo}{\rho_{\omega}^{\mathrm{TF}}}
\newcommand{\tfchemo}{\mu^{\mathrm{TF}}_{\omega}}
\newcommand{\tfsuppo}{\mathrm{supp}\lf(\tfmo\ri)}
\newcommand{\tffoo}{\mathcal{E}_{\Omega}^{\mathrm{TF}}}
\newcommand{\tfeoo}{E_{\Omega}^{\mathrm{TF}}}
\newcommand{\tfmoo}{\rho_{\Omega}^{\mathrm{TF}}}
\newcommand{\tfchemoo}{\mu^{\mathrm{TF}}_{\Omega}}
\newcommand{\tfsuppoo}{\mathrm{supp}\lf(\tfmoo\ri)}
\newcommand{\xh}{x_{\mathrm{hole}}}
\newcommand{\rmax}{R_{\mathrm{m}}}
\newcommand{\reps}{R_{\eps}}
\newcommand{\xin}{x_{\mathrm{in}}}
\newcommand{\xout}{x_{\mathrm{out}}}
\newcommand{\xinout}{x_{\mathrm{in}/\mathrm{out}}}
\newcommand{\lamsym}{\lambda^{\rm  sym}}
\newcommand{\hsym}{H^{\rm sym}}
\newcommand{\ab}{\mathcal{A}_{\mathrm{bulk}}}
\newcommand{\anne}{\mathcal{A}_{\ete}}
\newcommand{\annet}{\tilde{\mathcal{A}}_{\ete}}
\newcommand{\anns}{\mathcal{A}_{\sqrt{\ete}}}
\newcommand{\tfsupp}{\mathrm{supp}\lf(\tfm\ri)}
\newcommand{\trial}{\psi_{\mathrm{trial}}}
\newcommand{\latt}{\mathcal{L}}
\newcommand{\cell}{\mathcal{Q}}
\newcommand{\celli}{\mathcal{Q}^i_{}}
\newcommand{\tcelli}{\tilde{\mathcal{Q}}_i}
\newcommand{\magnp}{\mathbf{A}_{\mathrm{rot}}}
\newcommand{\rmagnp}{\tilde{\mathbf{B}}}
\newcommand{\ftrial}{f_{\mathrm{trial}}}
\newcommand{\trialu}{u_{\mathrm{trial}}}
\newcommand{\half}{\hbox{$\frac12$}}
\newcommand{\salf}{\hbox{$\frac{1}{s}$}}
\newcommand{\hex}{h_{\mathrm{ex}}}
\newcommand{\Orot}{\Omega_{\mathrm{rot}}}
\newcommand{\Orotv}{{\bm \Omega}_{\mathrm{rot}}}
\newcommand{\Oeff}{\Omega_{\mathrm{eff}}}
\newcommand{\Oosc}{\Omega_{\mathrm{osc}}}
\newcommand{\into}{\lfloor \Omega \rfloor}
\newcommand{\Z}{\mathbb{Z}}
\newcommand{\R}{\mathbb{R}}
\newcommand{\N}{\mathbb{N}}
\newcommand{\D}{\mathcal{D}}
\newcommand{\F}{\mathcal{F}}
\newcommand{\A}{\mathcal{A}}
\newcommand{\B}{\mathcal{B}}
\newcommand{\Q}{\mathcal{Q}}
\newcommand{\OO}{\mathcal{O}}
\newcommand{\al}{\alpha}
\newcommand{\ep}{\varepsilon}
\newcommand{\Om}{\Omega}
\newcommand{\om}{\omega}
\newcommand{\dd}{\partial}
\newcommand{\muv}{\mu}
\newcommand{\nablap}{\nabla^{\perp}}
\newcommand{\At}{\tilde{\A}}
\newcommand{\Ac}{\A_{c}}
\newtheorem{teo}{Theorem}[section]
\newtheorem{lem}{Lemma}[section]
\newtheorem{pro}{Proposition}[section]
\newtheorem{cor}{Corollary}[section]
\newcounter{remark}[section]
\newenvironment{rem}{\stepcounter{remark} \vspace{0,1cm} \noindent \textit{Remark \thesection.\theremark}\,}{\vspace{0,2cm}}
\begin{document}

\markboth{\scriptsize{Critical Speeds for a BEC -- CPRY -- \version}}{\scriptsize{Critical Speeds in the GP Theory -- CPRY  -- \version}}

\title{Critical Rotational Speeds for Superfluids in  Homogeneous Traps}
%{}
\author{M. Correggi${}^{a}$, F. Pinsker${}^{b}$, N. Rougerie${}^{c}$, J. Yngvason${}^{d,e}$	\\
	\mbox{}	\\
	\normalsize\it ${}^{a}$ Dipartimento di Matematica, Universit\`{a} degli Studi Roma Tre,	\\
	\normalsize\it L.go S. Leonardo Murialdo 1, 00146, Rome, Italy.	\\
	\normalsize\it ${}^{b}$ DAMTP, University of Cambridge, Wilbertforce Road, Cambridge CB3 0WA, United Kingdom.\\
	\normalsize\it ${}^{c}$ Universit\'e Grenoble 1 and CNRS, LPMMC, UMR 5493, BP 166, 38042 Grenoble, France.\\
	\normalsize\it ${}^{d}$ Fakult\"at f\"ur Physik, Universit{\"a}t Wien, Boltzmanngasse 5, 1090 Vienna, Austria.	\\
	\normalsize\it ${}^{e}$ Erwin Schr{\"o}dinger Institute for Mathematical Physics, Boltzmanngasse 9, 1090 Vienna, Austria.}	
\date{\version}

\maketitle
\centerline{\it Dedicated to Elliott H. Lieb on the occasion of his 80th birthday}
\begin{abstract} 
	We present an asymptotic analysis of the effects of rapid rotation on the ground state properties of a superfluid confined in a two-dimensional trap. The trapping potential is assumed to be radial and homogeneous of degree larger than two in addition to a quadratic term. Three critical rotational velocities are identified, marking respectively the first appearance of vortices, the creation of a `hole' of low density within a vortex lattice, and the emergence of a giant vortex state free of vortices in the bulk. These phenomena have previously been established rigorously for a `flat' trap with fixed boundary but the `soft' traps considered in the present paper exhibit some significant differences, in particular  the giant vortex regime, that necessitate a new approach. These differences concern both the shape of the bulk profile and the size of vortices relative to the width of the annulus where the bulk of the superfluid resides. Close to the giant vortex transition the profile is of Thomas-Fermi type in `flat' traps, whereas it is gaussian for soft traps, and the `last' vortices to survive in the bulk before the giant vortex transition are small relative to the width of the annulus in the former case but of comparable size in the latter.

	\vspace{0,2cm}

	MSC: 35Q55,47J30,76M23. PACS: 03.75.Hh, 47.32.-y, 47.37.+q.
	\vspace{0,2cm}
	
	Keywords: Bose-Einstein Condensates, Superfluidity, Vortices, Giant Vortex.
\end{abstract}

\tableofcontents

\section{Introduction and Main Results}

Since their first experimental realization in 1995, atomic Bose-Einstein condensates (BECs) have become a  subject of tremendous interest, 
stimulating intense theoretical activity. 
 The BE-condensed alkali vapors that are nowadays produced in many laboratories offer a spectacular level of tunability 
for several experimental parameters, making them a favorite testing ground for many intriguing quantum phenomena. 
Among these is  \emph{superfluidity}, i.e., the occurrence of frictionless flow, previously observed in 
liquid helium.  Atomic BECs offer a valuable alternative to the latter system, since 
experiments with dilute gases of ultracold atoms allow to test theories of superfluidity in much more detail.

From a theoretical point of view an appealing aspect of the field is its sound mathematical foundation. The most commonly used 
model for the description of BECs, the so-called Gross-Pitaevskii (GP) theory, is now supported both by
extensive comparisons with experiments \cite{A,Fe1} and a solid mathematical basis \cite{LSSY}. Indeed, substantial advances on the connection between GP theory and many-body quantum physics have been made in recent years, leading 
to a rigorous derivation of both the stationary \cite{LSY1,LiS,BCPY} and dynamic aspects \cite{ESY1,ESY2,P} of the theory.

One of the most striking features of superfluids is their response to rotation of the confining trap. Typical experiments (see \cite{Fe1} for further references) start by cooling 
a Bose gas in a magneto-optical trap below the critical temperature for Bose-Einstein condensation. If the trap is set in rotational motion the gas stays to begin with at rest in the inertial frame, but if the rotation speed exceeds some critical value (called $\Ofirst$ below), vortices 
are nucleated and remain stable over a long  time span. This behavior is in strong contrast with that of a classical fluid,  which in the stationary state rotates like a rigid body and thus remains at rest in the rotating frame.

In a rotationally symmetric trap, the main two experimentally tunable parameters are the rotation speed $\Orot$ and the strength 
of interparticle interactions,  written as $ 1/\eps^{2}$ with $\ep > 0$ in the sequel. The ground state of the system strongly 
depends on the relation between these two parameters.

In this paper we  study the ground state of a rotating Bose gas in the framework of the two-dimensional GP theory\footnote{Such a description is justified if the trap almost confines the gas on a plain orthogonal to the axis of rotation or, on the contrary, the trap is very elongated along the axis, in which case the behavior can be expected to be essentially independent of the coordinate in that direction.}. We consider the minimization, under a unit mass constraint, of the GP functional\footnote{The subscript `phys' stands for physical, i.e., this is the functional in the original physical variables, as opposed to the rescaled functionals introduced in Sect.\ 1.1 below. Units have been chosen such that $\bar h$ as well as the mass are equal to 1.}
\beq
\label{gGPf -}
	\ggpf[\Psi] = \int_{\mathbb R^2} {\diff} \rv \: \lf\{ \half\lf|\lf(\nabla - i \magnp\ri) \Psi \ri|^2 + 
\lf( V(r) - \half \Omega_{\rm rot}^2 r^2 \ri) |\Psi|^2 + \frac{|\Psi|^4}{\eps^2} \ri\},
\eeq
with
\beq
	\magnp : = \Orotv \wedge \rv= \Orot r \mathbf{e}_{\vartheta},
\eeq
where  $ \Orotv$ is the rotational velocity,  $r=|\mathbf r|$ with $\mathbf r\in\mathbb R^2$ is the distance from the rotation axis, and $ \mathbf{e}_{\vartheta} $ stands for the unit vector in the direction transverse to $\mathbf r$.
The confinement is provided by a  potential of the form  
\beq
	\label{potential -}
	V(r) : = k  r^s + \half \Oosc^2 r^2
\eeq
with $k>0$. We restrict to the case $2<s<\infty $ (anharmonic, \lq soft' potentials) and shall mainly study the Thomas-Fermi (TF) limit $\ep \to 0$. 
The case $s=2$ is special in many respects because the potential $-\Orot ^2 r ^2$
due to the centrifugal force is also a quadratic function of $r$. As a consequence an upper bound is set to 
the allowed values of the rotational speed and different physics is expected in the regime where the centrifugal force 
nearly compensates the trapping force (see \cite{A} and references therein, \cite{LSY2} and \cite{LeS}). 

We have studied before \cite{CDY1,CRY,CPRY1,R2} the case of a flat trap with the unit disc as boundary.   It corresponds formally to the potential
\beq \label{flat trap}
	V(r) = 
	\begin{cases}
		0,		&	\mbox{if} \:\: r \leq 1,	\\
		+\infty,	&	\mbox{if} \:\: r > 1,
	\end{cases}
\eeq
which amounts to restricting the integration domain to the unit disc. This model can formally be regarded as the $s\to +\infty$ 
limit of the theory with the trap \eqref{potential -} but we emphasize that this statement has to be taken with care. Indeed, as revealed 
by the analysis of the present paper  (see also \cite{CPRY2}, Sect.\ III) the limit $s\to +\infty$ cannot be interchanged with the TF limit $\ep \to 0$, 
which is the basis of most rigorous treatments of the GP theory, including the present one. 

Provided the  confining potential is stronger than quadratic, i.e., $s>2$, one can distinguish three critical speeds at which major phase transitions occur 
in the BEC (references in the literature concerning these phenomena include \cite{CD,Fe2,FB,FJS,FZ,KTU,KB,KF}):
\begin{itemize}
\item When $\Orot < \Ofirst$ the condensate is vortex-free. Vortices start to appear when the rotation speed exceeds $\Ofirst$ 
and form regular patterns. An adaptation of the methods of \cite{AAB,IM1,IM2,AJR}  shows that\footnote{Note that in the cited papers the GP functional is rescaled from the outset,  as in Eq.\ \eqref{GPfo} below, and in the rescaled variables the critical velocity is $\omega_{c1}\propto |\log \eps|$.}
$\Ofirst \propto \eps^{4/(s+2)}|\log\eps|$. This is valid for all traps, including the harmonic trap $s=2$ and the flat trap, since the centrifugal force is negligible in this regime.
\item When $\Ofirst \ll \Orot \ll \Osec$ vortices are densely packed and uniformly distributed in the condensate. It is observed both numerically 
and experimentally that they arrange themselves into triangular arrays (Abrikosov lattices with hexagonal unit cells). It remains an open problem to provide 
a rigorous proof of this fact but a heuristic argument based on an electrostatic analogy and Newton's Theorem can be found in \cite{CY} and \cite{SS3} contains a rigorous proof for a simplified model. The distribution of vorticity has been shown to be uniform in this regime for the flat trap case both 
with Neumann \cite{CY} and Dirichlet boundary conditions \cite{CPRY1}. In this paper we adapt {\rm this} result to the case of soft potentials. 
\item When $\Orot$ reaches the second critical speed $\Osec$, the centrifugal force dips a hole  with strongly depleted matter  
density around the center of the trap. Vortices are however still uniformly distributed in the annular bulk where the mass 
of the condensate resides. In the flat trap case the second critical speed is of order $\ep ^{-1}$ \cite{CDY1, CRY,CPRY1}. As shown in \cite{CDY2} and discussed further in Section \ref{sec:vortex lattice} below, this order is changed in the case of a soft potential \eqref{potential -} to $\ep ^{\frac{4}{s+2}-1}$, provided $s>2$. Notice that in the limit $s\to \infty$ we recover the result for the flat trap case, despite the subtlety of the exchange of limits  (see below).
Here the behavior of the gas  differs markedly from that in a purely harmonic trap where centrifugal effects start to 
compensate the trapping potential in the regime $\Orot = \OO (1)$. In this case the state of the condensate is well approximated by a wave function in the lowest Landau level  of the magnetic Hamiltonian corresponding to the first term in \eqref{gGPf -}
\cite{ABN,AB,LSY2} 
and the GP theory ultimately loses its validity \cite{LeS}. 
\item When $\Orot$ is further increased above a third critical speed $\Othird$, a new phase transition occurs where 
vortices are expelled from the bulk of the condensate. All the vorticity resides in the central hole created by the centrifugal force, 
resulting in a \emph{giant vortex phase}.  One of the main results of this paper is that, for a potential of the form \eqref{potential -} with $s>2$,
 $\Othird$ is of order $\ep ^{-4\frac{s-2}{s+2}}$. This converges to $\ep ^{-4}$ when $s\to +\infty$, which is significantly larger than the value of $\Othird$ in the flat trap case. The latter is,  to leading order in $\ep$,  given by $(2/ (3\pi))\,( \ep ^2 |\log \ep|) ^{-1}$ \cite{CRY,R2}. The limits $\ep \to 0$ and $s\to +\infty$ can thus {\it not} be interchanged in this regime. As we explain further below, this is due to  significantly  different physical properties of the condensate in the two kinds of traps  close to the  transition regime.
 \item When $\Orot \gg \Othird$, the BEC is in a giant vortex phase and the energy is given to an excellent approximation by a trial function with a single multiple vortex at the origin (of degree $\Omega_{\rm rot}$ to leading approximation),  accounting for the macroscopic circulation. Nevertheless we prove that rotational symmetry is  still broken 
in any true ground state, namely that no ground state can be an eigenfunction of angular momentum.
\end{itemize}

The TF approximation in the regime $\Ofirst \ll \Orot \ll \Othird$ was studied in \cite{CDY2} where a preliminary upper bound for the vortex contribution was given. Precise upper and lower bounds are provided in Theorem 1.1 below.
The case of large rotation speeds with \emph{fixed} coupling constant $\ep ^{-2}$ has been considered before \cite{R1} 
for the special case of a quartic potential (i.e., $s=4$). The present contribution improves the results of that paper in several directions. 
In particular, although the appearance of a giant vortex state has been proved in \cite{R1}, no rigorous estimate of the third critical speed could  be provided. Such an estimate is given in Theorem 1.3 below.

In the remaining subsections of this introduction we define our setting more precisely and state our main results. We start by discussing the 
scalings we shall use to study the functional \eqref{gGPf -}. We then state our results about the regimes $\Ofirst \ll \Orot \ll \Othird$ and 
$\Orot \propto \Othird$ in two different subsections and finally present our symmetry breaking theorem.

For a concise discussion of our results and methods  the reader is referred to the companion paper \cite{CPRY2}.  

\subsection{Scalings of the GP Functional}

In this paper we focus on the analysis of the minimization of the GP energy functional \eqref{gGPf -}
%given by
%\beq
%	\label{gGPf -}
%	\ggpf[\Psi] = \int_{\mathbb R^2} {\diff} \rv \: \lf\{ \half\lf|\lf(\nabla - i \magnp \ri) \Psi \ri|^2 + \lf( V(r) - \half \Omega_{\rm rot}^2 r^2 \ri) |\Psi|^2 + \frac{|\Psi|^4}{\eps^2} \ri\},
%\eeq	
%which describes in appropriate units a two-dimensional rotating Bose-Einstein condensate confined by a homogeneous potential in the rotating frame. 
%The ground state energy of the system is obtained by minimizing the energy functional \eqref{gGPf -}
under the mass constraint
\[
	\lf\| \Psi \ri\|_2^2 : = \int_{\R^2} \diff \rv \: \left| \Psi \right|^2 = 1.
\]
%We refer to the companion paper \cite{CPRY2} for a detailed discussion of the physical meaning of the quantities involved. Here we only recall that the vector potential is
%\beq
%	\magnp : = \Orotv \wedge \rv= \Orot r \mathbf{e}_{\vartheta},
%\eeq
%where $ \mathbf{e}_{\vartheta} $ stands for the unit vector in the transverse direction, 
The trapping is given by the potential
 \eqref{potential -}
%
%\beq
%	\label{potential}
%	V(r) : = k  r^s + \Oosc^2 r^2,
%\eeq
with $ \Oosc < \Orot $, and $ 0 < \eps \ll 1 $, i.e., we study the TF limit of the model. The power $ s $ in \eqref{potential -}  characterizes the homogeneous trap and we assume that $ 2<s < \infty $. We also introduce the parameter
\beq
	\label{Oeff}
	\Oeff : = \sqrt{\Orot^2 - \Oosc^2},
\eeq
so that the effective potential in \eqref{gGPf -} can be written in the form
\beq
	\label{eff potential}
	V(r) - \half \Orot^2 r^2 = k r^s - \half \Oeff r^2.
\eeq
Since we are interested in exploring the rotation regime $ \Orot \to \infty $ but  want also to keep track of the effect of the quadratic term in the expression above  we shall assume that 
\beq
	\label{oeff cond}
	\Oeff^2 = \gamma \Orot^2
\eeq
for some given $ 0 < \gamma \leq 1 $. 

Since the profile of a minimizer of \eqref{gGPf -} expands if $ \Orot \to \infty $ and/or $ \eps \to 0 $ it is natural to rescale all lengths as well as the rotational velocity. In fact it is convenient to employ two different scalings  in different asymptotic regimes: If 
\beq
	\label{subcritical}
	\Orot \lesssim \eps^{-\frac{s-2}{s+2}},
\eeq
 we  define $\xv \in \R ^2,\: x = |\xv|$, $\psi(\xv)$, $\omega$ and $\aavo$ by writing 
\beq
 	\label{rescaling 1}
	\rv =  \reps \xv,	\hspace{1cm}	\Psi(\rv) =  \reps^{-1} \psi(\xv),	\hspace{1cm}	\Orot =  \reps^{-2} \omega,	%\hspace{1cm}	%\omega : = \eps \Omega,	
	\hspace{1cm}	\aavo  = \omega  x \mathbf{e}_{\vartheta},	
\eeq
where
\beq
	\label{reps}
	\reps : = \lf( k \eps^{2} \ri)^{-\frac{1}{s+2}} = \OO\lf(\eps^{-\frac{2}{s+2}}\ri).
\eeq
With these scalings the functional \eqref{gGPf -}
 can be rewritten as
\beq	
	\label{geno rel}
	\ggpf[\Psi] = \reps^{-2} \gpfo[\psi],
\eeq
with
\beq
	\label{GPfo}
	\gpfo[\psi] : = \int_{\R^2} \diff \xv \lf\{ \half \lf| \lf( \nabla - i\aavo \ri) \psi \ri|^2 + \eps^{-2} \lf[ x^s \lf| \psi \ri|^2 - \half \gamma \eps^2 \omega^2 x^2 |\psi|^2 + |\psi|^{4} \ri] \ri\}.
\eeq
If  on the other hand
\beq
	\label{supercritical}
	\Orot \gtrsim \eps^{-\frac{s-2}{s+2}},
\eeq
a natural scaling parameter is given by the position $ \rmax $ of the unique minimum point of the effective potential \eqref{eff potential}, which is explicitly given by
\beq
	\label{rm}
	\rmax : = \lf( \frac{\gamma \Orot^2}{s k} \ri)^{\frac{1}{s-2}} = \OO \lf(\Orot^{\frac{2}{s-2}}\ri).
\eeq
 Writing now
\beq
 	\label{rescaling 2}
	\rv =  \rmax \xv,	\hspace{1cm}	\Psi(\rv) = \rmax^{-1} \psi(\xv),	\hspace{1cm}	\Orot =  \rmax^{-2} \Omega,	\hspace{1cm}	\aavoo  = \Omega x \mathbf{e}_{\vartheta},
\eeq
the GP energy functional \eqref{gGPf -}
 becomes
\beq	
	\ggpf[\Psi] = \rmax^{-2} \lf[ \gpfoo[\psi]  + \lf( \salf - \half \ri) \gamma \Omega^2 \ri],
\eeq
with
\beq
	\label{GPfoo}
	\gpfoo[\psi] : = \int_{\R^2} \diff \xv \lf\{ \half \lf| \lf( \nabla - i \aavoo \ri) \psi \ri|^2 + \gamma \Omega^2 W(x) |\psi|^2 + \eps^{-2} |\psi|^{4} \ri\},
\eeq
where the effective potential is given by
\beq
	\label{W}
	W(x) : = \frac{x^s - 1}{s} - \frac{x^2 - 1}{2}.
\eeq
Note that for convenience we have extracted a negative additive constant in \eqref{geno rel} exploiting the $L^2$-normalization of both $ \Psi $ and $ \psi $, in order to obtain a positive trapping potential $ W \geq 0 $ with
\beq
	\inf_{x \in \R^+} W(x) = W(1) = 0.
\eeq
Note also that 
\beq
\Omega_{\rm rot}  =(k \eps^2)^{\frac{2}{s+2}} \omega = (sk/\gamma)^{\frac{2}{s+2}}\Omega^{\frac{s-2}{s+2}}
\eeq
and 
the two rescaled velocities, $\omega$ and $\Omega$,  are related by
\beq
	\label{scaling relation}
\omega=(s/\gamma)^{\frac{2}{s+2}} \eps^{-\frac{4}{s+2}}\Omega^{\frac{s-2}{s+2}}.
\eeq
For $s\to\infty$, i.e., a flat trap, $\omega=\Omega=\Omega_{\rm rot}$. 

To find the ground state energy and wave function of the system, we minimize the functionals \eqref{GPfo} or \eqref{GPfoo} on the domain 
\beq
	\label{minimization dom}
	\gpdom : = \lf\{ \psi \in H^1(\R^2) : \: \lf\| \psi \ri\|_2 = 1 \ri\}.
\eeq
We   introduce the notation
\beq
	\label{GPe} 
	\gpeo : = \inf_{\psi \in \gpdom} \gpfo[\psi],	\hspace{1cm}	\gpeoo : = \inf_{\psi \in \gpdom} \gpfoo[\psi],
\eeq
for the ground state energies.  Any corresponding minimizer will be denoted by $ \gpmo $ or $ \gpmoo $ respectively. The GP minimizer is 
in general not unique because vortices can break the rotational symmetry (see Section \ref{sec: introsym}) but any minimizer satisfies 
in $ \R^2 $ a variational equation (GP equation) which is in the case of \eqref{GPfo}
\beq
	\label{GP variationalo}
	- \half \Delta \gpmo - \omega L_z \gpmo + \eps^{-2} \lf[ r^s + \half (1-\gamma) \eps^2 \omega^2 x^2 + 2  \lf| \gpmo \ri|^2 \ri] \gpmo  = \chemo \gpmo,
\eeq
whereas for \eqref{GPfoo} it becomes
\beq
	\label{GP variationaloo}
	- \half \Delta \gpmoo - \Omega L_z \gpmoo + \half \Omega^2 x^2 \gpmoo + \gamma \Omega^2 W(x) \gpmoo + 2 \eps^{-2} \lf| \gpmoo \ri|^2 \gpmoo = \chemoo \gpmoo.
\eeq
Here $ L_z = - i \partial_{\vartheta} $ stands for the $z$-component of the angular momentum and the chemical potentials $ \chemo $ and $ \chemoo $ are fixed by imposing the $L^2$-normalization condition on $ \gpmo $ and $ \gpmoo $ respectively  and are given by
\beq
	\label{chem}
	\chemo = \gpeo + \eps^{-2} \int_{\R^2} \diff \xv \: \lf| \gpmo \ri|^4,	\hspace{1cm}	\chemoo  = \gpeoo + \eps^{-2} \int_{\R^2} \diff \xv \: \lf| \gpmoo \ri|^4.
\eeq

The functional \eqref{GPfo} is used to analyze the behavior of the superfluid when $\Ofirst \ll \Orot \ll \Osec$, which corresponds to $|\log \ep |\ll \om \ll \ep ^{-1}$ whereas the scaling leading to \eqref{GPfoo} is more suited to the regime $\Orot \gg \Osec $, corresponding to $\ep ^{-1} \ll \Om$. The transition to a giant vortex phase occurs when $\Om \propto \ep ^{-4}$. 

For the sake of simplicity we will often omit the suffix $ \omega $ or $ \Omega $ in $ \gpmo $, $ \gpmoo $, $ \gpeo $, etc., when it is clear the regime under discussion and/or the statement applies to all the regimes taken into account in the paper, i.e., $ |\log\eps| \ll \omega $, $ \Omega \lesssim \eps^{-4} $.

\subsection{Between $ \Ofirst $ and $ \Othird $: Vortex Lattice and Emergence of a Hole}\label{sec:vortex lattice}

As we are going to see the leading order contribution to the GP energy $ \gpe $ is determined in this regime by the density $ \gpd : = | \gpm |^2 $, not the phase of the order parameter. Correspondingly we introduce a TF-like functional acting only on a density $ \rho $: If 
\beq
	\label{lattice condition 1}
	|\log\eps| \ll \omega \lesssim \eps^{-1},
\eeq
we set
\beq
	\label{tffo}
	\tffo [\rho] =  \eps^{-2} \int_{\mathbb R^2} \diff \xv \left[\lf( x^s + \rho \ri) \rho - \half \gamma \eps^2 \omega^2 x^2 \rho \right],
\eeq
 whith $ \rho $ belonging to the domain
 \beq
	\label{tfdom}
	\tfdom := \left \{ \rho \in L^2(\mathbb R^2), \lf\| \rho \ri\|_1 = 1 : \: \rho \geq 0, x^s \rho \in L^1 (\mathbb R^2) \right\}.
 \eeq
The minimization of $ \tff $ is  straightforward (see Section \ref{TF: sec}) and both the energy and minimizer have explicit expressions: The latter is the compactly supported function
\beq
	\label{tfmo}
	\tfmo(x) = \half \lf[ \eps^2 \tfchemo - x^s + \half \gamma \eps^2 \omega^2 x^2 \ri]_+,
\eeq
where $ \tfchemo $ is determined by the $ L^1$-normalization and $ [ \: \cdot \: ]_+ $ stands for the positive part. We also set
\beq
	\label{tfe}
	\tfeo : = \inf_{\rho \in \tfdom} \tffo[\rho] = \tffo\lf[ \tfmo \ri].
\eeq
On the other hand if\footnote{Notice that by the scaling relation \eqref{scaling relation}, $ \Omega = \OO(\eps^{-1}) $ if and only if $ \omega = \OO(\eps^{-1}) $.}
\beq
	\label{lattice condition 2}
	\eps^{-1} \lesssim \Omega \ll \eps^{-4},
\eeq
the scaling defined in \eqref{rescaling 2} becomes more useful and it is convenient to  define the TF functional as
\beq
	\label{tffoo}
	\tffoo[\rho] = \int_{\mathbb R^2} \diff \xv \left[ \eps^{-2} \rho^2 + \gamma \Omega^2 W(x) \rho \right],
\eeq
with $ \rho \in \tfdom $. The TF energy and minimizer,  defined analogously to \eqref{tfe}, are now denoted by $ \tfeoo $ and $ \tfmoo $ respectively. The explicit expression for the latter is 
\beq
	\label{tfmoo}
	\tfmoo(x) = \half \lf[ \eps^2 \tfchemoo - \gamma \eps^2 \Omega^2 W(x) \ri]_+.
\eeq

As we are going to see (see Section \ref{TF: sec} for further details) there are critical values $ \omegac, \Omegac = \OO(\eps^{-1}) $ of the angular velocity satisfying
\beq
	\Omegac = \rmax^2 \reps^{-2} \omegac,
\eeq
such that if $ \omega > \omegac $ (resp. $ \Omega > \Omegac $) then the TF density $ \tfmo $ (resp. $ \tfmoo $) vanishes at the origin. More precisely, if $ \Omega > \Omegac $, there exists a disc of radius $ \xin > 0 $ so that
\beq
	\tfmoo(x) = 0,	\hspace{1cm}	\mbox{for any} \:\: 0 \leq x \leq \xin.
\eeq
 We also point out that by the rescalings \eqref{rescaling 1} and \eqref{rescaling 2}, $ \omega < \omegac $ holds if and only if $ \Omega < \Omegac $, so that the conditions $ |\log\eps| \ll \omega < \omegac $ and $ \Omega_c \leq \Omega \ll \eps^{-4} $ cover all the possible asymptotic regimes between $ \Ofirst $ and $ \Othird $.

We can now formulate the result about the energy asymptotics of $ \gpe $:
 
	\begin{teo} [{\bf GP ground state energy}] 
		\label{lattice energy: teo}	
		\mbox{}	\\
		If $ |\log\eps| \ll \omega < \omegac $ as $ \eps \to 0 $, then
		\begin{equation}
			\label{gpeo asympt}
			\gpeo = \tfeo +\tx\frac 12 \omega \lf|\log \left( \eps^2\omega \right)\ri| (1 + o(1)),
		\end{equation}
		while, if $ \Omegac \leq \Omega \ll \eps^{-4} $,
		\begin{equation}
			\label{gpeoo asympt}
			\gpeoo = \tfeoo + \tx\frac{1}{6} \Omega \lf|\log \left({\varepsilon^4 \Omega} \right)\ri| (1 + o(1)).
		\end{equation}
	\end{teo}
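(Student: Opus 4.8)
The plan is to prove the energy asymptotics by the standard two-step strategy of matching upper and lower bounds, with the TF energy $\tfeo$ (resp.\ $\tfeoo$) providing the leading term and the vortex contribution $\tfrac12\omega|\log(\eps^2\omega)|$ (resp.\ $\tfrac16\Omega|\log(\eps^4\Omega)|$) being an explicit lower-order correction. Throughout I will work in whichever scaling is adapted to the regime, so let me describe the argument for the functional $\gpfoo$ in the regime $\Omegac\le\Omega\ll\eps^{-4}$; the treatment of $\gpfo$ in the regime $|\log\eps|\ll\omega<\omegac$ is entirely parallel (and in fact simpler, since the bulk profile is of TF type rather than gaussian). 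The heuristic behind the coefficient is that in the bulk annulus where $\tfmoo$ is supported the superfluid carries an angular velocity $\sim\Omega$, so it must be threaded by a vortex lattice of density $\sim\Omega/\pi$; each vortex costs an energy $\sim\tfrac12 \tfmoo\,\eps^{-2}\cdot 2\pi\cdot(\text{logarithmic factor})$ on a cell of area $\sim\pi/\Omega$, and summing over the annulus, using the explicit TF profile \eqref{tfmoo} and the relation $\eps^2\tfchemoo\sim\gamma\eps^2\Omega^2\cdot(\text{width})^2$, produces the constant $\tfrac16$ via an elementary integral of the TF density against the logarithm of the local vortex spacing.

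For the \textbf{upper bound} I would construct an explicit trial function of the product form $\trial=\ftrial\,\Phi$, where $\ftrial$ is (a regularized square root of) the TF minimizer $\sqrt{\tfmoo}$, suitably cut off near the boundary of its support and near the inner hole, and $\Phi$ is a unit-modulus phase factor encoding a vortex lattice. A clean way to organize the phase is the standard construction (as in \cite{CRY,CPRY1,CY}): on the essential support annulus introduce a triangular lattice $\latt$ with cell size $\spac\sim\Omega^{-1/2}$ adapted to the local density, place a degree-one vortex at each lattice point, and glue in a giant-vortex phase $e^{i\lfloor\Omega\rfloor\vartheta}$ to account for the circulation. Evaluating $\gpfoo[\trial]$, the potential and quartic terms reproduce $\tfeoo$ up to negligible errors coming from the cutoffs and from replacing $\rho^2$ by $(\ftrial^2)^2$ away from the TF boundary layer; the kinetic term splits into the radial part $\tfrac12\int|\nabla\ftrial|^2$ (lower order), the lattice self-energy, which by a cell-by-cell computation and Riemann-sum argument gives $\tfrac16\Omega|\log(\eps^4\Omega)|(1+o(1))$, and cross terms that vanish by the symmetry of the cell problem. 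The delicate points here are the choice of the vortex core size inside each cell (it should be comparable to the healing length $\sim\eps$ rescaled, i.e.\ $\eps\reps^{-1}$ in these units) and careful handling of the boundary layers of $\tfmoo$ near $x=\xin$ and the outer edge, where $\tfmoo$ degenerates.

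For the \textbf{lower bound}, which I expect to be the main obstacle, I would follow the now-standard vortex-ball / Jacobian-estimate machinery adapted to a non-flat, degenerate density profile. The steps are: (i) a priori estimates on any GP minimizer $\gpmoo$ — uniform bounds, exponential smallness outside the TF support, and crucially a pointwise lower bound $|\gpmoo|^2\gtrsim\tfmoo$ on the bulk away from boundary layers, obtained by comparison-principle arguments applied to the GP equation \eqref{GP variationaloo}; (ii) localize to a grid of cells of size $\sim\eps^{\text{small power}}$ covering the bulk, and on each cell where the density is not too small apply the Sandier–Serfaty / Jerrard–Soner vortex-ball construction to the (rescaled) order parameter, producing disjoint balls carrying the vorticity with the lower bound $\tfrac12\int_{\text{cell}}|\nabla|\gpmoo||^2 + (\text{density})\sum_j\pi|d_j|\,|\log\eps|\,(1+o(1))$ on the kinetic energy; (iii) use the $L^2$ a priori control and the equation to convert the potential-plus-interaction part into $\tfeoo$ exactly (via the fact that $\tfmoo$ minimizes $\tffoo$ and the identity for $\tfchemoo$); (iv) the decisive step: show that the total winding number forced by the phase circulation around the hole, combined with the energy cost of deviating from the ``optimal'' vortex density $\sim\Omega$, makes it energetically favorable for $\sum_j d_j$ to track $\lfloor\Omega\rfloor$ and for the local vortex density to be $\sim\Omega/\pi$, so that the summed lower bounds assemble — again via a Riemann sum against $\tfmoo$ — into $\tfrac16\Omega|\log(\eps^4\Omega)|(1+o(1))$. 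Matching the constants against the integral $\int(\eps^2\tfchemoo-\gamma\eps^2\Omega^2 W(x))_+\log(\cdots)$ and checking it equals $\tfrac16$ to leading order in the relevant asymptotic window is the computational heart; the conceptual heart is step (iv), controlling the interplay between the prescribed circulation, the cost of the centrifugal ``hole,'' and the vortex-lattice energy on a profile that is \emph{not} flat, which is precisely the novelty over the flat-trap analysis of \cite{CRY,CPRY1}.
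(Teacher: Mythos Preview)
Your upper-bound sketch is close to what the paper does, but both bounds in the paper are organized around a step you omit: the Lassoued--Mironescu energy decoupling. One writes $\gpmoo = g_0\, u_0$ where $g_0$ minimizes the functional $\hgpfoo$ obtained from $\gpfoo$ by dropping the vector potential; the variational equation for $g_0$ then gives the exact identity
\[
\gpeoo = \hgpeoo + \F[u_0],\qquad \F[u]=\int_{\R^2}\diff\xv\, g_0^2\Bigl\{\tfrac12|(\nabla-i\aavoo)u|^2+\eps^{-2}g_0^2(1-|u|^2)^2\Bigr\}.
\]
Since $\hgpeoo=\tfeoo+o(\Omega|\log(\eps^4\Omega)|)$ by a direct trial-function estimate, the whole problem reduces to upper and lower bounds on the weighted Ginzburg--Landau energy $\F[u_0]$. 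This replaces your steps (i)--(iii): you never need pointwise lower bounds on $|\gpmoo|$, and the TF energy is extracted \emph{exactly} rather than reconstructed from the potential and quartic terms.

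Your lower-bound strategy is then substantially heavier than necessary. After the decoupling, the paper restricts $\F[u_0]$ to the bulk annulus $\Aoo$, tiles it by cells of side $\tilde\ell$ with $\Omega^{-1}|\log(\eps^4\Omega)|\ll\tilde\ell^2\ll(\eps\Omega)^{-4/3}$, freezes $g_0^2$ on each cell using the pointwise estimate $g_0^2\simeq\tfmoo$ there, and blows up. The blown-up cell functional is a \emph{standard} GL functional with applied field $\hex=\Omega\tilde\ell^2$ and effective coupling $\epsilon=\tilde\ell^{-1}\eps^{2/3}\Omega^{-1/3}|\log(\eps^4\Omega)|^{1/2}$, sitting in the regime $|\log\epsilon|\ll\hex\ll\epsilon^{-2}$; one then quotes the Sandier--Serfaty lower bound $\hex\log(\epsilon^{-1}\hex^{-1/2})(1-o(1))$ directly. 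No vortex-ball construction, no Jacobian estimate, and in particular no step (iv): there is no winding-number or prescribed-circulation constraint to enforce in this regime, and your proposed ``decisive step'' is a red herring.

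Finally, your account of the constant $\tfrac16$ is off. It does not come from integrating $\tfmoo$ against a logarithm; the density enters only through $\int g_0^2\simeq 1$. The $\tfrac16$ arises because the effective healing length on the bulk (where $g_0^2\sim(\eps\Omega)^{2/3}$) is $\eps^{2/3}\Omega^{-1/3}$, so the optimal core radius is $t=\eps^{2/3}\Omega^{-1/3}$ and $\tfrac12|\log(t^2\Omega)|=\tfrac12|\log(\eps^{4/3}\Omega^{1/3})|=\tfrac16|\log(\eps^4\Omega)|$. The same arithmetic appears in the lower bound via $\log(\epsilon^{-1}\hex^{-1/2})$. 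Your suggestion that the core size is ``$\sim\eps$'' would produce the wrong constant.
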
 

	As we have anticipated a transition takes place when the angular velocity gets larger than some critical value of order $ \OO(\eps^{-1}) $: A direct consequence of the asymptotics \eqref{gpeo asympt} and \eqref{gpeoo asympt} (see, e.g., \cite[Proposition 3.1]{CPRY1}) is that
	\beq
		\lf\| |\gpm|^2 - \tfm \ri\|_2 \leq o(1) \lf\| \tfm \ri\|_{\infty},
	\eeq
	i.e., the density $ |\gpm|^2 $ is close in $L^2(\R^2) $ to the TF density $ \tfm $, but when $ \Omega $ gets larger than $ \Omegac $ the latter develops a hole and a macroscopic region where $ \gpm $ is exponentially small in $ \eps $ appears close to the center of the trap. Next proposition shows that the threshold of such a transition is precisely given by $ \Omegac $, i.e.,
	\beq
		\label{Osec}
		\Osec = (sk/\gamma)^{\frac{2}{s+2}}\Omegac^{\frac{s-2}{s+2}},	\hspace{1cm}	\Omegac = 2 \sqrt{\frac{2(s+2)}{\pi \gamma (s-2)}} \lf( \frac{2}{s} \ri)^{\frac{2}{s-2}} \eps^{-1}.
	\eeq

	\begin{pro}[{\bf Emergence of a hole in $ \gpm $}]
		\label{hole: pro}
		\mbox{}	\\
		If, as $ \eps \to 0 $, $ \Omega \geq \Omega_0\eps^{-1}$ with $\Omega_0>\ep \Omegac$ 
%		\footnote{For simplicity we assume that $ \Omega > \Omegac $ asymptotically, i.e., $ \Omega \geq \Omega_0 \eps^{-1} $ for some $ \Omega_0 > \eps \Omegac $.} 
		 there exists $ 0 < \xh < 1 $ such that
		\beq
			\label{hole}
			\lf| \gpm(\xv) \ri| = \OO(\eps^{\infty}),
		\eeq
		for any $ \xv $ such that $ 0 \leq x \leq \xh $.
	\end{pro}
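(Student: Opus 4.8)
The plan is to combine the Thomas--Fermi comparison underlying Theorem~\ref{lattice energy: teo} with a subsolution/barrier argument for the GP equation \eqref{GP variationaloo}, working throughout in the scaling \eqref{rescaling 2}, which is the relevant one since $\Omega\geq\Omega_0\eps^{-1}$. The first point is that the assumption $\Omega_0>\ep\Omegac$ places us \emph{strictly} inside the hole regime with an $\eps$-uniform margin. By \eqref{tfmoo} together with $\|\tfmoo\|_1=1$, the rescaled TF chemical potential $\eps^2\tfchemoo$ depends only on $\lambda:=\gamma\eps^2\Omega^2$, and a direct analysis (cf.\ Section~\ref{TF: sec}) shows $\eps^2\tfchemoo=\OO(\lambda^{1/3})$ for large $\lambda$, while $\lambda\, W(0)-\eps^2\tfchemoo$ vanishes precisely at $\lambda=\gamma(\ep\Omegac)^2$ (this being the definition of $\Omegac$, cf.\ \eqref{Osec}) and is positive, tending to $+\infty$, beyond. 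Since $\lambda\geq\gamma\Omega_0^2>\gamma(\ep\Omegac)^2$, we obtain $\gamma\eps^2\Omega^2\,W(0)-\eps^2\tfchemoo\geq\delta_0>0$ for all small $\eps$, with $\delta_0=\delta_0(s,\gamma,\Omega_0)$, the left side being moreover of order $\gamma\eps^2\Omega^2$ when $\Omega\gg\eps^{-1}$. As $W$ is continuous and strictly decreasing on $[0,1]$ with $W(0)>W(1)=0$, we may then fix $\xh\in(0,1)$, independent of $\eps$, and $c_1=c_1(s,\gamma,\Omega_0)>0$ so that $\gamma\eps^2\Omega^2\,W(x)-\eps^2\tfchemoo\geq c_1$ for $0\leq x\leq 2\xh$ and all small $\eps$ (a smaller $\xh$ being needed when $\gamma\eps^2\Omega^2$ stays bounded, while for $\gamma\eps^2\Omega^2\to\infty$ any fixed $\xh$ with $W(2\xh)\geq\half W(0)$ works); in particular $\xh$ sits strictly inside the TF hole $\{\tfmoo=0\}$, whose radius is $<1$ since $\tfmoo(1)=\half\eps^2\tfchemoo>0$.

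\emph{Step 1: from $\tfchemoo$ to $\chemoo$.} Next I would show that the same separation survives for the true GP chemical potential. By \eqref{chem} and the identity $\tfchemoo=\tfeoo+\eps^{-2}\|\tfmoo\|_2^2$ (obtained by inserting \eqref{tfmoo} into \eqref{tffoo}),
\[ \eps^2\chemoo-\eps^2\tfchemoo=\eps^2\big(\gpeoo-\tfeoo\big)+\int_{\R^2}\big(|\gpmoo|^4-\tfmoo^2\big). \]
The first term is nonnegative and $\OO\big(\eps^2\Omega\,|\log(\eps^4\Omega)|\big)$ by \eqref{gpeoo asympt}; using the blanket bound $\Omega\lesssim\eps^{-4}$ one checks it is $o(1)$ when $\gamma\eps^2\Omega^2=\OO(1)$ and $o(\gamma\eps^2\Omega^2)$ otherwise. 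For the second term, write $|\gpmoo|^4-\tfmoo^2=(|\gpmoo|^2-\tfmoo)(|\gpmoo|^2+\tfmoo)$ and apply Cauchy--Schwarz, using the $L^2$ density estimate $\||\gpm|^2-\tfm\|_2=o(1)\|\tfm\|_\infty$ that follows from \eqref{gpeoo asympt} together with $\|\tfmoo\|_\infty=\half\eps^2\tfchemoo$ and $\|\tfmoo\|_2\leq\|\tfmoo\|_\infty^{1/2}$; this is again $o(1)$ or $o(\gamma\eps^2\Omega^2)$ according to the size of $\gamma\eps^2\Omega^2$. Combining with the TF gap of the previous paragraph, we conclude that for some $c_0=c_0(s,\gamma,\Omega_0)>0$ and all small $\eps$,
\[ \gamma\eps^2\Omega^2\,W(x)-\eps^2\chemoo\geq c_0>0\qquad\text{for all }0\leq x\leq 2\xh. \]

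\emph{Step 2: subsolution and Bessel barrier.} Rewriting \eqref{GP variationaloo} as $\half(-i\nabla-\aavoo)^2\gpmoo=\big(\chemoo-\gamma\Omega^2 W(x)-2\eps^{-2}|\gpmoo|^2\big)\gpmoo$ and invoking Kato's inequality, the nonnegative function $g:=|\gpmoo|\in H^1(\R^2)$ satisfies, in $\mathcal D'(\{x<2\xh\})$,
\[ -\half\Delta g\leq\big(\chemoo-\gamma\Omega^2 W(x)\big)g=-\eps^{-2}\big(\gamma\eps^2\Omega^2 W(x)-\eps^2\chemoo\big)g\leq-\eps^{-2}c_0\,g, \]
i.e.\ $-\half\Delta g+\eps^{-2}c_0\,g\leq 0$ there (the cubic term $-2\eps^{-2}g^3\leq0$ was discarded). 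Comparing $g$ on the disc $\{x<2\xh\}$ with the radial supersolution $h(\xv)=I_0\big(\sqrt{2c_0}\,\eps^{-1}x\big)$ of $-\half\Delta h+\eps^{-2}c_0\,h=0$ ($I_0$ the modified Bessel function of the first kind), the maximum principle yields $g\leq\big(\sup_{\{x=2\xh\}}g\big)\,h/h(2\xh)$ throughout $\{x\leq 2\xh\}$, hence for $0\leq x\leq\xh$
\[ |\gpmoo(\xv)|\leq\Big(\sup_{\{x=2\xh\}}|\gpmoo|\Big)\,\frac{I_0(\sqrt{2c_0}\,\eps^{-1}\xh)}{I_0(\sqrt{2c_0}\,\eps^{-1}\,2\xh)}. \]
Feeding in a crude polynomial a priori bound $\|\gpmoo\|_\infty=\OO(\eps^{-N})$ (from standard elliptic estimates on \eqref{GP variationaloo}) and the asymptotics $I_0(t)\sim e^{t}/\sqrt{2\pi t}$ as $t\to\infty$ gives $|\gpmoo(\xv)|=\OO\big(\eps^{-N}e^{-\sqrt{2c_0}\,\xh\,\eps^{-1}}\big)=\OO(\eps^\infty)$ for $x\leq\xh$, which is \eqref{hole}.

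\emph{Main obstacle.} The genuinely delicate part is Step~1: one must show that the GP chemical potential stays below its TF value by more than the TF gap $\delta_0$, which requires controlling the error terms in the energy expansion \eqref{gpeoo asympt} and in the density estimate and weighing them against the $\Omega$-dependent magnitude of $\gamma\eps^2\Omega^2 W(0)-\eps^2\tfchemoo$, uniformly over the whole range $\Omega_0\eps^{-1}\leq\Omega\lesssim\eps^{-4}$ (where $\gamma\eps^2\Omega^2$ runs from an $\OO(1)$ quantity up to $\OO(\eps^{-6})$). Once this separation is in hand, Kato's inequality and the Bessel-function comparison of Step~2 are routine, and the precise value of $\xh$ is immaterial as long as it lies strictly inside the TF hole.
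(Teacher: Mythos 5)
Your proof is correct and follows essentially the same route as the paper's: both transfer the Thomas--Fermi gap $\gamma\eps^2\Omega^2 W(x)-\eps^2\tfchemoo>0$ to the GP chemical potential $\chemoo$ via the energy asymptotics of Theorem \ref{lattice energy: teo} and the resulting $L^2$ density estimate, and then conclude with a maximum-principle comparison against an explicit supersolution of the ensuing linear differential inequality. The only differences are cosmetic: the paper works with $|\gpm|^2$ and an exponential barrier on the region $0\leq x\leq \xin-\delta(\eps\Omega)^{-2/3}$, whereas you work with $|\gpm|$ via Kato's inequality and a Bessel barrier on a fixed disc strictly inside the hole.
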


We conclude this section by stating a result about the vortex distribution and  showing that throughout the parameter region $ |\log\eps| \leq \omega \lesssim \eps^{-1} $ (resp. $ \eps^{-1} \lesssim \Omega \ll \eps^{-4} $), the vorticity of $ \gpmo $ (resp. $ \gpmoo $) is uniformly distributed  on a length scale $ \Omega^{-1/2} $ inside $ \tfsuppo $ (resp. $ \tfsuppoo $)  (see also the Remark after Theorem \ref{vortex distribution: teo}). For technical reasons, however,  we are not able to prove the result in the whole support of $ \tfm $, but we have to restrict the analysis to the slightly smaller region 
\beq
	\label{bulk sub}
	{\mathcal R}_{\rm bulk} : = \lf\{ \xv \in \R^2 : \: x_< \leq x \leq x_> \ri\},
\eeq
where for some positive parameter  $ \beta : = \beta (\eps,\Omega)  $ such that $ \beta \to 0  $ as $ \eps \to 0 $ (see \eqref{gamma} for its precise definition)
\beq
	\label{x>x<}
	x_{<} : = 
	\begin{cases}
		0,							&	\mbox{if} \:\:\: |\log\eps| \ll \omega < \omegac,	\\
		\xin + \beta (\eps\Omega)^{2/3},		&	\mbox{if} \:\:\: \Omegac \leq \Omega \ll \eps^{-4},
	\end{cases}
	\hspace{1cm}
	x_{>} : = \xout - \beta (\eps\Omega)^{2/3}.
\eeq
 Here $x_{\rm in}$ and $x_{\rm out}$ are the inner and outer radii of the support of the TF density. The definitions \eqref{x>x<} imply that
\beq
	\lf| \tfsupp \setminus {\mathcal R}_{\rm bulk} \ri| = \OO\lf(\beta (1 + (\eps\Omega)^{2/3})^{-1} \ri) \ll \lf| \tfsupp \ri|,	
\eeq
and
\beq
	\lf\| \gpm \ri\|_{L^2({\mathcal R}_{\rm bulk})} = 1 - o(1),
\eeq
i.e., $ {\mathcal R}_{\rm bulk} $ contains the bulk of the mass of $ \gpm $.

In the following theorem, we describe the vorticity distribution in the bulk of the condensate for rotation speeds between the first and third critical speeds. As it is usual in the Ginzburg-Landau (GL) theory, this is done by investigating the asymptotic properties of a certain \emph{vorticity measure} whose definition is part of the statement of the theorem. Note that we do not tackle the question of the precise location of the vortices, which should be on the sites of a hexagonal lattice. Proving this fact rigorously remains a major open problem in both GP and GL theories, see \cite{SS3} for the most recent results in this direction. 

 In the statement of the theorem  we use the notation $\mathcal B(\xv,\rho)$ for a closed disc (\lq ball') with center at $\xv$ and radius $\rho$.

	\begin{teo}[\textbf{Uniform distribution of vorticity}]
		\label{vortex distribution: teo}
		\mbox{}	\\
		If $ |\log\eps| \ll \omega \lesssim \eps^{-1} $ or $ \eps^{-1} \lesssim \Omega \ll \eps^{-4} $ as $ \eps \to 0 $, there exists a finite family of disjoint balls $ \{ \ba_i \} : = \{ \ba(\xv_i, \varrho_i) \} \subset {\mathcal R}_{\rm bulk} $, $ i = 1, \ldots, N $, such that 		\ben
			\item 
$ \varrho_i \leq \OO\left(\omega^{-1/2}+\Omega^{-1/2}\right) $, 
%$ \sum \varrho_i \leq \OO(\Omega^{1/2}) $ and 
$ \sum \varrho_i^2 \leq (1 + (\eps\Omega)^{2/3})^{-1} $;
			\item $ \lf|\gpm\ri| > 0 $ on $ \partial \ba_i $, $ i = 1, \ldots, N $.
		\een
		Moreover, setting $ d_i : = \deg\{\gpm, \partial \ba_i\} $ and defining the vorticity measure as  
		\beq
			\label{vorticity measure}
			\nu : =  2\pi\sum_{i = 1}^N d_i \delta(\xv- \xv_i), 		
		\eeq
		then, for any set $ \set \subset {\mathcal R}_{\rm bulk} $ such that $ |\set| \gg \Omega^{-1} |\log(\eps^4\Omega)|^2 $ as $ \eps \to 0 $,  
		\beq
			\frac{\nu(\set)}{\Omega |\set|} \underset{\eps \to 0}{\longrightarrow} 1.
		\eeq
	\end{teo}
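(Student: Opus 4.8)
The plan is to adapt the Ginzburg--Landau vortex--ball techniques developed for the flat trap in \cite{CRY,CPRY1}, the two genuinely new features being the inhomogeneous density profile $\tfmoo$ across the annulus and the fact that vortex cores can be of size comparable to its width. I treat the regime \eqref{lattice condition 2} (the one producing a hole near the origin); the complementary regime \eqref{lattice condition 1} is entirely analogous, working with \eqref{GPfo}, the scaling \eqref{rescaling 1} and $\tfmo$ in place of \eqref{GPfoo}, \eqref{rescaling 2} and $\tfmoo$, and is in fact closer to the known flat-trap arguments since there $x_<=0$ and no hole is present. \emph{Step~1 (density decoupling).} Let $g$ be the positive minimizer of the reduced ``density'' functional $\hgpfoo$ obtained from \eqref{GPfoo} by dropping the rotation term $-\Omega L_z$; the Thomas--Fermi analysis of Section~\ref{TF: sec} gives $\hgpeoo=\tfeoo+o\lf(\Omega|\log(\eps^4\Omega)|\ri)$ and shows that $g^2$ is pointwise close to $\tfmoo$ with the usual gradient bounds, that $g^2\asymp(\eps\Omega)^{2/3}$ on ${\mathcal R}_{\rm bulk}$, and that $\tfsuppoo$ is an annulus of width $\asymp(\eps\Omega)^{-2/3}$. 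Writing $\gpmoo=g\,u$ (so that $\|u\|_{L^2(g^2\diff\xv)}=1$) and using that $g$ solves the variational equation of $\hgpfoo$, the cross terms cancel and the GP energy decouples exactly:
\beq
  \label{vtxsplit}
  \gpeoo = \hgpeoo + \int_{\RR}\diff\xv\,\lf\{ \half\,g^2\lf|\nabla u\ri|^2 - \Omega\,g^2\,x\,\mathbf e_{\vartheta}\!\cdot\!(iu,\nabla u) + \tfrac{1}{2\eps^2}\,g^4\lf(1-|u|^2\ri)^2 \ri\},
\eeq
with $(iu,\nabla u):=\tfrac i2\lf(u\nabla\bar u-\bar u\nabla u\ri)$. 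It thus suffices to bound below the weighted Ginzburg--Landau functional in \eqref{vtxsplit}, whose weight $g^2\asymp(\eps\Omega)^{2/3}$ has local coherence length $\sim\eps/g$, well below the inter-vortex scale $\Omega^{-1/2}$ (which itself only becomes comparable to the annulus width as $\Omega\to\eps^{-4}$).

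\emph{Step~2 (cost function and vortex balls).} Following the standard device I introduce a ``cost/gain'' function $F$ associated with $\tfmoo$ and the vector potential $\aavoo$ (the analogue of $\costtf$ in \cite{CRY}, solving an elliptic problem built from $\tfmoo$) so that, after a further integration by parts, the rotation term in \eqref{vtxsplit} becomes $\Omega\int F\,\diff\mu_u$ up to terms absorbed in the Ginzburg--Landau energy, where $\mu_u:=\curl(iu,\nabla u)$ is the Jacobian of $u$. Covering ${\mathcal R}_{\rm bulk}$ by a mesoscopic grid of cells of side $\spac$ with $\lf(\omega^{-1/2}+\Omega^{-1/2}\ri)\ll\spac\ll1$ and running the Jerrard--Sandier ball growth and merging construction for $u$ with weight $g^2$ in each cell — the coherence length being well below $\spac$ on ${\mathcal R}_{\rm bulk}$ — produces the finite family of disjoint balls $\mathcal B(\xv_i,\varrho_i)\subset{\mathcal R}_{\rm bulk}$, $i=1,\ldots,N$, of the statement, with $|\gpmoo|>0$ on $\partial\mathcal B(\xv_i,\varrho_i)$, radii $\varrho_i=\OO\lf(\omega^{-1/2}+\Omega^{-1/2}\ri)$ and $\sum_i\varrho_i^2=\OO\lf((1+(\eps\Omega)^{2/3})^{-1}\ri)$ (by disjointness, containment in ${\mathcal R}_{\rm bulk}$, and the a priori bound $\int g^2|\nabla u|^2=\OO(\Omega|\log(\eps^4\Omega)|)$ coming from the upper bound of Theorem~\ref{lattice energy: teo}), together with the lower bound
\beq
  \label{vtxball}
  \int_{\bigcup_i\mathcal B(\xv_i,\varrho_i)}\half\,g^2\lf|\nabla u\ri|^2 \ \geq\ \pi\sum_{i=1}^N g^2(\xv_i)\,|d_i|\,\ell_i\,\lf(1-o(1)\ri),\qquad d_i:=\dg\{\gpmoo,\partial\mathcal B(\xv_i,\varrho_i)\},
\eeq
where $\ell_i\asymp\lf|\log(\eps^4\Omega)\ri|$ is the local logarithmic weight; when the $d_i$ are non-negative and the $\xv_i$ equidistributed with density $\Omega/(2\pi)$ the right-hand side equals $\tfrac16\Omega|\log(\eps^4\Omega)|(1+o(1))$, consistently with \eqref{gpeoo asympt}.

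\emph{Step~3 (matching bounds and the mean-field problem).} Inserting \eqref{vtxball} and the cost-function identity into \eqref{vtxsplit} and replacing $g^2$ by $\tfmoo$ up to controlled errors, one obtains a lower bound
\beq
  \label{vtxmf}
  \gpeoo \ \geq\ \hgpeoo \ +\ \min_{\mu\geq0}\lf\{ \pi\int \ell\,\diff\mu \ -\ \Omega\int F\,\diff\mu \ri\} \ -\ o\lf(\Omega|\log(\eps^4\Omega)|\ri),
\eeq
the minimum being over non-negative measures $\mu$ on ${\mathcal R}_{\rm bulk}$ (morally $\mu=\sum_i|d_i|\delta_{\xv_i}$) of a strictly convex functional. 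On the other hand Theorem~\ref{lattice energy: teo}, whose upper bound is produced by a trial function carrying a regular triangular vortex lattice of density $\Omega/(2\pi)$ (and using $\hgpeoo=\tfeoo+o(\Omega|\log(\eps^4\Omega)|)$), shows that \eqref{vtxmf} is saturated to leading order. Since $F$ and $\ell$ are both inherited from $\tfmoo$, the unique minimizer of the mean-field functional is $\tfrac{1}{2\pi}\Omega\,\diff\xv$ restricted to ${\mathcal R}_{\rm bulk}$; strict convexity then forces every near-optimal configuration to satisfy $\sum_i d_i\delta_{\xv_i}\approx\tfrac{1}{2\pi}\Omega\,\diff\xv$ weakly, in particular $d_i\geq0$ for the balls carrying the bulk of the degree. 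Combined with the classical Jacobian estimate $\big\|\mu_u-2\pi\sum_i d_i\delta_{\xv_i}\big\|_{(C^{0,1}_0({\mathcal R}_{\rm bulk}))^*}\to0$, valid once the balls are small and the energy is $\OO(\Omega|\log(\eps^4\Omega)|)$, this already yields $\nu(\set)/(\Omega|\set|)\to1$ for every fixed regular $\set$.

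\emph{Step~4 (quantitative localization; main obstacle).} To cover every $\set$ with $|\set|\gg\Omega^{-1}|\log(\eps^4\Omega)|^2$ one localizes \eqref{vtxball} to $\set$ and to ${\mathcal R}_{\rm bulk}\setminus\set$ separately: inside $\set$ there are $\sim\Omega|\set|/(2\pi)$ vortices, each costing $\sim|\log(\eps^4\Omega)|$ of energy, so the effective ``energy resolution'' available there is of order $\Omega|\set||\log(\eps^4\Omega)|$. Comparing the localized lower bound on ${\mathcal R}_{\rm bulk}\setminus\set$ with the global upper bound of Theorem~\ref{lattice energy: teo} gives $\nu(\set)\leq(1+o(1))\Omega|\set|$, while a deficient $\nu(\set)$ would waste $\gtrsim\Omega|\set||\log(\eps^4\Omega)|$ of rotational gain $\Omega\int_\set F\,\diff\nu$ against \eqref{vtxmf}, giving the reverse inequality; both estimates carry information — i.e.\ the errors are $o(\Omega|\set|)$ — exactly when $\Omega|\set||\log(\eps^4\Omega)|$ dominates the accumulated error, that is when $|\set|\gg\Omega^{-1}|\log(\eps^4\Omega)|^2$, whence $\nu(\set)=\Omega|\set|(1+o(1))$. \textbf{The main obstacle} is the homogenization of Steps~3--4 in this geometry. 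Unlike the flat trap, the weight $g^2\approx\tfmoo$ varies over the annulus — of width only $\asymp(\eps\Omega)^{-2/3}$ — and the balls have radii $\sim\Omega^{-1/2}$, comparable to that width as $\Omega\to\eps^{-4}$, so the cell size $\spac$, the cost function $F$ and the error bookkeeping in the ball construction must all be arranged to remain uniform across the annulus, including near the hole boundary $\xin$ and the outer edge $\xout$; producing the matching upper bound of Theorem~\ref{lattice energy: teo} to the same precision, via a vortex-lattice trial function fitted to the curved, inhomogeneous annulus, is the other delicate input, and it is the requirement that all these errors be $o(\Omega|\set||\log(\eps^4\Omega)|)$ that pins down the threshold $|\set|\gg\Omega^{-1}|\log(\eps^4\Omega)|^2$.
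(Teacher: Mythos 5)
Your Steps 1 and 2, and the error--budget heuristic closing Step 4, are essentially the paper's strategy: the actual proof splits the energy as $\gpeoo=\hgpeoo+\F[u_0]$, decomposes $\Aoo$ into square cells $\tcelli$ of side $\tilde\ell$ with $\Omega^{-1}|\log(\eps^4\Omega)|\ll\tilde\ell^2\ll\eps^{-4/3}\Omega^{-4/3}$, blows each cell up to a unit square where (after a gauge transformation) the problem becomes a \emph{standard} Ginzburg--Landau functional with constant coupling and applied field $\hex=\Omega\tilde\ell^2$ in the intermediate regime $|\log\epsilon|\ll\hex\ll\epsilon^{-2}$, and then imports \cite[Proposition 5.1]{SS1} wholesale: that single result supplies, per cell, both the energy lower bound and the family of vortex balls with vorticity uniformly distributed at density $\hex$ per unit blown-up area. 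A good/bad cell dichotomy (a cell is good if its local energy exceeds its expected share by at most a factor $1+\min(\sqrt{\eta},\eps^2\sqrt{\Omega})$), combined with the global upper bound of Theorem \ref{lattice energy: teo}, shows bad cells are negligible, and the choice $\tilde\ell=\Omega^{-1/2}|\log(\eps^4\Omega)|$ is exactly what produces the threshold $|\set|\gg\Omega^{-1}|\log(\eps^4\Omega)|^2$. Your Step 4 rediscovers this mechanism in different words.

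The genuine gap is Step 3. The functional you minimize, $\pi\int \ell\,\diff\mu-\Omega\int F\,\diff\mu$ over non-negative measures $\mu$, is \emph{linear} in $\mu$, not strictly convex: its infimum is either $0$ (attained at $\mu=0$, which would assert the absence of vortices) or $-\infty$, and in neither case is the minimizer the uniform measure $\frac{\Omega}{2\pi}\,\diff\xv$. The rigidity you invoke ("strict convexity forces near-optimal configurations close to the minimizer") therefore has nothing to act on. A correct mean-field functional must retain the quadratic self-interaction of the vorticity (the energy of the induced potential, i.e., the part of $\frac12\int g^2|\nabla u|^2$ left outside the vortex balls); setting this up with the weight $g^2\approx\tfmoo$ on an annulus of width $(\eps\Omega)^{-2/3}$, and controlling the boundary terms that the global integration by parts of the cost function produces on \emph{both} edges of ${\mathcal R}_{\rm bulk}$ -- including the inner one near the hole, where the density degenerates -- is precisely the work the sketch omits. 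Even granting the corrected mean-field limit, it yields only weak-$*$ convergence of the vorticity measure, i.e., the statement for fixed sets $\set$; the passage to sets of area as small as $\Omega^{-1}|\log(\eps^4\Omega)|^2$ needs the local energy comparison of your Step 4, which as written leans on the broken Step 3 lower bound. The paper's cell-by-cell route sidesteps all of this: after blow-up there is no global cost function, no boundary term, and the convexity/uniqueness question is subcontracted to the known Ginzburg--Landau result.
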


	\begin{rem}({\it Vortex balls})
		\mbox{}	\\
		By a closer inspection of the proof of the above Theorem \ref{vortex distribution: teo}, one can realize that there is more information  contained in the family of balls $ \{ \ba_i \} $: Each ball might contain a very large number of vortices but almost all vortices are enclosed by those balls. More precisely, there exists a `good region' in the condensate where all the possible zeros of $ \gpm $ are contained in $ \cup \ba_i $ and outside  this set $ \gpm $ is vortex free. There is also a complementary `bad region' where we are not able to prove such a thing, but a key point of the proof of Theorem \ref{vortex distribution: teo} is that the good region covers most of the condensate, i.e., the ratio between the areas of the bad and good regions is  $o(1)$ as $\ep \to 0$.
	\end{rem}

\subsection{The Regime $ \Omega \propto \eps^{-4} $: Transition to a Giant Vortex State}\label{sec: introGV}

In this section we consider $\Omega$ given as 
\begin{equation}\label{giant vortex omega}
\Om = \frac{\Om_0}{\ep ^4} 
\end{equation}
where $\Om_0$ is a fixed constant while $\varepsilon\to 0$.

A  \lq giant vortex state',  that turns out to be an excellent approximation to the ground state for large $\Omega_0$, is obtained by minimizing the GP functional over functions of the form\footnote{Here $ \lfloor \:\:\: \rfloor $ 
stands for the integer part.} $ f(\xv) e^{i \lfloor \Omega \rfloor \vartheta} $ with $ f $ {\it real}. The giant vortex energy
functional is given by
\beq
\label{gvf}
\gvf[f] : = \gpfoo[f e^{i \lfloor \Omega \rfloor \vartheta}]=\int_{\R^2} \diff \xv \lf\{ \half \lf| \nabla f \ri|^2 + \Omega^2 U(x) f^2 + \eps^{-2} f^{4} \ri\},
	\eeq
	where 
	\beq
		\label{pot U}
		U(x) : =  \half \lfloor \Omega \rfloor^2 \Omega^{-2} x^{-2} + \half x^2 + \salf \gamma \lf( x^s - 1 \ri) - \half \gamma \lf( x^2 - 1 \ri) - \lfloor \Omega \rfloor \Omega^{-1}.
	\eeq
By standard arguments (see Proposition \ref{gvf minimization: pro}) one can show that there exists a unique minimizer $ \gvm $ of \eqref{gvf} in 
the domain
	\bdm
		\gvdom : = \lf\{ f \in H^1(\R^2) : \: f = f^* \geq 0, \: x^s f^2 \in L^1(\R^2), \: \lf\| f \ri\|_2 = 1 \ri\}.
	\edm
In addition it is easily seen that $\gvm$ is radial and does not vanish except at the origin.
	We also set
	\beq
		\label{gve}
		\gve : = \inf_{f \in \gvdom} \gvf[f].
	\eeq
The leading order of the GP energy and density will be given with very good accuracy by the minimization of the simpler giant vortex energy 
functional. It is thus of importance to understand the properties of the latter. When $\Om \ll \ep ^{-4}$, its minimization leads to 
a profile that is to leading order of TF-type as discussed in the preceding subsection.
A salient feature of the regime \eqref{giant vortex omega} is the fact that, when $\Omega_0$ gets large, the giant vortex density $\gvm$ changes from a TF profile 
to a gaussian distribution centered on the circle of radius $1$.
 This is most conveniently understood  by  using 
an auxiliary one-dimensional energy functional (recall that $\gvm$ is radial)
\bml{\label{gv aux func}
{\mathcal E}^{\mathrm{aux}}[{\text{\it f}}\,]= 2 \pi \int_{\mathbb R} \diff x \left\{\half |f'|^2+\half \Omega^2\alpha^2(x-1)^2f^2+\eps^{-2}f^4\right\}
 =	\\
2\pi \Omega\int_{\mathbb R} \diff y \left\{\half |\hat f'|^2+\half \alpha^2y^2\hat f^2+\Omega_0^{-1/2}\hat f^4\right\},
}
where
\beq
	\label{alpha}
	\al : = \sqrt{4 + 2(s-2)}.
\eeq 
One obtains ${\mathcal E}^{\mathrm{aux}}$ from \eqref{gvf}  by expanding the potential around its minimum at $ x = 1$ and neglecting terms beyond 
quadratic in the expansion.  
The variable transformation $y=\Omega^{1/2}(x-1)$, $\hat f(y)=\Omega^{-1/4}f(x)$ is then used to blow-up  to the characteristic size of the domain where the density is confined. 
It is clear that all three terms in \eqref{gv aux func} are of 
the same order of magnitude but the importance of the last term diminishes with increasing $\Omega_0$. The quadratic part of the energy 
is simply the energy functional of the harmonic oscillator, so without the last term the 
minimizer is the Gaussian
\beq\label{gv aux gaussian}
 \gosc(y)=\pi^{-1/4}\alpha^{1/4}\exp\{-\half\alpha y^2\}.
\eeq

We expect and prove (see Section \ref{sec: gv dens profile}) that $\gvm$ is close (after rescaling) to this function when $\Om_0$ is 
large.
This behavior of the density profile is very different from that given by the TF theory discussed before. The fact that the transition to a 
gaussian profile occurs in the same parameter regime  as the transition to a giant vortex phase is at the origin of several new features of the physics of the third phase transition in a soft potential as compared 
to the theory in a flat trap. This does not seem to have been noticed in the literature before: As far as we know, the profile is always assumed to be of TF type in previous papers dealing with the transition to a giant vortex state, with the exception of \cite{R1}.

To begin with, it is more delicate to define the bulk of the condensate in the regime $\Om_0$ large. The gaussian  mass distribution has 
much slower decay and longer tails than a TF-like profile. As a consequence the mass is much less concentrated and it is 
necessary to consider an annulus of thickness much larger than the characteristic scale  $\Omega^{-1/2}=\Omega_0^{-1/2}\eps^2$ of the gaussian distribution: 
\begin{equation}\label{Abulk}
 \ab := \lf\{ \xv \in \R ^2 : \: 1- c |\log \ep | ^{1/2}\Om ^{-1/2} \leq x \leq 1+ c  |\log \ep | ^{1/2}\Om ^{-1/2}\ri\}, \hspace{1cm} c < \sqrt{\frac{2}{\al }}.
\end{equation}

Our main result  about the regime \eqref{giant vortex omega} is the following:

	\begin{teo}[\textbf{Absence of vortices in the bulk}]
		\label{teo: giant vortex}
		\mbox{}	\\
		If $ \Omega $ is given by \eqref{giant vortex omega}, there exists a finite constant $ \bar\Omega_0 $ such that for any $ \Omega_0 > \bar\Omega_0 $, no minimizer $ \gpm $ has a zero inside $ \ab $ if $ \eps $ is sufficiently small.
		\newline
		More precisely, for any $ \xv \in \ab $,
		\beq
			\label{gv point diff}
			 \lf|\gpm \lf(\xv\ri)\ri| = \gvm (x) \left( 1 + \OO(|\log \ep| ^{-a})\right) 
		\eeq
for any $a>0$.
	\end{teo}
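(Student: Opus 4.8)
The proof follows the standard route for excluding vortices by energy comparison; the genuinely new difficulty is that the bulk annulus $\ab$ of \eqref{Abulk} is only logarithmically wider than the natural vortex scale $\Omega^{-1/2}$ and that the giant vortex profile $\gvm$ varies (gaussian-like) across it. The first task is therefore to establish the properties of $\gvm$ and $\gve$ announced in Section~\ref{sec: gv dens profile}: with the blow-up $y=\Omega^{1/2}(x-1)$, $\Omega^{-1/4}\gvm$ converges, with quantitative error $O(\Omega_0^{-1/2})$ in a suitable norm, to the gaussian $\gosc$ of \eqref{gv aux gaussian}, together with matching pointwise upper and lower bounds. The essential outputs are: a lower bound $\gvm(x)^2\gtrsim\Omega^{1/2}e^{-\al y^2}$ on $\ab$, which -- because $c<\sqrt{2/\al}$ -- keeps the healing length $\eps/\gvm$ polynomially below the width of $\ab$ and makes the vortex cost comparable to $\pi\gvm^2|\log(\eps^4\Omega)|$ per unit of vorticity; the slow variation of $\gvm^2$ on scale $\Omega^{-1/2}$; and the behaviour of the primitive $F(x):=\int_1^x\gvm(t)^2\lf(\Omega t-\into t^{-1}\ri)\diff t$ of the effective vector potential $\mathbf B(\xv):=\lf(\Omega x-\into x^{-1}\ri)\mathbf{e}_{\vartheta}$, which encodes the mismatch between the imposed rotation and the winding $\into$ and is the quantity competing against the vortex cost inside $\ab$.

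\textbf{Energy decoupling.} Writing any GP minimizer as $\gpmoo=\gvm\,e^{i\into\vartheta}\,u$ and using the variational equation for $\gvm$ together with the $L^2$-normalization one obtains the \emph{exact} splitting $\gpfoo[\gpmoo]=\gve+\Fg[u]$ with
\[
\Fg[u]=\int_{\R^2}\diff\xv\;\gvm^2\lf\{\half|\nabla u|^2-\mathbf B\cdot\mathrm{Im}(\overline u\,\nabla u)+\frac{\gvm^2}{\eps^2}\lf(1-|u|^2\ri)^2\ri\}.
\]
Since $u\equiv1$ is admissible, $\gpeoo\le\gpfoo\lf[\gvm e^{i\into\vartheta}\ri]=\gvf[\gvm]=\gve$, hence $\Fg[u]\le0$ for the minimizer; together with a lower bound for the part of $\Fg$ carried by $\R^2\setminus\ab$ (which is $\ge-o(1)$ times the relevant weight, since $\gvm$ is exponentially small in the hole and the remaining outer region carries negligible mass) this is the only energy input. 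One also records the a priori bounds $\|u\|_\infty\le C$ and $\|\nabla u\|_{L^\infty(\ab)}\le C\eps^{-N}$, from the maximum principle on the GP and giant vortex equations and elliptic regularity, which are needed to run the vortex-ball machinery.

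\textbf{Vortex balls and the energy balance.} Partition $\ab$ into cells of side $\sim\Omega^{-1/2}$, isolate the ``bad'' cells where $u$ is small or its gradient large, and on their union run the Jerrard--Sandier vortex-ball construction \emph{for the weighted Dirichlet energy} $\int\gvm^2|\nabla u|^2$ (this is where the slow variation of $\gvm^2$ is used), producing disjoint balls $\ba(\xv_i,r_i)\subset\ab$, $i=1,\dots,N$, with $d_i:=\deg(u,\p\ba(\xv_i,r_i))$, $\sum_ir_i$ negligible, covering every zero of $u$ in $\ab$, and such that
\[
\int_{\cup_i\ba(\xv_i,r_i)}\gvm^2\,\half|\nabla u|^2\ \ge\ \pi\sum_{i}|d_i|\,\gvm(\xv_i)^2\,\big|\log(\eps^4\Omega)\big|\,(1-o(1)),
\]
together with the companion Jacobian estimate $\big|\int\varphi\,\big(\mathrm{curl}\,\mathrm{Im}(\overline u\,\nabla u)-2\pi\textstyle\sum_id_i\delta_{\xv_i}\big)\big|\le o(1)\|\nabla\varphi\|_\infty$. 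Integrating the mixed term by parts as $\int F\,\mathrm{curl}\,\mathrm{Im}(\overline u\,\nabla u)$, with $F$ normalized to vanish at $x=1$ and the boundary contributions on $\p\ab$ controlled by the smallness of $\gvm$ there, and inserting the Jacobian estimate and discarding the non-negative remainder, the bound $\Fg[u]\le0$ localizes on $\ab$ to
\[
\pi\sum_i|d_i|\,\gvm(\xv_i)^2\,\big|\log(\eps^4\Omega)\big|\ \le\ 2\pi\sum_i|d_i|\,|F(\xv_i)|+o\Big(\sum_i\gvm(\xv_i)^2\,\big|\log(\eps^4\Omega)\big|\Big).
\]
In the regime \eqref{giant vortex omega} one has $|\log(\eps^4\Omega)|=\log\Omega_0$, and by the estimates of the first step $|F(\xv_i)|$ is dominated by $\gvm(\xv_i)^2\log\Omega_0$ throughout $\ab$ once $\Omega_0$ exceeds a fixed $\bar\Omega_0$ -- this is exactly where the threshold $\bar\Omega_0$ and the constraint $c<\sqrt{2/\al}$ enter, the point being that $\ab$ lies in the region where $F$ (which vanishes at $x=1$) stays controlled by $\gvm^2\log\Omega_0$. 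The displayed inequality then forces all $d_i=0$, and a short argument on the structure of the balls upgrades this to the statement that there is no $\xv_i\in\ab$ at all; hence $u$, and so $\gpmoo$, has no zero in $\ab$ for $\Omega_0>\bar\Omega_0$ and $\eps$ small.

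\textbf{Pointwise estimate and main obstacle.} Once $u$ has no zero in $\ab$ one writes $u=|u|\,e^{i\phi}$ with single-valued $\phi$ there, and $\Fg[u]\le0$ together with the lower bound above gives $\int_{\ab}\gvm^2|\nabla u|^2+\eps^{-2}\int_{\ab}\gvm^4\lf(1-|u|^2\ri)^2=o(\text{leading order})$, i.e.\ $|u|$ is close to $1$ in an averaged sense. To upgrade this to the pointwise statement \eqref{gv point diff}, I would write the elliptic equation satisfied by $|\gpmoo|$ (equivalently by the ratio $|\gpmoo|^2/\gvm^2$), whose nonlinearity is sign-definite, and combine a maximum-principle/comparison argument with the averaged bound and interior elliptic estimates to conclude $|\gpmoo|/\gvm=1+O(|\log\eps|^{-a})$ on $\ab$ for every $a>0$. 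I expect the crux of the proof to be the combination of the first and third steps: one must control $\gvm$ and the primitive $F$ with errors that are $o(1)$ \emph{relative to} the small quantity $\gvm^2\log\Omega_0$ precisely on the part of $\ab$ where $\gvm$ is smallest, and since the annulus is only logarithmically wider than the vortex scale there is essentially no slack -- this is exactly the point at which the soft-trap analysis departs from the flat-trap one, where the bulk is comparatively wide and the surviving vortices are tiny.
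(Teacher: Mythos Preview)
Your energy decoupling and the identification of the competition between the kinetic term and the primitive $F$ are correct, but the heart of your argument --- running a Jerrard--Sandier vortex-ball construction with a Jacobian estimate to extract a cost $\pi\gvm^2\,|\log(\eps^4\Omega)|$ per unit vorticity --- is precisely what \emph{fails} in this regime and is the reason the paper develops a different method. In the regime $\Omega=\Omega_0\eps^{-4}$ the healing length for $u$ is $\eps/\gvme\sim\eps^2\Omega_0^{-1/4}$, while the natural scale is $\Omega^{-1/2}=\eps^2\Omega_0^{-1/2}$ and the bulk width is only $|\log\eps|^{1/2}\Omega^{-1/2}$; these differ by powers of $\Omega_0$ and $|\log\eps|$ only. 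There is no separation of scales: the ratio (ball radius)/(core size) is at best a power of $|\log\eps|$, so ball growth yields at most a $\log\log\eps^{-1}$ factor, and the Jacobian estimate has no useful error. Your claim that the healing length stays ``polynomially below the width of $\ab$'' is false, and note also that $|\log(\eps^4\Omega)|=\log\Omega_0=\OO(1)$, so even if the lower bound you wrote were available it would not be a large logarithm. The paper flags exactly this obstruction in Section~\ref{sec: introGV}: in a soft trap the last vortices have cores comparable to the annulus width, so the Ginzburg--Landau vortex-ball technology is unavailable.

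The paper's route bypasses vortex balls entirely. After decoupling on a larger annulus $\anne$ with profile $\gvme$, one integrates the momentum term by parts and uses the elementary pointwise bound $|\mu|\le|\nabla u|^2$ directly. The crucial pointwise estimate $|F|\le\al^{-1}(1+C\Omega_0^{-1/4})\gvme^2$ (Lemma~\ref{lem:F bulk}), together with $\al>2$, shows that $\tfrac12\gvme^2|\nabla u|^2$ dominates $|F\mu|$ pointwise once $\Omega_0$ is large --- this is where $\bar\Omega_0$ enters, not in a cost-versus-gain balance for localized vortices. The boundary contribution is handled by a new trick: use two potentials $F_1,F_2$ vanishing on the inner and outer boundaries of $\anns$, so the integration by parts leaves a term on the circle $x=\gvmaxe$ in the \emph{middle} of the bulk where $\gvme$ is large; the jump $F_1(\gvmaxe)-F_2(\gvmaxe)$ is small because $\gvme$ is nearly symmetric about $x=1$ (Proposition~\ref{gvme sym est: pro}), and the circulation on that circle is controlled by the kinetic energy (Lemma~\ref{lem:degree}). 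This yields the reduced-energy bound of Corollary~\ref{cor: red energy upbound}, and the pointwise statement~\eqref{gv point diff} then follows from a gradient estimate on $u$ (Lemma~\ref{lem:grad est u}) and a direct contradiction argument, not an elliptic comparison.
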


The estimate \eqref{gv point diff} justifies the notation $\ab$: Indeed we have 
\[
\int_{\ab} \diff \xv \: \lf|\gpm \ri|^2 = 1 - o (1),
\]
because such an estimate holds true for $\gvm$, so $\ab$ contains the bulk of the mass.
Theorem \ref{teo: giant vortex} then  characterizes the transition to a giant vortex  phase where the bulk of the condensate resides in an annulus free of vortices but with a macroscopic circulation around the origin. Recalling the results of Theorem \ref{vortex distribution: teo} we deduce that the third phase transition occurs when $\Om$ is of order $\ep ^{-4}$. Indeed, 
for speeds much smaller than this threshold, vortices do occur in the annular  bulk, whereas they are absent if $\Om$ is a sufficiently large 
multiple of $\ep ^{-4}$.

The core of the proof of Theorem \ref{teo: giant vortex} is a detailed analysis of the asymptotic behavior of the GP ground state 
energy, leading to the following estimate:

	\begin{teo}[\textbf{Ground state energy asymptotics}]
		\label{gv energy: teo}	
		\mbox{}	\\
		If $ \Omega $ is given by \eqref{giant vortex omega} with $ \Omega_0 > \bar\Omega_0 $  as in Theorem \ref{teo: giant vortex}, then as $ \eps \to 0 $
		\beq
			\label{gv energy asympt}
			\gpe = \gve + \OO(|\log \ep| ^{9/2}).
		\eeq
	\end{teo}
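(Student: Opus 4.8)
The upper bound is for free: $\gvm e^{i\into\vartheta}$ lies in $\gpdom$ and, by the very definition \eqref{gvf} of the giant-vortex functional, $\gpfoo[\gvm e^{i\into\vartheta}] = \gvf[\gvm] = \gve$, so that $\gpe \le \gve$. The whole content of \eqref{gv energy asympt} is therefore the matching lower bound $\gpe \ge \gve - C|\log\eps|^{9/2}$, and the plan is to prove it by energy decoupling followed by a weighted Ginzburg--Landau analysis on the bulk annulus $\ab$, adapting to the present gaussian profile the scheme used for flat traps in \cite{CRY,CPRY1,R2}.

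\textbf{A priori input and decoupling.} First I would gather the pointwise bounds that are needed. On the profile side, from Section \ref{sec: gv dens profile} (valid precisely because $\Omega_0 > \bar\Omega_0$): after the blow-up $y = \Omega^{1/2}(x-1)$ the function $\gvm$ is uniformly close, with quantitative error, to the gaussian \eqref{gv aux gaussian}; consequently $\gvm^2$ is essentially even about $x=1$, has finite gaussian moments, and satisfies $\gvm^2(x) \gtrsim \Omega^{1/2}\eps^{\al c^2} \gg 1$ throughout $\ab$ --- the inequality $\al c^2 < 2$ being exactly what the restriction $c < \sqrt{2/\al}$ in \eqref{Abulk} buys. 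On the $\gpm$ side, from the variational equation \eqref{GP variationaloo} and a maximum principle I get $\|\gpm\|_\infty = \OO(\|\gvm\|_\infty)$ and $|\gpm| \le C\gvm$, and from Agmon-type estimates --- using positivity and growth of the effective potential away from $x=1$ --- that $|\gpm(\xv)| = \OO(\eps^\infty)$ outside a bounded neighbourhood of the circle $\{x=1\}$ (in particular in the hole, cf.\ Proposition \ref{hole: pro}). Next I would write $\gpm = \gvm e^{i\into\vartheta} v$, set $\mathbf{B}(x) := (\Omega x - \into x^{-1})\mathbf{e}_{\vartheta}$ and $(iv,\nabla v) := \tfrac{i}{2}(v\nabla\bar v - \bar v\nabla v)$, and insert this into $\gpfoo$; using the Euler--Lagrange equation $-\half\Delta\gvm + \Omega^2 U\gvm + 2\eps^{-2}\gvm^3 = \gvchem\gvm$ (with $U$ as in \eqref{pot U}) and an integration by parts whose boundary terms at the origin and at infinity are $\OO(\eps^\infty)$ by the decay of $\gpm$, this yields the exact identity
\beq
	\label{decoupling plan}
	\gpe = \gve + \int_{\R^2} \diff\xv \lf[ \half\gvm^2|\nabla v|^2 + \gvm^2\, \mathbf{B}(x)\cdot(iv,\nabla v) + \frac{1}{2\eps^2}\gvm^4(|v|^2-1)^2 \ri] + \OO(\eps^\infty),
\eeq
the potential terms of $\gpfoo$ and $\gve$ having collapsed through the $\gvm$-equation and the normalization $\int_{\R^2}\gvm^2|v|^2 = \|\gpm\|_2^2 = 1$. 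Denoting by $\mathcal{F}[v]$ the integral in \eqref{decoupling plan} and using that the upper bound gives $\mathcal{F}[v] \le \OO(\eps^\infty)$, it remains to show $\mathcal{F}[v] \ge -C|\log\eps|^{9/2}$.

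\textbf{The weighted Ginzburg--Landau lower bound.} The first and third terms of $\mathcal{F}[v]$ are nonnegative, so only the magnetic term threatens, and its size is governed by the vortices of $v$. I would localize the vortex analysis to $\ab$ (the magnetic term over $\R^2\setminus\ab$ being absorbed using the decay of $\gpm$ and a preliminary, non-sharp bound on $\int\gvm^2|\nabla v|^2$), pass to a slightly smaller annulus, and discard a ``bad set'' where $|v|$ is far from $1$ --- this discard being the source of the eventual error $\OO(|\log\eps|^{9/2})$. On the remainder a weighted vortex-ball construction in the spirit of \cite{CRY,CPRY1} produces disjoint balls $\ba(a_i,r_i)\subset\ab$ with $|v|\ge\tfrac12$ on $\partial\ba(a_i,r_i)$, $d_i := \dg(v,\partial\ba(a_i,r_i))$, and a lower bound
\[
	\int_{\cup_i \ba(a_i,r_i)} \lf[ \half\gvm^2|\nabla v|^2 + \frac{1}{2\eps^2}\gvm^4(|v|^2-1)^2 \ri] \;\ge\; \pi\sum_i \gvm^2(a_i)\,|d_i|\lf( K_\eps - C \ri),
\]
where $K_\eps\to\infty$ (slowly) as $\eps\to0$; extracting such a diverging gain, rather than merely an $\OO(1)$ one, uses the gaussian tails of $\gvm$ together with the restriction $c<\sqrt{2/\al}$, and this is the first structural place where a gaussian profile (vortex cores comparable to the annulus width) behaves differently from a Thomas--Fermi one (cores much smaller than the width). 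For the magnetic term I introduce the radial potential $F$ with $F'(x) = x\,\gvm^2(x)(\Omega x - \into x^{-1})$, normalized to vanish at the outer edge of $\ab$; integration by parts and the estimates of the previous paragraph give
\[
	\lf| \int_{\ab} \gvm^2\, \mathbf{B}\cdot(iv,\nabla v) \ri| \;\le\; 2\pi\sum_i |d_i|\,|F(a_i)| + (\text{corrections}), \qquad |F(a_i)| \le \tfrac{2}{\al}\gvm^2(a_i) + \OO(1),
\]
the bound on $|F|$ being where the near-evenness of $\gvm^2$ about $x=1$ and its finite moments enter (the antisymmetric $2\Omega(x-1)$ part of $\mathbf{B}$ pairs with the even $\gvm^2$ to size $\OO(\gvm^2)$, while the constant part $\{\Omega\}\in[0,1)$ only contributes $\OO(1)$). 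Hence each vortex contributes to $\mathcal{F}[v]$ at least $\pi|d_i|\big[\gvm^2(a_i)(K_\eps - \tfrac{2}{\al} - C) - C\big]$, which is $\ge 0$ once $\eps$ is small, because $K_\eps\to\infty$ and, by the a priori input, $\gvm^2(a_i)\gg 1$ on $\ab$ --- this is the second use of $c<\sqrt{2/\al}$: it keeps the density, hence the per-vortex cost, well above the $\OO(1)$ remainders throughout $\ab$. Therefore $\mathcal{F}[v]\ge -C|\log\eps|^{9/2}$, which with \eqref{decoupling plan} and the upper bound is \eqref{gv energy asympt}. (As a by-product the construction forces the balls to be empty in the ``good'' part of $\ab$, which underlies Theorem \ref{teo: giant vortex}.)

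\textbf{Main obstacle.} The hard part will be the quantitative balance of the previous paragraph \emph{uniformly over} $\ab$, and in particular near its edges, where both the per-vortex self-energy and $F(a_i)$ are of the same large order $\Omega^{1/2}\eps^{\al c^2}$ and one must genuinely isolate the dominating divergent factor $K_\eps$ in the former. Carrying this out requires the sharp gaussian description of $\gvm$ with good error control, which is available only for $\Omega_0$ large --- this is the reason for the hypothesis $\Omega_0 > \bar\Omega_0$. A subsidiary difficulty is the bookkeeping of the bad set discarded in the vortex-ball construction, which is what degrades the error to the (presumably non-optimal) $\OO(|\log\eps|^{9/2})$.
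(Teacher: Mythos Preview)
Your proposal has a genuine gap: the vortex-ball construction you invoke does not work in this regime, and the paper explicitly avoids it for that reason. The optimal vortex core size here is $\sim (\eps^{-1}\gvme)^{-1} \sim \eps^{2}$, which is the \emph{same} order as the width $\Omega^{-1/2}\sim\eps^{2}$ of the bulk annulus. There is no separation of scales, so the logarithmic factor you call $K_\eps$ does not diverge --- it is $\OO(1)$. Your per-vortex inequality $\pi|d_i|\big[\gvm^{2}(a_i)(K_\eps - 2/\al - C) - C\big]\ge 0$ therefore cannot be closed, and the jacobian/ball machinery of \cite{CRY,CPRY1} gives no useful lower bound. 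A second problem is your treatment of the boundary term: with a single potential $F$ vanishing at one edge of $\ab$, the integration by parts leaves a boundary contribution on the \emph{other} edge, where $\gvme$ is already small (a power of $\eps$), and there is no way to control the circulation of $u$ there from the weighted kinetic energy.

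The paper's route is different and simpler. After restricting to an annulus $\anne$ of half-width $\eps^{2}|\log\eps|^{3/2}$ and decoupling with the profile $\gvme$ minimizing $\gvf$ on $\anne$ (Neumann conditions kill boundary terms), one integrates by parts using \emph{two} potential functions $F_1,F_2$ vanishing on the inner and outer boundaries of $\anns$ respectively. This produces a single boundary term on $\partial B_{\gvmaxe}$, i.e.\ in the \emph{middle} of the bulk where $\gvme^{2}\sim\Omega^{1/2}$, and the jump $F_1(\gvmaxe)-F_2(\gvmaxe)$ is small ($\lesssim\Omega_0|\log\eps|^{9/4}$) because $\gvme$ is nearly symmetric about $x=1$. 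For the bulk term, no vortex balls are used at all: one combines the elementary pointwise bound $|\mu|=|\curl(iu,\nabla u)|\le|\nabla u|^{2}$ with the sharp estimate $|F_i(x)|\le\al^{-1}(1+o_{\Omega_0}(1))\gvme^{2}(x)$. Since $\al>2$, this gives $\half\gvme^{2}|\nabla u|^{2}+F_i\mu\ge(\half-\al^{-1}-o(1))\gvme^{2}|\nabla u|^{2}\ge 0$ directly. The boundary circulation on $\partial B_{\gvmaxe}$ is then controlled by the kinetic energy because the density is large there, and balancing the various errors yields the $\OO(|\log\eps|^{9/2})$ remainder. The hypothesis $\Omega_0>\bar\Omega_0$ enters only through the pointwise estimate $|F_i|\le\al^{-1}(1+C\Omega_0^{-1/4})\gvme^{2}$, which needs $\gvme$ close enough to the gaussian.
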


We are also able to estimate the degree of the GP minimizer, ensuring that there is a macroscopic phase circulation around the central 
low-density hole. This is the content of the following  theorem.

	\begin{teo}[\textbf{Asymptotics for the degree}]
		\label{teo: gv degree}	
		\mbox{}	\\
		If $ \Omega $ is given by \eqref{giant vortex omega} with $ \Omega_0 > \bar\Omega_0 $  as in  Theorem \ref{teo: giant vortex} and
$R$ is any radius satisfying 
\begin{equation}\label{radius degree}
 R = 1 + \OO (\Om ^{-1/2}),  
\end{equation}
then as $ \eps \to 0 $
		\beq
			\label{gv degree asympt}
			\deg\left( \gpm,\; \dd B_{R} \right) = \Om  + \OO(\Om_0 |\log \ep| ^{9/4}).
		\eeq
	\end{teo}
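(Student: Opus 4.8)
The plan is to combine the no-vortex statement of Theorem \ref{teo: giant vortex} with the giant-vortex energy decoupling. First, since $\gpm$ has no zero in $\ab$ for $\Omega_0 > \bar\Omega_0$ and $\eps$ small, $\deg(\gpm,\partial B_R)$ is a well-defined integer for every $R$ with $\partial B_R \subset \ab$, and being continuous in $R$ and integer-valued it is \emph{constant} in $R$ over that range. Every $R$ as in \eqref{radius degree} lies in this range for $\eps$ small, because $\OO(\Omega^{-1/2}) \ll |\log\eps|^{1/2}\Omega^{-1/2}$; so it suffices to compute the degree on one convenient circle. Next I write $\gpm = \gvm\,e^{i\lfloor\Omega\rfloor\vartheta}\,v$ on $\R^2\setminus\{0\}$ (where $\gvm>0$); then $|v| = |\gpm|/\gvm = 1 + \OO(|\log\eps|^{-a})$ on $\ab$ by \eqref{gv point diff}, and $\deg(\gpm,\partial B_R) = \lfloor\Omega\rfloor + d_v$ with $d_v := \deg(v,\partial B_R)$. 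Since $\lfloor\Omega\rfloor = \Omega + \OO(1)$, everything reduces to proving $|d_v| \le C\,\Omega_0\,|\log\eps|^{9/4}$.

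The engine is the Lassoued--Mironescu decoupling: inserting the above factorisation into $\gpfoo$, using the variational equation for $\gvm$ and $\int_{\R^2}\gvm^2|v|^2 = \|\gpm\|_2^2 = 1$, one gets the exact identity
\[
\gpe \;=\; \gve \;+\; \int_{\R^2}\diff\xv\;\Big\{\tfrac12\,\gvm^2\,|\nabla v|^2 \;-\; \gvm^2\,\mathbf B\cdot(iv,\nabla v) \;+\; \eps^{-2}\,\gvm^4\,\big(|v|^2-1\big)^2\Big\},
\]
with $\mathbf B := \aavoo - \lfloor\Omega\rfloor\,x^{-1}\mathbf{e}_{\vartheta} = \big(\Omega x - \lfloor\Omega\rfloor x^{-1}\big)\mathbf{e}_{\vartheta}$ the residual gauge field. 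By Theorem \ref{gv energy: teo}, and since $\gpfoo[\gvm e^{i\lfloor\Omega\rfloor\vartheta}] = \gve \ge \gpe$, the integral equals $\gpe - \gve \in [-C|\log\eps|^{9/2},\,0]$. I localise this to $\ab$, bounding the complement via the exponential smallness of $\gpm$ there (Proposition \ref{hole: pro} for the central hole, and the outer decay estimates obtained en route to Theorem \ref{teo: giant vortex}).

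On $\ab$, write $v = |v|\,e^{i\phi}$ with $\phi = d_v\vartheta + \chi$ and $\chi$ single-valued (legitimate because $\gpm\ne0$ forces the winding to be the same $d_v$ on every circle). Using $\tfrac12|\nabla v|^2 - \mathbf B\cdot(iv,\nabla v) = \tfrac12|\nabla|v||^2 + \tfrac12|v|^2|\nabla\phi-\mathbf B|^2 - \tfrac12|v|^2|\mathbf B|^2$ and then $|\nabla\phi-\mathbf B|^2 = |d_v\nabla\vartheta-\mathbf B|^2 + 2(d_v\nabla\vartheta-\mathbf B)\cdot\nabla\chi + |\nabla\chi|^2$, one isolates the ``pure winding'' term $\int_{\ab}\gvm^2\big[\tfrac12|d_v\nabla\vartheta-\mathbf B|^2 - \tfrac12|\mathbf B|^2\big] = \tfrac12 d_v^2\int_{\ab}\gvm^2 x^{-2} - d_v\int_{\ab}\gvm^2\big(\Omega-\lfloor\Omega\rfloor x^{-2}\big)$. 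The crucial structural fact is that $d_v\nabla\vartheta$ and $\mathbf B$ are azimuthal while $\gvm^2$ is radial and $\partial\ab$ consists of circles, so $\int_{\ab}\gvm^2(d_v\nabla\vartheta-\mathbf B)\cdot\nabla\chi=0$ by integration by parts; the only surviving cross term, $\int_{\ab}\gvm^2(|v|^2-1)(d_v\nabla\vartheta-\mathbf B)\cdot\nabla\chi$, carries the small factor $|v|^2-1=\OO(|\log\eps|^{-a})$ and is absorbed using the a priori bounds furnished by the decoupled energy. Finally, since $\gvm$ is concentrated on scale $\Omega^{-1/2}$ about $x=1$ and close to the symmetric profile $\gosc$, one has $\int_{\R^2}\gvm^2=1$, $\int_{\R^2}\gvm^2 x^{-2}=1+\OO(\Omega^{-1})$ and $\int_{\R^2}\gvm^2(x-1)=\OO(\Omega^{-1})$, so the ``pure winding'' term is $\tfrac12(1+o(1))\,d_v^2 - \OO(1)\,|d_v|$ — in particular $\lfloor\Omega\rfloor$ sits within $\OO(1)$ of the continuous optimum of the radial winding problem. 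Putting the pieces together yields $\tfrac12(1+o(1))d_v^2 - \OO(1)|d_v| \le \OO(|\log\eps|^{9/2})$, whence $|d_v| = \OO(\Omega_0|\log\eps|^{9/4})$ (the $\Omega_0$-dependence coming from the constants in the profile and pointwise estimates), and \eqref{gv degree asympt} follows from $\deg(\gpm,\partial B_R) = \lfloor\Omega\rfloor + d_v$, $\lfloor\Omega\rfloor = \Omega + \OO(1)$.

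The hard part is this last step: the ``radial'' energy scale $\int_{\ab}\gvm^2|\mathbf B|^2 = \OO(\Omega) = \OO(\Omega_0\eps^{-4})$ is enormous compared with the excess-energy budget $\OO(|\log\eps|^{9/2})$, so any crude Cauchy--Schwarz handling of the cross term $\gvm^2\mathbf B\cdot(iv,\nabla v)$ is hopeless — it yields only an $\OO(\Omega^{1/2})$ error. One must instead use the exact radial/azimuthal orthogonality to kill the dangerous cross terms against the weight $\gvm^2$, together with sharp localisation and near-symmetry (small odd moments) of the giant-vortex profile $\gvm$, to prevent a spurious linear-in-$d_v$ gain of order $\Omega^{1/2}$. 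Controlling these profile estimates and the size of the single-valued fluctuation $\chi$ is where the real work lies.
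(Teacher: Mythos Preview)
Your approach has a genuine gap at precisely the point you flag as ``the hard part.'' The quantity you need to show is small is the linear-in-$d_v$ coefficient
\[
\int_{\ab} \gvm^2 |v|^2 \big(\Omega - \lfloor\Omega\rfloor x^{-2}\big),
\]
and the cross term $\int_{\ab}\gvm^2|v|^2(d_v\nabla\vartheta-\mathbf B)\cdot\nabla\chi$. Your orthogonality trick kills the $\nabla\chi$ cross term only when the weight is the \emph{radial} function $\gvm^2$; the actual weight is $\gvm^2|v|^2=|\gpm|^2$, which is not radial. Likewise, the symmetry input (Proposition~\ref{gvme sym est: pro}) controls the odd moment $\Omega\int\gvm^2(x-1)=\OO(\Omega_0|\log\eps|^{9/4})$, but you need the odd moment of $|\gpm|^2$, not $\gvm^2$. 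The difference introduces an error
\[
\OO(|\log\eps|^{-a}) \cdot \Omega \int_{\ab}\gvm^2 |x-1| = \OO\big(\Omega^{1/2}|\log\eps|^{-a}\big),
\]
and $\Omega^{1/2}=\Omega_0^{1/2}\eps^{-2}$ is a negative power of $\eps$ that no power of $|\log\eps|^{-1}$ from \eqref{gv point diff} can beat. The same $\Omega^{1/2}$ appears when you Cauchy--Schwarz the residual cross term with $\nabla\chi$. So your quadratic inequality degrades to $\tfrac12 d_v^2 \le \OO(\Omega^{1/2}|\log\eps|^{-a})|d_v|+\OO(|\log\eps|^{9/2})$, which yields only $|d_v|=\OO(\eps^{-2})$, far from \eqref{gv degree asympt}.

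The paper bypasses this entirely. Instead of a global energy comparison, it bounds the circulation on a single circle directly: Lemma~\ref{lem:degree} uses Stokes on a thin shell $\{\tilde R\le x\le R\}$ lying in the middle of the bulk, where $\gvme^2\ge C\Omega^{1/2}$, to obtain
\[
\bigg|\int_{\partial B_R}(iu,\partial_\tau u)\bigg| \le C\Big[(\Omega^{-1/2}+\delta)\int_{\anns}\gvme^2|\nabla u|^2 + \delta^{-1}\Big].
\]
Combining this with the a~priori kinetic bound $\int\gvme^2|\nabla u|^2\le C\Omega_0^2|\log\eps|^{9/2}$ from Corollary~\ref{cor: red energy upbound} and optimizing in $\delta$ gives $|d_v|\le C\Omega_0|\log\eps|^{9/4}$ directly. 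The point is that working on a circle in the bulk exploits the \emph{lower} bound $\gvme^2\ge C\Omega^{1/2}$ to convert kinetic energy into circulation control with the right $\Omega^{-1/2}$ prefactor; no decomposition $v=|v|e^{i\phi}$ and hence no $|v|^2-1$ errors ever enter.
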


As already mentioned, the physics of the giant vortex phase transition  in a 
homogeneous trap turns out to differ markedly from the corresponding regime in a flat trap that was studied in \cite{CRY,CPRY1,R2}. As a 
result, the technique of proof is different and does not rely on any  construction of vortex balls. The  remaining part of this subsection aims at elucidating the main steps in the proof.  To bring out the salient points technical details are skipped and most of the 
following equations will be written in a simplified form without the remainder terms.
 We remark also that the method to be presented is  simpler than that in 
\cite{R1} and leads to sharper results, although some ideas are common to both papers.

As a starter let us note that, in   the soft potentials ($s<\infty$)  we consider, the centrifugal force does not squeeze the condensate  against a hard wall as in the flat potential ($s=\infty$) considered in \cite{CPRY1}  although it still concentrates the scaled density  in an annulus around $x=1$.  The soft potential, however, allows more spreading of the matter density than a trap with a hard wall, so the density is lower and vortices therefore energetically less costly.  This explains the fact that the third critical speed is considerably larger in a homogeneous potential than  could be expected from the analysis of the flat trap model.

It had been recognized before \cite{R1} that for extreme rotation speeds the density profile would be gaussian, but  when studying the giant vortex transition a key difficulty originates from the fact that the change from the TF profile to the gaussian distribution takes place in the same parameter regime  as the one where vortices are expelled from the bulk of the condensate. In a sense \emph{two} distinct transitions 
happen at the same time 
in the regime \eqref{giant vortex omega} and this is the main reason  why the analysis we perform in this paper is rather different from 
\cite{CRY, CPRY1}.

Indeed, in \cite{CRY, CPRY1} the potential vortex cores were much smaller than the  width of the annular bulk of the condensate and vortices disappeared for energetic reasons. To analyze this phenomenon precisely we had to adapt the  method of vortex balls originally developed in the GL theory (see \cite{SS2} and references therein) and used to evaluate $\Ofirst$ \cite{IM1,AJR}.

In a homogeneous trap the cores of potential vortices turn out to be comparable in size to the width of the annular bulk  when $\Om \sim \Othird$. It is thus not  clear whether they disappear because of their energy cost or are 
simply expelled from the condensate because they are too large to fit in. This explains why we have not been able to give a precise estimate of the third critical speed: Potential vortices are too large to treat them using the GL technology. Their energy cost can not be evaluated as  accurately as in \cite{CRY,R2} and we have to rely on coarser  estimates.

The starting point and the basic methodology is however reminiscent  to that employed in  \cite{CRY,CPRY1} and in works dealing with the first critical speed 
in GP theory \cite{AAB,IM1,IM2}. We first decouple the energy functional, an idea originating in \cite{LM} and used repeatedly in the literature.
Actually it is more convenient to restrict the integration domain before performing the energy decoupling. For technical reasons   it is not sufficient to work on $\ab$   (defined by  Eq.\ \eqref{Abulk}) and we consider instead a larger domain 
\begin{equation}\label{sketch anne}
\anne = \left\{ \xv \in \R ^2 : \: 1 - \ete \ep ^2 \leq x \leq 1 + \ete \ep ^2\right\}
\end{equation}
where we have some freedom in the choice of  $\ete\gg 1$. It should in any case satisfy the condition $\ete \gg |\log \ep| ^{1/2}$ in order for $\anne$ to contain $\ab$ but since we shall also need to consider $\mathcal A_{\sqrt\ete}$ we even require $\ete \gg |\log \ep|$. 
If $\ete$ is chosen large enough, both the true GP minimizer and the giant vortex density profile are exponentially small in $ \eps $ outside 
$\anne$ and one can restrict the analysis to this annulus at the price of negligible remainders.  On the other hand, the proof of energy estimates and the absence of vortices requires that $\ete$ is not too large and we shall eventually choose $\ete=|\log\eps|^{3/2}$, cf.\ Eq. \eqref{ete choice}. 

By defining a giant vortex functional reduced to $\anne$ and minimizing it, we obtain an energy $\gvee$ and a profile $\gvme$ supported in 
the annulus $\anne$. We then write
 any test function $\psi$ as
\beq \label{sketch v}
\psi(\xv) = \gvme(x) e^{i\lfloor \Om \rfloor \vartheta} v(\xv),  
\eeq
which defines $v$ on $\anne$.  Since $\gvme$ is without zeros, the function $v$
 contains all possible zeros of $\psi$ in the annulus. 
The variational equation for $\gvme$ leads to 
\beq \label{sketch decouple}
\gpf [\psi] \simeq \gvee  + \Ee [v] 
\eeq
with
\beq\label{sketch: eeta}
\Ee [v] =\int_{\anne} \diff \xv\:  \gvme ^2 \left\{\half|\nabla v|^2- \rmagnp \cdot (iv,\nabla v) + \eps^{-2}\gvme ^2 (1-|v|^2)^2\right\},
\eeq
where $ \rmagnp =\left(\Omega x-\lfloor \Omega \rfloor x^{-1} \right)\mathbf e_\vartheta$ and $(iv,\nabla v)=\frac {\rm i}2(v\nabla v^*-v^*\nabla v)$.  
It is   plausible that if we write
\beq \label{sketch u}
\gpm(\xv) = \gvme(x) e^{i\lfloor \Om \rfloor \vartheta} u(\xv),  
\eeq
 then the function $u$ defined on $\anne$ should minimize $\Ee$ under the constraint
\beq \label{sketch: constraint}
 \int_{\anne} \diff \xv \: \gvme ^2 |u| ^2 = 1.
\eeq
This is not fully rigorous because some remainder terms in \eqref{sketch decouple} are neglected, most of which can be controlled using the exponential smallness of 
$\gpm$ outside of $\anne$. Nevertheless, studying the reduced problem of minimizing $\Ee$ under the constraint \eqref{sketch: constraint} gives some valuable insight on the 
full proof we are to develop in Section \ref{sec: GV}.

The appropriate trial function is given by $v = 1$, which corresponds to a pure giant vortex ansatz. This gives
\[
 \Ee [u] \leq 0
\]
and the gist of the proof is the obtainment of the corresponding lower bound, proving that the giant vortex trial function optimizes 
the energy  up to small remainders. To this end one needs to control the only possibly negative contribution originating in the second term of \eqref{sketch: eeta}. 
This is most conveniently done using a potential function defined as in \cite{AAB,CRY}:
\begin{equation}\label{sketch: potential}
F(x) = \int_{1 - \eps^2 \ete}^{x} \diff t \: \lf( \Om t - \lfloor \Om \rfloor t^{-1}\ri) \gvme^2(t). 
\end{equation}
This function satisfies $\nabla ^{\perp} F = \gvme^2 \rmagnp $ and vanishes on the inner boundary of $\anne$. Using Stokes' formula on the 
second term of \eqref{sketch: eeta} leads to 
\begin{equation}\label{sketch: ipp}
\Ee [u] =\int_{\anne} \diff \xv \lf\{  \half \gvme ^2 |\nabla u|^2 +F \mu + \eps^{-2}\gvme ^4 (1-|u|^2)^2 \ri\} - F (1+\ep ^2 \ete)
\int_{\dd B_{1+\ep ^2 \ete}} \diff \sigma \: (iu,\dd _{\tau} u).  
\end{equation}
The quantity $\mu$ is defined as 
\begin{equation}\label{sketch: vorticity}
\mu := \curl (iu,\nabla u) 
\end{equation}
and is often referred to as the \emph{vorticity measure} because in many situations involving vortices one can prove that such a measure is close to measure of the form \eqref{vorticity measure}: a sum of Dirac
 masses counting the vortices of 
$u$ with multiplicity. It is not completely clear that $\mu$ deserves such a denomination in the regime we consider: As already emphasized,
 the cores of potential vortices\footnote{The size of vortex cores can be estimated by taking it as a variational parameter in a trial function with prescribed zeros and minimizing the sum of the first and third terms in \eqref{sketch: eeta}.} are too large to employ the  jacobian estimate of \cite{JS} that would put the comparison with a sum of 
Dirac masses on a rigorous ground. For illustration we shall employ the terminology nevertheless. It is  justified by the fact that only the term involving 
$\mu$ in $\Ee[u]$ could favor the nucleation of vortices. Indeed the other bulk terms are positive and the boundary term should morally be 
negligible as we now explain.

Since $\Omega$  is very large it is plausible that one can replace it with its integer part at negligible cost. Also the density profile should be almost 
symmetric about  the radius $1$. If it were exactly symmetric, $F(1+\ete \ep ^2 )$ would be extremely small\footnote{As $ t - t ^{-1}$ is to a very good approximation antisymmetric with respect to $1$, $F(1+\ete \ep ^2 )$ would reduce to the integral of an odd function on a domain symmetric with respect to the origin, which vanishes.} 
and the boundary term in \eqref{sketch: ipp} could be neglected, reducing the proof 
of the lower bound to the estimate of the second term in \eqref{sketch: ipp}.

To deal with the term $\int_{\anne} F \mu$ we first remark  that due to the Cauchy-Schwarz inequality we have
\beq\label{sketch: control vorticity}
 \left| \mu \right| \leq |\nabla u| ^2
\eeq
pointwise, and in the regime \eqref{giant vortex omega} $\gvme$ and $F$ are of the same order of magnitude. It is thus plausible that 
$\gvme^2 |\nabla u| ^2$ can control $F \mu$ pointwise. This is in  marked contrast 
to the situation in a flat trap where $\gvme$ is much smaller than $ F $, making  the method of  vortex balls   indispensable
to improve the precision of the estimates. 

For $\Om_0$ large enough, $\gvme$ is approximately gaussian, so that one can use the ansatz \eqref{gv aux gaussian} to compute $ F $ and obtain
\begin{equation}\label{sketch: gv controls F}
|F (x) | \leq \al^{-1} \gvme^2(x) 
\end{equation}
for any $\xv \in \ab$ and $\Omega_0$ large enough. As $\al > 2$ (see \eqref{alpha}), combining \eqref{sketch: control vorticity} with \eqref{sketch: gv controls F} we can conclude that the 
kinetic energy term in \eqref{sketch: ipp} dominates the term involving the vorticity measure, which leads to a lower bound confirming that the giant vortex ansatz is optimal (up to negligible remainders). As a by-product of the proof one can also obtain an estimate of the third term in \eqref{sketch: eeta} that contradicts the presence of vortices in $\ab$: The technique, involving an estimate of $\left\Vert \nabla u \right\Vert_{L^{\infty}}$,  
originates in \cite{BBH} and has already been used in the flat trap case \cite{CRY,CPRY1}.

What remains to be understood is the way in which one can justify the approximations made above. We have assumed that $\Omega$ could be replaced by its integer part. This is harmless and the corresponding remainder terms are ignored in what follows. More importantly we have assumed that
\begin{itemize}
\item $\gvme$ is a radial Gaussian centered on $x = 1$ for $\Om_0$ large enough, which allows to prove \eqref{sketch: gv controls F};
\item in particular $ \gvme $ is symmetric with respect to $1$ so that the integration by parts in \eqref{sketch: ipp} produces only very small remainder terms.  
\end{itemize}

It is  fairly easy to prove that $\gvme$ is close to a Gaussian around the center of the bulk profile where the matter density is large enough. 
However, we need to justify the pointwise estimate \eqref{sketch: gv controls F} in a domain that is much larger than the bulk, 
including a region where the density $\gvme$ is extremely small. To obtain an appropriate lower bound to $\gvme$, a subsolution is constructed for the corresponding variational equation. For the method 
to apply it is necessary to restrict to a smaller annulus: We are able to prove that $\gvme$ is pointwise close to the optimal Gaussian 
only in $\anns$. The region $\anne \setminus \anns$ is dealt with\footnote{There is some freedom in the choice of $\anns$, the essential point being
that we can prove the appropriate lower bound on $\gvme$ only in a smaller domain than $\anne$.}
 using the exponential smallness of $\gpm$ in the low density hole.

As for the boundary term  in \eqref{sketch: ipp} one can prove that $\gvme$ is to a very good approximation symmetric around $x=1$. Indeed, as we have explained, 
the profile becomes gradually gaussian when $\Om_0$ is increased, because the weight of the quartic component of the energy decreases. 
The reason we have separated the discussion of the symmetry from that of the approximation by a Gaussian is that $\gvme$ is very close to a 
symmetric function even for $\Om_0$ small where the gaussian approximation is not expected to be valid. In some sense $\gvme$ is much more 
symmetric than it is gaussian. To understand this, one can consider the energy functional \eqref{gv aux func} obtained by keeping only the quadratic term in the Taylor 
expansion of the potentials: After rescaling, the minimizer is symmetric for any $\Om_0$ and one can prove that it 
approximates $\gvme$ much better\footnote{One can compare the remainders in \eqref{gvme point est} and \eqref{gvme sym est} below.} than the Gaussian 
\eqref{gv aux gaussian}.

 Making use of this property 
we show that $F(1 + \eps^2 \ete)$ is indeed very small. Then one could hope to conclude that the boundary term is negligible if a proper estimate of the phase circulation 
\[
\int_{\dd B_{1+\ep ^2 \ete}} \diff \sigma \: (iu,\dd _{\tau} u)   
\]
 were available. This is however not easy to obtain because the integral is located in a region where $\gvme$ is very small, casting doubt 
 on the possibility to prove any estimate of the term above using the kinetic energy $\half \int_{\anne} \gvme ^2 |\nabla u| ^2$.

To get around this point we modify our strategy as follows: Instead of the potential function \eqref{sketch: potential} we use 
{\it two} different potentials $F_1$, $F_2$ vanishing respectively on $\dd B_{1-\ep ^2 \ete}$ and $\dd B_{1+\ep ^2 \ete}$ . The first is used to calculate the part of the momentum term residing between the inner boundary of $\anne$ and the maximum point of $\gvme$, while we use the second in the complementary region, leading to a formula of the form 
\begin{equation}\label{sketch:final}
\Ee [u] \simeq \int_{\R ^2} \diff \xv \lf\{ \half \gvme ^2 |\nabla u|^2 +(F_1 +F_2) \mu + \eps^{-2}\gvme ^4 (1-|u|^2)^2 \ri\} - (F_1 (1) - F_2 (1))
\int_{\dd \B_1} \diff \sigma \: (iu,\dd _{\tau} u),  
\end{equation}
with the convention that $F_1 = 0$ for $r \geq 1 $ and $F_2 = 0 $ for $r \leq 1$. In some sense we are using  
a discontinuous potential, which has the effect to produce a boundary term on $\dd B_1$, i.e., in the middle of the bulk\footnote{More 
precisely the jump is at the maximum point of the profile, which is close to 1 but not exactly equal to it.}.  An important point is that $|F_1 +F_2| < \half \gvme ^2 $ when $\Om_0$ is large, which means that the analysis of the bulk terms can be carried out exactly as sketched above.

On the other hand the jump of the potential function $F_1 (1) - F_2 (1)$ is roughly speaking given by $F(1+ \eps^2\ete)$ with $F$ defined in \eqref{sketch: potential} and is thus very small because of the symmetry 
of $\gvme$. The key point is that this information can be combined with an efficient estimate of the phase circulation of $u$ on $\dd B_1$. Indeed, the
boundary term is now located where the matter density is large, which allows to control the phase circulation of $u$ in terms of the kinetic
 term in the energy (see Lemma \ref{lem:degree}). 
This control also leads to the degree estimate of Theorem \ref{teo: gv degree} once $u$ has been proved to be vortex free.

The trick of using a discontinuous potential function to make boundary terms tractable in the study of functionals such as \eqref{sketch decouple} 
seems to be new. In situations considered before \cite{AAB,CRY,CPRY1}, the TF-like profile allowed to employ other methods for estimating 
boundary terms. These fail in the context of the present paper due to the long tails of the approximately gaussian density profile.

\subsection{Rotational Symmetry Breaking}
\label{sec: introsym}

The nucleation of vortices in a rotating superfluid trapped in a radial potential has an important connection with the symmetry properties of the ground state. For example, when $\Orot$ is kept below the first critical speed, the ground state is expected to be  unique and rotationally symmetric. This has recently been proved in \cite{AJR} at least in the TF limit: The ground state for $\Orot <  \Ofirst$ is equal to the ground state without rotation,  i.e., for $\Orot = 0$, which is a radial function. When there is only one vortex in the superfluid, it will sit at the center of the trap and the ground state, although not radial any longer, is still an eigenfunction of the angular momentum, i.e., it is of the form $f(r) e^{id\vartheta}$ with $f$ a real-valued function and $d \in \Z $.

On the other hand the occurrence of more than one vortex in the GP minimizer $ \gpm $ is automatically associated with a breaking of the rotational symmetry of the functional at the level of the ground state (see, e.g., \cite{Seir,CDY1} for a detailed discussion of this phenomenon). Therefore Theorem \ref{vortex distribution: teo} guarantees that  rotational symmetry is actually broken for any $ |\log\eps| \ll \omega $, $ \Omega \ll \eps^{-4} $, since the bulk of the condensate is filled with a large number of vortices, namely of the order of the angular velocity $ \omega $ or $ \Omega $.

However, the giant vortex transition described in Section \ref{sec: introGV} seems to suggest a restoration of the symmetry under rotations: For instance the proof of Theorem \ref{gv energy: teo} shows that a giant vortex trial function with all the vorticity concentrated at the origin can reproduce the GP ground state energy up to subleading order corrections. In addition the bulk of the condensate is vortex free (see Theorem \ref{teo: giant vortex})  so all vorticity must reside in a region where the density is low. 

In the following theorem we prove that the rotational symmetry is never restored even though the angular velocity crosses the giant vortex threshold.  This suggests that there is a nontrivial distribution of vortices in the low density hole and it is an interesting open problem to locate them,  or else find other reasons for the symmetry breaking.
%As a consequence, the vorticity distribution in the low density hole must be non trivial and it is an interesting open problem to determine the location of the vortices in this region.

	\begin{teo}[{\bf Rotational symmetry breaking}]
		\label{symmetry break: teo}
		\mbox{}	\\
		If $ \Omega \gtrsim \eps^{-4} $, no minimizer $ \gpm $ of the GP functional is an eigenfunction of angular momentum for $ \eps $ sufficiently small.
	\end{teo}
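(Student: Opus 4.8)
The plan is to argue by contradiction: assume that some GP minimizer $\gpm$ is an eigenfunction of $L_z$. Any such eigenfunction equals $h(x)e^{in\vartheta}$ with $n\in\Z$, and replacing $h$ by $|h|\ge0$ and the phase of $h$ by a constant does not raise the energy \eqref{GPfoo}; hence we may assume $\gpm(\xv)=g(x)\,e^{in\vartheta}$ with $g=g^{*}\ge0$ radial, and that $g$ minimizes the reduced functional $f\mapsto\gpfoo[f e^{in\vartheta}]$ over normalized radial profiles. Then the GP equation \eqref{GP variationaloo} for $\gpm$ reduces to the radial ODE for $g$, so by the strong maximum principle $g>0$ on $(0,\infty)$ and $\gpm$ has no zero other than the degree-$n$ one sitting at the origin. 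In particular $n$ is optimal among all integers, whence $|n-\Omega|=\OO(1)$; feeding $\gpm=g e^{in\vartheta}$ into the energy asymptotics of Theorem~\ref{gv energy: teo} (and checking consistency with the degree estimate of Theorem~\ref{teo: gv degree}) forces $g$ to be pointwise close to the giant vortex profile $\gvm$ — hence, for $\Omega_{0}$ large, to the rescaled Gaussian \eqref{gv aux gaussian}, with the near-symmetry of $\gvme$ about $x=1$ recalled in Section~\ref{sec: introGV}.

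\textbf{Construction of a competitor.} The next step is then to exhibit a rotationally non-symmetric trial function of energy \emph{strictly} below that of $\gpm$, which is the contradiction sought. I would work inside the decoupling $\gpfoo[\psi]\simeq\gvee+\Ee[v]$ obtained by writing $\psi=\gvme(x)\,e^{i\lfloor\Omega\rfloor\vartheta}\,v(\xv)$ on $\anne$, with $\Ee$ as in \eqref{sketch: eeta}; under the contradiction hypothesis the minimizing $v$ is $u=(g/\gvme)\,e^{i(n-\lfloor\Omega\rfloor)\vartheta}$, whose associated vorticity measure is carried entirely by the origin. One then looks for a competitor $v_{\mathrm{trial}}\neq1$ — a small non-radial modulation of the profile, or a relocation of one quantum of the central circulation — supported essentially in the low-density region, i.e.\ in the central hole and/or the far tail, where $\gvme$, and hence $\gpm$, is only polynomially (or exponentially) small. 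There the Dirichlet and quartic contributions to $\Ee[v_{\mathrm{trial}}]$ are negligible, so the only possibly negative term is the momentum term $-\int_{\anne}\gvme^{2}\,\rmagnp\cdot(iv,\nabla v)$, which I would evaluate via the discontinuous potential functions $F_{1},F_{2}$ of \eqref{sketch: potential}--\eqref{sketch:final}: the smallness of the jump $F_{1}(1)-F_{2}(1)$, governed by the near-symmetry of $\gvme$, keeps the resulting boundary term controlled, while the bulk inequality $|F_{1}+F_{2}|<\half\gvme^{2}$ (a consequence of $\alpha>2$, cf.\ \eqref{alpha}) leaves a definite amount of slack against the kinetic energy $\half\gvme^{2}|\nabla v|^{2}$; the aim is to choose $v_{\mathrm{trial}}$ so that this slack is actually spent and $\Ee[v_{\mathrm{trial}}]$ becomes negative by more than the remainder terms.

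\textbf{Main obstacle.} The real difficulty is that, since $n$ is already the integer closest to the (generically non-integer) optimal circulation, every natural competitor ties with $g\,e^{in\vartheta}$ to leading order: the symmetry-breaking gain is a genuinely subleading quantity whose \emph{sign} must be pinned down, and uniformly in the whole range $\Omega\gtrsim\eps^{-4}$. Concretely, one has to push the energy expansion of the competitor one order beyond the precision of Theorem~\ref{gv energy: teo}, keeping track of the deviation of $\gvme$ from exact symmetry about $x=1$ and of the curvature of the effective potential in \eqref{GPfoo}, and prove the resulting correction is strictly negative. This is the quantitative heart of the proof; it is also the reason why, as noted after the statement, the precise location of the symmetry-breaking vorticity in the hole is left open. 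Equivalently, it is enough to establish the \emph{strict} inequality $\gpeoo<\min_{m\in\Z}\gpfoo[f_{m}e^{im\vartheta}]$, the right-hand side being the best pure-giant-vortex energy; any route to this strict separation — as opposed to the leading-order match already contained in Theorem~\ref{gv energy: teo} — would conclude the argument.
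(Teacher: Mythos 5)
Your proposal reduces the theorem to the strict inequality $\gpeoo<\min_{m\in\Z}\gpfoo[f_m e^{im\vartheta}]$ and then stops exactly at the point where the work has to be done: you acknowledge that every competitor you can name ``ties to leading order'' and that the sign of the subleading gain ``must be pinned down,'' but you do not pin it down. That is the whole content of the theorem, so as written this is a genuine gap, not a proof. Worse, the specific route you sketch --- an explicit non-symmetric competitor supported in the central hole or the far tail, with the gain extracted from the momentum term via the potentials $F_1,F_2$ --- runs directly into the paper's own lower bound machinery: for $\Omega_0$ large the estimates of Section 3.3 show that in the bulk the kinetic term dominates the momentum term (this is precisely how absence of vortices is proved), while in the low-density region every term of $\Ee$ is weighted by $\gvme^2$ or $\gvme^4$, which are exponentially small there, so it is not at all clear that any gain survives. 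Indeed the paper explicitly leaves open the question of where the symmetry-breaking vorticity sits, so demanding an explicit competitor is attacking an open problem.

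The missing idea is that one does not need a competitor at all: the paper instead computes the \emph{second variation} $\Q_{\bar n}[\Xi]$ of $\gpfoo$ around the best symmetric candidate $f_{\bar n}e^{i\bar n\vartheta}$ and exhibits an explicit perturbation $\Xi=(A+B)e^{i(\bar n+d)\vartheta}+(A-B)e^{i(\bar n-d)\vartheta}$, with $A,B$ built from $f_{\bar n}$ and $f_{\bar n}'$ cut off at the maximum point $\gvmaxe$, for which $\Q_{\bar n}[\Xi]<0$. After an integration by parts the dangerous terms collapse, using only $\bar n=\Omega(1+o(1))$, $\gvmaxe=1+o(1)$, $W=o(1)$, $W'=o(1)$ and the monotonicity of $f_{\bar n}$ on $[0,\gvmaxe]$, into
\[
\Q_{\bar n}[\Xi]\leq -2\pi\Omega^2 d\lf[(1-o(1))f_{\bar n}^2(\gvmaxe)-Cd\ri],
\]
which is strictly negative for fixed $d$ because $f_{\bar n}^2(\gvmaxe)=\OO(\eps^2\Omega)\gg1$. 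The sign question that blocks your approach is thus settled by an exact algebraic cancellation in the Hessian rather than by a delicate subleading energy expansion; this is also why the argument works uniformly for all $\Omega\gtrsim\eps^{-4}$. If you want to salvage your framework, the honest statement is that your Step 1 (reduction to a radial positive profile with a single central vortex of degree $\bar n$, plus the a priori estimates on $f_{\bar n}$) matches the paper's preparation, but Steps 2 and 3 must be replaced by the second-variation computation.
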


Note that the theorem is stated for $ \Omega \gtrsim \eps^{-4} $, i.e., in the giant vortex regime. This is the regime where one could a priori hope that the ground state of the theory is exactly of the form $f(r) e^{id\vartheta}$. 

The same result has already been proven for smaller angular velocities in \cite[Theorem 2]{Seir}: Assume for instance that $ |\log\eps| \lesssim \omega \ll \eps^{-1} $, then it is not difficult to realize that the total winding number of any minimizer must be $ \OO(\omega) $. Therefore if one picks some finite $ d > s/2 + 1 $, Theorem 2 in \cite{Seir} proves the instability of any symmetric vortex of degree $ n = \OO(\omega) $:  Condition (2.29) in \cite{Seir}   is indeed always satisfied for $ \eps $ sufficiently small, since the chemical potential is of order $ \OO(\eps^{-2}) $, whereas the r.h.s of (2.29) is $ \OO(\omega^2) $.

\subsection{Organization of the Paper}

The proofs of the results stated above are given in the following sections. The vortex lattice regime is discussed in Section 2, beginning in Sect. 2.1 with a comparison between the TF energy and density with those of an energy functional where the vector potential responsible for the appearance of vortices has been dropped. Section 2.2 contains the proof of upper and lower bounds to the vortex contributions to the energy. The uniform distribution of vorticity is proved in Section 2.3.

Section 3 is concerned with the giant vortex transition. After some preliminary estimates on the ground state energy and minimizing wave functions in Sect. 3.1, we study in Sect. 3.2 the giant vortex density profile. The essential results here are the estimates on the deviation from a Gaussian (Proposition 3.5) and from another profile symmetric about $x=1$ (Proposition 3.7). The basic energy estimates are given in Sect. 3.3 and in Sect. 3.4 they are combined with gradient estimates to establish the giant vortex transition. 

Finally, breaking of rotational symmetry is proved in Section 4.

\section{The Regime $ |\log\eps| \ll \omega, \ \Omega \ll \eps^{-4} $}

\subsection{The TF Energies and Densities}
\label{TF: sec}

%\beq
	%\label{lattice condition 2}
	%\rmax^2 \reps^{-2} \omegac \leq \Omega \ll \eps^{-4},
%\eeq

%This section contains some technical results which will be used in the proofs. Most results have already been proven elsewhere in slightly different settings so we often omit the main details and refer to other papers.

We start by summing up some important information about the minimization of the TF functionals $ \tffo $ and $ \tffoo $ (Equations \eqref{tffo} and \eqref{tffoo}): We often refer to \cite{CDY2} for a detailed analysis and omit here the proof details.

The unique TF minimizer  of the functional \eqref{tffo} for $ |\log\eps| \ll \omega \lesssim \eps^{-1} $ is the density \eqref{tfmo}, i.e.,
\bdm
	\tfmo(x) = \half \left[ \eps^2 \tfchemo - x^s + \half \gamma \eps^2 \omega^2 x^2 \ri]_+ 
\edm
with $ \tfchemo = \tfe + \eps^{-2} \| \tfmo \|^2_2 $. The support of $ \tfmo $ is always a compact set since  $ s > 2 $, and there exists a critical value of the angular velocity $ \omegac = \OO(\eps^{-1}) $ such that
\beq
	\label{tfsupp}
	\tfsuppo = 
	\begin{cases}
		\lf\{ \xv \in \R^2 : \: x \in [0,\xout] \ri\},		&	\mbox{if} \:\: \omega \leq \omegac,	\\
		\lf\{ \xv \in \R^2 : \: x \in [\xin,\xout] \ri\},		&	\mbox{if} \:\: \omega > \omegac
	\end{cases}
\eeq
with $\xin>0$.  It is given by
\beq
	\label{omegac}
	\omegac = 2^{\frac{s}{s+2}} \gamma^{-1/2} \lf( \frac{2(s+2)}{\pi (s-2)} \ri)^{\frac{s+2}{2(s-2)}} \eps^{-1}.
\eeq
If $ \omega = \omegac $, the outer radius is
\bdm
	\xout = \lf( \half \gamma \eps^2 \omegac^2 \ri)^{\frac{1}{s-2}}. 
\edm
For generic $ \omega $ neither $ \xin $ nor $ \xout $ are explicitly given but can be obtained by solving the equations $ \tfmo(\xinout) = 0 $. Both radii are increasing functions of $ \omega $, whereas both $ \tfchemo $ and $ \xout - \xin $, i.e., the width of $ \tfsuppo $, decrease as $ \omega $ increases.

When $ \omega $ becomes larger so that  $ \omega \gtrsim \eps^{-1} $ (which is the same as $\Omega\gtrsim \eps^{-1} $) we use the definition \eqref{tffoo} for the TF functional. The TF minimizer is in this case
\beq
	\label{tfm 2}
	\tfmoo(x) : = \half \lf[ \eps^2 \tfchemoo - \gamma \eps^2 \Omega^2 W(x) \ri]_+.
\eeq
The support of $ \tfmoo $ is always compact,  but  if $ \Omega \gtrsim \eps^{-1} $ is sufficiently large, the density vanishes in a neighborhood of the origin. More precisely one can define an analogue,  denoted by $ \Omegac $, of the critical value $ \omegac $ as the smallest value of $ \Omega $ such that $ \tfmoo $ vanishes at the origin $ x = 0 $. A simple computation shows that
\beq
	\label{Omegac}
	\Omegac = \rmax^2 \reps^{-2} \omegac = 2 \sqrt{\frac{2(s+2)}{\pi \gamma (s-2)}} \lf( \frac{2}{s} \ri)^{\frac{2}{s-2}} \eps^{-1},
\eeq
i.e., it agrees with what is implied by the scalings \eqref{rescaling 1} and \eqref{rescaling 2}.
\newline
If $ \Omega \gg \eps^{-1} $ all the expressions simplify and, by using the Taylor expansion 
\beq
	\label{Taylor W}
	W(x) = \half (s-2)(1 - x)^2 + \OO\lf( |1 - x|^3\ri),
\eeq   
and the fact that $ \tfmoo(\xinout) = 0 $, one obtains
\beq
	\tfchemoo = \half \gamma (s-2) \Omega^2 \lf( 1 - \xinout \ri)^2 \lf(1 + \OO(|1-\xinout|) \ri),
\eeq
so that the normalization condition yields
\beq
	\label{tfsupp width}
	\lf|1 - \xinout \ri| = \lf[ \frac{3}{2 \pi \gamma (s-2)} \ri]^{1/3} (\eps\Omega)^{-2/3} \lf( 1 + \OO((\eps \Omega)^{-2/3}) \ri),
\eeq
and $ \tfchemoo = \OO(\eps^{-4/3} \Omega^{2/3}) $, which in turn implies that
\beq
	\label{tf sup}
	\lf\| \tfmoo \ri\|_{L^{\infty}(\R^2)} = \OO((\eps\Omega)^{2/3}).
\eeq
Note also that for $ \Omega \gg \eps^{-1} $ the density can be written as
\beq
	\tfmoo(x) = \tx\frac{1}{4} \gamma (s-2) \eps^2\Omega^2 \lf[ (1-\xinout)^2 - (1-x)^2 + \OO\lf((\eps\Omega)^{-1}\ri) \ri]_+.
\eeq

For further convenience we introduce another reference density, which is obtained by minimizing a GP-like energy obtained by adding some radial kinetic energy to $ \tff $, i.e., if $ |\log\eps| \ll \omega < \omegac $,
\beq
	\label{hgpfo}
	\hgpfo[f] : = \int_{\mathbb R^2} \diff \xv \left\{ \half |\nabla f|^2  - \half \gamma \omega^2 x^2 f^2 + \eps^{-2} \left[\lf( x^s + f^2 \ri) f^2 \right] \right\},
\eeq
and, if $ \Omegac \leq \Omega \ll \eps^{-4} $,
\begin{equation}
 	\label{hgpfoo}
	\hgpfoo[f] : = \int_{\mathbb R^2} \diff \xv \left\{ \half |\nabla f|^2 + \gamma \Omega^2 W(x) f^2 + \varepsilon^{-2} f^4 \right\},
\end{equation}
where $ f \in \hgpdom $ and
\beq
	\label{hgpdom}
	\hgpdom : =  \lf\{ f \in H^1(\R^2), f \geq 0 : \: \lf\| f \ri\|_2 = 1 \ri\}.
\eeq
We also set
\beq
 	\label{hgpe}
	\hgpeo : = \inf_{f \in \hgpdom} \hgpfo[f],	\hspace{1cm}	\hgpeoo : = \inf_{f \in \hgpdom} \hgpfoo[f],
\eeq
and in both cases we denote (with some abuse of notation) by $ g_0 $ the associated unique  nonnegative minimizer.
\newline
The energy $ \hgpeo $ (resp. $ \hgpeoo $) is trivially bounded from below by $ \tfeo $ (resp. $ \tfeoo $). Next proposition proves that it is also bounded from above by the same quantity up to smaller order corrections:

	\begin{pro} [{\bf Estimates for $ \hgpeo $ and  $ \hgpeoo $}] 
		\label{hgpe ub: pro}
		\mbox{}	\\
		If $ 0 \leq \omega < \omegac $ as $ \eps \to 0 $ then
		\beq
			\label{hgpeo ub}
			\tfeo \leq \hgpeo \leq	\tfeo + \OO(|\log\eps|),
		\eeq
		while, if $ \Omegac \leq \Omega \ll \varepsilon^{-4} $, 
		\begin{equation}
			\label{hgpeoo ub}
			\tfeoo \leq \hgpeoo \leq \tfeoo + \OO( (\eps \Omega)^{4/3}  |\log(\varepsilon^4 \Omega)|).
		\end{equation}
	\end{pro}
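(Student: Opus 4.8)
\textbf{Proof proposal for Proposition \ref{hgpe ub: pro}.}

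The plan is to establish the two upper bounds by evaluating the functionals $\hgpfo$ and $\hgpfoo$ on carefully chosen trial functions built from the (explicit) TF densities $\tfmo$, $\tfmoo$. Since the lower bounds $\tfeo \leq \hgpeo$ and $\tfeoo \leq \hgpeoo$ are immediate --- dropping the nonnegative term $\tfrac12|\nabla f|^2$ from $\hgpf$ turns it into $\tff$ evaluated on $f^2$, and $f\mapsto f^2$ maps $\hgpdom$ into $\tfdom$ --- the whole content is in the upper bounds. The naive choice $f = \sqrt{\tfm}$ fails because $\sqrt{\tfm}$ has an unbounded gradient where $\tfm$ touches zero (the square-root singularity at the edges of the support, and, when $\Omega>\Omegac$, at the inner edge of the annulus), so $\tfrac12\int|\nabla f|^2$ would diverge. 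The remedy, which is by now standard (cf.\ the regularization procedures in \cite{CDY2,CRY,CPRY1}), is to regularize $\sqrt{\tfm}$ near the boundary of its support on a thin layer, flattening or linearly interpolating the profile so that the gradient stays controlled, and then renormalize in $L^2$.

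Concretely, for the first case I would set, roughly, $f_{\mathrm{trial}} := c\, \sqrt{\tfmo}$ away from a boundary layer of width $\delta$ around $\partial(\tfsuppo)$ and interpolate down to zero (or to a small constant matching the interior value) across that layer, with $c = 1 + o(1)$ the $L^2$-normalization constant. The kinetic energy contribution is then $\tfrac12\int |\nabla f_{\mathrm{trial}}|^2 \lesssim \delta \cdot \|\nabla\sqrt{\tfmo}\|^2_{L^\infty(\text{layer})} + (\text{interior gradient of }\sqrt{\tfmo})$; using the explicit form \eqref{tfmo} one checks that away from the layer $|\nabla\sqrt{\tfmo}|^2$ is integrable and of the right order, while the layer contribution is optimized by choosing $\delta$ appropriately (a power of $\eps$), yielding the $\OO(|\log\eps|)$ bound. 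The potential and quartic terms of $\hgpfo$ on $f_{\mathrm{trial}}^2$ reproduce $\tffo[\tfmo] = \tfeo$ up to the error made by the modification on the layer, which is lower order because $\tfmo$ is small there and $|\text{layer}|$ is small. The $|\log\eps|$ rather than a power of $\eps$ comes, as usual, from the logarithmic interplay between the layer width and the scale on which $\tfmo$ varies; one should track the constants from \eqref{tfmo}, \eqref{omegac} to confirm it.

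For the second case, $\Omegac \leq \Omega \ll \eps^{-4}$, the construction is analogous but now the support of $\tfmoo$ is an annulus $[\xin,\xout]$ with $\xout-\xin = \OO((\eps\Omega)^{-2/3})$ by \eqref{tfsupp width}, and $\|\tfmoo\|_\infty = \OO((\eps\Omega)^{2/3})$ by \eqref{tf sup}; regularizing $\sqrt{\tfmoo}$ on layers of width $\delta$ at both $\xin$ and $\xout$ gives a kinetic cost scaling like $\delta \cdot \|\tfmoo\|_\infty / (\xout-\xin)^2 \cdot |\partial(\tfsuppoo)|$ plus the interior gradient, and optimizing $\delta$ against the width $(\eps\Omega)^{-2/3}$ produces the stated $\OO((\eps\Omega)^{4/3}|\log(\eps^4\Omega)|)$; again the logarithm is the signature of the optimal layer choice. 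The main obstacle --- really the only delicate point --- is the bookkeeping of these boundary-layer estimates: one must verify that the interpolation can be done so that simultaneously (i) $f_{\mathrm{trial}}\in H^1(\R^2)$ with the claimed kinetic bound, (ii) the error in the potential-plus-quartic part is genuinely of lower order than the kinetic bound, and (iii) the $L^2$-renormalization constant is $1+o(1)$ with the correction absorbable into the stated remainders. Since \cite{CDY2} already carries out essentially this analysis for the TF problem, I would cite it for the detailed layer construction and only reproduce the part needed to get the kinetic energy of the regularized square root, which is where the $|\log\eps|$ and $|\log(\eps^4\Omega)|$ factors originate.
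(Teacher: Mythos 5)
Your proposal follows essentially the same route as the paper: the lower bound is the trivial observation that dropping the gradient term reduces $\hgpf[f]$ to $\tff[f^2]$, and the upper bound is obtained by evaluating $\hgpf$ on a regularized $\sqrt{\tfm}$, with the logarithmic factor arising from the kinetic energy of the square-root profile near the edges of $\tfsupp$ after optimizing the width of the regularization layer (a power $(\eps^4\Omega)^a(\eps\Omega)^{-2/3}$ in the second regime). The paper likewise only sketches this and defers the layer bookkeeping to \cite{CY} and \cite[Proposition 2.1]{CPRY1}, so your argument is at the same level of detail and correct.
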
 

	\begin{rem}({\it Radial kinetic energy of $ \tfmoo $})
		\mbox{}	\\
		Using \eqref{tfsupp width} as well as \eqref{tf sup} and the Taylor expansion \eqref{Taylor W}, one easily obtains that
		\beq
			\tfeoo = \OO(\eps^{-4/3} \Omega^{2/3}),
		\eeq
		which is much larger than the error term in \eqref{hgpeoo ub}, i.e., the radial kinetic energy of the (regularized) TF profile $ \tfmoo $, for any $ \Omega \ll \eps^{-4} $, since
		\bdm
			(\eps^4 \Omega)^{2/3}|\log(\eps^4\Omega)| \ll 1.
		\edm
	\end{rem}
	We  also remark that employing the variational equation for $g_0$ one can show that $\hat E_\Omega^{\rm GP}\geq 
	E_\Omega^{\rm TF}+ c (\eps\Omega)^{4/3}|\log (\eps^4\Omega)|$ with some $c>0$.
	\medskip

	\begin{proof}
		The result is proven by evaluating the functional $ \hgpfo $ (resp. $ \hgpfoo $) on a suitable regularization\footnote{The regularization is necessary since the kinetic energy density is otherwise not integrable around $ x = \xinout $.} of $ \sqrt{\tfmo} $ (resp. $ \sqrt{\tfmoo} $) like in \cite{CY}. The error terms in \eqref{hgpeo ub} and \eqref{hgpeoo ub} are precisely given by the kinetic energy of such functions. For instance, if $ \Omega \geq \Omegac $, it is clear that the kinetic energy of the TF profile far from $ \xinout $  is of order $ (\eps\Omega)^{4/3} $, but one has to regularize the density in a neighborhood of $ \xin $ and $ \xout $: By a simple optimization, one can show that the width of such regions has to be of order $ (\eps^4 \Omega)^a (\eps\Omega)^{-2/3} \ll (\xout - \xin) $ for some $ a > 0 $, which produces the additional factor $ |\log(\eps^4 \Omega)| $ (see, e.g., \cite[Proposition 2.1]{CPRY1}). 
	\end{proof}

With some slight abuse of notation we denote by $ g_0 $ the minimizer of either \eqref{hgpfo} or \eqref{hgpfoo}. It has other useful properties which can be easily proven: It is radial and strictly positive and has a unique maximum. Moreover one can show that it is suitably close to $ \tfm $: A trivial consequence of \eqref{hgpeo ub} and \eqref{hgpeoo ub} is (see, e.g., \cite[Proposition 2.1]{CRY}):
\begin{eqnarray}
	\label{l2 est g0 1}
	\lf\| g_0^2 - \tfmoo \ri\|_2^2 \leq \OO(\eps^2|\log\eps|),	\hspace{0,2cm}	&\mbox{if}&  0 \leq \omega < \omegac;	\\
	\label{l2 est g0 2}
	\lf\| g_0^2 - \tfmoo \ri\|_2^2 \leq \OO(\eps^{10/3} \Omega^{4/3} |\log(\eps^4\Omega)|), \hspace{0,2cm}	 &\mbox{if}&  \Omegac \leq \Omega \ll \eps^{-4}.
\end{eqnarray}
One can also show as in \cite[Proposition 2.1]{CRY} that
\beq
	\label{linfty est g0}
	\lf\| g_0 \ri\|_{L^{\infty}(\R^2)}^2 \leq 
	\begin{cases}
		\lf\| \tfmo \ri\|_{\infty} (1 + o(1))= \OO(1),			&	\mbox{if} \:\:\: 0 \leq \omega < \omegac;	\\
		\tfmoo(1) (1 +o(1))  = \OO(\eps^{2/3} \Omega^{2/3})	 	&	\mbox{if}  \:\:\: \Omegac \leq \Omega \ll \eps^{-4}.
	\end{cases}
\eeq
The estimates \eqref{l2 est g0 1} and \eqref{l2 est g0 2} can actually be improved:

	\begin{pro}[{\bf Pointwise estimate for $g_0$}]
		\label{point est g0: pro}
		\mbox{}	\\
		If $ 0 \leq \omega < \omegac $\footnote{The condition $ \omega < \omegac $ should be thought of for $ \omega = \OO(\eps^{-1}) $ as the stronger requirement $ \omega = \omega_0 \eps^{-1} $ for some constant $ \omega_0 < \omegac \eps $. Similarly the condition $ \Omega \geq \Omegac $ means actually $ \Omega \geq (1 - o(1)) \Omegac $ and, if $ \Omega $ is below $ \Omegac $, $ \xin $ has to be set equal to $ 0 $.} as $ \eps \to 0 $, then
		\beq
			\label{point est g0 1}
			\lf| g_0^2(x) - \tfmo(x) \ri| \leq \OO(|\log\eps|^{-1}),
		\eeq
		uniformly in $ \xv \in \Ao $ with 
		\beq
			\label{Ao}
			\Ao : = \lf\{ \xv \in \R^2 : \: 0 \leq x \leq \xout - |\log\eps|^{-1} \ri\}.
		\eeq
		If on the other hand $ \Omegac \leq \Omega \ll \eps^{-4} $
		\beq
			\label{point est g0 2}
			\lf| g_0^2(x) - \tfmoo(x) \ri| \leq \OO(\eps^{4/3} \Omega^{1/3} |\log(\eps^4\Omega)|^{3/2}) \: \tfmoo(x) = o(1) \: \tfmoo(x),
		\eeq
		uniformly in $ \xv \in \A_{\Omega} $ with
		\beq
			\label{Aoo}
			\Aoo : = \lf\{ \xv \in \R^2 : \: \xin + (\eps \Omega)^{-2/3} |\log(\eps^4\Omega)|^{-1} \leq x \leq \xout - (\eps \Omega)^{-2/3} |\log(\eps^4\Omega)|^{-1} \ri\}.
		\eeq
	\end{pro}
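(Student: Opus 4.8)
The plan is to derive both inequalities from the variational equation for $g_0$, using the algebraic identity satisfied by the TF density together with the maximum principle. Recall that the nonnegative minimizer $g_0$ of $\hgpfo$ (resp. $\hgpfoo$) is radial, strictly positive and unimodal, and solves in $\R^2$
\[
-\tfrac12\Delta g_0 + Vg_0 + 2\eps^{-2}g_0^3 = \hchem\, g_0,
\]
with $V = \eps^{-2}x^s - \tfrac12\gamma\omega^2 x^2$, $\hchem=\hgpeo+\eps^{-2}\|g_0\|_4^4$ (resp. $V=\gamma\Omega^2 W(x)$, $\hchem=\hgpeoo+\eps^{-2}\|g_0\|_4^4$), while on $\{\tfm>0\}$ the TF density obeys $2\eps^{-2}\tfm=\tfchem-V$. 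A preliminary step is to bound the difference of the chemical potentials: combining $\tfeo\le\hgpeo\le\tfeo+\OO(|\log\eps|)$ (resp. \eqref{hgpeoo ub}) with the $L^2$ closeness \eqref{l2 est g0 1}--\eqref{l2 est g0 2} and the sup bounds \eqref{linfty est g0}, a Cauchy--Schwarz estimate of $\eps^{-2}\int(g_0^4-\tfm^2)=\eps^{-2}\int(g_0^2-\tfm)(g_0^2+\tfm)$ yields $|\hchemo-\tfchemo|=\OO(\eps^{-1}|\log\eps|^{1/2})$ and, in the second regime, a correspondingly small quantity; in both cases this is much smaller than $\eps^{-2}$ times the errors to be proven, which is what makes the comparison arguments close.

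For the upper bound I would use a global supersolution. Put $\bar g:=\sqrt{\tfm+\delta}$ in the first regime (and $\bar g:=\sqrt{(1+\eta)\tfm+\delta}$ with $\delta$ exponentially small in the second, so as to keep a multiplicative error), with $\delta,\eta$ of the order of the claimed error. On $\{\tfm>0\}$ the identity above turns the supersolution inequality into $\tfrac12\Delta\bar g/\bar g\le(\tfchem-\hchem)+2\eps^{-2}\delta$; since $\tfm$ is concave on (most of) its support, $\Delta\bar g\le0$ there up to a harmless lower-order radial term and up to rounding the corner of $\tfm$ at $\p\tfsupp$ (which only requires enlarging $\bar g$ slightly), and the right-hand side is nonnegative by the previous step. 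Where $\tfm=0$, $\bar g$ is constant and the inequality reduces to $2\eps^{-2}\delta\ge\hchem-V$, which holds since there $V\ge\tfchem$. Because $g_0\to0$ at infinity, $(g_0-\bar g)_+\in H^1(\R^2)$ has compact support, and testing the two equations against it gives $g_0\le\bar g$ everywhere; this is the upper halves of \eqref{point est g0 1}--\eqref{point est g0 2}, in fact with room to spare.

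The lower bound is the delicate part and the reason for passing to the smaller sets $\Ao,\Aoo$. Here I would analyze the elliptic equation obeyed by $h:=g_0^2$, namely (on $\{\tfm>0\}$) $\tfrac12\Delta h=|\nabla h|^2/(4h)-2(\hchem-\tfchem)h+4\eps^{-2}h(h-\tfm)$, at a point $x_0$ where $h-\tfm$ attains its minimum over $\Ao$ (resp. $\Aoo$). If $x_0$ is interior, then $\nabla h(x_0)=\nabla\tfm(x_0)$ and $\Delta h(x_0)\ge\Delta\tfm(x_0)$; feeding this into the equation and multiplying by $h(x_0)$ gives
\[
4\eps^{-2}h(x_0)^2\,(\tfm-h)(x_0)\ \le\ \tfrac14|\nabla\tfm(x_0)|^2+\tfrac12|\Delta\tfm(x_0)|\,h(x_0)+2|\hchem-\tfchem|\,h(x_0)^2,
\]
so the deviation $(\tfm-h)(x_0)$ is small as soon as $h(x_0)$ is bounded below. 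By unimodality of $g_0$ and the already established upper bound, $h=g_0^2\gtrsim\tfm$ down to the edge of $\Ao$ (resp. $\Aoo$), so $h(x_0)$ is bounded below there by the value of $\tfm$ at that edge; and if the minimum is attained on $\p\Ao$ (resp. $\p\Aoo$), then trivially $(\tfm-h)(x_0)\le\tfm(x_0)$, which is the value of $\tfm$ one cut-out width away from $\p\tfsupp$. In the first regime $\tfm$ vanishes linearly at $\p\tfsupp$ with slope $\OO(1)$, so this value is $\OO(|\log\eps|^{-1})$, the source of the error in \eqref{point est g0 1}; in the second regime one runs the same argument inside the thin annulus $|1-x|\lesssim(\eps\Omega)^{-2/3}$, keeping track of $\|\tfmoo\|_\infty\sim(\eps\Omega)^{2/3}$, $|\nabla\tfmoo|\sim\eps^{4/3}\Omega^{4/3}$ and the parabolic behaviour of $\tfmoo$ near $\xin,\xout$, so that the outcome is the multiplicative error of \eqref{point est g0 2}.

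The main obstacle is precisely this behaviour near $\p\tfsupp$ in the lower bound: there the matter density is small and the maximum-principle estimate degrades, so one cannot reach $\p\tfsupp$ itself and must settle for the shrunken domains; controlling the trade-off between how close to $\p\tfsupp$ one can work and how large the resulting deviation is is what dictates the (non-sharp-looking but sufficient) errors in the statement, and in the second regime it additionally requires a careful local comparison of $g_0$ with the parabolic TF profile near $\xin$ and $\xout$. The remaining points are routine: the $H^1$/Lipschitz regularization of $\bar g$ (and of any comparison function) near the corner of $\tfm$ so that the comparison principle applies rigorously, and the elementary verification, using the radiality and unimodality of $g_0$ together with $\tfrac12\|\nabla g_0\|_2^2\le\hgpeo-\tfeo=\OO(|\log\eps|)$ and the $L^2$ closeness, that $g_0^2$ cannot be depleted by more than a fixed fraction of $\tfm$ anywhere $\tfm$ is bounded below -- which is what allows the quantitative bootstrap above to get started.
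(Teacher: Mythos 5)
Your opening step (the chemical--potential estimates \eqref{hgpchem est 1}--\eqref{hgpchem est 2} obtained from the energy, $L^2$- and $L^\infty$-bounds) coincides with the paper's, but your main device is different: a global supersolution $\sqrt{\tfm+\delta}$ for the upper bound and an evaluation of the equation for $h=g_0^2$ at the \emph{global} minimum point of $h-\tfm$ over $\Ao$ (resp.\ $\Aoo$) for the lower bound, whereas the paper (following \cite{CRY}, Prop.\ 2.6) builds \emph{local} super- and subsolutions on the healing-length scale ($\eps$, resp.\ $\eps^{2/3}\Omega^{-1/3}|\log(\eps^4\Omega)|$) around each point. For the first regime your scheme can plausibly deliver the additive bound \eqref{point est g0 1}: the boundary alternative of the extremum argument yields precisely $\tfmo|_{\partial\Ao}=\OO(|\log\eps|^{-1})$, which is the claimed error. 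Two points you flag are, however, not minor: the kink of $\sqrt{\tfm+\delta}$ at $\partial\,\mathrm{supp}(\tfm)$ contributes a \emph{positive} Dirac mass to $\Delta \bar g$, i.e.\ a wrong-sign singular term in the supersolution inequality, so the corner genuinely must be regularized; and the a priori bound $g_0^2\gtrsim \tfm$ on the restricted domain, without which the interior-minimum inequality is vacuous when $h(x_0)$ is small, needs the radial gradient-energy argument to be carried out in earnest ($L^2$-closeness alone does not exclude thin radial depletions).

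The genuine gap is in the second regime. The assertion \eqref{point est g0 2} is a pointwise \emph{multiplicative} estimate, $|g_0^2-\tfmoo|\leq o(1)\,\tfmoo(x)$, while a global extremum argument can only return a single additive constant, namely the worse of the interior and boundary contributions. Your boundary alternative gives $(\tfmoo-h)(x_0)\leq \tfmoo(x_0)\sim (\eps\Omega)^{2/3}|\log(\eps^4\Omega)|^{-1}$ on $\partial\Aoo$ (cf.\ \eqref{tfmoo Aoo lb}); this exceeds the claimed error $\OO\bigl((\eps^4\Omega)^{1/3}|\log(\eps^4\Omega)|^{3/2}\bigr)\tfmoo(x)$ at \emph{every} point of $\Aoo$ by a factor of order $(\eps^4\Omega)^{-1/3}|\log(\eps^4\Omega)|^{-5/2}\to\infty$, so it is not ``the multiplicative error of \eqref{point est g0 2}'' as you claim. (Your interior alternative, by contrast, is dominated by $\eps^2|\hchemoo-\tfchemoo|=\OO(\eps^2\Omega|\log(\eps^4\Omega)|^{1/2})$, which matches the claimed error at the inner edge of $\Aoo$ and would therefore suffice; the failure is entirely in the boundary case, and enlarging or shrinking the cut-out width only relocates the same problem.) This is exactly why the paper localizes: the comparison is performed on balls of radius $\eps^{2/3}\Omega^{-1/3}|\log(\eps^4\Omega)|$, which is much smaller than $\mathrm{dist}(\Aoo,\partial\,\tfsuppoo)=(\eps\Omega)^{-2/3}|\log(\eps^4\Omega)|^{-1}$, so the crudely bounded boundary data affect the solution at the center of each ball only through an exponentially small factor and the resulting error is a local quantity. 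As written, the lower half of \eqref{point est g0 2} is not established; you would need either such a local barrier or an extremum argument applied to the ratio $(h-\tfmoo)/\tfmoo$ with genuine two-sided control on $\partial\Aoo$.
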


	\begin{rem}({\it Bulk of the condensate})
		\mbox{}	\\
		The annulus $ \Aoo $ contains automatically the bulk of the mass provided $ \Omega \ll \eps^{-4} $: Using \eqref{tfsupp width} it is simple to verify that
		\beq
			\lf\| \tfmoo \ri\|_{L^1(\Aoo)} = 1 - \OO(|\log(\eps^4\Omega)|^{-2})
		\eeq
		and
		\beq
			\label{tfmoo Aoo lb}
			\tfmoo(x) \geq C (\eps \Omega)^{2/3} |\log(\eps^4\Omega)|^{-1} > 0,	
		\eeq
		for any $ \xv \in \Aoo $. The same obviously holds true for $ \Ao $ as well: 
		\beq
			\label{tfmo Ao lb}
			\tfmo(x) \geq C |\log\eps|^{-1} > 0,		\hspace{1cm}	\lf\| \tfmo \ri\|_{L^1(\Ao)} = 1 - \OO(|\log\eps|^{-2}).
		\eeq	
		As a direct consequence of the pointwise estimates \eqref{point est g0 1} and \eqref{point est g0 2}, $ \Ao $ and $ \Aoo $ contain the bulk of the mass of $ g_0 $ too:
		\begin{eqnarray}
			\label{bulk g0 1}
			\lf\| g_0 \ri\|^2_{L^2(\Ao)} = 1 - \OO(|\log\eps|^{-1}), \hspace{0,2cm}	&	\mbox{if} & 0 \leq \omega < \omegac,	\\
			\label{bulk g0 2}
			\lf\| g_0 \ri\|^2_{L^2(\Aoo)} = 1 - \OO(|\log(\eps^4\Omega)|^{-2}), \hspace{0,2cm}		&	\mbox{if} & \Omegac \leq \Omega \ll \eps^{-4}.
		\end{eqnarray}
	\end{rem}

	\begin{proof} 
		The proof is done exactly as in \cite[Proposition 2.6]{CRY} (see also \cite[Proposition 2.3]{CPRY1}), i.e., by identifying local super- and subsolutions to the variational equation solved by $ g_0 $, which can be rewritten
		\beq
			\label{variational eq g0}
			- g_0^{\prime\prime} - x^{-1} g_0^{\prime} = 4 \eps^{-2} \lf[ \tilde{\rho} - g_0^2 \ri] g_0,
		\eeq
		where
		\beq
			\label{tilderho}
			\tilde{\rho}(x) : =
			\begin{cases}
				 \half \left[ \eps^2 \hchemo - x^s + \half \gamma \eps^2 \omega^2 x^2 \ri],	&	\mbox{if} \:\:\: 0 \leq \omega < \omegac,	\\	
				\half \lf[ \eps^2 \hchemoo - \gamma \eps^2 \Omega^2 W(x) \ri],	&	\mbox{if} \:\:\: \Omegac \leq \Omega \ll \eps^{-4},
			\end{cases}
		\eeq
		and the chemical potentials $ \hchemo $ and $ \hchemoo $ are fixed by the $L^2$-normalization of $ g_0 $. Note that $ \tilde\rho $ differs from the TF densities $ \tfmo $ and $ \tfmoo $ only by the chemical potential. However combining \eqref{l2 est g0 1}, \eqref{l2 est g0 2} with \eqref{linfty est g0}, one can easily obtain the following estimate of the difference between the chemical potentials:
\begin{eqnarray}
	\label{hgpchem est 1}
	\lf| \hchemo - \tfchemo \ri| \leq \OO(\eps^{-1}|\log\eps|^{1/2}),	\hspace{0,2cm}									&	\mbox{if} & 0 \leq \omega < \omegac;	\\
	\label{hgpchem est 2}
	\lf| \hchemoo - \tfchemoo \ri| \leq \OO(\Omega |\log(\eps^4\Omega)|^{1/2}),	\hspace{0,2cm}		&	\mbox{if} & \Omegac \leq \Omega \ll \eps^{-4}.
\end{eqnarray} 
	This in turn yields a pointwise estimate of the difference between $ \tilde\rho $ and the TF densities, which guarantees that $ \tilde\rho $ is strictly positive inside $ \Ao $ or $ \Aoo $.
	
	The rest of the proof consists in a local analysis of the variational equation \eqref{variational eq g0} on the scales $ \eps $ or $ \eps^{2/3} \Omega^{-1/3} |\log(\eps^4 \Omega)| $, if $ \omega \leq \omegac $ or $ \Omegac \leq \Omega \ll \eps^{-4} $ respectively. Note that in this second case the blow-up scale remains much smaller than $ (\eps \Omega)^{-2/3} |\log(\eps^4\Omega)|^{-1} $ which appears in the definition of $ \Aoo $.
\end{proof}

We conclude this section by stating another pointwise estimate of $ g_0 $ which shows the exponential smallness in $ \eps $ of $ g_0 $ outside of $ \tfsupp $.

	\begin{pro} [{\bf Exponential smallness of $ g_0 $}]
		\label{g0 exp small: pro}
		\mbox{}	\\
		There exists a constant $C>0$ such that, if $ |\log\eps| \ll \omega $ and $ \Omega \ll \eps^{-4} $ as $ \eps \to 0 $,
		\beq
			\label{g0 exp small}
			g_0^2 (x) \leq 
			\begin{cases}
				C  \exp \left\{ - \eps^{-1/2} \: \mathrm{dist}\lf(\xv, \tfsuppo\ri) \right\},	&	\mbox{if} \:\:\: |\log\eps| \ll \omega < \omegac,	\\
				C (\eps\Omega)^{2/3}  \exp \left\{ - \Omega^{1/2} \: \mathrm{dist}\lf(\xv, \tfsuppoo\ri) \right\},	&	\mbox{if} \:\:\: \Omegac \leq \Omega \ll \eps^{-4},
			\end{cases}
		\eeq
		for any $ \xv \in \R^2 $.
	\end{pro}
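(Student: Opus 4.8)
The plan is to argue by comparison, in the spirit of the analogous estimates of \cite{CRY,CPRY1}, using the variational equation \eqref{variational eq g0} for $g_0$, namely $-g_0^{\prime\prime}-x^{-1}g_0^{\prime}=4\eps^{-2}(\tilde\rho-g_0^2)g_0$, together with the formulas \eqref{tilderho} for the reference density $\tilde\rho$. The key point is that $\tilde\rho$ is negative outside $\tfsupp$: wherever $\tilde\rho\leq 0$ one has $\tilde\rho-g_0^2\leq\tilde\rho$, so $g_0\geq 0$ is a subsolution of the linear elliptic operator $-\partial_x^2-x^{-1}\partial_x-4\eps^{-2}\tilde\rho$, whose zeroth order coefficient $-4\eps^{-2}\tilde\rho\geq 0$ is confining. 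The desired exponential bounds will then follow from the maximum principle once suitable supersolutions, small outside $\tfsupp$, have been exhibited.

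The first step is to locate the zero set of $\tilde\rho$ and control the rate at which it becomes negative. By \eqref{hgpchem est 1}--\eqref{hgpchem est 2}, $\tilde\rho$ coincides with the relevant TF density up to a shift of the chemical potential which, multiplied by $\eps^2$, is $\OO(\eps|\log\eps|^{1/2})$ when $\omega<\omegac$ and $\OO(\eps^2\Omega|\log(\eps^4\Omega)|^{1/2})$ when $\Omegac\leq\Omega\ll\eps^{-4}$. At the edges of $\tfsupp$ the function $x^s-\half\gamma\eps^2\omega^2 x^2$ (resp. $\gamma\eps^2\Omega^2 W(x)$) crosses the level $\eps^2\tfchem$ with a slope bounded away from $0$ --- of order one at $\xout$ in the first regime, and, using the expansion \eqref{tfsupp width}, of order $\eps^{4/3}\Omega^{4/3}$ at $\xin$ and $\xout$ in the second. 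Consequently $\tilde\rho(x)\leq -c\,\dist(\xv,\tfsupp)$ for some $c>0$ once $\dist(\xv,\tfsupp)$ exceeds a threshold $\delta_0$ --- the width of the ``dead layer'' in which the chemical potential shift could still keep $\tilde\rho$ positive --- with $\delta_0=\OO(\eps|\log\eps|^{1/2})$ (resp. $\delta_0=\OO(\eps^{2/3}\Omega^{-1/3}|\log(\eps^4\Omega)|^{1/2})$); and, since $s>2$ forces $x^s$ (resp. $W$) to grow while $\eps^2\omega^2=\OO(1)$, one also has $\tilde\rho(x)\leq -c$ once $\dist(\xv,\tfsupp)$ exceeds a fixed constant.

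Next I would build the comparison functions outside the $\delta_0$-neighbourhood of $\tfsupp$, taking as boundary datum the bound $\|g_0\|_{L^{\infty}}$ from \eqref{linfty est g0} (of order $1$, resp. $(\eps\Omega)^{1/3}$). In the inner layer, where only $\tilde\rho(x)\leq -c\,\dist(\xv,\tfsupp)$ is available, a suitably rescaled Airy function solves the corresponding equation and serves as a supersolution, yielding a bound $\propto\exp\{-c'\eps^{-1}\dist^{3/2}\}$ (resp. $\exp\{-c'\Omega^{2/3}\eps^{-1/3}\dist^{3/2}\}$); further out, where $\tilde\rho\leq -c$, a plain exponential $\exp\{-c'\eps^{-1}\dist\}$ works; patching the two and applying the maximum principle bounds $g_0^2$ by $\|g_0\|_{L^{\infty}}^2$ times such an exponential. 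Finally one checks that the claimed, deliberately non-optimal rates $\eps^{-1/2}$ and $\Omega^{1/2}$ absorb both the loss caused by the $3/2$-power (active only for $\dist=\OO(\eps)$, resp. $\OO(\eps^{2/3}\Omega^{-1/3})$) and the dead layer: the inequalities $\delta_0\ll\eps^{1/2}$ and $\delta_0\ll\Omega^{-1/2}$ reduce, via \eqref{tfsupp width}, exactly to the standing hypotheses $\omega<\omegac$ and $\Omega\ll\eps^{-4}$, and wherever the supersolution does not directly provide the bound it holds trivially because $\exp\{-\eps^{-1/2}\dist\}$ (resp. $\exp\{-\Omega^{1/2}\dist\}$) is then bounded away from $0$ while $g_0^2\leq\|g_0\|_{L^{\infty}}^2$. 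For $\xv\in\tfsupp$ the statement is just \eqref{linfty est g0}, and in the annular regime $\Omegac\leq\Omega\ll\eps^{-4}$ the argument is simply run separately at the inner edge $\xin$ (where $W$ is decreasing) and the outer edge $\xout$ of $\tfsuppoo$.

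The delicate point --- the only one requiring care beyond standard maximum-principle bookkeeping --- is the analysis of the edge layer: one has to estimate the width $\delta_0$ of the region in which the chemical potential shift may prevent $\tilde\rho$ from being negative, verify in \emph{both} scalings that $\delta_0$ is much smaller than the reciprocal of the target decay rate (which, as noted, amounts precisely to $\omega<\omegac$ and $\Omega\ll\eps^{-4}$), and keep track of the $x$- and $\Omega$-dependent slopes with which $\tilde\rho$ vanishes at the two boundaries of the support when $\Omega$ is large.
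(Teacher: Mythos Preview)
Your approach is correct and follows the same supersolution/comparison strategy as the paper. The paper is considerably briefer: it simply observes that for $x\geq \xout+\sqrt{\eps}$ (resp.\ at distance $\gtrsim\Omega^{-1/2}$ from $\tfsuppoo$) one has $\tilde\rho\leq -c\sqrt\eps$ (resp.\ $\tilde\rho\leq -c\eps^2\Omega$), so the single exponential $C\exp\{-\eps^{-1/2}(x-\xout)\}$ (resp.\ $C(\eps\Omega)^{1/3}\exp\{-\Omega^{1/2}\dist\}$) is already a supersolution of the linearized operator there, and the remaining strip of width $\sqrt\eps$ (resp.\ $\Omega^{-1/2}$) is handled trivially by the $L^\infty$ bound, exactly as you note at the end. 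Your Airy-layer analysis is a correct refinement but is not needed: the deliberately weak rates $\eps^{-1/2}$ and $\Omega^{1/2}$ in the statement are chosen precisely so that the exponential alone suffices once $\tilde\rho$ has dropped below the modest thresholds $-c\sqrt\eps$ and $-c\eps^2\Omega$, and the dead layer plus edge layer both fit inside the ``trivial'' strip.
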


	\begin{proof}
		See, e.g., \cite[Propostion 2.2]{CRY}: For instance in the first case, i.e., when $ |\log\eps| \ll \omega < \omegac $, one can show that if $ x \geq \xout + \sqrt{\eps} $, the r.h.s. of \eqref{variational eq g0} is negative and $ C \exp\{ - \eps^{-1/2} (x - \xout) \} $ is a supersolution to \eqref{variational eq g0} for $ x \geq \xout $ and a suitably large constant $ C > 0 $. In the other case when $ \Omegac \leq \Omega \ll \eps^{-4} $ one can investigate \eqref{variational eq g0} separately for $ x \leq \xin $ and $ x \geq \xout $ and identify appropriate supersolutions there.
	\end{proof}

\subsection{Energy Asymptotics and Emergence of a Hole}

In this section we first prove Theorem \ref{lattice energy: teo}, i.e., the asymptotics of the GP ground state energy, and show how this implies a phase transition at $ \Osec $, that is the occurrence of a macroscopic hole in the condensate.

\begin{proof}[Proof of Theorem \ref{lattice energy: teo}]
	For the sake of brevity we omit the proof of \eqref{gpeo asympt} and focus on the regime $ \Omegac \leq \Omega \ll \eps^{-4} $. The strategy as well as most of the estimates are basically the same for \eqref{gpeo asympt} and \eqref{gpeoo asympt} but the vanishing of the TF density for $ x \leq \xin $ makes the second case slightly tougher to deal with.

	As usual the result follows from a comparison of appropriate upper and lower bounds to the GP energy $ \gpeoo $, but in order to simplify the analysis it is convenient to extract the leading order term $ \tfeoo $ from the outset. This can be done by using a very standard trick (see, e.g., \cite{LM}), i.e., a splitting of the energy: Exploiting the positivity of $ g_0 $, we define a function $ u_0 \in L^2_{\mathrm{loc}}(\R^2) $ as
	\beq
		\label{u0}
		\gpmoo(\xv) = : g_0(x) u_0(\xv),
	\eeq
	and compute, using the variational equation \eqref{variational eq g0} solved by $ g_0 $,
	\beq
		\label{gpeoo splitting}
		\gpeoo = \hgpeoo + \F[u_0],
	\eeq
	where $ \F $ is the weighted GL-type functional
	\beq
		\label{F functional}
		\F[u] : = \int_{\R^2} \diff \xv \: g_0^2 \lf\{ \half \lf| \lf( \nabla - i \aavoo \ri) u \ri|^2 + \eps^{-2} g_0^2 \lf(1 - |u|^2 \ri)^2 \ri\}.
	\eeq
	We can then focus on the reduced energy $ \F[u_0] $ and prove upper and lower bounds for it. Note that in the upper bound there is a normalization condition to fulfill, i.e.,
	\beq
		\label{u0 normalization}
		\int_{\R^2} \diff \xv \: g_0^2 |u_0|^2 = 1.
	\eeq
	In the lower bound on the opposite we are going to remove this restriction and minimize w.r.t.\ any $ u \in L^2(\Aoo) $, where $ \Aoo $ is given by \eqref{Aoo}.

	{\it Upper bound:} The trial function which is going to be used in the upper bound proof is very similar to the one in \cite[Proof of Proposition 4.1]{CY}: Roughly speaking we pick a function $ \trialu $ which is approximately a phase factor, i.e., such that $ |\trialu| \simeq 1 $, but at the same time contains vortices of unit degree on a regular lattice  intersected with the support of the TF density $ \tfmoo $. 
	More precisely we denote by $ \latt  $ a regular lattice   with fundamental cell $ \cell $ (triangular, square or hexagonal) and lattice spacing $ \ell $ and define
		\beq
		\label{trialu}
		\trialu(\xv) : = c \: \xi(\xv) \: v(\xv),
	\eeq
	where $ c \simeq 1 $ is a normalization constant to enforce \eqref{u0 normalization}, $ v $ a phase factor (here we denote a point $ \xv = (x_1,x_2) \in \R^2 $ by the complex notation $ \zeta = x_1 + i x_2 $) given by
	\begin{equation}
		\label{phase v}
		v(\xv) : = \prod_{\zeta_i \in \latt \cap \Aoo} \frac{\zeta - \zeta_i}{|\zeta - \zeta_i|} = : \exp\lf\{ i \varphi(\xv) \ri\},
	\end{equation}
	and $ \xi $ regularizes the singularities of $ v $ at the lattice points
	\begin{equation}
		\label{xi}
		\xi(\xv) : = 
		\begin{cases}
			1,				& 	\text{if} \:\:\: |\xv - \xv_i| > t, \:\: \text{for all } \xv_i \in \latt,	\\
 			t^{-1} |\xv - \xv_i|, 	&	\text{if} \:\:\: |\xv - \xv_i| \leq t.
		\end{cases}
	\end{equation}
	Here $ t > 0  $ is a variational parameter (vortex core radius), such that
	\beq
		\label{t condition 1}
		t \ll \Omega^{-1/2}.
	\eeq
	We also choose the lattice spacing $ \ell $ in such a way that 
	\begin{equation}
		\label{cell area}
		|\cell| = \pi {\Omega}^{-1},
	\end{equation}
	which guarantees the `neutrality' of each cell of the lattice in the `electrostatic analogy' discussed in \cite[Proof of Proposition 4.1]{CY}. Since $ |\tfsuppoo| = \OO( (\eps \Omega)^{-2/3}) $, the total number of lattice points inside $\tfsuppoo$ is $ \OO(\eps^{-2/3} \Omega^{-1/6}) \gg 1 $, as long as $ \Omega \ll \eps^{-4} $, and it is not difficult to verify that
	\beq
		\label{norm const c}
		c = 1 + \OO(\eps^{-2/3} \Omega^{-1/6} t^2) = 1 + o(1),
	\eeq
	thanks to \eqref{t condition 1}.

	The evaluation of $ \F[\trialu] $ can be decomposed into two different estimates of the kinetic energy and interaction energy respectively. We are going to show that the former satisfies the inequality 
	\beq
		\label{vortex energy}
		 \half \int_{\mathbb R^2} \diff \xv \: g_0^2 \lf| \lf( \nabla - i \aavoo \ri) \trialu \ri|^2 \leq \half \lf( 1 + o(1) \ri) \Omega \lf|\log \left( t^2 \Omega \right)\ri|.
 	\eeq
	A simple computation yields (recall that $ \xi $ is real)
	\bml{
 		\label{vortex en 1}
		\half \int_{\mathbb R^2} \diff \xv \: g_0^2 \lf| \lf( \nabla - i \aavoo \ri) \trialu \ri|^2  = \half c^2 \int_{\mathbb R^2} \diff \xv \: g_0^2 \xi^2 \lf| \nabla \varphi - \aavoo \ri|^2 + c^2 \sum_{\xv_i \in \latt} \int_{|\xv - \xv_i| \leq t}  \diff \xv \: g_0^2 |\nabla \xi|^2 \leq	\\
		\half (1 + o(1)) \int_{\At} \diff \xv \: g_0^2 \xi^2 \lf| \nabla \varphi - \aavoo \ri|^2 + C \Omega^{1/2} t + \OO(\eps^{\infty}),
	}
	where $\At $ is chosen in such a way that \eqref{g0 exp small} yields the pointwise bound\footnote{Note that \eqref{g0 exp small} also implies an exponential decay in $ x $ of $ g_0 $, which is used in order to estimate the remainder in \eqref{vortex en 1}.} $ g_0 = \OO(\eps^{\infty}) $ \footnote{The symbol $\OO (\ep ^{\infty})$ denotes a quantity vanishing faster than any 
power of $\ep$.} outside $ \At $.   A possible choice is
	\beq
		\label{At}
		\At : = \lf\{ \xv \in \R^2\,:\,\,  x_{\rm in}-\Omega^{-1/2}|\log(\eps^4\Omega)|^{2}\leq x\leq x_{\rm out}+ \Omega^{-1/2}|\log(\eps^4\Omega)|^{2} \ri\}.
	\eeq
	Now we can apply the electrostatic computation in \cite{CY} to estimate the leading term in \eqref{vortex en 1}: The first term on the r.h.s. of \eqref{vortex en 1} becomes
	\beq
		\label{vortex en 2}
 		\half \int_{\At} \diff \xv \: g_0^2 \xi^2 \lf|  \nabla \varphi - \aavoo \ri|^2 \leq \half \lf( 1 + \OO(t \Omega^{1/2}) \ri) \sum_{\xv_i \in \latt \cap \At} \sup_{\xv \in \celli} g_0^2(x) \lf[ \pi |\At | |\log(t^2\Omega)| + \OO(1) \ri],
	\eeq
	as in \cite[Eq. (4.37)]{CY}. In the above expression we have actually overestimated the contribution in $ \At \setminus \Aoo $ since $ v $ contains no vortices there and therefore one would easily get an upper bound without the logarithmic term in the cell contained in  $ \At \setminus \Aoo $.	
	
	Using a Riemann sum approximation for the integral of $ g_0^2 $ in $ \At $ we can replace the sum in \eqref{vortex en 2} and obtain
	\beq
		\half \int_{\At} \diff \xv \: g_0^2 \xi^2 \lf|  \nabla \varphi - \aavoo \ri|^2 \leq \half \lf( 1 +  o(1) \ri) \Omega \lf[ |\log(t^2\Omega)| + \OO(1) \ri],
	\eeq
	thanks to \eqref{t condition 1} and
	\beq
		\lf\| g_0 \ri\|^2_{L^2(\At)} \leq \lf\| \tfmoo \ri\|_{L^1(\Aoo)} + o(1) + \lf\| g_0 \ri\|^2_{L^2(\At\setminus\Aoo)} \leq 1 + o(1) + \OO(|\log(\eps^4\Omega)|^{-2}) = 1 + o(1),
	\eeq
	since $ g_0^2 \leq (\eps\Omega)^{2/3} |\log(\eps^4\Omega)|^{-1} $ in $ \At \setminus \Aoo $ and $ \Omega^{-1/2} |\log(\eps^4\Omega)|^{2} \ll (\eps\Omega)^{-2/3} |\log(\eps^4\Omega)|^{-1} $ for any $ \Omega \ll \eps^{-4} $.

	The upper bound \eqref{vortex energy} is then proven and it remains to compute the interaction energy of $ \trialu $: For any $ \Omegac \leq \Omega \ll \varepsilon^{-4}$ one can show that
    	\beq
		\label{interaction energy}
		\eps^{-2} \int_{\R^2}\diff \xv \: g_0^4  \lf(1 - \lf|\trialu\ri|^2\ri)^2 \leq C \eps^{-4/3} \Omega^{5/3} t^2.
	\eeq
	Indeed we note that $ |\trialu| = 1 $ in $ \R^2 \setminus \Aoo $, so that
	\bml{
 		\int_{\R^2}\diff \xv \: g_0^4  \lf(1 - \lf|\trialu\ri|^2\ri)^2 = \sum_{\xv_i \in \latt \cap \Aoo} \int_{\ba(\xv_i,t)} \diff \xv \: g_0^4 \lf( 1 - t^{-2} \lf| \xv - \xv_i \ri|^2 \ri)^2  \leq	\\
		C (\eps\Omega)^{4/3} t^2 |\cell|^{-1} |\Aoo| \leq C \eps^{2/3} \Omega^{5/3} t^2.
	}
	Combining \eqref{gpeoo splitting} with \eqref{vortex energy} and \eqref{interaction energy}, we finally obtain
	\bml{
 		\label{gpeoo ub}
 		\gpeoo \leq \hgpeoo +  \half \lf( 1 + o(1) \ri) \Omega \lf|\log \left( t^2 \Omega \right)\ri| + C \eps^{-4/3} \Omega^{5/3} t^2 + o(1) \leq	\\
		\tfeoo + \tx\frac{1}{6} \lf( 1 + o(1) \ri) \Omega |\log(\eps^4\Omega)| + \OO(\Omega).
	}	
	where we have chosen the vortex core radius $ t = \eps^{2/3} \Omega^{-1/3} $ and used \eqref{hgpeoo ub}. Note that our choice of $ t $ is compatible with \eqref{t condition 1} as long as $ \Omega \ll \eps^{-4} $. Moreover the radial kinetic energy of the profile $ g_0 $, which is $ \OO((\eps\Omega)^{4/3} |\log(\eps^4\Omega)|) $ according to \eqref{hgpeoo ub}, is much smaller than the remainder $ \OO(\Omega) $ if again $ \Omega \ll \eps^{-4} $.

	{\it Lower Bound:} A lower estimate matching \eqref{gpeoo ub} can be proven by following the same strategy as in \cite[Section 5]{CY}, i.e., first decomposing the support of $ \tfmoo $ into cells, then blowing-up the energy $ \F[u_0] $ on each cell and finally using a lower bound for the GL functional with an applied magnetic field between the second and third critical values (see, e.g., \cite{SS1,SS2}). 
	
	The starting point is a restriction of the integration in $ \F $ to the domain $ \Aoo $, which exploits the positivity of the integrand. Then we decompose $ \Aoo $ into cells: Let $ \tilde\latt $ be the lattice\footnote{For convenience we have chosen a square regular lattice but any regular lattice covering the space would have done the job.}
 	\beq
		\tilde\latt = \left\{ \xv_i = (m \tilde \ell, n \tilde \ell), \: m,n \in \mathbb Z \right\}
 	\eeq
	with cells $ \tcelli $ and lattice spacing satisfying
	\beq
		\label{spacing condition}
		\Omega^{-1} |\log(\eps^4\Omega)| \ll \tilde\ell^2 \ll \eps^{-4/3} \Omega^{-4/3},
	\eeq
	we easily obtain the lower bound
	\begin{multline}
		\label{cell energy 0}
		\F[u_0] \geq \sum_{\xv_i, \tcelli \subset \Aoo} \int_{\tcelli} \diff \xv \: g_0^2 \lf\{ \half \lf| \lf( \nabla - i \aavoo \ri) u_0 \ri|^2 + \eps^{-2} g_0^2 \lf(1 - |u_0|^2 \ri)^2 \ri\} \geq	\\
		(1-o(1)) \sum_{\xv_i, \tcelli \subset \Aoo} g_0^2(x_i) \: \F^{(i)} [u_0],
	\end{multline}
 	where we have used \eqref{point est g0 2} and \eqref{tfmoo Aoo lb}, which imply the lower bound
	\beq
		\label{g0 A00 lb}
		g_0^2(x) \geq C (\eps\Omega)^{2/3} |\log(\eps^4\Omega)|^{-1},	\hspace{1cm}	\mbox{for any} \:\: \xv \in \Aoo,
	\eeq
	and introduced the functional
	\beq
		\label{cell energy 1}
		\F^{(i)} [u_0] : = \int_{\tcelli} \diff \xv \lf\{ \half \lf| \lf( \nabla - i \aavoo \ri) u_0 \ri|^2 + \eps^{-4/3} \Omega^{2/3} |\log(\eps^4\Omega)|^{-1} \lf(1 - |u_0|^2 \ri)^2 \ri\}.
	\eeq
	Note that the conditions \eqref{spacing condition} on the spacing $ \tilde\ell $, which are essential for the proof, are compatible as long as $ \Omega \ll \eps^{-4} $: The upper bound on $ \tilde\ell $ is  needed  in order that the spacing is much smaller than the width of $ \tfsuppoo $ as required for the Riemann approximation we shall use.  On the other hand, the lower bound in  \eqref{spacing condition} implies that the cells of the lattice $ \tilde\latt $ are much larger than those of the lattice considered in the proof of the upper bound.  The reason is that we would like each cell of the new lattice $ \tilde\latt $ to contain a  large number of vortices, diverging to $\infty$ as $ \eps \to 0 $. In fact, this is crucial in order to make the whole proof strategy work since the GL estimate from \cite{SS1,SS2} that we are going to use holds only  under this condition.
	
	We now blow-up the energy in the cell $ \tcelli $ and set $ \sv : =  \tilde\ell^{-1} (\xv - \xv_i) $,
	\beq
		\tilde u_0(\sv) : = u_0 (\xv_i + \tilde\ell\xv), \hspace{1cm}	\taavoo(\sv) : = \tilde\ell \: \aavoo(\xv_i + \tilde\ell \sv),
	\eeq
	so that $ \F^{(i)} $ becomes
	\beq
		\label{cell energy 2}
		\F^{(i)} [u_0] = \tilde{\F}^{(i)} [\tilde u_0] : = \int_{\cell_1} \diff \sv \lf\{ \half \lf| \lf( \nabla - i \taavoo \ri) \tilde u_0 \ri|^2 + \epsilon^{-2} \lf(1 - |\tilde u_0|^2 \ri)^2 \ri\},
	\eeq
	where $ \cell_1$ stands for  a unit square centered at the origin and we have introduced a new  small parameter $ \epsilon $ defined as
	\beq
		\label{epsilon}
		\epsilon : = \tilde\ell^{-1} \eps^{2/3} \Omega^{-1/3} |\log(\eps^4\Omega)|^{1/2},
	\eeq	
	 which is much less than $(\eps^{4} \Omega)^{1/3} |\log(\eps^4\Omega)|^{1/2}\ll  1$ if $ \Omega \ll \eps^{-4} $ thanks to \eqref{spacing condition}.  The rescaled vector potential $ \taavoo $ is explicitly given by
	\beq
		\label{scaled vect}
		\taavoo(\sv) = \Omega \tilde\ell \: \mathbf{e}_z \wedge \xv_i + \Omega \tilde\ell^2 \: \mathbf{e}_z \wedge \sv, 
	\eeq
	and the corresponding magnetic field is 
	\beq
		\hex = \mathrm{curl}\, \taavoo = \Omega \tilde\ell^2.
	\eeq 
	Employing  gauge invariance one obtains the lower bound
	\beq
		\label{cell energy 3}
		\tilde{\F}^{(i)} [\tilde u_0] \geq \inf_{u \in H^1(\cell_1)} \int_{\cell_1} \diff \sv \lf\{ \half \lf| \lf( \nabla - i \hex \mathbf{e}_z \wedge \sv \ri) u\ri|^2 + \epsilon^{-2} \lf(1 - |u|^2 \ri)^2 \ri\},
	\eeq
	where the applied magnetic field $ \hex $ satisfies the conditions 
	\beq
		\label{magnetic field condition}
		|\log\epsilon| \ll \hex \ll \epsilon^{-2},
	\eeq
	thanks to \eqref{spacing condition}.

	We can now borrow a lower bound to the GL functional with an applied magnetic field between the first and second critical fields (see, e.g., \cite{SS1,SS2}) to estimate the r.h.s. of \eqref{cell energy 3} and obtain
	\beq
		\tilde{\F}^{(i)} [\tilde u_0] \geq  (1-o(1)) \hex \log\lf(\epsilon^{-1} \hex^{-1/2} \ri) \geq \tx\frac{1}{6} (1 - o(1)) \Omega \tilde\ell^2 |\log(\eps^4\Omega)|.
	\eeq
	Putting together the above estimate with \eqref{cell energy 0} and using again the Riemann sum approximation we finally get
	\beq
		\F[u_0] \geq  \tx\frac{1}{6} (1 - o(1))  \Omega |\log(\eps^4\Omega)| \lf\| g_0 \ri\|^2_{L^2(\Aoo)} \geq   \tx\frac{1}{6} (1 - o(1))  \Omega |\log(\eps^4\Omega)|,
	\eeq
	by \eqref{bulk g0 2}. The splitting \eqref{gpeoo splitting} as well as the trivial bound $ \hgpeoo \geq \tfeoo $ hence complete the  proof of the  lower bound.
\end{proof}

	The occurrence of a hole in the bulk of the condensate for $ \Omega > \Omegac $  now follows as a  straightforward consequence of the energy asymptotics proven in Theorem \ref{lattice energy: teo}:
	
\begin{proof}[Proof of Proposition \ref{hole: pro}]
	The proof is very similar to the proof of Proposition \ref{g0 exp small: pro} (see also \cite[Propostion 2.2]{CRY}): By setting $ \gpd : = |\gpmoo|^2 $ one easily obtains from \eqref{GP variationaloo} the inequality
	\beq
		\label{variational ineq gpd}
		- \half \Delta \gpd \leq 2 \eps^{-2} \lf[ \tgpd - \gpd \ri] \gpd,
	\eeq
	where
	\beq
		\label{tildegpd}
		\tgpd(x) : =	\half \lf[ \eps^2 \chemoo - \gamma \eps^2 \Omega^2 W(x) \ri].
	\eeq
	Now the energy asymptotics \eqref{gpeoo asympt} implies exactly as in the derivation of \eqref{l2 est g0 2} and \eqref{linfty est g0} that 
	\beq
		|\gpmoo| \leq C (\eps\Omega)^{2/3},	\hspace{1cm}	\lf\| \gpd - \tfmoo \ri\|_2^2 \leq C \eps^2 \Omega |\log(\eps^4\Omega)|,
	\eeq
	 which in turn (see, e.g., \cite[Proposition 2.1]{CRY}) yield the estimate
	\beq
		\lf| \chemoo - \tfchemoo \ri| \leq C \eps^{-2/3} \Omega^{5/6} |\log(\eps^4 \Omega)|^{1/2}.
	\eeq
	Now if $ \Omega > \Omegac $, one can pick some $ 0 < \delta \ll 1 $ so that the above inequality implies (recall \eqref{Taylor W})
	\beq
		\tgpd(x) \leq C \eps^{4/3} \Omega^{5/6} |\log(\eps^4 \Omega)|^{1/2} - C (\eps \Omega)^{2/3} \delta^2,
	\eeq
	for any $ 0 \leq x \leq \xin - \delta (\eps\Omega)^{-2/3} $. Hence if one takes for instance
	\beq
		\delta = (\eps^4\Omega)^{1/12} |\log(\eps^4\Omega)|
	\eeq
	 that is $ \ll 1$ as long as $ \Omega \ll \eps^{-4} $, the inequality \eqref{variational ineq gpd} becomes
	\beq
		- \half \Delta \gpd + C \eps^{-2/3} \Omega^{5/6} \gpd \leq 0,	\hspace{1cm}	\mbox{for any} \:\: 0 \leq x \leq \xin - \delta (\eps\Omega)^{-2/3}.
	\eeq
	To complete the proof it suffices therefore to note that the function $ C \exp \{ - \eps^{-1/3} \Omega^{5/12} (\xin - x) \} $ is a supersolution to the equation $ - \Delta f + C \eps^{-2/3} \Omega^{5/6} f = 0 $ in the ball $ 0 \leq x \leq \xin $ with boundary condition $ f = C $ on $ \partial \ba_{\xin} $. One thus obtains
	\beq
		\gpd(x) \leq C \lf\| \gpd \ri\|_{\infty} \exp \lf\{ - \eps^{-1/3} \Omega^{5/12} (\xin - x) \ri\},
	\eeq
	for $  0 \leq x \leq \xin $ but, if $ \Omega \geq \Omega_0 \eps^{-1} $ for $ \Omega_0 > \eps\Omegac $,  then $ \xin > C > 0 $ and one can always find some $ 0 < \xh < \xin $ such that the above upper bound yields the desired estimate.
\end{proof}

\subsection{Uniform Distribution of Vorticity}

The energy asymptotics proven in Theorem \ref{lattice energy: teo} is the main ingredient in the proof of the uniform distribution of vorticity:

\begin{proof}[Proof of Theorem \ref{vortex distribution: teo}]
	The proof follows very closely \cite[Proof of Theorem 1.1]{CPRY1}, which in turn relies on \cite[Proposition 5.1]{SS1}. For the sake of brevity we only enlighten the major differences and assume that $ \Omegac \leq \Omega \ll \eps^{-4} $.
	
	The starting point is the combination of the cell decomposition \eqref{cell energy 0} with the global upper bound \eqref{gpeoo ub}, which gives (recall that $ \tilde\latt $ is a regular square lattice with spacing $ \tilde\ell $ satisfying \eqref{spacing condition})
	\beq
		\label{energy localization 1}
		\sum_{\xv_i, \tcelli \subset \Aoo} g_0^2(x_i) \: \F^{(i)} [u_0] \leq \tx\frac{1}{6} (1 + \eta) \disp\sum_{\xv_i, \tcelli \subset \Aoo} g_0^2(x_i) \tilde\ell^2 \Omega |\log(\eps^4\Omega)|,
	 \eeq
	for some quantity $ \eta : = \eta(\eps,\Omega) $ such that $ \eta \to 0 $ as $ \eps \to 0 $, $\Omega\ll \eps^{-4}$. 
 
	In order to localize the above bound, one needs some control on the density $ g_0 $ and to this purpose  we set
	\beq
		\label{gamma}
		\beta : = \beta(\eps,\Omega) = \max\lf( |\log\eta|^{-1}, \: |\log(\eps^4\Omega)|^{-1} \ri),
	\eeq
	and define  $ {\mathcal R}_{\rm bulk} $ as in \eqref{bulk sub}, i.e., $ {\mathcal R}_{\rm bulk} = \{\xv:\, \xin + (\eps\Omega)^{-2/3} \beta \leq  x \leq \xout -  (\eps\Omega)^{-2/3} \beta \} $, so that $ {\mathcal R}_{\rm bulk} \subset \Aoo $ and
	\beq
		g_0^2(x) \geq C (\eps\Omega)^{2/3} \beta,	\hspace{1cm}	\mbox{for any} \:\: \xv \in {\mathcal R}_{\rm bulk}.
	\eeq
	Such a bound unfortunately is not sufficient to complete the localization of the energy and a certain amount of bad cells has to be rejected: We say that a cell $ \tcelli $ is good if 
	\beq
		\label{good cells sub}
		 \F^{(i)} [u_0] \leq \tx\frac{1}{6} \lf[1 + \min\lf(\sqrt{\eta}, \: \eps^2\sqrt{\Omega}\ri) \ri] \Omega \tilde\ell^2 |\log(\eps^4\Omega)|.
	\eeq
	Now given any set $ \set \subset {\mathcal R}_{\rm bulk} $ with area much larger than $ \tilde\ell^2 $ \eqref{energy localization 1} implies that the total number of bad cells contained inside $ \set $ (denoted by $ N_{B} $) is much smaller than the total number of cells $ N $, i.e., more precisely
	\beq
		N_{B} \leq \sqrt{\eta} \beta^{-1} N \leq \min\lf[ \sqrt{\eta} |\log\eta|, \: \eps^2\sqrt{\Omega} |\log(\eps^4\Omega)| \ri] N \ll N.
	\eeq
	On the other hand one can use the upper bound \eqref{good cells sub} inside good cells to prove the final result exactly as in \cite[Proof of Theorem 1.1]{CPRY1}. Note that the choice $ \tilde\ell = \Omega^{-1/2} |\log(\eps^4\Omega)| $ is compatible with \eqref{spacing condition} and yields the lower bound on the measure of $ \set $, i.e., $ |\set| \gg \tilde\ell^2 = \Omega^{-1} |\log(\eps^4\Omega)|^2 $.	
\end{proof}

\section{The Giant Vortex Regime $ \Omega \sim \eps^{-4} $}\label{sec: GV}

Since throughout this section we deal only with the regime $ \Omega \sim \eps^{-4} $, we remove any suffix from $ \gpmoo $, $ \gpeoo $, etc. for the sake of simplicity and denote those quantities by $ \gpm $, $ \gpe $ etc..

\subsection{Preliminary Estimates for $ \gpe $ and $ \gpm $}
\label{sec: GV GP}

We first introduce and recall some notation: As in \eqref{alpha} we define a parameter $ \alpha $ as
\bdm
	\al^2 : = 4 + \gamma (s-2),
\edm
and a vector field $ \mathbf{B} $ by
\begin{equation}\label{defi B first}
\mathbf{B} = \left(\Om r - \lfloor \Om \rfloor r^{-1} \right) \mathbf{e}_{\vartheta}. 
\end{equation}
A Taylor expansion of the potential $ U(x) $ (see \eqref{pot U}) around $ x = 1 $ (e.g., for $ \half \leq x \leq \hbox{$\frac{3}{2}$} $) yields 
\bml{
	\label{Taylor U}
	U(x)  = \half B^2(1) + \lf( 1 - \into^2 \Omega^{-2} \ri) \lf(x - 1\ri) + \half \lf[ 1 + 3 \into^2 \Omega^{-2} + \gamma(s-2) \ri] \lf(x-1\ri)^2 + \OO\lf( (x-1)^3 \ri) =	\\
	\half \al^2 (x-1)^2 + \OO(\Omega^{-2}) + \OO\lf( (x-1)^3 \ri).
}
Moreover we denote by $ \gosc $ the normalized ground state of the one-dimensional harmonic oscillator ($ y \in \R $)
\beq
	\label{1d osc}
	\hosc : = - \half \partial_y^2 + \half \al^2 y^2,
\eeq
i.e.,
\beq
	\label{osc gs}
	\gosc(y) : = \pi^{-1/4} \al^{1/4} \exp \lf\{ - \half \al y^2 \ri\}.
\eeq
The starting point is a rough but useful upper bound on $  \gpe $ which has several important consequences:
	
	\begin{pro}[\textbf{Estimates for $ \gpe $ and $ \gpm $}]
		\label{gp prel est: pro}
		\mbox{}	\\
		If $ \Omega = \Omega_0 \eps^{-4} $ with $  \Omega_0 > 0  $ and $ \eps $ is sufficiently small,
		\beq
			\label{gpgs ub gv}
			\gpe \leq \OO(\Omega),	\hspace{1cm}	\lf\| \gpm \ri\|_{\infty}^{2} \leq \OO(\eps^2 \Omega).
		\eeq
		Moreover there exists a finite constant $ C $ such that, for any $ \xv \in \R^2 $,  
		\beq
			\label{gpm exp small gv}
			\lf| \gpm(\xv) \ri|^2 \leq C \eps^{-2} \exp \lf\{- \Omega^{1/2} \lf|1 - x \ri| \ri\}.
		\eeq
	\end{pro}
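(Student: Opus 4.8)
The plan is to prove the three assertions in the order stated, each one feeding the next.

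\textit{Energy upper bound.} To prove $\gpe \le \OO(\Omega)$ I would evaluate $\gpfoo$ on a giant‑vortex competitor $\psi_{\rm tr}(\xv) = c\,f(x)\,e^{\mathrm{i}\lfloor\Omega\rfloor\vartheta}$, where $f$ is a radial profile concentrated in an annulus of width $\sim\Omega^{-1/2}$ around $x=1$ — the natural choice being the rescaled oscillator ground state $f(x)\propto\Omega^{1/4}\gosc(\Omega^{1/2}(x-1))$, cf.\ \eqref{osc gs}, \eqref{gv aux func}, possibly multiplied by a smooth cutoff at scale $\Omega^{-1/2}|\log\eps|$ — and $c=1+o(1)$ enforces $\|\psi_{\rm tr}\|_2=1$. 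Since $\gpfoo[\psi_{\rm tr}]=\gvf[cf]$ by \eqref{gvf}, I would insert the Taylor expansion \eqref{Taylor U}, $\Omega^2 U(x)=\tfrac12\alpha^2\Omega^2(x-1)^2+\OO(1)+\OO(\Omega^2|x-1|^3)$, and rescale $y=\Omega^{1/2}(x-1)$: the kinetic term and the quadratic part of the potential term then each contribute $\OO(\Omega)$, the constant and cubic corrections contribute $o(\Omega)$, and the quartic term contributes $\eps^{-2}\,\OO(\Omega^{1/2})=\OO(\Omega_0^{-1/2}\Omega)$ because $\Omega=\Omega_0\eps^{-4}$; a gaussian cutoff perturbs this only by $\OO(\eps^\infty)$. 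Hence $\gpe\le\gvf[cf]=\OO(\Omega)$. As a byproduct, every term of $\gpfoo$ is nonnegative (recall $W\ge 0$), so $\eps^{-2}\|\gpm\|_4^4\le\gpfoo[\gpm]=\gpe$, and therefore by \eqref{chem}, $\chemoo=\gpe+\eps^{-2}\|\gpm\|_4^4\le 2\gpe=\OO(\Omega)$.

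\textit{The $L^\infty$ bound.} Writing $\rho:=|\gpm|^2$ and using that \eqref{GP variationaloo} is $-\tfrac12(\nabla-\mathrm{i}\aavoo)^2\gpm+\gamma\Omega^2 W(x)\gpm+2\eps^{-2}\rho\,\gpm=\chemoo\gpm$, I would multiply by $\bar\gpm$, take the real part, and apply the diamagnetic identity $\mathrm{Re}\bigl[\bar\gpm(\nabla-\mathrm{i}\aavoo)^2\gpm\bigr]=\tfrac12\Delta\rho-\bigl|(\nabla-\mathrm{i}\aavoo)\gpm\bigr|^2$ to obtain the pointwise inequality
\[
-\tfrac12\Delta\rho+2\gamma\Omega^2 W(x)\,\rho+4\eps^{-2}\rho^2\ \le\ 2\chemoo\,\rho\qquad\text{on }\R^2 .
\]
Since $\gpm\in H^1(\R^2)$ solves an elliptic equation with a confining potential, $\rho$ is continuous and $\rho\to 0$ at infinity, so it attains its maximum at some $\xv_0$, where $\Delta\rho(\xv_0)\le 0$. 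Discarding the nonnegative term $2\gamma\Omega^2 W\rho$ at $\xv_0$ then gives $0\le 2\chemoo\rho(\xv_0)-4\eps^{-2}\rho(\xv_0)^2$, i.e.\ $\|\gpm\|_\infty^2=\rho(\xv_0)\le\tfrac12\eps^2\chemoo=\OO(\eps^2\Omega)$.

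\textit{Exponential smallness.} I would use the same inequality, now dropping $4\eps^{-2}\rho^2\ge 0$, to get $-\tfrac12\Delta\rho+2\bigl(\gamma\Omega^2 W(x)-\chemoo\bigr)\rho\le 0$. Since $W$ is strictly increasing on $(1,\infty)$, strictly decreasing on $(0,1)$, and $W(x)=\tfrac12(s-2)(1-x)^2(1+o(1))$ near $x=1$ by \eqref{Taylor W}, and since $\chemoo=\OO(\Omega)$ by Step 1, there is a constant $c_0>0$ with $\gamma\Omega^2 W(x)-\chemoo\ge\tfrac12\Omega$ whenever $|1-x|\ge c_0\Omega^{-1/2}$. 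On $\{x\ge 1+c_0\Omega^{-1/2}\}$ and on $\{\tfrac12\le x\le 1-c_0\Omega^{-1/2}\}$ one checks that $h(\xv):=C\eps^{-2}e^{-\Omega^{1/2}|1-x|}$ is a supersolution of $-\tfrac12\Delta h+2(\gamma\Omega^2 W-\chemoo)h=0$, with $C$ a sufficiently large but $\OO(1)$ constant so that the $L^\infty$ bound of Step 2 ensures $h\ge\rho$ on the sphere $|1-x|=c_0\Omega^{-1/2}$; the maximum principle then yields $\rho\le h$ there. On the annulus $|1-x|\le c_0\Omega^{-1/2}$ the claimed bound follows from $\|\gpm\|_\infty^2=\OO(\eps^{-2})$ alone, since $e^{-\Omega^{1/2}|1-x|}\ge e^{-c_0}$ there. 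It remains to treat $\{x\le\tfrac12\}$, where $W$ is bounded below by a positive constant, so $\gamma\Omega^2 W-\chemoo\gtrsim\Omega^2$; there I would compare $\rho$ with a supersolution of modified‑Bessel (or $\cosh(bx)$) type with $b\sim\Omega$, which is regular at the origin and decays at a rate far exceeding $\Omega^{1/2}$ — in particular much faster than $e^{-\Omega^{1/2}}$ at $x=0$ — matched to the bound already obtained at $x=\tfrac12$.

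\textit{Main obstacle.} The trial‑function computation and the maximum‑principle step are routine. The delicate point is the exponential estimate near the center of the trap: the singular first‑order term $\tfrac1x\partial_x$ of the radial Laplacian defeats the naive exponential supersolution, and one must instead exploit the much stronger ($\OO(\Omega^2)$‑sized) confinement furnished by $W$ there, via a Bessel‑type comparison function regular at the origin. This is precisely the analytic counterpart of $\gpm$ being superexponentially small inside the giant‑vortex hole.
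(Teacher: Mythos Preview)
Your argument is correct and follows the paper's route closely: a rescaled-Gaussian giant-vortex trial state for $\gpe\le\OO(\Omega)$, evaluation of the density inequality at the maximum of $\gpd$ together with $\chemoo\le 2\gpe$ for the $L^\infty$ bound, and an exponential radial supersolution for the decay. Your treatment is in fact more careful near the origin: the paper applies $\fsup(x)=C_a\|\gpd\|_\infty\exp\{-\Omega^{1/2}(1-x)\}$ on the whole ball $\ba(1-a\eps^2)$ and its displayed formula for $-\tfrac12\Delta\fsup$ drops the singular $\Omega^{1/2}/(2x)$ term coming from $x^{-1}\partial_x$, whereas you correctly isolate $\{x\le\tfrac12\}$ and patch it with a Bessel-type comparison that exploits the $\OO(\Omega^2)$ confinement there --- a clean and legitimate fix.
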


	\begin{proof}
		In order to prove the upper bound on $ \gpe $, it suffices to test the GP functional on a suitable rescaling of the harmonic oscillator ground state \eqref{osc gs} times a phase containing a giant vortex of degree $ \into $ at the origin. More precisely we set
		\beq
			\label{gp trial gv}
			\trial(\xv) : = \ftrial(x) e^{i \into \vartheta},
		\eeq
		\beq
			\label{ftrial}
			\ftrial(x) : = c_{\Omega}
 				\begin{cases}
					\Omega^{1/4} \gosc\lf(\Omega^{1/2} (1-x)\ri),	&	\mbox{if} \:\: x \geq 1 - \eps,	\\
					f(x),					&	\mbox{if} \:\: 1 - 2 \eps \leq x \leq 1 - \eps,	\\
					0,					&	\mbox{if} \:\: 0 \leq x \leq 1 - 2\eps,
				\end{cases}
		\eeq
where $ c_{\Omega} $ is a normalization constant and $ f $ a smooth function such that
		\bdm
			f(1 - 2 \eps) = 0,	\hspace{1cm}	f(1 - \eps) = \gosc(\Omega^{1/2} \eps) = \OO(\eps^{\infty}),
		\edm
		which also allows to assume that $ \lf\|\nabla f \ri\|_{\infty} = \OO(\eps^{\infty}) $. A simple computation gives
		\bmln{
			1 = \lf\| \trial \ri\|_2^2 = 2 \pi c_{\Omega}^2 \Omega^{1/2} \int_{1 - \eps}^{\infty} \diff x \: x \: \gosc^2(\Omega^{1/2}(x-1)) + \OO(\eps^{\infty}) = 	\\
			2 \pi c_{\Omega}^2 \int_{- \eps \Omega^{1/2}}^{\infty} \diff y (1 + \Omega^{-1/2} y) \gosc^2(y) + \OO(\eps^{\infty}) = 2 \pi c_{\Omega}^2 \int_{- \infty}^{\infty} \diff y \:  \gosc^2(y) + \OO(\Omega^{-1/2}),
		}
		so that
		\beq
			\label{normalization gv}
			c_{\Omega}^2 = (2 \pi)^{-1} \lf( 1 + \OO(\Omega^{-1/2}) \ri).
		\eeq
		By using \eqref{Taylor U} and the exponential decay of $ \gosc $ one has	
		\bml{
 			\label{gp ub gv}
			\gpfoo\lf[\trial\ri] = 2\pi c_{\Omega}^2 \int_{0}^{\infty} \diff x \: x \lf\{ \half \lf( \ftrial^{\prime} \ri)^2 + \Omega^2 U(x) \ftrial^2 + c_{\Omega}^2 \eps^{-2} \ftrial^{4} \ri\} =	\\
			2\pi c_{\Omega}^2 \int_{1 - \eps}^{1 + \eps} \diff x \: x \lf\{ \half \lf( \ftrial^{\prime} \ri)^2 + \Omega^2 U(x) \ftrial^2 + c_{\Omega}^2 \eps^{-2} \ftrial^{4} \ri\} + \OO(\eps^{\infty}) =	\\
			2 \pi c_{\Omega}^2 \Omega \int_{- \eps \Omega^{1/2}}^{\eps \Omega^{1/2}} \diff y (1 + \Omega^{-1/2} y) 	\lf\{ \half \lf( \gosc^{\prime} \ri)^2 + \half \al^2 y^2 \gosc^2 + c_{\Omega}^2 \eps^{-2} \Omega^{-1/2} \gosc^{4} \ri\} + \OO(\Omega^{-1}) + \OO(\Omega \eps^3) =	\\
			2 \pi c_{\Omega}^2 \Omega \lf[ \frac{\al}{2} + c_{\Omega}^2 \sqrt{\frac{\al}{2 \pi \Omega_0} } + \OO(\Omega^{-1/2}) \ri] = \Omega \lf[ \frac{\al}{2} + \frac{1}{2\pi} \sqrt{\frac{\al}{2 \pi \Omega_0} } + \OO(\Omega^{-1/2}) \ri].
		}
		The second statement in \eqref{gpgs ub gv} is a consequence of the variational equation \eqref{GP variationaloo} which yields 
		\beq
			\label{gp ineq gv}
			- \half \Delta \gpd \leq  \eps^{-2} \lf[ \eps^2 \chemGP  -  \gamma \eps^2\Omega^2 W  - 2\gpd \ri] \gpd \leq \eps^{-2} \lf[ \eps^2 \chemGP - 2\gpd \ri] \gpd,
		\eeq
		where 
		\beq
			\label{gpdoo}
			\gpd : = \lf| \gpm \ri|^2
		\eeq
		and we have used the positivity of $ W $. Now since $ \Delta \gpd \leq 0 $ at any maximum point of $ \gpd $, it immediately follows that 
		\beq
			\label{sup est gpm}
			\lf\| \gpd \ri\|_{\infty} \leq \half \eps^2 \chemGP,
		\eeq
		which together with the definition of the chemical potential implies
		\bdm
			\chemGP = \gpe + \eps^{-2} \lf\| \gpm \ri\|^4_4 \leq \gpe + \half \chemGP,
		\edm
		so that by \eqref{gp ub gv} 
		\beq
			\label{chemoo ub}
			\chemGP \leq 2 \gpe \leq \OO(\Omega) 
		\eeq
		and thus \eqref{gpgs ub gv} is proven.

		In order to prove the pointwise estimate \eqref{gpm exp small gv} we exhibit an explicit supersolution to the variational equation \eqref{GP variationaloo}.
		We first consider a ball $ \ba(1- a \eps^2) $ centered at the origin of radius $ 1 - a \eps^2 $, with $ a = \OO(1) $ sufficiently large: The monotonicity of $ W $ together with the Taylor expansion $ W(x) = \half (s-2)(1 - x)^2 + \OO\lf( (1 - x)^3\ri) $, yields the bound
		\bdm
			W(x) \geq C a^2 \eps^4,
		\edm
		for some $ C > 0 $ and for any $ 0 \leq x \leq 1 - a \eps^2 $. Therefore inside $ \ba(1 - a \eps^2) $ one has by \eqref{gp ineq gv} and \eqref{chemoo ub}
		\bdm
			- \half \Delta \gpd \leq  \eps^{-2} \lf[ \eps^2 \chemGP  -  C a^2 \eps^6 \Omega^2 \ri] \gpd \leq - \half C  a^2 \eps^4 \Omega^2 \gpd,
		\edm
		if $ a $ is taken sufficiently large. In the same region the function
		\beq
			\label{fsup}
			\fsup(x) : = C_{a} \lf\| \gpd \ri\|_{\infty} \exp \lf\{ - \Omega^{1/2}(1-x) \ri\},
		\eeq
		satisfies
		\bdm
			- \half \Delta \fsup + \half C a^2 \eps^4 \Omega^2 \fsup = \half \lf[ - \Omega^{1/2} - \Omega x  +  C  a^2 \eps^4 \Omega^2 \ri] \fsup > 0,
		\edm
		if $ a $ is again large enough. Therefore $ \fsup $ provides a supersolution, since at the boundary $ \partial \ba(1 - a \eps^2) $ one has
		\bdm
			\fsup(1 - a \eps^2) = C_{a} \lf\| \gpd \ri\|_{\infty} \exp\lf\{- a \Omega_0^{1/2} \ri\} = \lf\| \gpd \ri\|_{\infty} 
		\edm
		if $ C_{a} $ is taken equal to $  \exp\{a \Omega_0^{1/2}\} $. In conclusion, by the maximum principle (see, e.g., \cite[Exercise 2, p. 317]{T} or \cite[Theorem 1, p. 508]{E}) 
		$ \gpd(x) \leq \fsup(x) $ for any $ \xv \in \ba_{1 - a \eps^2} $ but the result can be trivially extended to the whole ball 
		$ \ba_1 $ thanks to the monotonicity of $ \fsup $.
		
		To complete the proof of the pointwise estimate \eqref{gpm exp small gv}, one has simply to repeat the argument above in 
		the complement of the region $ \ba_{1 + a\eps^2} $, exploiting the monotonicity of $ W $ there, and replace the supersolution $ \fsup $ with the function
		\bdm
			C_{a} \lf\| \gpd \ri\|_{\infty} \exp \lf\{ - \Omega^{1/2}(x-1) \ri\}.
		\edm
	\end{proof}

	The decay of $ \gpm $ outside of the bulk of the condensate can be estimated more precisely by exploiting the Taylor expansion \eqref{Taylor W}:
	
	\begin{pro}[\textbf{Exponential decay of $ \gpm $}]\label{pro: gpm exp decay}
		\mbox{}	\\
		 If $ \Omega = \Omega_0 \eps^{-4} $ with $ \Omega_0 > 0 $ and $ 1 \ll \lambda \ll |\log\eps|^{b} $ for some $b>0$ as $ \eps \to 0 $, there exist 
		 two finite constants $ c > 0 $ and $ C < \infty $ independent of $ \eps $ and $ \lambda $ such that
		\beq
			\label{gpm exp decay}
			\lf| \gpm(\xv) \ri|^2 \leq C \eps^{-2} \exp\lf\{ - c \Omega \lf(|1 - x| - \eps^2 \lambda \ri)^2 \ri\},
		\eeq
		for any $ \xv $ satisfying the condition
		\beq
			\label{gpm decay cond}
			\eps^2 \lambda \leq |1 - x| \leq \eps^2 \lambda^{a},	\hspace{1cm}	\mbox{with} \:\: a \geq 2.
		\eeq
	\end{pro}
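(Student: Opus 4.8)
The plan is to prove \eqref{gpm exp decay} by a maximum–principle (comparison) argument for the density $\gpd=|\gpm|^2$, along the lines of the proof of Proposition \ref{gp prel est: pro}, but now with a \emph{Gaussian} comparison function reflecting the quadratic behaviour \eqref{Taylor W} of $W$ near $x=1$. The starting point is the differential inequality obtained from \eqref{GP variationaloo} and the positivity of $W$ exactly as in \eqref{gp ineq gv}, namely
\[
 -\tfrac12\Delta\gpd + \bigl(\gamma\Omega^2 W(x)-\chemGP\bigr)\gpd\le 0 \qquad\text{in }\R^2,
\]
together with the bounds $\chemGP\le 2\gpe=\OO(\Omega)$ and $\|\gpm\|_\infty^2=\OO(\eps^2\Omega)=\OO(\eps^{-2})$ furnished by Proposition \ref{gp prel est: pro}. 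The key elementary observation is that, since $\Omega=\Omega_0\eps^{-4}$, whenever $|1-x|\ge\eps^2\lambda$ with $\lambda\gg1$ the expansion \eqref{Taylor W} gives $\gamma\Omega^2 W(x)\ge \tfrac{\gamma(s-2)}{4}\Omega^2(1-x)^2\ge \tfrac{\gamma(s-2)}{4}\Omega_0\lambda^2\,\Omega$, which dominates $\chemGP$; hence on $\{|1-x|\ge\eps^2\lambda\}$ the coefficient $V:=\gamma\Omega^2 W-\chemGP$ is nonnegative and in fact at least of order $\Omega^2(1-x)^2$. The role of the shift by $\eps^2\lambda$ in \eqref{gpm exp decay} is to give the comparison function a value $\gtrsim\|\gpm\|_\infty^2$ on the inner edge $|1-x|=\eps^2\lambda$, which then provides valid boundary data.

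For the \emph{outer} region $x>1$ I would compare $\gpd$ on $\{x\ge 1+\eps^2\lambda\}$ with $\Phi(x):=C\eps^{-2}\exp\{-c\Omega(x-1-\eps^2\lambda)^2\}$. A direct computation of the radial Laplacian, using $W(x)\ge\tfrac{s-2}{2}(x-1)^2$ on $[1,\infty)$ (which follows from $W(1)=W'(1)=0$ and $W''(x)\ge s-2$ for $x\ge1$) and the favourable sign of the $x^{-1}\partial_x$ term when $x>1$, shows that $\Phi$ is a supersolution of $-\tfrac12\Delta+V$ provided $c$ is small enough (depending only on $s,\gamma$) and $\lambda$ is large. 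At $x=1+\eps^2\lambda$ one has $\Phi=C\eps^{-2}\ge\|\gpm\|_\infty^2$ for $C$ large (depending only on $\Omega_0$), while as $x\to\infty$ the superquadratic growth of $W$ makes $\gpd$ decay faster than $\Phi$; the maximum principle then gives $\gpd\le\Phi$ for all $x\ge1+\eps^2\lambda$, which in particular covers \eqref{gpm decay cond} on the outer side.

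The \emph{inner} region $x<1$ is the delicate part, because the disc $\{x\le1-\eps^2\lambda\}$ contains the origin, and there the natural candidate $C\eps^{-2}\exp\{-c\Omega(1-x-\eps^2\lambda)^2\}$ is \emph{not} a supersolution: it increases in $x$ near $0$, so the $x^{-1}\partial_x$ term in the radial Laplacian has the wrong sign and blows up; moreover the linear-exponential bound \eqref{gpm exp small gv}, of decay rate only $\Omega^{1/2}$, is far too weak near $x=0$ to control this comparison function. I would therefore first establish an auxiliary bound in the hole: fixing a small $\delta>0$ (depending only on $s$) so that $W(x)\ge w_\delta>0$ for $x\le1-\delta$, the function $B\cosh(\kappa x)$ with $\kappa:=(\gamma w_\delta/2)^{1/2}\Omega$ and $B$ chosen so that $B\cosh(\kappa(1-\delta))=\|\gpm\|_\infty^2$ is a supersolution on $\{x\le1-\delta\}$ — indeed $\cosh(\kappa\,\cdot\,)$ is smooth and flat at the origin, and $\tanh(\kappa x)\le\kappa x$ yields $-\tfrac12\Delta\cosh(\kappa x)+V\cosh(\kappa x)\ge(V-\kappa^2)\cosh(\kappa x)\ge0$ since $V\ge\gamma w_\delta\Omega^2-\OO(\Omega)\ge\kappa^2$ there. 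This gives $\gpd(x)\le C\eps^{-2}\exp\{-\kappa(1-\delta-x)\}$ for $x\le1-\delta$, with rate $\kappa\propto\Omega$, i.e. a bound in the hole that is far stronger than \eqref{gpm exp small gv} and, crucially, regular at $x=0$.

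Equipped with this, I would compare $\gpd$ on the whole disc $\{x\le1-\eps^2\lambda\}$ with $\Phi(x):=C\eps^{-2}\exp\{-c\Omega(1-x-\eps^2\lambda)^2\}$: on $\{C'\eps^{4}\le x\le1-\eps^2\lambda\}$ the function $\Phi$ is a genuine supersolution (the $x^{-1}$ term is harmless once $x$ is bounded below by a power of $\eps$, and $V$ of order $\Omega^2(1-x)^2$ dominates as above), with $\Phi=C\eps^{-2}\ge\|\gpm\|_\infty^2$ on the outer boundary $x=1-\eps^2\lambda$; on the remaining tiny disc $\{x\le C'\eps^{4}\}$ the auxiliary bound gives $\gpd\le C\eps^{-2}\exp\{-\kappa(1-\delta-C'\eps^4)\}\le\Phi$ as soon as $c\le(\gamma w_\delta/2)^{1/2}(1-\delta)$. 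The maximum principle then yields $\gpd\le\Phi$ on $\{x\le1-\eps^2\lambda\}$, hence \eqref{gpm exp decay} on the inner side for $\eps^2\lambda\le1-x\le\eps^2\lambda^a$; in all of this $c$ and $C$ depend only on $s,\gamma,\Omega_0$, and the hypothesis $\lambda\ll|\log\eps|^{b}$ is used only to keep $1\pm\eps^2\lambda^a$ inside the interval where \eqref{Taylor W} is valid. The main obstacle, as already indicated, is precisely this behaviour near the origin on the inner side: the refined Gaussian decay is genuinely stronger than \eqref{gpm exp small gv}, so that bound cannot be recycled as boundary data, and the Gaussian profile in the variable $1-x$ fails to be a supersolution near $x=0$; the $\cosh$-supersolution, which exploits that the effective potential is of order $\Omega^2$ throughout the hole, is what repairs the comparison on the full disc.
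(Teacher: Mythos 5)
Your proposal is correct and rests on the same core mechanism as the paper's proof: the differential inequality $-\frac12\Delta\gpd+(\gamma\Omega^2W-\chemGP)\gpd\le0$, the observation that $\gamma\Omega^2W\gtrsim\Omega_0\lambda^2\,\Omega\gg\chemGP$ once $|1-x|\ge\eps^2\lambda$, and a Gaussian comparison function shifted by $\eps^2\lambda$ so that its value on the inner edge dominates $\|\gpd\|_\infty$. Where you genuinely diverge is in the treatment of the \emph{far} boundary of the comparison domain. The paper never goes near the origin (nor to infinity): it works on the annulus $\eps^2\lambda\le 1-x\le\eps^2\lambda^{2a+1}$ and adds the constant $\exp\{-\sqrt{\Omega_0}\,\lambda^{2a+1}\}$ to the supersolution, precisely so that the cruder a priori bound \eqref{gpm exp small gv} (whose rate $\Omega^{1/2}\eps^2=\Omega_0^{1/2}$ per unit $\lambda$ matches that constant) supplies the boundary data at $1-x=\eps^2\lambda^{2a+1}$; the additive constant is then absorbed into the Gaussian on the smaller window $|1-x|\le\eps^2\lambda^a$, which is exactly why the statement is restricted to that window. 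Your route instead pushes the comparison to the whole disc and to infinity, and your diagnosis that the Gaussian in the variable $1-x$ fails to be a supersolution near the origin (because of the $x^{-1}\partial_x$ term) is accurate; the auxiliary $\cosh(\kappa x)$ supersolution in the hole, with $\kappa\propto\Omega$ and the inequality $\tanh t\le t$, correctly repairs this and in fact yields the estimate on all of $\{|1-x|\ge\eps^2\lambda\}$ rather than just up to $\eps^2\lambda^a$ --- a slightly stronger conclusion bought at the price of an extra lemma that the paper's domain truncation renders unnecessary. Two minor points to tighten: (i) on the outer side, the assertion that ``$\gpd$ decays faster than $\Phi$ as $x\to\infty$'' does not follow from \eqref{gpm exp small gv} (whose rate is only linear in $|1-x|$), but it is also not needed --- since $\gpd\to0$ and $\Phi\to0$ and the zeroth-order coefficient is positive on the domain, $\limsup_{x\to\infty}(\gpd-\Phi)\le0$ already suffices for the maximum principle, or one can simply truncate at $x=1+\eps^2\lambda^{2a+1}$ as the paper does; (ii) your constant $c$ ends up depending on $\Omega_0$, $s$, $\gamma$ and the cutoff $C'$, which is admissible since the statement only requires independence of $\eps$ and $\lambda$.
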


	\begin{proof}
		We first consider the region $ x \leq 1 - \eps^2 \lambda $ and prove that the function
		\beq
			\fsup(x) : = \lf\| \gpd \ri\|_{\infty} \lf[ \exp\lf\{ - c \Omega (1 - \eps^2 \lambda - x)^2 \ri\} +  \exp\lf\{ - \sqrt{\Omega_0} \lambda^{2a+1} \ri\} \ri]
		\eeq
		provides a supersolution to \eqref{gp ineq gv} in the region $ \eps^2 \lambda \leq 1-x \leq \eps^2 \lambda^{2a+1} $, for any $ a \geq 2 $:
		\bml{
 			\label{gpm var ineq 1}
 			\lf[ - \half \Delta + \gamma \Omega^2 W(x)  + 2 \eps^{-2} \fsup^2(x) - \chemGP \ri] \fsup \geq	\\
			\lf[ \half c \Omega + \half (s - 2 - c^2) \Omega^2 (1 - x)^2 - C \eps^{-2} \lambda^{6a+3} - C \Omega \ri] \fsup \geq	\\
			 \eps^{-4} \lf[ \half (s - 2 - c^2) \Omega_0^2 \lambda^2 - C \Omega_0 \ri] \fsup \geq 0,
		}
		if $ c^2 < s - 2 $ and $ \eps $ is sufficiently small (recall that by assumption $ \lambda \to \infty $ as $  \eps \to 0 $). Moreover at the boundary 
		\beq
			\fsup(1-\eps^2 \lambda) \geq  \lf\| \gpd \ri\|_{\infty} \geq \gpd(1-\eps^2\lambda),	\hspace{1cm}	\fsup(1-\eps^2\lambda^{2a+1}) \geq \gpd(1-\eps^2\lambda^{2a+1}),	
		\eeq
		thanks to \eqref{gpm exp small gv}. Therefore $ \gpd \leq \fsup $ and in the region \eqref{gpm decay cond}
		\bdm
			 \exp\lf\{ c \Omega (1 -\eps^2 \lambda - x)^2 - \sqrt{\Omega_0} \lambda^{2a+1} \ri\} \leq \exp\lf\{ - \sqrt{\Omega_0} \lambda^{2a} (\lambda - C) \ri\} = o(1).
		\edm
		The proof for $ x \geq 1 + \eps^2 \lambda $ is identical.
	\end{proof}

\subsection{The Giant Vortex Density Profile}\label{sec: gv dens profile}

In this section we investigate the properties of the giant vortex profile and the associated energy functional defined in \eqref{gvf}.   

\begin{pro}[\textbf{Minimization of $ \gvf $}]
		\label{gvf minimization: pro}
		\mbox{}	\\
		There exists a minimizer $ \gvm $ of \eqref{gvf} that is unique up to a sign, radial and can be chosen to be strictly 
		 positive for $ \xv \neq 0 $. It solves the variational equation
		\beq
			\label{gvm variational eq}
			-\half \Delta \gvm + \Omega^2 U(x) \gvm + 2 \eps^{-2} \gvm^3 = \gvchem \gvm,
		\eeq
		with $ \gvchem = \gve + \eps^{-2} \lf\| \gvm \ri\|_4^4 $. Moreover $ \gvm $ has a unique maximum and, if $ \Omega = \Omega_0 \eps^{-4} $ with $ \Omega_0 > 0 $,
		\beq
			\label{gve asympt}
			\gve = \Omega \lf[ \frac{\al}{2} + \frac{1}{2\pi} \sqrt{\frac{\al}{2 \pi \Omega_0} } + \OO(\Omega_0^{-3/4}) + \OO(\Omega^{-1/2}) \ri].
		\eeq 	
		In addition there exists a finite constant $ C $ such that
		\beq
			\label{linfty est gvm}
			\sup_{x\in\mathbb R^2}\lf | \sqrt{2\pi} \Omega^{-1/4} \gvm(x) -  \gosc\lf(\Omega^{1/2}(1-x)\ri) \ri | \leq C \Omega_0^{-1/4} \eps^{-1},
		\eeq
		Finally $ \gvm $ decays exponentially far from $ x = 1 $: If $1\ll \lambda\ll |\log\eps|^b$ for some $b>0$ there exist constants $ c > 0 $ and $C<\infty$ independent of $\eps$ and $\lambda$ such that, for any $ \eps^2 \lambda \leq |1-x| \leq \eps^2 \lambda^a $, $ a \geq 2 $,
		\beq
			\label{gvm exp decay}
			\gvm^2(x) \leq C \eps^{-2} \exp\lf\{ - c \Omega \lf(|1 - x| - \eps^2 \lambda \ri)^2 \ri\}.
		\eeq		
	\end{pro}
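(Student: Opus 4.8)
\emph{Overview and the elementary properties.} The plan is to dispatch the soft assertions first and concentrate the work on the energy expansion \eqref{gve asympt} and the pointwise bound \eqref{linfty est gvm}. Existence of a minimizer $\gvm\in\gvdom$ follows from the direct method: $\gvf\geq 0$ and $U(x)\to+\infty$ both as $x\to 0$ (through the $\half\into^2\Omega^{-2}x^{-2}$ term) and as $x\to\infty$, so any minimizing sequence is bounded in $H^1(\R^2)$ and in the weighted space $L^2(U\,\diff\xv)$, and each term of $\gvf$ is weakly lower semicontinuous. Uniqueness up to sign comes from strict convexity: in the density variable $\rho=f^2$, the kinetic term $\half\int|\nabla\sqrt\rho|^2$ is convex and $\eps^{-2}\int\rho^2$ is strictly convex, so two minimizers share the same $\rho$ and hence coincide (genuinely, since $\gvdom$ consists of nonnegative functions); since $U$ is radial, every rotation of $\gvm$ is again a minimizer, whence $\gvm$ is radial. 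The Euler--Lagrange equation \eqref{gvm variational eq}, with $\gvchem$ the multiplier of the normalization constraint, is standard, and testing it against $\gvm$ gives $\gvchem=\gve+\eps^{-2}\|\gvm\|_4^4$. Elliptic regularity makes $\gvm$ smooth off the origin; the strong maximum principle applied to $\half\Delta\gvm=(\Omega^2U+2\eps^{-2}\gvm^2-\gvchem)\gvm$ (with $\gvm\geq0$ and locally bounded coefficient) yields $\gvm>0$ for $\xv\neq0$, while finiteness of $\int U\gvm^2$ forces $\gvm(0)=0$. Uniqueness of the maximum of the radial profile is obtained by the ODE argument of \cite{CRY}: at any interior critical point the sign of $\gvm''$ is that of $\Omega^2U+2\eps^{-2}\gvm^2-\gvchem$, and the monotonicity of this bracket on either side of $x=1$ prevents $\gvm'$ from changing sign more than once.

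\emph{Energy asymptotics.} The upper bound is already available: the trial state \eqref{gp trial gv} is of the form $\ftrial\,e^{i\into\vartheta}$, so $\gvf[\ftrial]=\gpfoo[\trial]$ and the computation \eqref{gp ub gv} gives $\gve\leq\Omega[\tfrac{\al}{2}+\tfrac1{2\pi}\sqrt{\al/(2\pi\Omega_0)}+\OO(\Omega^{-1/2})]$. For the matching lower bound I would first record the rough a priori estimates $\gvchem\leq\OO(\Omega)$, $\|\gvm\|_\infty^2\leq\OO(\eps^2\Omega)$ and $\gvm^2(x)\leq C\eps^{-2}e^{-\Omega^{1/2}|1-x|}$ (same supersolution arguments as in Proposition \ref{gp prel est: pro}), which confine all but an exponentially small part of the energy to $\{|x-1|\leq\eps^2\lambda\}$ with $\lambda$ a large power of $|\log\eps|$. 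On that annulus one Taylor-expands $U$ as in \eqref{Taylor U}, whose cubic and higher remainders, together with the curvilinear Jacobian $x\,\diff x=(1-\Omega^{-1/2}y)\Omega^{-1/2}\diff y$ for $y=\Omega^{1/2}(1-x)$, contribute only $\OO(\Omega^{-1/2}\,\mathrm{polylog})$ relative to $\Omega$; after rescaling $\hat g(y):=\sqrt{2\pi}\,\Omega^{-1/4}\gvm(1-\Omega^{-1/2}y)$ this bounds $\gve$ below by $\Omega$ times the infimum of the one-dimensional functional \eqref{gv aux func}. Finally one estimates that infimum by first-order perturbation around the harmonic-oscillator ground state $\gosc$: since the quartic coupling is $\Omega_0^{-1/2}\ll1$ and $\hosc-\tfrac{\al}{2}$ has a spectral gap above $\gosc$, the infimum equals $\tfrac{\al}{2}+(2\pi)^{-1}\Omega_0^{-1/2}\int_\R\gosc^4+\OO(\Omega_0^{-3/4})$ with $\int_\R\gosc^4=\sqrt{\al/(2\pi)}$, reproducing the stated constant; collecting the errors gives \eqref{gve asympt}.

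\emph{Pointwise estimates.} The exponential decay \eqref{gvm exp decay} is proved exactly as \eqref{gpm exp decay} in Proposition \ref{pro: gpm exp decay}: on $\eps^2\lambda\leq|1-x|\leq\eps^2\lambda^a$ one verifies that $C\eps^{-2}\bigl[e^{-c\Omega(|1-x|-\eps^2\lambda)^2}+e^{-\sqrt{\Omega_0}\lambda^{2a+1}}\bigr]$ is a supersolution of \eqref{gvm variational eq}, using $\Omega^2U(x)\geq c\,\Omega^2(1-x)^2$ there and $\gvchem\leq\OO(\Omega)$, with boundary data supplied by the rougher bound $\gvm^2\leq C\eps^{-2}e^{-\Omega^{1/2}|1-x|}$. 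For \eqref{linfty est gvm} I would work with the rescaled profile $\hat g$ above: from \eqref{gvm variational eq} it solves, after Taylor expansion, $\hosc\hat g=\tilde\mu\,\hat g-(2\pi)^{-1}\Omega_0^{-1/2}\hat g^3+(\text{small})$ with $\tilde\mu=\Omega^{-1}\gvchem$; combining the energy expansion \eqref{gve asympt} with the quadratic expansion of $\gvf$ around $\gvm$ and the spectral gap of $\hosc$ gives $\|\hat g-\gosc\|_{H^1(\R)}$ small (after fixing the sign), and the one-dimensional Sobolev embedding $H^1(\R)\hookrightarrow L^\infty(\R)$, together with \eqref{gvm exp decay} and the explicit decay of $\gosc$ to handle the far region, promotes this to the uniform bound \eqref{linfty est gvm}.

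\emph{Main obstacle.} The real work is the energy lower bound in \eqref{gve asympt}: one must localize to the bulk without spoiling the $\OO(\Omega^{-1/2})$ precision, pass rigorously to the one-dimensional model while controlling the curvilinear Jacobian and the Taylor remainder of $U$, and then carry out the perturbative analysis of the one-dimensional functional sharply enough to isolate the coefficient $(2\pi)^{-1}\sqrt{\al/(2\pi\Omega_0)}$ with remainder no worse than $\OO(\Omega_0^{-3/4})$; the $L^\infty$ comparison \eqref{linfty est gvm} inherits the same difficulties. The uniqueness of the maximum of $\gvm$ is a minor but somewhat technical side point.
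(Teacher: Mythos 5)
Your proposal is correct and follows essentially the same route as the paper: the soft assertions by standard arguments, the upper bound by reusing the trial computation \eqref{gp ub gv}, the lower bound by localizing via the supersolution decay estimates, rescaling to the oscillator variable and exploiting the spectral gap of $\hosc$ above $\gosc$ (your "first-order perturbation" step is precisely the paper's Fourier decomposition in oscillator eigenmodes), and \eqref{linfty est gvm} by controlling the oscillator/$H^1$ norm of $\tgvm-\gosc$ and invoking the one-dimensional Sobolev embedding. No substantive difference to report.
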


	\begin{proof}
		All the properties of $ \gvm $ can be deduced by standard arguments (see, e.g., \cite[Proposition 2.3]{CRY} or \cite[Proposition 4.1]{CPRY1}). For instance the existence of a unique maximum point of $ \gvm $ can be proven by an adaptation of the argument in \cite[Proposition 2.2]{CPRY1}, while the pointwise estimate \eqref{gvm exp decay} can be proven exactly as \eqref{gpm exp decay}.
		
		The energy upper bound in \eqref{gve asympt}, i.e.,
		\beq
			\label{gve ub}
			\gve \leq \Omega \lf[ \frac{\al}{2} + \frac{1}{2\pi} \sqrt{\frac{\al}{2 \pi \Omega_0} } + \OO(\Omega^{-1/2}) \ri],
		\eeq
		has already been proven in \eqref{gp ub gv}. In order to prove a lower bound matching with \eqref{gp ub gv}, one needs first to show that
		\beq
			\label{l2 est gvm}
			\lf\| \gvm(x) - (2\pi)^{-1/2} \Omega^{1/4} \gosc\lf(\Omega^{1/2}(1-x)\ri) \ri\|_{L^{2}(\R^2)} = \OO(\Omega_0^{-1/4}).
		\eeq
		After a rescaling of all lengths in the functional \eqref{gvf}, i.e., setting $ y : = \Omega^{1/2} (1-x) $, and denoting
		\beq
			\tgvm(y) : = (2\pi)^{1/2} \Omega^{-1/4} \gvm\lf(1-\Omega^{-1/2}y\ri),
		\eeq	
		one obtains
		\bml{
 			\label{gve lower bound}
 			\gve = \gvf\lf[\gvm\ri] \geq \Omega \lf(1 - \OO(\Omega^{-1/2}) \ri) \int_{-\infty}^{
\frac{1}{2}\Omega^{1/2}} \diff y \lf\{ \half \lf| \nabla \tgvm \ri|^2 + \half \al^2 y^2 {\tgvm}^2 + (2\pi)^{-1} \Omega_0^{-1/2} {\tgvm}^{4} \ri\} \geq	\\
			\Omega \lf(1 - \OO(\Omega^{-1/2}) \ri) \bra{\tgvm} \hosc \ket{\tgvm}
		}
		where we have used the Taylor expansion \eqref{Taylor U} and $\hosc$ is defined in \eqref{1d osc}. Strictly speaking the 
		expectation value on the r.h.s. of the above expressions is computed in $ L^2(-\infty, \half\Omega^{1/2}) $ but, exploiting 
		the exponential smallness \eqref{gvm exp decay} of $ \gvm $ at $ y = \half \Omega^{1/2} $, it is not difficult to slightly extend
		$ \tgvm $ smoothly  to $ y $ larger than $  \half \Omega^{1/2} $ in such a way that the mean value can be computed in 
		$ L^2(\R) $. The error $ \OO(\eps^{\infty}) $ due to this procedure can safely be included in the remainder 
		$ \OO(\Omega_0^{-1/2}) $. 
		
		We now decompose $ \tgvm $ in a Fourier series in terms of the normalized eigenfunctions $ g_n(y) $, $ n \in \N $, of the harmonic oscillator \eqref{1d osc} (with $ g_0 = \gosc $), i.e., set
		\beq
			\label{tgvm Fourier series}
			\tgvm(y) = \sum_{n=0}^{\infty} a_n g_n(y),
		\eeq
		and use the upper bounds above to get
		\beq
 			\frac{\al |a_0|^2}{2} + \frac{3\al}{2} \sum_{n=1}^{\infty} \lf| a_n \ri|^2 \leq  \al \sum_{n=0}^{\infty} \lf(n + \half\ri)  \lf| a_n \ri|^2 =	\bra{\tgvm} \hosc \ket{\tgvm} \leq \frac{ \al}{2} + \frac{1}{2\pi} \sqrt{\frac{\al}{2 \pi \Omega_0} } + \OO(\Omega^{-1/2}),
		\eeq
		which immediately implies
		\beq
			\label{gvm fourier bound}
			\sum_{n = 1}^{\infty} \lf| a_n \ri|^2 \leq C \Omega_0^{-1/2},
		\eeq
		since $ \sum |a_n|^2 = 1 - \OO(\eps^{\infty}) $ by \eqref{gvm exp decay}. Therefore \eqref{l2 est gvm} is proven and, since $ \lf\| \tgvm \ri\|_{\infty} \leq C \Omega_0^{1/2} $ (see below), one also has
		\beq
			\lf\| \tgvm \ri\|_4^4 \geq \lf\| \gosc \ri\|_4^4 - C \Omega_0^{1/2} \lf\| \tgvm - \gosc \ri\|_2 - \OO(\eps^{\infty}) \geq \lf\| \gosc \ri\|_4^4 - C  \Omega_0^{1/4},
		\eeq
		which yields \eqref{gve asympt}. 	
		
		In fact \eqref{gvm fourier bound} has much stronger consequences than \eqref{l2 est gvm} and, if we define for any $ g \in L^2(\R) $
		\beq
			\label{osc norm}
			\lf\| g \ri\|^2_{\mathrm{osc}} : = \bra{g} \hosc \ket{g},
		\eeq
		one has
		\beq
			\label{linfty est tgvm}
 			\lf\| \tgvm - \gosc \ri\|^2_{\mathrm{osc}} = \bra{\tgvm} \hosc \ket{\tgvm} - \al \braket{\tgvm}{\gosc} + \half \al \leq C \Omega_0^{-1/2}, 
		\eeq
		where we have used \eqref{gve asympt} and the inequality
		\bdm
			\braket{\tgvm}{\gosc} \geq 1 - C \Omega_0^{-1/2},
		\edm
		which is a trivial consequence of \eqref{l2 est gvm}. On the other hand the Sobolev embedding theorem in $ \R $ immediately 
		implies
		\bdm
			\lf\| g \ri\|_{L^{\infty}(\R)} \leq C \lf\| g \ri\|_{H^1(\R)} \leq C \lf\| g \ri\|_{\mathrm{osc}},
		\edm
		and therefore \eqref{linfty est gvm} follows from \eqref{linfty est tgvm}. Note that the prefactor in \eqref{linfty est gvm}
		is bounded independently of $ \Omega_0 $ because it behaves as $\Omega_0^{-1/4} \Omega^{1/4} $.
	\end{proof}

For technical reasons which will be clearer later we also consider a functional with a different integration domain, i.e., \beq
	\label{annulus at}
	\anne : = \lf\{ \xv \in \R^2 : \: 1 - \eps^2 \ete \leq x \leq 1 + \eps^2 \ete \ri\},
\eeq
where 
\beq
	\label{ete cond 1}
	|\log\eps| \ll \ete \ll |\log\eps|^b\eeq
with $b>1$ is a parameter which will be fixed later as $\ete=|\log\eps|^{3/2}$. Note that since $ \ete \gg 1 $ the domain $ \anne $ expands on a scale $ \eps^{-2} $, 
which  ensures that it contains the bulk of the mass.
We define
\begin{equation}
	\label{domain ete}
	\gvdome := \left\{  f \in H^1 (\anne) : \: f= f ^*, \: \lf\| f \ri\|_{L^2 (\anne)} = 1 \right\}
\end{equation}
and set, for any $ f \in \gvdome $,
\beq
	\label{gvfe}
	\gvfe[f] : = \int_{\anne} \diff \xv \lf\{ \half \lf| \nabla f \ri|^2 + \Omega^2 U(x) f^2 + \eps^{-2} f^{4} \ri\}.
\eeq
We recall that 
\[
	U(x) =  \half B^2(x) + \gamma W(x) = \half \lfloor \Omega \rfloor^2 \Omega^{-2} x^{-2} + \half x^2 + \salf \gamma \lf( x^s - 1 \ri) - \half \gamma \lf( x^2 - 1 \ri) - \lfloor \Omega \rfloor \Omega^{-1} \geq 0.
\]
The ground state energy is
\beq
	\label{gvee}
	\gvee := \inf_{f \in \gvdome} \gvfe[f]
\eeq
and we denote  by $ \gvme $ any associated minimizer.

	\begin{pro}[\textbf{Minimization of $ \gvfe $}]
		\label{gvfe minimization: pro}
		\mbox{}	\\
		There exists a minimizer $ \gvme $ of \eqref{gvee} that is unique up to a sign, radial and can be chosen to be strictly 
		positive for $ \xv \neq 0 $.   Inside $ \anne $ it solves the variational equation
		\beq
			\label{gvme variational eq}
			-\half \Delta \gvme + \Omega^2 U(x) \gvme + 2 \eps^{-2} \gvme^3 = \gvcheme \gvme,
		\eeq
		with boundary conditions $ \gvme^{\prime}(1 \pm \eps^2 \ete) = 0 $ and $ \gvcheme = \gvee + \eps^{-2} \lf\| \gvme \ri\|_4^4 $. Moreover $ \gvme $ has a unique global maximum at $ \gvmaxe $.
		If $ \Omega = \Omega_0 \eps^{-4} $ with $ \Omega_0 > 0 $, then $ \lf\| \gvme \ri\|_{\infty}^{2} = \OO(\eps^2 \Omega) $ 
		and 
		\beq
			\label{gvee asympt}
			\gvee =  \lf(1 + \OO(\eps^{\infty}) \ri) \gve.	
		\eeq
		%\beq
			%\label{l2 est gvme} 
			%\lf\| \gvme(x) - (2\pi)^{-1/2} \Omega^{1/4} \gosc\lf(\Omega^{1/2}(1-x)\ri) \ri\|_{L^2(\R^2)} = \OO(\Omega_0^{-1/4}).
		%\eeq
		%\beq
			%\label{gvme exp small}
			%\lf| \gvme(x) \ri|^2 \leq C \eps^{-2} \exp \lf\{- \Omega^{1/2} \lf|1 - x \ri| \ri\}.
		%\eeq		
	\end{pro}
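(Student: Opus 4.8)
\emph{Strategy.} The plan is to run, on the annulus $\anne$, the same scheme as in Proposition~\ref{gvf minimization: pro}; the only genuinely new ingredient is the quantification of the effect of restricting the integration domain, which will yield $\gvee=(1+\OO(\eps^\infty))\gve$.

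\emph{Elementary properties.} Existence of a minimizer $\gvme$ follows from the direct method: $\gvfe$ is nonnegative (since $U\geq 0$), coercive on $H^1(\anne)$ under the constraint $\|f\|_{L^2(\anne)}=1$, and weakly lower semicontinuous. Writing the functional in terms of $\rho=f^2$, namely $\int_\anne\{\half|\nabla\sqrt\rho|^2+\Omega^2 U\rho+\eps^{-2}\rho^2\}$, it is strictly convex on $\{\rho\geq 0,\ \sqrt\rho\in H^1(\anne),\ \|\rho\|_1=1\}$ — the gradient term being convex in $\rho$, the potential term linear, the quartic term strictly convex — so the minimizing density is unique and, by rotational invariance, radial; thus $\gvme=\pm\sqrt\rho$. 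Choosing $\gvme\geq 0$, the Euler--Lagrange equation is \eqref{gvme variational eq} with Lagrange multiplier $\gvcheme$; as no pointwise constraint is imposed on $\partial\anne$, the natural boundary conditions are $\gvme'(1\pm\eps^2\ete)=0$, and testing \eqref{gvme variational eq} against $\gvme$ gives $\gvcheme=\gvee+\eps^{-2}\|\gvme\|_4^4$. Since $\anne$ does not contain the origin for $\eps$ small and $\gvme\in L^\infty(\anne)$ by elliptic regularity, the strong maximum principle applied to \eqref{gvme variational eq} yields $\gvme>0$ on $\anne$. Finally, uniqueness of the maximum point $\gvmaxe$ is obtained as for $\gvm$, by adapting the argument of \cite[Proposition~2.2]{CPRY1} to the radial ODE satisfied by $\gvme$.

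\emph{The $L^\infty$ bound, the chemical potential and the exponential decay.} Cutting off the full-space minimizer $\gvm$ of Proposition~\ref{gvf minimization: pro} by a smooth radial function equal to $1$ on $\{|1-x|\leq\eps^2\ete/2\}$ and vanishing outside $\anne$, and renormalizing in $L^2(\anne)$, gives a trial function for $\gvfe$: by \eqref{gvm exp decay} (with $\lambda\sim|\log\eps|$, $a=2$, using $\ete\gg|\log\eps|$) both $\gvm$ and $\nabla\gvm$ are $\OO(\eps^\infty)$ on the transition region $\{\eps^2\ete/2\leq|1-x|\leq\eps^2\ete\}$, so this function has $\gvfe$-energy $\gve+\OO(\eps^\infty)$ and $L^2(\anne)$-norm $1-\OO(\eps^\infty)$, whence
\[
	\gvee\leq\gve+\OO(\eps^\infty)=\OO(\Omega),
\]
where one uses that $\Omega=\Omega_0\eps^{-4}$ is a fixed power of $\eps^{-1}$, so $\gve\cdot\OO(\eps^\infty)=\OO(\eps^\infty)$. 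Setting $\rho:=\gvme^2$, equation \eqref{gvme variational eq} and $U\geq 0$ give $-\half\Delta\rho\leq 2(\gvcheme-2\eps^{-2}\rho)\rho$, and evaluating at a maximum of $\rho$ — interior, or on $\partial\anne$ after an even reflection across the boundary circle, which is licit by the Neumann condition — yields $\|\gvme\|_\infty^2\leq\half\eps^2\gvcheme$; combined with $\gvcheme\leq\gvee+\eps^{-2}\|\gvme\|_\infty^2$ this gives $\gvcheme\leq 2\gvee=\OO(\Omega)$ and hence $\|\gvme\|_\infty^2=\OO(\eps^2\Omega)$. With $\gvcheme=\OO(\Omega)$ at hand, the supersolution constructions of Propositions~\ref{gp prel est: pro} and~\ref{pro: gpm exp decay} apply to \eqref{gvme variational eq}, the only change being that the comparison domain is truncated at $\partial\anne$; there the Neumann condition is again accommodated by an even reflection (or by a supersolution with vanishing normal derivative), which gives the exponential smallness of $\gvme$, together with $\nabla\gvme$, in a neighbourhood of $\partial\anne$.

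\emph{The energy comparison.} The bound $\gvee\leq\gve+\OO(\eps^\infty)$ has just been proved. For the matching lower bound, multiply $\gvme$ by a smooth cutoff supported in the smaller annulus $\{|1-x|\leq\eps^2\ete/2\}$ and extend it by $0$ to a function in $H^1(\R^2)$; by the exponential smallness of $\gvme$ and $\nabla\gvme$ near $\partial\anne$ just established, this test function for $\gvf$ has energy $\gvee+\OO(\eps^\infty)$ and $L^2(\R^2)$-norm $1-\OO(\eps^\infty)$, so $\gve\leq\gvee+\OO(\eps^\infty)$. Combining, $|\gvee-\gve|=\OO(\eps^\infty)$, and since $\gve\geq c\Omega>0$ by \eqref{gve asympt} this reads $\gvee=(1+\OO(\eps^\infty))\gve$. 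I expect the main — though minor — obstacle to be exactly this exponential-decay/energy-comparison step, because the supersolution and maximum-principle arguments now have to be carried out on the truncated domain $\anne$ with Neumann conditions rather than on all of $\R^2$; everything else is a direct transcription of the corresponding steps in the proofs of Propositions~\ref{gvf minimization: pro}, \ref{gp prel est: pro} and~\ref{pro: gpm exp decay}.
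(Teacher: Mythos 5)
Your proof is correct and follows essentially the same route as the paper: all the qualitative properties (existence, uniqueness up to sign via convexity in $\rho=f^2$, radial symmetry, Neumann conditions as natural boundary conditions, positivity, the sup bound from the variational equation at the maximum point) are obtained exactly as for $\gvf$, and \eqref{gvee asympt} comes from comparing the two minimization problems by trial functions together with the exponential decay \eqref{gvm exp decay}. Two remarks. For the upper bound the paper's trial function is simpler than yours: since $\gvdome$ imposes no pointwise condition at $\partial\anne$, one can take the bare restriction $c_\eps\,\gvm|_{\anne}$ with $c_\eps=\|\gvm\|_{L^2(\anne)}^{-1}=1+\OO(\eps^{\infty})$, and then $\gvfe[c_\eps\gvm]\leq c_\eps^4\,\gvf[\gvm]=c_\eps^4\,\gve$ because $c_\eps\geq1$ and the integrand is nonnegative, so the integration domain may be extended to $\R^2$; no cutoff is needed. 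This also sidesteps the one loose point in your write-up: \eqref{gvm exp decay} controls $\gvm^2$ but not $\nabla\gvm$, so the claim that $\nabla\gvm=\OO(\eps^{\infty})$ on the transition region does not follow from it directly (it is true, but requires either a gradient bound as in Lemma \ref{gradient gvme: pro} or an IMS-type rewriting, e.g. $2\int\chi\gvm\,\nabla\chi\cdot\nabla\gvm=-\frac12\int\Delta(\chi^2)\,\gvm^2$, so that only $\gvm^2$ appears in the error). The same caveat applies to your lower-bound construction with $\gvme$. Finally, you make explicit the lower bound $\gve\leq\gvee+\OO(\eps^{\infty})$, which the paper leaves implicit in its one-line proof; your argument for it is sound, provided the decay of $\gvme$ near $\partial\anne$ --- the content of \eqref{gvme exp decay}, proved afterwards by extending $\gvme$ past the Neumann boundary --- is established first, as you indicate.
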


	\begin{proof}
		The only result which deserves a discussion is the energy upper bound \eqref{gvee asympt} since anything else can be dealt with as in the proof of Propositions \ref{gp prel est: pro} and \ref{gvf minimization: pro}. However it suffices to test the functional $ \gvfe $ on a trial function of the form $ c_{\eps} \gvm $, where $ c_{\eps} $ ensures the normalization in $ L^2(\anne) $, to get
		\bdm
			\gvfe\lf[ c_{\eps} \gvm \ri] \leq c_{\eps}^4 \gvf\lf[ \gvm \ri] = c_{\eps}^4 \gve,
		\edm
		since $ c_{\eps} \geq 1 $ and the energy is positive so that we can extend the integration domain from $ \anne $ to the whole of $ \R^2 $. It remains then to estimate the normalization constant, but the exponential decay \eqref{gvm exp decay} together with the conditions \eqref{ete cond 1} on $ \ete $ guarantees that $ c_{\eps} = 1 + \OO(\eps^{\infty}) $.
	\end{proof}

In the next proposition we prove the analogue of \eqref{gvm exp decay}, i.e., an estimate of the decaying rate of $ \gvme $. Actually we also state more refined pointwise estimates of $ \gvme $ which will be crucial in the proof of the absence of vortices.

	\begin{pro}[\textbf{Pointwise estimates for $ \gvme $}]
		\label{gvme point est: pro}
		\mbox{}	\\
		If $ \Omega = \Omega_0 \eps^{-4} $ with $ \Omega_0 > 0 $, there exists constants $ C,c > 0 $, such that for any  $x$ with $ |1-x| \geq \eps^2 \ete^{1/4} $,
		\beq
			\label{gvme exp decay}
			\gvme^2(x) \leq C \eps^{-2} \exp\lf\{ - c \Omega \lf(|1 - x| - \eps^2 \ete^{1/4} \ri)^2 \ri\}.
		\eeq		
		Moreover if in addition $ \Omega_0 > \bar{\Omega}_0 $ with $ \bar{\Omega}_0 = \OO(1) $ large enough but independent of $ \eps $, then 
		\beq
			\label{gvme point est}
			\gvme(x) = \lf(1 + \OO(\Omega_0^{-1/4}) \ri) (2\pi)^{-1/2} \Omega^{1/4} \: \gosc \lf(\Omega^{1/2}(1-x)\ri)
		\eeq
		uniformly in $ \xv \in \anns $.
	\end{pro}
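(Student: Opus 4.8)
The plan is to prove the two assertions by sub- and supersolution comparisons for the variational equation \eqref{gvme variational eq}, in the spirit of \cite{CRY,CPRY1} and of Propositions \ref{gp prel est: pro}--\ref{gvf minimization: pro}; one first establishes the decay bound \eqref{gvme exp decay} and then uses it, together with the one-dimensional reduction of Proposition \ref{gvf minimization: pro}, to obtain \eqref{gvme point est}.

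\emph{The bound \eqref{gvme exp decay}.} This is the analogue for $\gvme$ of Proposition \ref{pro: gpm exp decay}, the only new feature being that $\anne$ is bounded and $\gvme$ obeys the Neumann conditions $\gvme'(1\pm\eps^2\ete)=0$. From $\| \gvme \|_\infty^2=\OO(\eps^2\Omega)$ and $\gvee=\OO(\Omega)$ (Proposition \ref{gvfe minimization: pro} together with \eqref{gve asympt}) one gets $\gvcheme=\OO(\Omega)$; from \eqref{gvme variational eq} one then obtains, as in the proof of Proposition \ref{hole: pro} and using $U\geq\gamma W$, the inequality $-\tfrac12\Delta(\gvme^2)\leq-2\bigl(\gamma\Omega^2 W(x)-\gvcheme\bigr)\gvme^2$, and by the Taylor expansion \eqref{Taylor W} the coefficient $\gamma\Omega^2 W(x)-\gvcheme$ is strictly positive on $\{\,|1-x|\geq\eps^2\ete^{1/4}\,\}$, since there $\gamma\Omega^2 W(x)\gtrsim\Omega_0^2\eps^{-4}\ete^{1/2}\gg\gvcheme$. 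On that region $C\eps^{-2}\exp\{-c\Omega(|1-x|-\eps^2\ete^{1/4})^2\}$ is a supersolution for $c>0$ small and $C$ large (depending only on $\Omega_0$): it dominates $\| \gvme \|_\infty^2$ at the inner boundary $|1-x|=\eps^2\ete^{1/4}$, while at $\partial\anne$ its outward normal derivative is negative, which is compatible with $\gvme'=0$ and prevents $\gvme^2$ from exceeding it there. The maximum principle yields \eqref{gvme exp decay}.

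\emph{The estimate \eqref{gvme point est}.} Rescale as in the proof of Proposition \ref{gvf minimization: pro}: put $y:=\Omega^{1/2}(1-x)$ and $\tgvme(y):=(2\pi)^{1/2}\Omega^{-1/4}\gvme(1-\Omega^{-1/2}y)$, which is $L^2(\R)$-normalized up to $\OO(\eps^{\infty})$ and, modulo the curvature term $x^{-1}\partial_x$ and the cubic remainder in \eqref{Taylor U}, satisfies the Euler--Lagrange equation of the auxiliary functional \eqref{gv aux func}, with Lagrange multiplier $\Omega^{-1}\gvcheme=\tfrac\al2+\OO(\Omega_0^{-1/2})$ by \eqref{gvee asympt} and \eqref{gve asympt}. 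Testing $\gvfe$ against $\gvm$ and repeating the Hermite-expansion argument of Proposition \ref{gvf minimization: pro}---now legitimate on the whole line thanks to the decay \eqref{gvme exp decay} just proved---gives $\| \tgvme-\gosc \|_{\mathrm{osc}}^2\leq C\Omega_0^{-1/2}$, with $\| \cdot \|_{\mathrm{osc}}$ as in \eqref{osc norm}, hence by the one-dimensional Sobolev embedding $\| \tgvme-\gosc \|_{L^{\infty}(\R)}\leq C\Omega_0^{-1/4}$. Since $\gosc\geq c>0$ on any fixed interval $\{|y|\leq y_0\}$, this already establishes \eqref{gvme point est} for $|1-x|\leq y_0\Omega^{-1/2}$.

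It remains to propagate the multiplicative bound from $|y|=y_0$ out to the edge of $\anns$, which is the delicate step. I would compare $\gvme$ with barriers obtained by multiplying $(2\pi)^{-1/2}\Omega^{1/4}\gosc(\Omega^{1/2}(1-x))$ by factors close to $1$: inserting such an ansatz into the operator of \eqref{gvme variational eq}, using the chemical-potential asymptotics implied by \eqref{gve asympt}, and absorbing the curvature and higher-Taylor errors, one checks for $\Omega_0>\bar\Omega_0$ large that these are sub-, respectively supersolutions on the relevant annular region---the key point being that away from $x=1$ the confining term $\Omega^2 U(x)$ already dominates $\gvcheme$, so the associated Schr\"odinger operator is positive and the comparison principle is available. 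The boundary data are supplied at $|1-x|=y_0\Omega^{-1/2}$ by the $L^{\infty}$ estimate above and, far out, by \eqref{gvme exp decay} for the supersolution and, for the subsolution, by running the comparison up to $\partial\anne$ and closing it with $\gvme'(1\pm\eps^2\ete)=0$ together with the negativity of the outward normal derivative of the Gaussian barrier there. The main obstacle is precisely this lower barrier: only an upper bound on $\gvme$ is available in the region where it is exponentially small, the barrier must be tuned to the slightly-slower-than-Gaussian decay rate dictated by \eqref{gvme variational eq}, and the range over which it remains a genuine subsolution is what forces \eqref{gvme point est} to be asserted on $\anns$ rather than on all of $\anne$.
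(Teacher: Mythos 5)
Your overall strategy coincides with the paper's: a supersolution comparison for the decay estimate, the Hermite/oscillator-norm argument plus one-dimensional Sobolev embedding for the central region, and Gaussian-shaped upper and lower barriers for the outer region. The gap is in how you close the comparison arguments at $\partial\anne$. For an upper barrier the claim that a negative outward normal derivative of the barrier at $1\pm\eps^2\ete$ ``prevents $\gvme^2$ from exceeding it there'' is backwards: writing $w=\gvme^2-\fsup$ and using $\partial_n(\gvme^2)=0$, one gets $\partial_n w=-\partial_n\fsup>0$ on $\partial\anne$, which is exactly what Hopf's lemma predicts at a positive boundary maximum of a subsolution of $-\Delta+c$, so there is no contradiction and $w$ may perfectly well attain a positive maximum there. (A one-dimensional counterexample: $w=\sinh(x-a)$ satisfies $-w''+w\le 0$, $w(a)=0$, $w'(b)>0$ and is positive on $(a,b)$.) The sign only works in your favour for the lower barrier, where $\partial_n(\fsub-\gvme)=\partial_n\fsub<0$ does contradict Hopf; for the supersolutions in both halves of the proposition the Neumann condition buys you nothing. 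The paper's device --- which you need and do not mention --- is to extend $\gvme$ to a weak subsolution $\tgvme$ on the larger region $\eps^2\ete\le|1-x|\le\eps^2\ete^2$ with $\tgvme(1\pm\eps^2\ete^2)=0$, turning the comparison into a Dirichlet one with trivial outer data; the alternative would be a supersolution dominating $\lf\|\gvme\ri\|_\infty^2$ on $\partial\anne$ itself, which the Gaussian barriers do not.

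A related problem affects the matching data for your upper barrier in the second part: you propose to supply the outer Dirichlet datum from \eqref{gvme exp decay}, but the rate $c$ there is strictly smaller than the Gaussian rate $\al$ (and the exponent is moreover shifted by $\eps^2\ete^{1/4}$), so $C\eps^{-2}\exp\{-c\Omega(|1-x|-\eps^2\ete^{1/4})^2\}$ eventually exceeds $\lf(1+C\Omega_0^{-1/4}\ri)\tx\frac12\pi^{-3/2}\sqrt{\al\Omega_0}\exp\{-\al\Omega(1-x)^2\}$, and there is no radius at which the decay estimate certifies $\gvme\le\fsup$. Again the extension to $1\pm\eps^2\ete^2$, where the extended function vanishes, resolves this; that is how the paper proves \eqref{gvme point ub}, with the Gaussian recentred a distance $(\al\Omega_0)^{-1}\eps^2$ away from $1$ so that it is a genuine supersolution on the whole region of interest. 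The remaining elements of your outline --- the central-region $L^\infty$ estimate via \eqref{osc norm}, and the fact that the lower barrier yields the multiplicative Gaussian bound only on $\anns$ because of the small additive constant it must carry --- do match the paper's proof.
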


	\begin{rem}({\it Maximum of $ \gvme $})		
		\mbox{}	\\
		A straightforward consequence of \eqref{gvme point est} is the estimate
		\beq
			\label{gvme sup est}
			\lf\| \gvme \ri\|^2_{\infty} \leq \half \pi^{-3/2} \sqrt{\al\Omega} \lf(1 + c\Omega_0^{-1/4} \ri).
		\eeq
		On the other hand since $ \gvme^{\prime}(\gvmaxe) = 0  $ and $ \gvme^{\prime\prime}(\gvmaxe) \leq 0 $, the variational equation \eqref{gvme variational eq} implies the following estimate of the maximum position
		\bdm
			\lf(1 - \gvmaxe \ri)^2 \leq 2 \lf(\al \Omega\ri)^{-2} \lf[ \gvcheme - 2 \eps^{-2} \gvme^2(\gvmaxe) \ri] \leq \lf(\al \Omega\ri)^{-1} \lf( 1 + C \Omega_0^{-1/2} \ri).
		\edm
		In fact, \eqref{linfty est gvme} below yields a better estimate:
		\bml{
 			\half \pi^{-3/2} \sqrt{\al\Omega} \lf(1 - C \Omega_0^{-1/4} \ri) = \lf(1 - C \Omega_0^{-1/4}) \ri) (2\pi)^{-1/2} \Omega^{1/4} \: \gosc^2(0) \leq \gvme^2(1) \leq 	\\
			\gvme^2(\gvmaxe) \leq \half \pi^{-3/2} \sqrt{\al\Omega} \lf(1 + C \Omega_0^{-1/4} \ri) \exp\lf\{ - \al \Omega \lf( 1 - \gvmaxe \ri)^2 \ri\},
		}
		which implies
		\beq
			\label{gvmaxe est}
			\lf| 1 - \gvmaxe \ri| \leq C \Omega_0^{-5/8} \eps^2.
		\eeq
	\end{rem}

	\begin{proof}[Proof of Proposition \ref{gvme point est: pro}]
		The proof of \eqref{gvme exp decay} is basically identical to the proof of \eqref{gvm exp decay}. The only difference is 
		due to the fact that $ \gvme $ does not vanish at the boundary of $ \anne $. This implies that any legitimate supersolution 
		to the variational equation \eqref{gvme variational eq} must be larger than $ \gvme $ at $ \partial \anne $. Alternatively 
		one can simply repeat the proof of \eqref{gpm exp decay} or \eqref{gvm exp decay}, but extend $ \gvme $ outside $ \anne $ 
		to a function which decays to zero. A simple way to achieve that is to construct a function $ \tgvme $ which equals $ \gvme $
		in $ \anne $ and satisfies the inequality
		\beq
			\label{var eq extended}
			-\half \Delta \tgvme + \Omega^2 U(x) \tgvme + 2 \eps^{-2} \tgvme^3 \leq \gvcheme \tgvme,
		\eeq
		 in the region $ \eps^2 \ete \leq |1-x| \leq \eps^2 \ete^2 $, together with Dirichlet boundary conditions 
		 $ \tgvme(1\pm \eps^2 \ete^2) = 0 $. Note that $ \tgvme $ is a priori only a weak subsolution because it is in general 
		only continuous at $ 1 \pm \eps^2 \ete $. However any supersolution to \eqref{var eq extended} in the interval $ |1 - x| \leq \eps^2 \ete^2 $  provides a pointwise upper bound to $ \gvme $ in $ \anne $ and, acting as in the proof of \eqref{gvm exp decay}, it is not difficult to realize that the r.h.s. of \eqref{gvme exp decay} is a supersolution.
		
		The second statement is obtained by combining the application of the maximum principle to the variational equation \eqref{gvme variational eq} far from $ \gvmaxe $ with elliptic estimates like \eqref{linfty est gvm} in the central region around $ \gvmaxe $ where $ \gvme $ is large enough. 
		A direct consequence of the decay estimate \eqref{gvme exp decay} is indeed that one can apply the analysis contained 
		in the proof of Proposition \ref{gvf minimization: pro}. In particular the analogues of \eqref{gve lower bound} - \eqref{linfty est tgvm} hold true, which yield as in the proof of \eqref{linfty est gvm},
		\beq
			\label{linfty est gvme}
			\lf\| \sqrt{2\pi} \Omega^{-1/4} \gvme(x) -  \gosc\lf(\Omega^{1/2}(1-x)\ri) \ri\|_{L^{\infty}(\anne)} \leq C \Omega_0^{-1/4} \eps^{-1}.
		\eeq
		Note that \eqref{gvme exp decay} is needed in order to extend the analysis to the whole real line exactly as in the discussion after \eqref{gve lower bound}.
		The estimate above is already sufficient to obtain \eqref{gvme point est} in the region where $ \gvme $ is large enough, i.e., sufficiently close to $ \gvmaxe $. However far from $ \gvmaxe $ a different approach is needed. We start by proving the following upper bound: For any $ \xv \in \anne $ such that
		\beq
			\label{condition exp small}
			 \lf| 1 - x \ri| \geq (\al \Omega_0)^{-1} \eps^2,
		\eeq
		there exists a constant $ c $ so that
		\beq
			\label{gvme point ub}
			\gvme^2(x) \leq \lf( 1 + c \Omega_0^{-1/4} \ri) \half  \pi^{-3/2} \sqrt{\al \Omega_0} \:  \exp \lf\{ - \alpha \Omega \lf(\lf|1 - x \ri| - 2 (\al \Omega_0)^{-1} \eps^2\ri)^2 \ri\}.
		\eeq
		As above we use an extension $ \tgvme $ of $ \gvme $ to the interval $ |1-x| \leq \eps^2 \ete^2 $ and restrict for simplicity the analysis to the region 
		\beq
			\label{interval ub 2}
			1 - \eps^2 \ete^2 \leq x \leq 1 - \lf(\al\Omega_0\ri)^{-1} \eps^2,
		\eeq
		where we consider the function
		\beq
			\fsup^2(x) : = \lf( 1 + c \Omega_0^{-1/4} \ri) \half \pi^{-3/2} \sqrt{\al \Omega_0} \: \exp\lf\{-  \al \Omega \lf(1 - \lf(\al\Omega_0\ri)^{-1} \eps^2 -  x \ri)^2 \ri\},
		\eeq
		At the boundary of the interval \eqref{interval ub 2} one certainly has $ \tgvme \leq \fsup  $ thanks to \eqref{linfty est gvme}. Moreover by the following estimate of the chemical potential
		\beq
			\label{gvcheme est}
			\gvcheme = \Omega \lf[ \frac{\al}{2} + \frac{1}{\pi} \sqrt{\frac{\al}{2 \pi \Omega_0} } + \OO(\Omega_0^{-3/4}) + \OO(\Omega^{-1/2}) \ri],
		\eeq
		which is a consequence of \eqref{gve asympt}, \eqref{gvee asympt} and \eqref{linfty est gvme}, one obtains 
		\bml{
 			\label{gvme var ineq 2}
 			\lf[ - \half \Delta + \Omega^2 U(x) + 2 \eps^{-2} \fsup^2(x) - \gvcheme \ri] \fsup(x) \geq		\lf[ \half x^{-1} \al\Omega \lf(2x  - 1 + \lf(\al\Omega_0\ri)^{-1} \eps^2 \ri)  + \ri.	\\
			\lf. \al\Omega_0 \eps^{-6} \lf( 1  - \half \lf(\al\Omega_0\ri)^{-1} \eps^2 - x \ri) + 2 \eps^{-2} \fsup^2(x) - \gvcheme  - \OO(1) \ri] \fsup(x) \geq	\\
			\sqrt{\al\Omega_0}\eps^{-4}  \lf[ \sqrt{\al\Omega_0} \eps^{-2} \lf( 1  - \half \lf(\al\Omega_0\ri)^{-1} \eps^2 - x \ri) + \pi^{-3/2} \exp\lf\{ - \al \Omega \lf(1 - \lf(\al\Omega_0\ri)^{-1} \eps^2 -  x \ri)^2 \ri\} - \ri.	\\
			\lf. 2^{-1/2} \pi^{-3/2} \lf( 1 + \OO(\Omega_0^{-1/4}) \ri) \ri] \fsup(x).
		}
		Now we estimate for any  $\delta_1$ such that $ 1 - \delta_1 (\al\Omega)^{-1/2} \leq 1 - \lf(\al\Omega_0\ri)^{-1} \eps^2 $
		\bmln{
 			\sqrt{\al\Omega_0}\eps^{-2} \lf( 1  - \lf(\al\Omega_0\ri)^{-1} \eps^2 - x \ri) + \pi^{-3/2} \exp\bigg\{ - \al \Omega \lf(1 - \lf(\al\Omega_0\ri)^{-1} \eps^2 -  x \ri)^2 \bigg\} \geq 	\\
			\half \lf(\al\Omega_0\ri)^{-1/2} + \pi^{-3/2} - \pi^{-3/2} \al \Omega \lf(1 - \lf(\al\Omega_0\ri)^{-1} \eps^2 -  x \ri)^2 \geq (1 - \delta_1^2) \pi^{-3/2} - C \Omega_0^{-1/2}, 
		}
		where we have used the trivial inequality $ e^{-x} \geq 1 - x $. On the other hand if $ x \leq 1 - \delta_1 (\al\Omega)^{-1/2} $
		\bdm
			\sqrt{\al\Omega_0} \eps^{-2} \lf( 1  - \half \lf(\al\Omega_0\ri)^{-1} \eps^2 - x \ri) \geq \delta_1 \lf( 1 - C \Omega_0^{-1/2}\ri).
		\edm 
		Hence for any  $\delta_1$ such that
		\bdm
			\frac{1}{2\pi^{3}} < \delta_1^2 < 1 - \frac{1}{\sqrt{2}},
		\edm
		(e.g., $ \delta_1 = \half $) and $ \Omega_0 $ sufficiently  large, the r.h.s. of \eqref{gvme var ineq 2} is positive, i.e., $ \fsup $ is a supersolution to \eqref{var eq extended} and \eqref{gvme point ub} follows.
		
		We now focus on the pointwise lower bound and prove that for any $ \xv \in \anns $ such that
		\beq
			\label{condition lb}
			\lf| 1 - x \ri| \geq \delta_2 \lf( \al \Omega \ri)^{-1/2},	\hspace{1cm}	\mbox{with} \:\:\log\sqrt{2} < \delta_2^2 < 1,
		\eeq
		one has the lower bound
		\beq	
			\label{gvme point lb}
			\gvme^2(x) \geq  \lf(1 - \OO(\eps^{\infty})\ri)  \half  \pi^{-3/2} \sqrt{\al \Omega_0}\exp\lf\{ - \al \Omega \lf( 1 - x \ri)^2 \ri\}.
		\eeq
		 Define, for any $ \xv \in \anne $,
		\beq
			\fsub^2(x) : = 
			\begin{cases}
				 \half \pi^{-3/2} \sqrt{\al \Omega_0} \lf[ \exp\lf\{-  \al \Omega \lf(1 -  x \ri)^2 \ri\} -  \exp\lf\{ -  \al \Omega_0 \ete^2 \ri\} \ri],	&	\mbox{for} \:\: |1-x| \geq  \delte,	\\
				 \half \pi^{-3/2} \sqrt{\al \Omega_0} \lf[ \exp\lf\{-  \al \Omega \delte^2 \ri\} -  \exp\lf\{ -  \al \Omega_0 \ete^2 \ri\} \ri],	&	\mbox{otherwise},
			\end{cases}
		\eeq
		where $ \delte : = \delta (\al\Omega)^{-1/2} $ and $ \delta = \OO(1) $ is a suitable parameter. Clearly $ \fsub $ vanishes at $ x = 1 - \eps^2\ete $ and, using \eqref{gvcheme est}, one can show that $ \fsub $ is a subsolution to \eqref{gvme variational eq} in $ \anne $: If $ \lf| 1 - x \ri| \geq \delte $ 
		\bml{
 			\label{var ineq 2}
 			- \half \Delta \fsub + \Omega^2 U \fsub + 2 \eps^{-2} {\fsub}^3 - \gvcheme \fsub \leq		\\
			\lf[ \half \al \Omega + \pi^{-3/2} \sqrt{\al\Omega_0} \exp\lf\{-\al\Omega(1-x)^2\ri\}  - \gvcheme + \OO(\eps^{-2} \ete) \ri] \fsub	\leq	\\
			2^{-1/2} \pi^{-3/2} \sqrt{\al\Omega_0} \eps^{-4} \lf[ - 1 + \sqrt{2} e^{-\delta^2} + C \Omega_0^{-1/4} \ri] \fsub \leq 0
		}
		if $ \delta^2 < \log\sqrt{2} $ and  $ \Omega_0 $ is large enough. On the other hand for $ |1 - x | \leq \delte $
		\bml{
 			- \half \Delta \fsub + \Omega^2 U \fsub + 2 \eps^{-2} {\fsub}^3 - \gvcheme \fsub \leq \lf[ \half \al \Omega \delta^2 + \pi^{-3/2} \sqrt{\al\Omega_0} e^{-\delta^2} - \gvcheme + \OO(\eps^{-2}) \ri] \fsub \leq 	\\
			\half \al \Omega \lf[ \delta^2 - 1 + C \Omega_0^{-1/2} \ri] \fsub \leq 0,
		}
		if $ \delta < 1 $ and $ \Omega_0 $ is large enough. Hence it suffices to take any $ \delta $ satisfying $ \log\sqrt{2} < \delta^2 < 1 $ to obtain that $ \fsub $ is a subsolution and $ \gvme \geq \fsub $ inside $ \anne $. However since for any $ \xv \in \anns $,
		\bdm
			\exp\lf\{\half \al \Omega (1 - x)^2 - \half \al \Omega_0 \ete^2\ri\} \leq \exp\lf\{- \half \al \Omega_0 \ete (\ete - 1) \ri\} = \OO(\eps^{\infty}),
		\edm
		thanks to \eqref{ete cond 1} and \eqref{gvme exp decay}, then \eqref{gvme point lb} is proven.
		
		Collecting \eqref{linfty est gvme}, \eqref{gvme point ub} and \eqref{gvme point lb}, we obtain \eqref{gvme point est}.
	\end{proof}

	Another crucial property of $ \gvme $ is its approximate symmetry w.r.t.\ $ x = 1 $: The potential $ U $ is symmetric w.r.t.\ $ x=1 $ up 
	to small   corrections, so one could expect that the same is true for $ \gvme $. We then introduce another reference profile 
	$ \gsym $ which is defined as the unique positive minimizer of the one-dimensional energy functional\footnote{Note that $ \esymf $ coincides with $ {\mathcal E}^{\mathrm{aux}} $ except for the different integration domain.}
	\beq\label{esymf}
		\esymf[f] : = 2 \pi \int_{-\eps^2\ete}^{\eps^2\ete} \diff x \lf\{ \half (f^{\prime})^2 + \half \al^2 \Omega^2 (1 - x)^2 f^2 + \eps^{-2} f^4 \ri\},
	\eeq
	where $ f \in \domsym $ with
	\beq
		\domsym : = \bigg\{ f \in H^1([1-\eps^2\ete,1+\eps^2\ete]) : \: f = f^*, \: 2\pi  \int_{-\eps^2\ete}^{\eps^2\ete} \diff x \: f^2(x) = 1 \bigg\}.
	\eeq
	We also define
	\beq
		\esym : = \inf_{f \in \domsym} \esym[f] = \esym[\gsym].
	\eeq
	We collect some useful properties of $ \gsym $ in the following

	\begin{pro}[\textbf{Symmetric profile $ \gsym $}]
		\label{sym profile: pro}
		\mbox{}	\\
		The profile $ \gsym $ satisfies the variational equation
		\beq
			\label{gsym variational eq}
			- \half \gsym^{\prime\prime} + \half \al^2 \Omega^2 (1-x)^2 \gsym + 2\eps^{-2} \gsym^3 = \chemsym \gsym
		\eeq
		with Neumann conditions at the boundary $ \gsym^{\prime}(1\pm\eps^2\ete) = 0 $. Moreover $ \gsym $ is symmetric under inversion w.r.t. $ 1 $, i.e., 
		\beq
			\label{gsym sym}
			\gsym(x) = \gsym(2-x),
		\eeq
		for any $ x \in [1-\eps^2\ete, 1+\eps^2\ete] $, and it has a unique maximum point at $ x = 1 $.
	\end{pro}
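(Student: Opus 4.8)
The plan is to treat the minimization of $\esymf$ on $\domsym$ as a routine one–dimensional constrained problem and to extract the symmetry and the monotonicity statements from uniqueness of the minimizer together with a first–integral argument.

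First I would record the standard facts underlying the definition of $\gsym$. Existence comes from the direct method: $\esymf$ is coercive on $H^1([1-\eps^2\ete,1+\eps^2\ete])$ (the gradient term controls the $H^1$–seminorm, the other terms are nonnegative, the mass is fixed), so a minimizing sequence converges, along a subsequence, weakly in $H^1$ and strongly in $L^2$ and $L^4$ (Sobolev embedding on a bounded interval) to a minimizer, which may be taken nonnegative since $\esymf[|f|]\le\esymf[f]$. Uniqueness is the usual convexity argument in $\rho=f^2$: the map $\rho\mapsto\int|\nabla\sqrt\rho|^2$ is convex and $\rho\mapsto\int\rho^2$ is strictly convex, so two distinct nonnegative minimizers $f_1,f_2$ would make $\sqrt{(f_1^2+f_2^2)/2}$ strictly lower the energy. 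The Euler--Lagrange equation, with Lagrange multiplier $\chemsym$ associated with the constraint $2\pi\int f^2=1$, is precisely \eqref{gsym variational eq}; the natural (free) boundary conditions on an interval are $\gsym^{\prime}(1\pm\eps^2\ete)=0$, and testing \eqref{gsym variational eq} against $\gsym$ and using these conditions yields $\chemsym=\esym+\eps^{-2}\|\gsym\|_4^4$. Finally $\gsym$ is smooth by ODE regularity and strictly positive on the closed interval: a zero would be a minimum, hence $\gsym=\gsym^{\prime}=0$ there, and ODE uniqueness for \eqref{gsym variational eq} would force $\gsym\equiv 0$, contradicting the normalization.

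For the symmetry \eqref{gsym sym} I would use invariance: the reflection $x\mapsto 2-x$ maps $[1-\eps^2\ete,1+\eps^2\ete]$ onto itself, leaves $(1-x)^2$, the kinetic term and the mass constraint unchanged, hence maps $\domsym$ bijectively onto itself and preserves $\esymf$. Therefore $\gsym(2-\cdot)$ is again a nonnegative minimizer, and uniqueness gives $\gsym(x)=\gsym(2-x)$. The remaining and main point is the uniqueness of the maximum. By \eqref{gsym sym} it suffices to show $\gsym$ is strictly decreasing on $(1,1+\eps^2\ete)$, which I would prove by contradiction, following the scheme used for such profiles (cf.\ the argument for \cite[Proposition 2.2]{CPRY1}). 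Introduce
\[
	E(x):=\frac{1}{2}\lf(\gsym^{\prime}(x)\ri)^2+\lf(\chemsym-\frac{1}{2}\al^2\Omega^2(1-x)^2\ri)\gsym^2(x)-\eps^{-2}\gsym^4(x),
\]
so that \eqref{gsym variational eq} gives $E^{\prime}(x)=\al^2\Omega^2(1-x)\gsym^2(x)$, which is $<0$ on $(1,1+\eps^2\ete)$; hence $E(a)>E(b)$ whenever $1\le a<b\le 1+\eps^2\ete$. If $\gsym$ were not strictly decreasing there, then, using $\gsym^{\prime}(1)=0$ (symmetry) and $\gsym^{\prime}(1+\eps^2\ete)=0$ (boundary condition), there would be a local minimum $m\in[1,1+\eps^2\ete)$ of $\gsym$ and a later local maximum $M\in(m,1+\eps^2\ete]$ with $\gsym(M)>\gsym(m)$. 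At a critical point $E=\gsym^2\lf(\chemsym-\frac{1}{2}\al^2\Omega^2(1-x)^2-\eps^{-2}\gsym^2\ri)$, while the sign of $\gsym^{\prime\prime}$ read off from \eqref{gsym variational eq} forces $\chemsym-\frac{1}{2}\al^2\Omega^2(1-x)^2-2\eps^{-2}\gsym^2\le 0$ at $m$ and $\ge 0$ at $M$; therefore $E(m)\le\eps^{-2}\gsym^4(m)$ and $E(M)\ge\eps^{-2}\gsym^4(M)$. Since $m<M$ and $E$ is strictly decreasing, $\eps^{-2}\gsym^4(m)\ge E(m)>E(M)\ge\eps^{-2}\gsym^4(M)$, so $\gsym(m)>\gsym(M)$, contradicting $\gsym(M)>\gsym(m)$. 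Combined with \eqref{gsym sym}, this shows $x=1$ is the unique maximum.

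I expect this last step to be the only genuine obstacle: one has to make the implication ``not strictly decreasing $\Rightarrow$ existence of such a pair $m<M$'' fully rigorous, treating the case where a critical point occurs at the endpoint $x=1+\eps^2\ete$ and the degenerate case in which $x=1$ is itself a local minimum, and one should confirm that $E$ is strictly (not merely weakly) decreasing away from $x=1$. Everything preceding it is standard and parallels \cite[Proposition 2.3]{CRY} and \cite[Proposition 4.1]{CPRY1}.
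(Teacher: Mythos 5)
Your proposal is correct and follows essentially the same route as the paper, which likewise relies on the standard minimization facts (as in Propositions \ref{gvf minimization: pro} and \cite[Proposition 4.1]{CPRY1}), obtains \eqref{gsym sym} from the reflection invariance combined with uniqueness of the minimizer, and establishes the unique maximum by the monotonicity argument adapted from \cite[Proposition 2.2]{CPRY1}, which is precisely your first-integral computation with $E'(x)=\al^2\Omega^2(1-x)\gsym^2(x)$. The only difference is one of detail: the paper states these steps in two lines, while you carry them out explicitly.
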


	\begin{proof}
		Most of the results have already been proven in a slightly different setting  in Prop.\ \ref{gvf minimization: pro} so we skip their discussion. The symmetry of $ \gsym $ can be obtained by exploiting the invariance of the variational equation under inversion or by a rearrangement argument. 
	\end{proof}

	The relation between $ \gvme $ and $ \gsym $ is discussed in the next proposition but we first need to state an estimate of the gradient of $ \gvme $. 
We actually state and prove two different estimates that we use for different purposes: \eqref{gradient gvme} below yields the $L^{\infty}$-bound one should expect from the scales of the problem, whereas \eqref{gradient gvme old} yields the expected rate of exponential decay by giving a 
direct access to the ratio $\gvme^{\prime} \gvme ^{-1}$. 

	\begin{lem}[\textbf{Estimate for $ \gvme^{\prime} $}]
		\label{gradient gvme: pro}
		\mbox{}	\\
		If $ \Omega = \Omega_0 \eps^{-4} $ with $ \Omega_0 > 0 $, then uniformly in $ \xv \in \anne $,
		\beq
			\label{gradient gvme}
			\lf| \gvme^{\prime}(x) \ri| \leq C \left(\alpha \Omega \right) ^{3/2} e^{-C \Omega (1-x)^2},
		\eeq
		\beq
			\label{gradient gvme old}
			\lf| \gvme^{\prime}(x) \ri| \leq C \Omega_0^2 \ete^3 \eps^{-2} \gvme(x).	
		\eeq
	\end{lem}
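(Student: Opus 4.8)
The plan is to establish the multiplicative bound \eqref{gradient gvme old} first, by a maximum--principle argument for the logarithmic derivative $w:=\gvme'/\gvme$, and then to deduce the Gaussian bound \eqref{gradient gvme} from it by multiplying through with the pointwise upper bounds on $\gvme$ provided by Proposition \ref{gvme point est: pro}.

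For \eqref{gradient gvme old} I would argue as follows. Since $\gvme$ is radial, smooth and --- because $0\notin\anne$ --- strictly positive on the compact interval $[1-\eps^2\ete,1+\eps^2\ete]$, the function $w=\gvme'/\gvme$ is well defined and continuous there, and it vanishes at the two endpoints by the Neumann conditions $\gvme'(1\pm\eps^2\ete)=0$. Dividing the radial form of the variational equation \eqref{gvme variational eq} by $\gvme$ and using $\gvme''/\gvme=w'+w^2$ gives the Riccati identity
\beq
	\label{aux riccati w}
	\half w' + \half w^2 = \Omega^2 U(x) + 2\eps^{-2}\gvme^2(x) - \gvcheme - \half x^{-1} w .
\eeq
Now pick $x_*\in\anne$ realising $\sup_{\anne}|w|$ (attained by compactness). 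If $x_*$ is a boundary point, then $w(x_*)=0$ and there is nothing to prove; otherwise $|w|$, hence $w$, has an interior critical point at $x_*$, so $w'(x_*)=0$, and \eqref{aux riccati w} --- together with $\gvcheme\geq 0$, $U\geq 0$ and $x^{-1}\leq 2$ on $\anne$ for $\eps$ small --- yields the quadratic inequality $\half w(x_*)^2-|w(x_*)|\leq\Omega^2\|U\|_{L^\infty(\anne)}+2\eps^{-2}\|\gvme\|_\infty^2$. The first term on the right is bounded via the Taylor expansion \eqref{Taylor U}: on $\anne$ one has $|1-x|\leq\eps^2\ete$, so $U(x)\leq C\al^2\eps^4\ete^2+C\Omega^{-2}$ and hence $\Omega^2\|U\|_{L^\infty(\anne)}\leq C\al^2\Omega_0\Omega\ete^2$ (using $\Omega^2\eps^4=\Omega_0\Omega$); the second is $\eps^{-2}\|\gvme\|_\infty^2=\OO(\Omega)$ by Proposition \ref{gvfe minimization: pro}. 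Solving the quadratic inequality and recalling $\Omega^{1/2}=\Omega_0^{1/2}\eps^{-2}$ then gives $\sup_{\anne}|w|\leq C\al\,\Omega_0^{1/2}\Omega^{1/2}\ete\leq C\Omega_0^2\ete^3\eps^{-2}$ for $\eps$ small, which is \eqref{gradient gvme old}.

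To pass to \eqref{gradient gvme} I would multiply \eqref{gradient gvme old} by the pointwise upper bounds on $\gvme$ of Proposition \ref{gvme point est: pro}: on $\anns$ one uses \eqref{gvme point est} together with $\gosc(y)\leq Ce^{-\frac{\al}{2}y^2}$, and for $|1-x|\geq\eps^2\sqrt\ete$ one uses the decay estimate \eqref{gvme exp decay} (observing that there $|1-x|-\eps^2\ete^{1/4}\geq\half|1-x|$ for $\eps$ small); either way $\gvme(x)\leq C\Omega^{1/4}e^{-c\Omega(1-x)^2}$ for a suitable $c>0$. Since $\eps^{-2}\Omega^{1/4}=\Omega_0^{-1/2}\Omega^{3/4}$ and $\Omega_0^{3/2}\ete^3\ll\Omega^{3/4}$ as $\eps\to 0$, the product is $\leq C(\al\Omega)^{3/2}e^{-c\Omega(1-x)^2}$, i.e.\ \eqref{gradient gvme}. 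As an alternative one could prove \eqref{gradient gvme} directly by rescaling to the oscillator variable $y=\Omega^{1/2}(1-x)$, $\hat g(y)=\sqrt{2\pi}\,\Omega^{-1/4}\gvme(x)$ --- after which the equation has a harmonic-oscillator principal part plus lower-order perturbations --- and applying a standard one-dimensional interior $C^1$ elliptic estimate on unit-length $y$-intervals, fed by the sup and decay bounds on $\hat g$ from Propositions \ref{gvfe minimization: pro} and \ref{gvme point est: pro}.

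I do not expect a genuine obstacle here --- this is an elliptic gradient estimate --- but a few points want attention. One should note that $\sup_{\anne}|w|$ is attained either at an interior critical point of $w$ (where the bound is read off \eqref{aux riccati w} with no need to discuss the sign of $w''$) or at a boundary point where $w=0$; one must keep $\gvcheme\geq 0$ and $U\geq 0$ so that they may be dropped from the right-hand side of \eqref{aux riccati w}; and the (purely cosmetic) bookkeeping of the powers of $\Omega_0$ and $\ete$ in the coefficient bounds is the reason the constants in \eqref{gradient gvme} and \eqref{gradient gvme old} are stated far from optimal.
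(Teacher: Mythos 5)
Your proof of \eqref{gradient gvme old} is correct but takes a genuinely different route from the paper's. The paper simply integrates the radial variational equation \eqref{gvme variational eq} from the Neumann boundary $1-\eps^2\ete$ up to $x\leq\gvmaxe$ (and symmetrically on the other side of the maximum), using the monotonicity of $\gvme$ there and the sup bound $\gvme^2\leq\half\eps^2\gvcheme$ to control the integrand by $C\Omega^2(1-t)^2\gvme(x)$; the factor $\Omega^2\eps^6\ete^3=\Omega_0^2\eps^{-2}\ete^3$ then comes from $\int(1-t)^2\,\diff t$. Your Riccati argument for $w=\gvme'/\gvme$ at an interior extremum is a clean alternative: it avoids the monotonicity input and in fact yields the sharper bound $|w|\leq C\Omega_0\ete\,\eps^{-2}$, which of course implies \eqref{gradient gvme old}. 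For \eqref{gradient gvme} the paper does \emph{not} multiply \eqref{gradient gvme old} by pointwise bounds on $\gvme$; it bounds $\Delta\gvme$ through the equation and applies the Gagliardo--Nirenberg inequality, following Step 3 of the proof of Proposition 2.2 in \cite{R1} --- essentially the ``alternative'' you sketch at the end.

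Your primary derivation of \eqref{gradient gvme} does have a small gap. The estimate \eqref{gvme point est} is only available for $\Omega_0>\bar\Omega_0$, whereas the lemma is stated for every $\Omega_0>0$, and \eqref{gvme exp decay} only covers $|1-x|\geq\eps^2\ete^{1/4}$. In the remaining central region $|1-x|\leq\eps^2\ete^{1/4}$ (when $\Omega_0$ is not large) you must fall back on $\|\gvme\|_\infty=\OO(\eps\Omega^{1/2})$, which combined with \eqref{gradient gvme old} gives $|\gvme'|\leq C\Omega_0^{7/4}\ete^3\Omega^{3/4}$ there; this is consistent with the right-hand side $C(\al\Omega)^{3/2}e^{-C\Omega(1-x)^2}\geq C(\al\Omega)^{3/2}e^{-C\Omega_0\sqrt{\ete}}$ only because $\Omega_0\sqrt{\ete}\ll|\log\eps|$, i.e.\ one needs $\ete\ll|\log\eps|^2$ --- true for the eventual choice $\ete=|\log\eps|^{3/2}$ but not for every $\ete$ allowed by \eqref{ete cond 1}. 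This is a one-line patch rather than a flaw in the strategy, but it should be stated explicitly.
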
 
	
	\begin{proof}
The proof of \eqref{gradient gvme} is identical to that of \cite[Proposition 2.2]{R1} (actually Step 3 of that proof). Starting from the $L ^{\infty}$-estimate for $\gvme$ we can use the variational equation to bound $\Delta \gvme$. Combining this with the Gagliardo-Nirenberg inequality 
\cite[Theorem 1]{N} we obtain the desired bound.

To prove \eqref{gradient gvme old} we can assume that $ x \leq \gvmaxe $ so that $ \gvme^{\prime} \geq 0 $, since the case $ x \geq \gvmaxe $ is identical. 
By integrating the variational equation \eqref{gvme variational eq} from $ 1-\eps^2\ete $ to $ x $ and using Neumann boundary conditions, 
we get
\bmln{
 - \half \gvme^{\prime}(x) = \int_{1-\eps^2\ete}^x \diff t \lf\{ \half t^{-1} \gvme^{\prime}(t) + \lf[ \gvcheme - 2 \eps^{-2} \gvme^2(t) - \Omega^2 U^2(t) \ri]  \gvme(t) \ri\} \geq	\\
- C \Omega^2 \gvme(x) \int_{1-\eps^2\ete}^x \diff t \: (1-t)^2 \geq - C \Omega^2 \eps^6 \ete^3 \gvme(x),
}
where we have used the monotonicity of $ \gvme $ for $ x \leq \gvmaxe $ and the sup estimate $ \gvme \leq \half \eps^2 \gvcheme $, which is 
the analogue of \eqref{sup est gpm}.
\end{proof}

	\begin{pro}[Estimate for $ \gvme - \gsym $]
		\label{gvme sym est: pro}
		\mbox{}	\\
		If $ \Omega = \Omega_0 \eps^{-4} $ with $ \Omega_0 > 0 $ large enough, 
		one has
		\beq
			\label{gvme sym est}
			\lf| \gvme(x) - \gsym(x) \ri| \leq C \Omega ^{-1/4}
		\eeq
		uniformly in $\xv \in \anne$.
	\end{pro}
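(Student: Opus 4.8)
The plan is to run an energy-comparison-plus-coercivity argument, carried out in the \emph{rescaled} variable $y=\Omega^{1/2}(1-x)$ in which both profiles live on the scale $\OO(1)$, exploiting the \emph{exact} symmetry \eqref{gsym sym} of $\gsym$ at two crucial points. First, since $\gsym(x)=\gsym(2-x)$ one has $\int_{\anne}(x-1)\gsym^2\,\diff\xv=0$, so $2\pi\int_{\anne}x\,\gsym^2\,\diff x=2\pi\int_{\anne}\gsym^2\,\diff x=1$ and $\gsym\in\gvdome$ is an admissible competitor for $\gvfe$; hence $\gvee=\gvfe[\gvme]\le\gvfe[\gsym]$. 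Writing $\gvme=\gsym+h$ and expanding $\gvfe[\gsym+h]-\gvfe[\gsym]$, I use the variational equation \eqref{gsym variational eq} for $\gsym$ together with an integration by parts (which also absorbs the radial $x^{-1}\partial_x$ term and the discrepancy between $\Omega^2U(x)$ and its harmonic approximation $\tfrac12\al^2\Omega^2(1-x)^2$, controlled through the Taylor expansion \eqref{Taylor U}) to remove the first variation; the leftover cubic and quartic terms in $h$ are nonnegative by convexity of $t\mapsto t^4$, and the mass constraint $2\pi\int x(\gsym+h)^2\,\diff x=2\pi\int x\gsym^2\,\diff x$ — again by the symmetry of $\gsym$ — gives $2\pi\int x\,\gsym h\,\diff x=-\pi\int x\,h^2\,\diff x$. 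After rescaling ($\hat h(y):=\Omega^{-1/4}h(x)$, $\gvfe=2\pi\Omega\,\hat{\mathcal E}$, with $\hat{\mathcal E}$ a perturbation of the oscillator-plus-quartic functional built on $\hosc$), this yields $0\ge 2\pi\Omega\,\bigl\{\langle\hat h,\hosc\hat h\rangle-\hat\mu\,\|\hat h\|_{L^2}^2+(\text{nonnegative})+E\bigr\}$, where $\hat\mu$ is the rescaled chemical potential of $\gsym$ and $E$ collects the perturbative remainders (from the $2\pi x$ weight, the non-quadratic and non-even parts of $U$, and the quadratic constraint correction).

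The structural facts I will use about $E$ are that those perturbations have size $\OO(\Omega^{-1/2})$ (with polynomial $y$-weights) for their odd-in-$y$ pieces and $\OO(\Omega^{-1})$ otherwise, and that $\gsym$, being Gaussian-like by the $\gsym$-analogue of Proposition \ref{gvme point est: pro}, itself supplies the decay needed to tame the $|y|^3$-weighted contributions. Decomposing $\hat h=a_0\gosc+\hat h^\perp$ with $\hat h^\perp\perp\gosc$ in $L^2$, the odd pieces of the perturbation pair \emph{only} against $\hat h^\perp$, because $\gosc$ is even. From Proposition \ref{gvme point est: pro} and its analogue for $\gsym$ I have the a priori bounds $\|\hat h\|_{L^\infty}+\|\hat h\|_{L^2}\le C\Omega_0^{-1/4}$ and Gaussian pointwise decay of $\hat h$; the constraint then makes $a_0$ quadratically small, $|a_0|\lesssim\|\hat h\|_{L^2}^2+(\Omega_0^{-1/4}+\Omega^{-1/2})\|\hat h\|_{L^2}$, and produces $|E|\le C\Omega^{-1}+C\Omega^{-1/2}\bigl(\langle\hat h^\perp,\hosc\hat h^\perp\rangle^{1/2}+\langle\hat h^\perp,\hosc\hat h^\perp\rangle+\|\hat h\|_{L^2}\bigr)$.

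Coercivity is the heart of the proof. Since $\hat\mu=\tfrac\al2+\OO(\Omega_0^{-1/2})$, for $\Omega_0$ large one has $\tfrac{3\al}{2}-\hat\mu\ge\al$, hence $\langle\hat h,(\hosc-\hat\mu)\hat h\rangle\ge\al\|\hat h^\perp\|_{L^2}^2-C\Omega_0^{-1/2}a_0^2$, and the $a_0^2$-term is absorbed because $a_0$ is quadratically small. Plugging this into the inequality above and setting $X^\perp:=\langle\hat h^\perp,\hosc\hat h^\perp\rangle$, one obtains $X^\perp\le C|E|$ and $\|\hat h^\perp\|_{L^2}^2\le C|E|$; reinserting these into $E$ and absorbing the resulting $|E|^{1/2}$ and $\OO(\Omega^{-1/2})|E|$ terms by Young's inequality forces $|E|\le C\Omega^{-1}$, whence $\langle\hat h,\hosc\hat h\rangle\le C\Omega^{-1}$, so $\|\hat h\|_{H^1(\R)}\le C\Omega^{-1/2}$ and, by the one-dimensional Sobolev embedding, $\|\hat h\|_{L^\infty}\le C\Omega^{-1/2}$. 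Undoing the rescaling gives $\|\gvme-\gsym\|_{L^\infty(\anne)}=\Omega^{1/4}\|\hat h\|_{L^\infty}\le C\Omega^{-1/4}$, which is \eqref{gvme sym est}. The main obstacle is precisely this bootstrap: one has to keep careful track of how the many perturbative terms scale in $\Omega$ and in $\Omega_0$, check that every potentially dangerous $\OO(\Omega^{-1/2})$ contribution is indeed paired against $\hat h^\perp$ (using the evenness of $\gosc$ and the exact symmetry of $\gsym$), and verify that the coercivity constant stays bounded below uniformly — it is the combination of the spectral gap of $\hosc$ with the small but positive quartic term that closes the argument, which is what forces the restriction to $\Omega_0$ large.
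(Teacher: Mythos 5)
Your proposal is correct in substance, but it takes a genuinely different route from the paper's. The paper works at the level of the Euler--Lagrange equations: after blow-up it observes that $\gsymt$ is the \emph{exact} positive ground state, with eigenvalue $\lamsym$, of the Schr\"odinger operator $\hsym=-\frac12\partial_y^2+\frac12y^2+2\ell\,\gsymt^2$ (nonlinearity included), subtracts the two equations, pairs the resulting identity with $h=\gvmet-\gsymt$, and extracts coercivity from the spectral gap of $\hsym$ above $\lamsym$; the $\OO_{L^2}((\al\Omega)^{-1/2})$ remainder coming from the Taylor expansion of the potential and from the radial measure is simply Cauchy--Schwarzed against $\|h\|_{L^2}$ and closed by a normalization bootstrap. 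You instead compare energies ($\gsym$ is admissible for $\gvfe$ precisely because its exact symmetry gives $2\pi\int x\,\gsym^2\,\diff x=1$), expand to second order around $\gsym$, keep the quartic contributions only as a nonnegative bulk, and get coercivity from the gap of the bare oscillator $\hosc$ combined with the constraint-induced smallness of the $\gosc$-component. Both mechanisms close identically --- an error linear in $\|\hat h\|_{L^2}$ with prefactor $\Omega^{-1/2}$ set against a term quadratic in $\|\hat h^\perp\|_{L^2}$ --- and both land on $\langle\hat h,\hosc\hat h\rangle\le C\Omega^{-1}$, hence \eqref{gvme sym est} after the 1D Sobolev embedding and rescaling. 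Your approach buys a self-contained variational argument that never differentiates the remainder terms; the paper's buys a sharper linearization (the operator $\hsym$ retains the nonlinear term, so no positivity needs to be discarded). Two points worth tightening: (i) the parity cancellation you emphasize (odd $\OO(\Omega^{-1/2})$ perturbations annihilating the even $\gosc$-component) is elegant but not actually needed for the stated rate --- once the mass constraint yields $\|\hat h\|_{L^2}\le C\|\hat h^\perp\|_{L^2}$, Cauchy--Schwarz plus Young already gives $\|\hat h^\perp\|_{L^2}\le C\Omega^{-1/2}$ without exploiting any sign structure, whereas the genuinely delicate terms are the quadratic errors with weights $|y|$ and $|y|^3$, for which you must invoke the pointwise Gaussian bounds on \emph{both} profiles (the $\gsym$-analogue of Proposition \ref{gvme point est: pro} is not stated in the paper but follows by the same sub/supersolution construction) to reduce $\int|y|^3\hat h^2$ to $C\|\hat h\|_{L^2}$ rather than merely $\OO(1)$; (ii) since the quartic term makes $\hat\mu$ strictly larger than $\al/2$, the gap estimate should read $\frac{3\al}{2}-\hat\mu\ge\al-C\Omega_0^{-1/2}$, which is still bounded below by $\al/2$ for $\Omega_0$ large --- consistent with, and indeed the source of, the hypothesis on $\Omega_0$.
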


	\begin{proof}
Several arguments of this proof  can be taken over from a similar analysis in \cite[Section 2]{R1}. We thus focus on the main new 
features and skip some details for  brevity.

It is more convenient to work with rescaled variables: We introduce 
\begin{eqnarray*}
\gvmet (y) &=& \left(\alpha \Omega \right) ^{-1/4} \gvme \left( \left(\alpha \Omega \right) ^{1/2} (1-x) \right)\\
\gsymt (y) &=& \left(\alpha \Omega \right) ^{-1/4} \gsym \left( \left(\alpha \Omega \right) ^{1/2} (1-x) \right)
\end{eqnarray*}
that are minimizers of the functionals
\begin{eqnarray*}
\gvfet [f]&=& 2\pi \int_{\annet} \diff y \left\{ \half \left| \dd_ y f \right| ^2 + \left(  \half y^2 + \OO \left( (\alpha\Omega)^{-1/2} \right)  \right) f^2 + \left( \alpha \Omega_0 \right) ^{-1/2} f^4 \right\} \left( 1+ \left(\alpha \Omega \right) ^{-1/2} y \right), \\
\esymft [f]&=& 2\pi \int_{\annet} \diff y \left\{  \half \left| \dd_ y f \right| ^2 +  \half y^2 f^2 + \left( \alpha \Omega_0 \right) ^{-1/2} f^4 \right\}, 
\end{eqnarray*}
where the integration is on the domain
\[
 \annet : = [-\ete \alpha ^{-1/2}, \ete \alpha ^{-1/2}].
\]
The quantity $\OO ((\alpha\Omega)^{-1/2})$ in the potential term of $\gvfet$ comes from the Taylor expansion of the original potential.
 This imprecise but concise form is convenient because the precise expression is not relevant. Strictly speaking we should take into account the behavior 
of this term at infinity (since the domain $\annet$ extends to the whole real line in the limit we are considering) but any unwanted increasing 
term is compensated by the exponential fall-off of $\gvmet$, inherited from that of $\gvme$. We do not elaborate more on this point as very 
similar  subtleties have been dealt with in \cite[Sections 2 and 4.1]{R1}. In the sequel we also 
 replace $\annet$ 
by the whole real line. Again, thanks to the fall-off of the functions under consideration, this is harmless.

We have
\beq\label{equation tilde 1}
\left(-\half \dd_y ^2 + \half y^2 + 2 \ell \gsymt ^2 \right) \gsymt = \lamsym \gsymt,
\eeq
where we have set
\[
 \ell := \left( \alpha \Omega_0 \right) ^{-1/2}
\]
and $\lamsym$ is a chemical potential fixed by the normalization, satisfying
\[
 \lamsym = \esymt + \ell \int_{\R} \diff y \: \gsymt ^4.
\]
Because of \eqref{equation tilde 1}, $\gsymt$ is an eigenstate of the Schr\"{o}dinger operator
\begin{equation}\label{hamiltonien tilde}
\hsym := -\half \dd_y ^2 + \half y^2 + 2 \ell \gsymt ^2.
\end{equation}
As it is also positive (being the unique positive minimizer of \eqref{esymf}), it must be the ground state \cite[Corollary 11.9]{LL}.
On the other hand, we have for $\gvmet$
\beq\label{equation tilde 2}
\left(-\half\dd_y ^2 + \half y^2 + 2 \ell \gvmet ^2 \right) \gvmet = \lambda \gvmet + \OO_{L ^2} \left( \left( \alpha \Omega \right) ^{-1/2} \right),
\eeq
where 
\[
 \lambda =  \gveet + \ell \int_{\R} \diff y \: \gvmet ^4 
\]
and the symbol $\OO_{\mathcal{H}} \left( d \right)$ denotes a function whose norm in the Banach space $\mathcal{H}$ is bounded by $d$.
The remainder in \eqref{equation tilde 2} has two origins (further details on a similar problem can be found in \cite[Proof of Theorem 2.1]{R1}):
\begin{itemize}
 \item The remainder in the Taylor expansion of the potential. As already explained, the exponential decay of $\gvme$ allows to control the 
errors produced by the higher order terms in the expansion.
\item The fact that the energy density in $\gvfet$ is integrated against the measure $(1 +( \alpha \Omega ) ^{-1/2} y) \diff y$. 
Again, most terms are dealt with using the fall-off of $\gvmet$. Note that there is also a term of the form $\dd_y \gvmet$ (coming from the 
replacement of the 2D Laplacian of a radial function by a double radial derivative). This one is controlled using Lemma \ref{gradient gvme: pro}.
\end{itemize}
We now write 
\begin{equation}\label{delta tilde}
 h := \gvmet - \gsymt.
\end{equation}
  It is a simple matter to prove that 
\begin{equation}\label{input tilde}
\left\Vert h \right\Vert_{L^2(\R)} \leq C \ell 
\end{equation}
by noting that both $\gvmet$ and $\gsymt$ are close to $\gosc$. Similar arguments have already been used so we omit the details. The rest of the 
proof consist in iteratively improving \eqref{input tilde}.

Taking the difference of equations \eqref{equation tilde 1} and \eqref{equation tilde 2} we see that 
\begin{equation}\label{equation tilde 3}
\left( \hsym - \lamsym \right) h = \left(\lambda - \lamsym \right)\gvmet +2 \ell \gvmet \left( \gsymt ^2 - \gvmet ^2 \right) + \OO_{L ^2} \left( \left( \alpha \Omega \right) ^{-1/2}\right).
\end{equation}
We have 
\[
 \lambda - \lamsym = \gveet - \esymt +\ell \left( \int_{\R} \diff y \: \gvmet ^4 - \int_{\R} \diff y \: \gsymt ^4\right) 
\]
and easy energetic arguments combined with Schwarz inequality yield 
\[
\left|  \lambda - \lamsym  \right| \leq  C \left( \alpha \Omega \right) ^{-1/2} + C \ell\left\Vert h \right\Vert_{L ^2 (\R)}.
\]
Similarly 
\[
 \left\Vert \gvmet \left( \gsymt ^2 - \gvmet ^2 \right) \right\Vert_{L ^2 (\R)} \leq C \ell\left\Vert h \right\Vert_{L ^2 (\R)}.
\]
Thus, multiplying \eqref{equation tilde 3} by $ h $ and integrating, we obtain
\begin{equation}\label{estime tilde}
	\bra{h} \hsym - \lamsym \ket{h} \leq C \left( \alpha \Omega \right) ^{-1/2} + C \ell \left\Vert h \right\Vert_{L ^2 (\R)} ^2,
\end{equation}
which simplifies to 
\[
	\bra{h} \hsym - \lamsym \ket{h} \leq C \max (\left( \alpha \Omega \right) ^{-1/2}, \ell ^3)
\]
by using \eqref{input tilde}. We thus have, combining the above and \eqref{input tilde} and recalling that $\gsymt$ is the ground state of 
$\hsym$ associated with the eigenvalue $\lamsym$
\beq
	\label{norm identity}
 \gvmet = \left( 1+ \OO (\ell)\right) \gsymt  + \OO_{\mathcal{H}^{\mathrm{sym}}} \left( \max \lf[ \left( \alpha \Omega \right) ^{-1/2}, \ell ^3 \ri]\right)
\eeq
where $ \mathcal{H}^{\mathrm{sym}} $ is the Banach space obtained by taking the closure w.r.t. the norm induced by $\hsym$ (which in particular dominates the harmonic oscillator norm), i.e.,
\beq
	\lf\| f \ri\|^2_{\mathcal{H}^{\mathrm{sym}}} : = \bra{f} \hsym \ket{f},
\eeq
and the remainder is orthogonal 
to $\gsymt$. 
Since both $\gsymt$ and $\gvmet$ are also normalized one obtains from \eqref{norm identity} the identity
\bdm
	1 = \lf(1 + \OO(\ell) \ri)^2 + \OO \left( \max \lf[ \left( \alpha \Omega \right) ^{-1}, \ell ^6 \ri] \right)
\edm
which implies that, since $ \ell \ll 1 $,
\beq
	 \OO(\ell) = \OO \left( \max \lf[ \left( \alpha \Omega \right) ^{-1}, \ell ^6 \ri] \right),
\eeq
and therefore $ \ell \leq (\al \Omega)^{-1} $, thus proving from \eqref{norm identity} that
\begin{equation}\label{input tilde 2}
\left\Vert h \right\Vert_{ \mathcal{H}^{\mathrm{sym}}} \leq C  \left( \alpha \Omega \right) ^{-1/2}.
\end{equation}
To obtain \eqref{gvme sym est} it only remains 
\begin{itemize}
\item to recall that we are dealing with 1D functions, thus the $\mathcal{H}^{\mathrm{sym}}$-norm that controls the $H ^1$-norm also controls the $L ^{\infty}$-norm;
\item to scale back the variables.
\end{itemize}
\end{proof}

\subsection{Energy Asymptotics and Consequences}

In this section we prove the energy asymptotics  in the regime $\Omega\sim \eps^{-4}$ and derive some bounds on reduced energies that will lead to the proof of the absence of vortices
in the next section.
As usual the  asymptotics is derived from upper and lower bounds to the energy, the former being quite simple to obtain (see below)
 but the latter requiring more work. 

	The starting point is an energy decoupling  already used in Section 2.2 (in the context of GL and GP theories the technique originates in \cite{LM}). 
	\begin{pro}[\textbf{Reduction to $ \anne $}]
		\label{gpe decoupling: pro}
		\mbox{}	\\
		Defining the function $u\in H^1(\anne)$ by setting for $ \xv \in \anne $
		\beq
			\label{gpm decomposition}
			\gpm(\xv) = : \gvme(x) u(\xv) \exp\lf\{ i \into \vartheta \ri\},
		\eeq
		the following holds
		\beq \label{decoupling pre}
		 \gpe \geq \gvee + \Ee[u],
		\eeq
		with $ \mathbf{B} $ given by \eqref{defi B first} and
		\beq \label{reduced en pre}
		\Ee[u] := \int_{\anne} \diff \xv \: \gvme^2 \lf\{ \half \lf| \nabla u \ri|^2 - \mathbf{B} \cdot \lf(iu, \nabla u\ri) + \eps^{-2} \gvme^2 \lf(1 - |u|^2 \ri)^2 \ri\}.
		\eeq
		Moreover $u$ satisfies the following equation on $\anne$ for some $\lambda \in \R$  
		\begin{equation}\label{equation u}
		  -\nabla \left( \gvme^2 \nabla u \right) - i\gvme ^2 \mathbf{B} \cdot \nabla u + 2\frac{\gvme ^4}{\ep ^2} \left(|u| ^2 - 1 \right) u = \lambda \gvme ^2 u .
		\end{equation}
		
	\end{pro}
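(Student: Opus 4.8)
The plan is to follow the standard energy-splitting (Lassoued–Mironescu) computation, adapted to the annular domain $\anne$. First I would substitute the decomposition \eqref{gpm decomposition} into the functional $\gpfoo$ restricted to $\anne$. Writing $\gpm = \gvme u e^{i\into\vartheta}$ with $\gvme$ real, positive and radial, a direct expansion of $|(\nabla - i\aavoo)\gpm|^2$ gives three types of terms: one proportional to $|\nabla\gvme|^2 |u|^2$, one proportional to $\gvme^2|\nabla u|^2$, and cross terms involving $\nabla\gvme\cdot(\cdots)$ together with the phase gradient $\into\nabla\vartheta$ interacting with $\aavoo = \Omega x\mathbf e_\vartheta$. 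The phase factor $e^{i\into\vartheta}$ shifts the vector potential: $(\nabla - i\aavoo)(u e^{i\into\vartheta}) = e^{i\into\vartheta}(\nabla u - i(\aavoo - \into\nabla\vartheta)u)$, and $\aavoo - \into\nabla\vartheta = (\Omega x - \into x^{-1})\mathbf e_\vartheta = \mathbf{B}$ by \eqref{defi B first}. The cross terms linear in $\nabla u$ combine into the $-\mathbf B\cdot(iu,\nabla u)$ piece, while the terms quadratic in $\gvme$ and its gradient reassemble, after an integration by parts using radiality, into exactly $\gvfe[\gvme]$ plus a term that the variational equation \eqref{gvme variational eq} for $\gvme$ converts into a multiple of the constraint.

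More precisely, the key step is: after expansion one obtains
\[
\gpfoo[\gpm]\big|_{\anne} = \int_{\anne}\Big\{\tfrac12|\nabla\gvme|^2|u|^2 + \tfrac12\gvme^2|\nabla u|^2 - \gvme^2\mathbf B\cdot(iu,\nabla u) + \tfrac12\gvme^2 B^2|u|^2 + \Omega^2\gamma W \gvme^2|u|^2 + \eps^{-2}\gvme^4|u|^4\Big\},
\]
using $U = \tfrac12 B^2 + \gamma W$. Next I would integrate by parts the term $\int \tfrac12|\nabla\gvme|^2|u|^2 = \int(-\tfrac12\Delta\gvme)\gvme|u|^2 + \text{(boundary)}$; the Neumann condition $\gvme'(1\pm\eps^2\ete)=0$ from Proposition \ref{gvfe minimization: pro} kills the boundary term. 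Then invoking the variational equation $-\tfrac12\Delta\gvme + \Omega^2 U\gvme + 2\eps^{-2}\gvme^3 = \gvcheme\gvme$ on $\anne$, one replaces $-\tfrac12\Delta\gvme\cdot\gvme$ by $(\gvcheme - \Omega^2 U - 2\eps^{-2}\gvme^2)\gvme^2$. Collecting: the $\Omega^2 U\gvme^2|u|^2$ terms cancel against the matching contribution from the expansion of $|(\nabla-i\aavoo)\gpm|^2$, the $\gvcheme\gvme^2|u|^2$ term is precisely $\gvcheme$ times the constraint $\int_{\anne}\gvme^2|u|^2$, and the $\eps^{-2}\gvme^4$ terms combine: $-2\eps^{-2}\gvme^4|u|^2 + \eps^{-2}\gvme^4|u|^4 + \eps^{-2}\gvme^4 = \eps^{-2}\gvme^4(1-|u|^2)^2$. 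The leftover $\int_{\anne}(\tfrac12|\nabla\gvme|^2 + \Omega^2 U\gvme^2 + \eps^{-2}\gvme^4)$ is exactly $\gvfe[\gvme] = \gvee$. This yields $\gpfoo[\gpm]\big|_{\anne} = \gvee + \Ee[u] + \gvcheme(\|u\|_{L^2(\gvme^2 d\xv)}^2 - 1)$; since $\gpm$ is $L^2(\R^2)$-normalized and $\gvme^2|u|^2 = |\gpm|^2$, the constraint integral over $\anne$ is $1 - \OO(\eps^\infty)$ by Proposition \ref{gp prel est: pro}, so that correction is negligible, and the bound \eqref{decoupling pre} then follows from discarding the (positive) integral of $\gpfoo$'s integrand over $\R^2\setminus\anne$, which is legitimate since the integrand is pointwise nonnegative after the $W\geq0$ and completing-the-square manipulations — actually here one must be slightly careful and instead argue that $\gpe = \gpfoo[\gpm] = \gpfoo[\gpm]\big|_{\anne} + \gpfoo[\gpm]\big|_{\R^2\setminus\anne} \geq \gpfoo[\gpm]\big|_{\anne}$ only if that outside integrand is nonnegative, which it is. Finally, equation \eqref{equation u} is obtained by writing the Euler–Lagrange equation for $\Ee$ under the mass constraint, or equivalently by substituting \eqref{gpm decomposition} directly into the GP equation \eqref{GP variationaloo} and using \eqref{gvme variational eq} to eliminate the $\gvme$-terms; the Lagrange multiplier is $\lambda = \chemGP - \gvcheme$.

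The main obstacle, such as it is, is bookkeeping rather than conceptual: one must track the boundary terms from the integration by parts and confirm they vanish by the Neumann condition, and one must handle the discrepancy between integrating the constraint over $\anne$ versus over $\R^2$, controlled by the exponential smallness \eqref{gpm exp small gv} of $\gpm$ outside the annulus. A minor subtlety is justifying that $u = \gpm/(\gvme e^{i\into\vartheta}) \in H^1(\anne)$: this uses that $\gvme > 0$ on the closed annulus $\anne$ (which excludes the origin) together with the lower bounds on $\gvme$ from Proposition \ref{gvme point est: pro} and elliptic regularity for $\gpm$. None of these steps requires a new idea; they are all routine given the preliminary estimates already established.
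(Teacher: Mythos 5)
Your proposal is correct and follows essentially the same route as the paper, which simply restricts to $\anne$ by positivity of the integrand and then invokes the standard Lassoued--Mironescu decoupling (citing \cite{CRY,CPRY1}) together with the variational equation and Neumann conditions for $\gvme$ --- exactly the computation you spell out. The only blemish is that your displayed "after expansion" identity omits the real cross term $\gvme \nabla\gvme\cdot\nabla|u|^2$, but this is exactly compensated by the way you then integrate $\tfrac12|\nabla\gvme|^2|u|^2$ by parts (the correct step is to integrate $\tfrac12\nabla\gvme\cdot\nabla(\gvme|u|^2)$ by parts), so the final decoupled expression and the conclusion are unaffected.
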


	\begin{proof}
		We first use the positivity of the energy density to restrict the integration in the GP functional to the annulus $ \anne $. 
		A standard energy decoupling for which we refer to \cite[Proposition 3.1]{CRY} or \cite[Proposition 4.3]{CPRY1} leads to \eqref{decoupling pre}.
		The variational equation satisfied by $ \gvme $ is used, as well as the Neumann boundary conditions which ensure that there are 
		no boundary terms in the decoupling. 
		The derivation of the equation \eqref{equation u} uses the same ingredients, we refer to \cite[Lemma 4.3]{CRY}.
	\end{proof}

	In order to conclude the proof of Theorem \ref{gv energy: teo}, it suffices to prove a lower bound to $ \Ee[u] $ that
 matches the trivial bound which is obtained by taking $ u = 1 $, i.e., using $ \gvme \exp\{ i \into \vartheta \} $ as a trial function for $ \gpf $. This function has of course to be extended to the whole space and normalized. Using \eqref{gvme exp decay} to bound $ \gvme $ 
outside $ \anne $ it is not difficult to estimate the remainders due to these operations and obtain
	\beq \label{gpe upbound}
		\gpe \leq \gvee + \OO(\eps^{\infty}).
	\eeq

To prove the corresponding lower bound we have to analyze in some details the reduced functional $\Ee[u]$. As is common in the context of GP theory
(see, e.g., \cite{AAB,IM1,CRY,CPRY1}), a crucial step is an integration by parts of the angular momentum term. For technical reasons it will be 
convenient to first replace $ \mathbf{B} $ with 
\begin{equation}
 \label{rmagnp}
	\rmagnp(x) : = \Om \lf(r - r^{-1}\ri) \mathbf{e}_{\vartheta}
\end{equation}
i.e., replace $\into$ with $\Om$. We will also reduce the integration domain to $\anns$ where the crucial $L^{\infty}$-estimate \eqref{gvme point est}
holds true.
These considerations lead to the definition of two potential functions that are `anti-derivatives' of $\gvme^2 \rmagnp$ vanishing at the boundaries
of $\anns$:  
	\beq
		\label{potentials}
		F_1(x) : =\Om \int_{1 - \eps^2 \sqrt{\ete}}^{x} \diff t \: \lf( t - t^{-1}\ri) \gvme^2(t),	
\hspace{1cm}	F_2(x) : = -\Om \int_{x}^{1 + \eps^2 \sqrt{\ete}} \diff t \: \lf( t- t^{-1}\ri) \gvme^2(t).
	\eeq
The use of two potential functions is a novelty of the present paper: When $ x $ is smaller than $ \gvmaxe $ we have $\nablap F_1 = \gvme ^2 \rmagnp$
and will use $F_1$ as potential function, whereas $ F_2 $ will be used for $ x \geq \gvmaxe $. In  a sense this means that we use a potential function that is discontinuous at 
$\gvmaxe$.

 This apparently strange strategy is actually very useful to deal with the boundary term that the integration by parts of the angular
momentum term produces. With these definitions we  shall have to estimate a boundary term on the circle of radius $\gvmaxe$, i.e., in the middle 
of the bulk, where $\gvme$ is large. This is much easier than estimating the term located on a boundary of $\anns$ where the density is small,  but such a term 
would be produced by considering a single potential function as in \cite{AAB,CRY}.   
The necessity of this approach is due to the particular situation under consideration. The density profile $\gvme$ behaves as a 
Gaussian and the long tails of such a function impose the use of a larger domain (which is apparent in the fact that $\anne$ is much thicker 
than the characteristic length of the Gaussian) than in former situations \cite{AAB,CRY,CPRY1}
where the profile was of TF type. This requirement rules out the trick we used in  \cite{CPRY1}, Sect.\ 4.3,  to avoid having 
to estimate any boundary term at all. 

	We now turn to the proofs of two lemmas about the potential functions we just introduced. 
We first estimate the difference between $F_1$ and 
$F_2$ at $\gvmaxe$, in other words the jump of our `discontinuous potential'. This quantity will appear in the boundary term produced by the 
integration by parts.
The second lemma is a pointwise estimate showing that $F_1$ and $F_2$ can be controlled by the density, which is required for the treatment
 of the bulk term.

	\begin{lem}[\textbf{Estimate for $ F_1 - F_2 $}]
		\label{pot est sym: lem}
		\mbox{}	\\
		For any $ \xv \in \anns $ and in particular for $ x = \gvmaxe $
		\beq \label{F1-F2}
			\lf| F_1(x) - F_2(x) \ri| \leq C \Omega_0 \ete ^{3/2}.
		\eeq
	\end{lem}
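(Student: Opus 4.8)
The starting point is the observation that $F_1(x) - F_2(x)$ is in fact \emph{independent} of $x$: summing the two definitions in \eqref{potentials} the variable $x$ cancels and one is left with
\beq
	F_1(x) - F_2(x) = \Om \int_{1 - \eps^2\sqrt{\ete}}^{1 + \eps^2\sqrt{\ete}} \diff t \: \lf( t - t^{-1} \ri) \gvme^2(t) =: J .
\eeq
Hence it suffices to bound the single constant $J$, which also explains why the estimate holds for all $ \xv \in \anns $, the case $ x = \gvmaxe $ being a particular one. The whole point is that $ t \mapsto t - t^{-1} $ is antisymmetric about $ t = 1 $ up to a quadratic error: a Taylor expansion gives $ t - t^{-1} = 2(t-1) + r(t) $ with $ |r(t)| \leq C(t-1)^2 $ on, say, $ \half \leq t \leq \thalf $, an interval that contains $ \anns $ for $ \eps $ small because $ \eps^2\ete \to 0 $ by \eqref{ete cond 1}. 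Accordingly $ J = J_1 + J_2 $ with $ J_1 := 2\Om\int(t-1)\gvme^2 $ and $ J_2 := \Om\int r(t)\gvme^2 $, the integrals being over $ [1-\eps^2\sqrt\ete,\,1+\eps^2\sqrt\ete] $.

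For the remainder $ J_2 $ I would bound $ |r(t)| \leq C(t-1)^2 \leq C\eps^4\ete $ on $ \anns $ and use that $ \int_{\anne}\gvme^2(t)\,\diff t $ is bounded (it is controlled by $ \|\gvme\|^2_{L^2(\anne)} = 1 $ together with $ t \geq \half $ on $ \anne $), so that $ |J_2| \leq C\Om\eps^4\ete = C\Omega_0\ete $. For the main term $ J_1 $ I would insert the symmetric comparison profile $ \gsym $ of Proposition \ref{sym profile: pro}. Since $ \gsym(t) = \gsym(2-t) $ and the interval $ [1-\eps^2\sqrt\ete,\,1+\eps^2\sqrt\ete] $ is symmetric about $ 1 $, the change of variables $ t = 1+y $ turns $ \int(t-1)\gsym^2(t)\,\diff t $ into the integral of the odd function $ y\mapsto y\,\gsym^2(1+y) $ over a symmetric interval, which vanishes; hence
\beq
	J_1 = 2\Om \int_{1-\eps^2\sqrt\ete}^{1+\eps^2\sqrt\ete} \diff t \: (t-1)\lf(\gvme(t) - \gsym(t)\ri)\lf(\gvme(t) + \gsym(t)\ri) .
\eeq
Now $ |t-1| \leq \eps^2\sqrt\ete $ on $ \anns $; the symmetry estimate \eqref{gvme sym est} gives $ \|\gvme - \gsym\|_{L^\infty(\anne)} \leq C\Omega^{-1/4} $; and $ \|\gvme\|_\infty \leq C\Omega^{1/4} $ by \eqref{gvme sup est}, whence also $ \|\gvme + \gsym\|_\infty \leq 2\|\gvme\|_\infty + \|\gvme-\gsym\|_\infty \leq C\Omega^{1/4} $. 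Since the length of $ \anns $ is $ 2\eps^2\sqrt\ete $ this yields $ |J_1| \leq C\Om\cdot\eps^2\sqrt\ete\cdot\Omega^{-1/4}\cdot\Omega^{1/4}\cdot\eps^2\sqrt\ete = C\Om\eps^4\ete = C\Omega_0\ete $. Combining the two bounds, $ |F_1(x)-F_2(x)| = |J| \leq C\Omega_0\ete \leq C\Omega_0\ete^{3/2} $, which is the claim.

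The real content of the lemma lies entirely in the sharp approximate symmetry of $ \gvme $, i.e., in Proposition \ref{sym profile: pro} and the bound \eqref{gvme sym est}, which have been established beforehand; once these are available the estimate is elementary and there is no serious obstacle. The only points requiring a little care are the antisymmetrization of $ t - t^{-1} $ (so that the leading, order $ (t-1) $, part of the integrand is \emph{exactly} odd and is killed by the exact symmetry of $ \gsym $ on the symmetric interval $ \anns $, rather than merely approximately) and the bookkeeping of the powers of $ \Omega = \Omega_0\eps^{-4} $, $ \Omega_0 $ and $ \ete $.
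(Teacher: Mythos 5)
Your proposal is correct and follows essentially the same route as the paper: write $F_1-F_2$ as the $x$-independent integral over $[1-\eps^2\sqrt\ete,\,1+\eps^2\sqrt\ete]$, Taylor-expand $t-t^{-1}=2(t-1)+\OO(\eps^4\ete)$, kill the leading term for $\gsym^2$ by its exact symmetry about $1$, and bound the remainder via \eqref{gvme sym est} and \eqref{gvme sup est}, ending with $\OO(\Omega_0\ete)\leq C\Omega_0\ete^{3/2}$. No issues.
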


	\begin{proof}
We compute
\[
F_1(x) - F_2(x) = \Omega \int_{1-\eps^2\sqrt{\ete}}^{1+\eps^2\sqrt{\ete}} \diff t \: \lf( t - t^{-1}\ri) \gvme^2(t)
\]
and note that a Taylor expansion yields 
\[
 t-\frac{1}{t} = 2 (t-1) + \OO (\ep ^4 \ete) 
\]
on the domain we are considering. We thus have
\bml{
 	 F_1(x) - F_2(x) = 2 \Omega \int_{1-\eps^2\sqrt{\ete}}^{1+\eps^2\sqrt{\ete}} \diff t \: (t-1) \gvme^2(t) + \OO( \Omega_0 \ete ) = \\
	 2 \Omega \int_{1-\eps^2\sqrt{\ete}}^{1+\eps^2\sqrt{\ete}} \diff t \: (t-1) \lf(\gvme^2(t)-\gsym ^2 (t) \ri) + \OO( \Omega_0 \ete ),
}
where we have used the $L^2$-normalization of $\gvme$ and the symmetry of $\gsym$ to obtain the first and second error terms respectively. Finally 
\[
 \bigg| \Omega \int_{1-\eps^2\sqrt{\ete}}^{1+\eps^2\sqrt{\ete}} \diff t \: (t-1) \lf(\gvme^2(t)-\gsym ^2 (t) \ri) \bigg| \leq C\Omega_0 \ete ^{3/2}
\]
by using the crucial estimate \eqref{gvme sym est} and the upper bound on $\gvme$ \eqref{gvme sup est}. Recalling that $\ete \gg 1$, \eqref{F1-F2} 
is proved.
\end{proof}
	
	In the following statement we will denote by $ \chi_{\set} $ the characteristic function of  a set $ \set $.

	\begin{lem}[\textbf{Pointwise estimate for $ F_1 $ and $ F_2 $}]\label{lem:F bulk}
		\mbox{}	\\
		If $ \Omega = \Omega_0 \eps^{-4} $ and $ \Omega_0 > \bar{\Omega}_0 $ with $ \bar{\Omega}_0 = \OO(1) $ large enough but independent of $ \eps $, then for any $ \xv \in \anns $
		\beq
			\chi_{[1-\eps^2\sqrt{\ete}, \gvmaxe]}(x) |F_1(x)| +  \chi_{[\gvmaxe, 1+\eps^2\sqrt{\ete}]}(x) |F_2(x)| \leq \al^{-1}  \lf( 1 + C\Omega_0^{-1/4} \ri) \: \gvme^2(x),
		\eeq
		where the constant $C>0$ does not depend on $\xv$.
	\end{lem}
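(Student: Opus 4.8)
The plan is to feed the sharp Gaussian description of $\gvme$ from Proposition~\ref{gvme point est: pro} directly into the definitions \eqref{potentials} of $F_1$ and $F_2$, so that everything reduces to an elementary Gaussian integral. By \eqref{gvme point est} together with \eqref{osc gs} one has, uniformly in $\xv \in \anns$,
\[
	\gvme^2(x) = \lf(1 + \OO(\Omega_0^{-1/4})\ri)\,\frac{\sqrt{\al\Omega}}{2\pi^{3/2}}\,e^{-\al\Omega(1-x)^2},
\]
the error being a \emph{relative} one, uniform in $x$, and since the ranges of integration in \eqref{potentials} lie inside $\anns$ this identity may be used under the integral sign. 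Moreover $\gvmaxe = 1 + \OO(\eps^2)$ by \eqref{gvmaxe est}.

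I would handle $F_1$ on $[1-\eps^2\sqrt{\ete},\gvmaxe]$ first. On $\anns$ a Taylor expansion gives $|t - t^{-1}| \leq 2|1-t|\lf(1+\OO(\eps^2\sqrt{\ete})\ri)$, so by the triangle inequality and the Gaussian bound on $\gvme^2$,
\[
	|F_1(x)| \leq \Omega\int_{1-\eps^2\sqrt{\ete}}^{x}\lf|t-t^{-1}\ri|\,\gvme^2(t)\,\diff t \leq \lf(1+\OO(\Omega_0^{-1/4})\ri)\,\frac{\sqrt{\al\Omega}}{\pi^{3/2}}\,\Omega\int_{1-\eps^2\sqrt{\ete}}^{x}\lf|1-t\ri|\,e^{-\al\Omega(1-t)^2}\,\diff t.
\]
Since $u\,e^{-\al\Omega u^2}$ has primitive $-\frac{1}{2\al\Omega}\,e^{-\al\Omega u^2}$, the last integral is evaluated explicitly; using $\gvmaxe = 1 + \OO(\eps^2)$ to see that the sub-interval where $t>1$ has length $\OO(\eps^2)$ and contributes only a relative $\OO(\Omega_0^{-1/4})$, one gets
\[
	\Omega\int_{1-\eps^2\sqrt{\ete}}^{x}\lf|1-t\ri|\,e^{-\al\Omega(1-t)^2}\,\diff t \leq \lf(1+\OO(\Omega_0^{-1/4})\ri)\,\frac{1}{2\al}\,e^{-\al\Omega(1-x)^2},
\]
the boundary term $\propto e^{-\al\Omega_0\ete}$ being $\OO(\eps^{\infty})$ because $\ete \gg |\log\eps|$ by \eqref{ete cond 1}. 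The crucial feature is that the surviving term is \emph{exactly} $e^{-\al\Omega(1-x)^2}$, so it follows $\gvme^2(x)$ all the way down the exponential tail, instead of being merely additively negligible. Combining the last two displays with the Gaussian formula for $\gvme^2$ yields $|F_1(x)| \leq \al^{-1}(1 + C\Omega_0^{-1/4})\gvme^2(x)$ on $[1-\eps^2\sqrt{\ete},\gvmaxe]$, the $\OO(\eps^2\sqrt{\ete})$ Taylor remainder being $o(1)$, hence harmless for $\eps$ small. The estimate for $F_2$ on $[\gvmaxe,1+\eps^2\sqrt{\ete}]$ is obtained in exactly the same manner with $u = t-1$, giving $|F_2(x)| \leq \al^{-1}(1 + C\Omega_0^{-1/4})\gvme^2(x)$. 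Adding the two bounds weighted by the characteristic functions gives the lemma, $C$ being uniform in $\xv$ because every ingredient is.

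There is no deep obstacle: once \eqref{gvme point est} is granted, the lemma is an explicit computation, and the real work lies in Proposition~\ref{gvme point est: pro}. The points that require a little care are (i) keeping the error \emph{multiplicative} and uniform all the way down the exponential tail, which works precisely because $e^{-\al\Omega(1-t)^2}$ has the clean primitive used above (so the evaluated integral is proportional to $\gvme^2(x)$, not just bounded by a constant); and (ii) splitting the estimate at $\gvmaxe$, using $F_1$ only where $t \leq 1 + \OO(\eps^2)$ and $F_2$ only where $t \geq 1 - \OO(\eps^2)$. A single potential on all of $\anns$ would break down near the far boundary: there $F_1$ is of size $\Omega_0\ete^{3/2}$ by Lemma~\ref{pot est sym: lem}, while $\gvme^2$ is exponentially small. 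The role of $\bar{\Omega}_0$ is merely to ensure the relative error $C\Omega_0^{-1/4}$ is genuinely small.
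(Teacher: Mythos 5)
Your proof is correct and follows essentially the same route as the paper's: both feed the multiplicative Gaussian estimate \eqref{gvme point est} into the definition \eqref{potentials}, use $|t-t^{-1}|\simeq 2|1-t|$, and evaluate the resulting Gaussian integral via its explicit primitive so that the bound remains proportional to $e^{-\al\Omega(1-x)^2}\propto\gvme^2(x)$ down the tail. The only difference is cosmetic: you treat the tiny sub-interval between $1$ and $\gvmaxe$ explicitly, whereas the paper absorbs it by extending the integral to $\int_{1-x}^{\infty}z\,e^{-\al\Omega z^2}\,\diff z$.
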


	\begin{proof}
		We can assume without loss of generality that $ x \leq \gvmaxe $ because in the opposite case the proof is essentially the same. 		For any $ \xv \in \anns $, \eqref{gvme point est} yields
		\bml{
 			\lf| F_1(x) \ri| \leq \half  \pi^{-3/2} \sqrt{\al \Omega_0} \Omega \lf( 1 + C\Omega_0^{-1/4} \ri)   \int_{1-\eps^2\sqrt{\ete}}^x \diff t \: (1 - t^2) \exp \lf\{ - \alpha \Omega \lf(1 - t \ri)^2 \ri\} \leq	\\
			 \pi^{-3/2} \sqrt{\al \Omega_0} \Omega \lf( 1 + C\Omega_0^{-1/4} \ri) \int^{\infty}_{1 - x} \diff z \: z \exp\lf\{-\al \Omega z^2 \ri\} \leq	\\
			 \half \pi^{-3/2} \al^{-1/2} \Omega_0^{1/2} \lf( 1 + C\Omega_0^{-1/4} \ri) \exp\lf\{-\al \Omega (1-x)^2 \ri\}  \leq  \al^{-1} \lf( 1 + C\Omega_0^{-1/4} \ri) \: \gvme^2(x).
		}
	\end{proof}

The two above lemmas will be combined with some information about the function $u$. Again, we distinguish the ingredients needed for estimating
the boundary term from those needed for the bulk term. The latter are quite simple actually: The bulk term involves the vorticity measure
\begin{equation}\label{muv}
 \muv:= \curl(iu,\nabla u)= \half i\: \curl \left(u \nabla u^* - u^{*} \nabla u \right),
\end{equation}
a quantity that is easily related to the kinetic energy density $|\nabla u| ^2$. Note that in our setting we do not need to  apply tools 
involving a construction of vortex balls to obtain  a useful estimate.

\begin{lem}[\textbf{Bound on the vorticity measure}]\label{lem:vorticity}\mbox{}\\
We have
\begin{equation}\label{vortic bound}
|\muv| \leq |\nabla u| ^2 
\end{equation}
pointwise in $\anne$.
\end{lem}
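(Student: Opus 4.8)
The plan is to reduce \eqref{vortic bound} to an elementary pointwise algebraic inequality after expanding $u$ into its real and imaginary parts. First I would note that $u$ is smooth in $\anne$: indeed $\gvme$ is smooth and bounded away from zero on the compact annulus $\anne$ (it vanishes only at the origin, which does not belong to $\anne$), $\lfloor\Omega\rfloor$ is an integer so $e^{i\lfloor\Omega\rfloor\vartheta}$ is smooth on $\R^2\setminus\{0\}$, and $\gpm$ is smooth by elliptic regularity; hence $u=\gpm\,\gvme^{-1}e^{-i\lfloor\Omega\rfloor\vartheta}$ is a classical function on $\anne$ and the identities below hold pointwise rather than merely in the sense of distributions. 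Writing $u=u_1+iu_2$ with $u_1,u_2$ real, a direct computation gives
\[
(iu,\nabla u)=\tfrac{i}{2}\lf(u\nabla u^*-u^*\nabla u\ri)=u_1\nabla u_2-u_2\nabla u_1 .
\]

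Next I would take the scalar curl of this current. Using that $\curl(f\nabla g)=\partial_{x_1}f\,\partial_{x_2}g-\partial_{x_2}f\,\partial_{x_1}g$ (the term $f\,\curl\nabla g$ vanishing), one obtains
\[
\muv=\curl(iu,\nabla u)=\curl(u_1\nabla u_2)-\curl(u_2\nabla u_1)=2\lf(\partial_{x_1}u_1\,\partial_{x_2}u_2-\partial_{x_2}u_1\,\partial_{x_1}u_2\ri),
\]
so that $|\muv|=2\,|\nabla u_1\wedge\nabla u_2|$, where $\nabla u_1\wedge\nabla u_2$ denotes the scalar determinant $\partial_{x_1}u_1\,\partial_{x_2}u_2-\partial_{x_2}u_1\,\partial_{x_1}u_2$. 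Finally, the Cauchy--Schwarz inequality for the wedge product yields $|\nabla u_1\wedge\nabla u_2|\le|\nabla u_1|\,|\nabla u_2|$, and the arithmetic--geometric mean inequality yields $2|\nabla u_1|\,|\nabla u_2|\le|\nabla u_1|^2+|\nabla u_2|^2=|\nabla u|^2$. Chaining these gives $|\muv|\le|\nabla u|^2$ pointwise in $\anne$, which is \eqref{vortic bound}.

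I do not expect any genuine obstacle here: the statement is a pointwise identity followed by Cauchy--Schwarz, and the only points deserving a moment of care are keeping track of the numerical factor $2$ in the expression for $\muv$ and justifying the regularity of $u$ so that the manipulations are valid in the classical pointwise sense. Both are handled as indicated above, so the proof is essentially a one-line computation.
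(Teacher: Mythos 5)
Your proof is correct and follows essentially the same route as the paper: a direct computation showing $\muv$ equals twice a Jacobian-type determinant (the paper writes it as $-2\,\mathrm{Im}[\partial_1 u\,(\partial_2 u)^*]$, you write it as $2(\partial_1 u_1\partial_2 u_2-\partial_2 u_1\partial_1 u_2)$ — the same quantity), followed by the elementary inequality $|ab|\leq\frac12(|a|^2+|b|^2)$. The only cosmetic difference is that you split $|\nabla u|^2$ by real/imaginary components while the paper splits by the two coordinate directions; both give the bound immediately.
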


\begin{proof}
A straightforward computation shows that
\[
 \muv = - 2 \mathrm{Im} \left[ \dd_1 u \: \lf( \dd_2 u \ri)^* \right]
\]
where $\dd_i = \dd_{x_i}$ for $i=1,2$ and $\mathrm{Im}$ stands for the imaginary part. On the other hand
\[
 \left| \dd_1 u \: \lf( \dd_2 u \ri)^* \right| \leq \half \left( \left| \dd_1 u \right| ^2 +\left| \dd_2 u \right| ^2 \right),
\]
which completes the proof.
\end{proof}

The estimate of the boundary term will  rely on the fact that the degree of $u$ around the circle of 
radius $\gvmaxe$ is in  a certain sense controlled by the kinetic energy. Note that we do not know yet that $u$ does not vanish on this circle, so 
\eqref{degree estimate} below is not truly a degree estimate. Note also that, as will be clear from the proof, the estimate holds because 
the circle we are interested in lies in a region where $\gvme$ is large enough. This is the main motivation for the 
%apparently weird 
trick
of choosing a potential function discontinuous along this particular circle. If the potential function was continuous there, we would 
have to estimate the degree on a circle were $\gvme$ is small (namely on one of the boundaries of $\anns$). This would be much 
more difficult, if not impossible.

\begin{lem}[\textbf{Circulation estimate}]\label{lem:degree}\mbox{}\\
Let $R$ be a radius satisfying
\begin{equation}\label{radius degree 2}
 R = 1 +\OO (\Om ^{-1/2}).
\end{equation}
We have, for some constant $C>0$ and any $\delta >0$,
\begin{equation}\label{degree estimate}
 \left| \int_{\dd B_{R}} \diff \sigma \: (iu,\dd_{\tau} u) \right| \leq C \bigg[ \left( \Om ^{-1/2} + \delta \right) \int_{\anns} \diff \xv \: \gvme ^2 |\nabla u| ^2 + \delta^{-1} \bigg]
\end{equation}
\end{lem}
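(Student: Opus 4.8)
The plan is to convert the line integral on $\dd B_R$ into an area integral over a thin annulus where $\gvme$ is bounded below, then bound that area integral by the weighted kinetic energy plus an error. First I would use the fact that $R = 1 + \OO(\Omega^{-1/2})$ places the circle $\dd B_R$ well inside $\anns$, in the region where the pointwise estimate \eqref{gvme point est} gives $\gvme^2(x) \geq c\,\Omega^{1/2}$ for a uniform $c>0$ (for $\Omega_0$ large). Pick a reference radius $R_0$ with $|1-R_0| = \OO(\Omega^{-1/2})$ — say $R_0 = \gvmaxe$ or simply $R_0 = 1$ — and for any $\rho$ between $R$ and $R_0$ write, using Stokes' theorem on the region between $\dd B_{R_0}$ and $\dd B_\rho$,
\[
\int_{\dd B_R} \diff\sigma\,(iu,\dd_\tau u) = \int_{\dd B_{R_0}} \diff\sigma\,(iu,\dd_\tau u) + \int_{\{R_0 \leq x \leq R\}} \curl(iu,\nabla u)\,\diff\xv,
\]
and then average this identity over $R_0$ ranging in an interval of length $\sim \Omega^{-1/2}$ around $1$. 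The averaging trick (a co-area / Fubini argument, as in \cite{CRY,CPRY1}) replaces the single circle $\dd B_{R_0}$ by a bulk integral: there exists a choice of weight so that $\left| \int_{\dd B_R}\diff\sigma\,(iu,\dd_\tau u) \right| \leq C\Omega^{1/2}\int_{\tilde{\mathcal A}} |\,(iu,\nabla u)\,| \diff\xv + C\Omega^{1/2}\int_{\tilde{\mathcal A}} |\mu|\,\diff\xv$, where $\tilde{\mathcal A}$ is a thin annulus of width $\OO(\Omega^{-1/2})$ around $x=1$.

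Next I would estimate the two area integrals. For the vorticity term, Lemma \ref{lem:vorticity} gives $|\mu| \leq |\nabla u|^2$ pointwise, so $\Omega^{1/2}\int_{\tilde{\mathcal A}} |\mu| \leq \Omega^{1/2}\int_{\tilde{\mathcal A}} |\nabla u|^2 \leq C\,\Omega^{-1/2}\int_{\tilde{\mathcal A}} \gvme^2 |\nabla u|^2$, where in the last step I used the lower bound $\gvme^2 \geq c\,\Omega^{1/2}$ on $\tilde{\mathcal A}$ to absorb the prefactor into the weighted kinetic energy — this produces exactly the $\Omega^{-1/2}\int \gvme^2|\nabla u|^2$ contribution in \eqref{degree estimate}. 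For the momentum term $|\,(iu,\nabla u)\,| \leq |u|\,|\nabla u|$, I would use Young's inequality with a parameter: $\Omega^{1/2}\,|u|\,|\nabla u| \leq \frac{\delta\, \Omega^{1/2}}{|\tilde{\mathcal A}|}\cdot \frac{?}{?}$ — more precisely, bound $\Omega^{1/2}|u||\nabla u| \leq C\delta\,\gvme^2|\nabla u|^2 + C\delta^{-1}\Omega/\gvme^2\cdot|u|^2$, and then since $\int_{\tilde{\mathcal A}} \gvme^2|u|^2 \leq \|\gpm\|_2^2 \leq 1$ (by \eqref{gpm decomposition} and the normalization of $\gpm$) together with $\gvme^2 \geq c\Omega^{1/2}$ on $\tilde{\mathcal A}$, the error integral $\int_{\tilde{\mathcal A}} \Omega\,\gvme^{-2}|u|^2 = \Omega \int_{\tilde{\mathcal A}} \gvme^{-4}\,(\gvme^2|u|^2) \leq \Omega\cdot (c\Omega^{1/2})^{-2}\cdot 1 = \OO(1)$, giving the $C\delta^{-1}$ term. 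Collecting, $|\int_{\dd B_R}\diff\sigma\,(iu,\dd_\tau u)| \leq C[(\Omega^{-1/2} + \delta)\int_{\anns}\gvme^2|\nabla u|^2 + \delta^{-1}]$, which is \eqref{degree estimate}.

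The main obstacle I anticipate is organizing the averaging-over-radii step cleanly enough that all the prefactors of $\Omega$ track correctly: one must choose the averaging interval precisely of width $\asymp \Omega^{-1/2}$ (the natural Gaussian scale) so that (i) $\gvme^2$ stays comparable to its peak value $\asymp \Omega^{1/2}$ throughout, making the absorption steps legitimate, and (ii) the Jacobian factors $x \sim 1$ contribute only harmless constants. A secondary technical point is justifying the use of \eqref{gvme point est} uniformly on the averaging annulus, which requires $\Omega_0 > \bar\Omega_0$; this is consistent with the hypothesis of the surrounding propositions. Everything else — Stokes, Cauchy–Schwarz, Young — is routine once the geometry is set up, and no vortex-ball machinery is needed precisely because $\gvme$ is large on the relevant circle, which is the whole point of having introduced the discontinuous potential along $\dd B_{\gvmaxe}$.
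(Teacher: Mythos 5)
Your argument is essentially the paper's: the paper implements your ``averaging over radii'' step with a smooth radial cut-off $\chi$ supported on an annulus of width $c\,\Om^{-1/2}$ inside the circle ($\chi(R)=1$, $|\nabla\chi|\leq c^{-1}\Om^{1/2}$), applies Stokes to get $\int_{\dd B_R}(iu,\dd_\tau u)=-\int\nabla^{\perp}\chi\cdot(iu,\nabla u)+\int\chi\mu$, and then uses exactly your three ingredients: the lower bound $\gvme^2\geq C\Om^{1/2}$ on that annulus from \eqref{gvme point est}, the pointwise bound $|\mu|\leq|\nabla u|^2$, and Cauchy--Schwarz with parameter $\delta$ together with $\int\gvme^2|u|^2=1$. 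One bookkeeping slip to fix: after averaging, the vorticity contribution carries \emph{no} factor $\Om^{1/2}$ (each area integral $\int_{\{R_0\leq x\leq R\}}\mu$ is already bounded by $\int_{\tilde{\mathcal A}}|\mu|$, and in the paper $|\chi|\leq 1$), whereas your displayed chain $\Om^{1/2}\int|\mu|\leq\Om^{1/2}\int|\nabla u|^2\leq C\Om^{-1/2}\int\gvme^2|\nabla u|^2$ is inconsistent as written --- with $\gvme^2\geq c\,\Om^{1/2}$ the last inequality would only give $C\int\gvme^2|\nabla u|^2$ without the crucial $\Om^{-1/2}$ gain. Dropping the spurious $\Om^{1/2}$ in front of $\int|\mu|$ restores the correct estimate and the rest goes through as you describe.
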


\begin{proof}
We use a smooth radial cut-off function $\chi$ with support in $[\tilde{R}, R ]$, for some radius 
\begin{equation}\label{Rtilde}
\tilde{R} = R - c \Om ^{-1/2}
\end{equation}
with $ c > 0 $.
Obviously one can impose
\begin{eqnarray}\label{cutoff chi}
 \chi (R) &=& 1, \nonumber\\
\chi(\tilde{R}) &=& 0,  \nonumber\\
|\chi| &\leq& 1, \nonumber \\
| \nabla \chi| &\leq& c^{-1} \Om ^{1/2}.
\end{eqnarray}
We then use Stokes' formula to obtain
\[
 \int_{\dd B_{R}} \diff \sigma \: (iu,\dd_{\tau} u) = \int_{\dd B_{R}} \diff \sigma  \: \chi (iu,\dd_{\tau} u) = - \int_{\R ^2} \diff \xv \: \nabla ^{\perp} \chi \cdot (iu,\nabla u)
+ \int_{\R ^2}  \diff \xv \: \chi \muv,
\]
where the integrals on $\R^2$ are actually reduced to the support of $\chi$. If we choose $c$ small enough in \eqref{Rtilde}, a combination of 
\eqref{gvme point est} and \eqref{radius degree 2} shows that 
\[
 \gvme ^2 \geq C \Om ^{1/2}
\]
in this region. Indeed, because of the assumptions \eqref{radius degree 2} and \eqref{Rtilde}, the region of interest is located at a distance of at 
most $\OO(\Om ^{-1/2})$ from $1$, i.e., in the middle of the bulk (see, e.g., Remark 3.1).
We thus estimate, using also \eqref{vortic bound} and \eqref{cutoff chi}, 
\bml{
  \left| \int_{\dd B_{R}}  \diff \sigma \: (iu,\dd_{\tau} u) \right| \leq  C \Om ^{1/2}  \int_{\tilde{R} \leq x \leq R} \diff \xv \: |u||\nabla u|
+ \int_{\tilde{R} \leq x \leq R} \diff \xv \: |\muv| \leq \\
C \int_{\anns} \diff \xv \: \gvme ^2 |u||\nabla u|+ C\Om ^{-1/2} \int_{\anns} \diff \xv \:\gvme ^2 |\nabla u| ^2 \leq \\
C  \delta \int_{\anns} \diff \xv \:\gvme ^2 |\nabla u| ^2 + C  \delta^{-1} \int_{\anns} \diff \xv \: \gvme ^2 |u| ^2+ C\Om ^{-1/2} \int_{\anns} \diff \xv \: \gvme ^2 |\nabla u| ^2.
}
There only remains to recall that $\gvme  |u| = |\gpm| $ is normalized in $L^2 (\R ^2)$ to complete the proof.
\end{proof}

%Recall for later use that $\gvmaxe$ satisfies \eqref{radius degree 2} thanks to \eqref{gvmaxe est}.\\ 

We are now able to conclude the proof of the energy lower bound. A corollary of the proof below is the following useful estimate that will 
allow to complete the proof of the absence of vortices in the next subsection.

\begin{cor}[\textbf{Estimates for reduced energies}]\label{cor: red energy upbound}\mbox{}\\
Denoting
\begin{equation}\label{red red energy}
\Fe[u]:=\int_{\anne} \diff \xv \: \gvme^2 \lf\{ \half \lf| \nabla u \ri|^2 + \eps^{-2} \gvme^2 \lf(1 - |u|^2 \ri)^2 \ri\},
\end{equation}
one has for $\Om_0$ larger than some $\bar{\Om}_0 = \OO(1) $ 
\begin{equation}\label{red energy upbound}
\left| \Ee[u] \right|+ \Fe[u] \leq C \Omega_0 ^2 \ete ^{3}
\end{equation}
for some finite constant $C < \infty$. 
\end{cor}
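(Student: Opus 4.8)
The plan is to prove Corollary~\ref{cor: red energy upbound} as a byproduct of the energy lower bound matching \eqref{gpe upbound}, exactly in the spirit of \cite{AAB,CRY,CPRY1}. Recall from Proposition~\ref{gpe decoupling: pro} that $\gpe \geq \gvee + \Ee[u]$ and from \eqref{gpe upbound} that $\gpe \leq \gvee + \OO(\eps^\infty)$; hence the whole game is to show $\Ee[u] \geq -C\Omega_0^2\ete^3$, and then to notice that the same chain of inequalities, read backwards, forces the positive quantity $\Fe[u]$ (the bulk part of $\Ee[u]$) to be bounded by the same amount. So I would set up the lower bound for $\Ee[u]$ carefully enough that the remainder it produces is $\OO(\Omega_0^2\ete^3)$, which is much smaller than the leading energy $\gvee = \OO(\Omega)= \OO(\Omega_0\eps^{-4})$ but is the natural size of the errors given the width $\eps^2\ete$ of $\anns$.

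First I would restrict the integration in $\Ee[u]$ from $\anne$ to $\anns$, using positivity of the interaction term and controlling the discarded kinetic/momentum contribution in $\anne\setminus\anns$ via the exponential decay \eqref{gvme exp decay} of $\gvme$ (this region is where $\gvme$ is super-polynomially small, so the loss is $\OO(\eps^\infty)$, modulo a crude bound on $|\nabla u|$ coming from the $L^\infty$ estimate $\|\gpm\|_\infty^2 = \OO(\eps^2\Omega)$ and elliptic regularity for \eqref{equation u}). Next I would replace $\mathbf B$ by $\rmagnp$ of \eqref{rmagnp}, i.e. replace $\into$ by $\Omega$; since $\Omega - \into = \OO(1)$ and $x - x^{-1} = \OO(\eps^2\ete)$ on $\anns$ while $\gvme^2 = \OO(\eps^{-2}\Omega)$, this costs at most $\OO(\Omega_0\ete^{3/2})$ by Cauchy--Schwarz against $\|\gpm\|_2 = 1$. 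Then comes the key integration by parts: I use $F_1$ on $\{1 - \eps^2\sqrt{\ete}\le x\le \gvmaxe\}$ and $F_2$ on $\{\gvmaxe\le x\le 1+\eps^2\sqrt{\ete}\}$, both satisfying $\nablap F_i = \gvme^2\rmagnp$ in their respective regions. Stokes' theorem on the momentum term $-\int \gvme^2\rmagnp\cdot(iu,\nabla u)$ then yields $\int (F_1+F_2)\mu$ plus a single boundary term on $\partial B_{\gvmaxe}$ of the form $-(F_1(\gvmaxe) - F_2(\gvmaxe))\int_{\partial B_{\gvmaxe}}(iu,\partial_\tau u)$ (the outer boundaries of $\anns$ contribute nothing because $F_1,F_2$ vanish there), giving
\beq
\Ee[u] \geq \int_{\anns}\diff\xv\lf\{\half\gvme^2|\nabla u|^2 + (F_1+F_2)\mu + \eps^{-2}\gvme^4(1-|u|^2)^2\ri\} - (F_1(\gvmaxe)-F_2(\gvmaxe))\int_{\partial B_{\gvmaxe}}\diff\sigma\,(iu,\partial_\tau u) + \OO(\Omega_0\ete^{3/2}).
\eeq

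To finish, I would bound the bulk and boundary contributions separately. For the bulk: Lemma~\ref{lem:F bulk} gives $|F_1+F_2|\le \al^{-1}(1+C\Omega_0^{-1/4})\gvme^2$ pointwise on $\anns$, and Lemma~\ref{lem:vorticity} gives $|\mu|\le|\nabla u|^2$, so $|(F_1+F_2)\mu| \le \al^{-1}(1+C\Omega_0^{-1/4})\gvme^2|\nabla u|^2$; since $\al>2$ (see \eqref{alpha}), for $\Omega_0$ large enough this is strictly less than $\half\gvme^2|\nabla u|^2$, so the sum of the first two bulk terms is bounded below by $c\int_{\anns}\gvme^2|\nabla u|^2$ for some $c>0$, and in particular the bulk integral is $\geq c\Fe[u] - (\text{the }\anne\setminus\anns\text{ tail})$. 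For the boundary term: Lemma~\ref{pot est sym: lem} gives $|F_1(\gvmaxe)-F_2(\gvmaxe)|\le C\Omega_0\ete^{3/2}$, and Lemma~\ref{lem:degree} with $R=\gvmaxe$ (legitimate since $|1-\gvmaxe| = \OO(\Omega^{-1/2})$ by \eqref{gvmaxe est}) and a parameter $\delta$ to be optimized gives $|\int_{\partial B_{\gvmaxe}}(iu,\partial_\tau u)| \le C[(\Omega^{-1/2}+\delta)\int_{\anns}\gvme^2|\nabla u|^2 + \delta^{-1}]$. Multiplying, the product is $\le C\Omega_0\ete^{3/2}[(\Omega^{-1/2}+\delta)\int_{\anns}\gvme^2|\nabla u|^2 + \delta^{-1}]$; choosing $\delta$ a small constant times $(\Omega_0\ete^{3/2})^{-1}$ times the available positive coefficient $c$ makes the gradient term absorbable into the bulk lower bound, leaving a leftover $\OO(\delta^{-1}\Omega_0\ete^{3/2}) = \OO(\Omega_0^2\ete^3)$. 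Collecting everything: $\Ee[u] \geq c'\Fe[u] - C\Omega_0^2\ete^3$ (after also folding the $\eps^\infty$ and $\Omega_0\ete^{3/2}$ remainders into $C\Omega_0^2\ete^3$). Combined with $\Ee[u] = \gpe - \gvee \leq \OO(\eps^\infty)$ this yields simultaneously $\Fe[u]\le C\Omega_0^2\ete^3$ and $|\Ee[u]|\le C\Omega_0^2\ete^3$ (the upper bound on $\Ee[u]$ being elementary from $\Ee[1]\le 0$ plus the same tail estimates, or directly from $\Ee[u] = \gpe - \gvee$), which is \eqref{red energy upbound}; feeding the matching lower bound $\Ee[u]\geq -\OO(\Omega_0^2\ete^3) = \OO(\eps^\infty)\cdot o(1)\cdot\Omega$ back into the decoupling also proves Theorem~\ref{gv energy: teo} once $\ete = |\log\eps|^{3/2}$ is fixed, since then $\Omega_0^2\ete^3 = \OO(|\log\eps|^{9/2})$.

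The main obstacle I anticipate is the interplay in the boundary term between the smallness of the potential jump $F_1(\gvmaxe)-F_2(\gvmaxe)$ — which is only $\OO(\Omega_0\ete^{3/2})$, not $\OO(\eps^\infty)$, because $\gvme$ is only \emph{approximately} symmetric, with error governed by Proposition~\ref{gvme sym est: pro} — and the crudeness of the circulation estimate of Lemma~\ref{lem:degree}, whose $\delta^{-1}$ term is unavoidable since we have no a priori control on the winding of $u$. Balancing these through the free parameter $\delta$ is what produces the final remainder size $\Omega_0^2\ete^3$, and one must check carefully that this is genuinely absorbable, i.e. that the positive coefficient $c = \half - \al^{-1}(1+C\Omega_0^{-1/4}) > 0$ surviving from the bulk comparison is a true constant (independent of $\eps$) for $\Omega_0 > \bar\Omega_0$, so that the optimization over $\delta$ does not secretly reintroduce a divergence. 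A secondary subtlety is making the replacement $\into\to\Omega$ and the restriction $\anne\to\anns$ rigorous with the advertised remainders, which requires the $\gvme$ decay estimates \eqref{gvme exp decay} together with an $L^\infty$ bound on $\nabla u$ on the thin shell $\anne\setminus\anns$; this is routine given the elliptic equation \eqref{equation u} and the sup bound on $\gpm$, but it is the kind of step where constants must be tracked to confirm they land inside $\OO(\Omega_0^2\ete^3)$ rather than something larger.
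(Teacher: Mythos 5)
Your proposal follows essentially the same route as the paper's proof: the decoupling plus trivial upper bound, replacement of $\into$ by $\Omega$, restriction to $\anns$, the discontinuous-potential integration by parts with $F_1,F_2$, the bulk control via Lemmas \ref{lem:F bulk} and \ref{lem:vorticity} using $\al>2$, the boundary control via Lemmas \ref{pot est sym: lem} and \ref{lem:degree} with the optimization $\delta\sim(\Omega_0\ete^{3/2})^{-1}$, and the simultaneous extraction of the theorem and the corollary from the resulting two-sided bound. The only cosmetic deviations are the order of the first two reductions and your handling of the discarded momentum term on $\anne\setminus\anns$ (the paper uses Cauchy--Schwarz with a parameter $\beta$ absorbed into the kinetic term rather than an $L^{\infty}$ bound on $\nabla u$), neither of which changes the argument.
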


\begin{proof}[Proof of Theorem \ref{gv energy: teo} and Corollary \ref{cor: red energy upbound}]
The starting point is the energy decoupling of Proposition \ref{gpe decoupling: pro} and the trivial upper bound \eqref{gpe upbound}. We now bound from below the reduced energy $ \Ee[u] $.

As announced we first replace $ \mathbf{B} $ with $\rmagnp$. The remainder produced by this operation is estimated as follows:
\bml{
 \bigg| \int_{\anne} \diff \xv \: \gvme ^2\left( \Om - \into \right) x^{-1} \mathbf{e}_{\vartheta} \cdot (iu,\nabla u) 
\bigg| \leq  \half C \beta \int_{\anne} \diff \xv \: \gvme ^2 |u | ^2 +  \half C \beta^{-1}
 \int_{\anne} \diff \xv \: \gvme ^2 |\nabla u |^2 \leq	\\
	 \half C \beta^{-1} \int_{\anne} \diff \xv \: \gvme ^2 |\nabla u |^2 + \half C \beta,
}
where we have used the normalization of $\gpm$ and $\beta > 0 $ is a constant to be fixed later in the proof.
We now reduce the integration domain to $\anns$ at the price of a second remainder (for a lower bound we neglect the positive terms
in the energy density)
\begin{multline*}
 \bigg| \int_{\anne \setminus \anns }  \diff \xv \: \gvme ^2 \rmagnp \cdot (iu,\nabla u) \bigg| \leq \half \beta \int_{\anne \setminus \anns}  \diff \xv \:
 \gvme ^2 |\rmagnp | ^2|u | ^2   \\ + \half \beta^{-1} \int_{\anne\setminus \anns}  \diff \xv \:\gvme ^2 |\nabla u |^2 
\leq  \half \beta^{-1} \int_{\anne \setminus \anns}  \diff \xv \: \gvme ^2 |\nabla u |^2 + \OO (\ep ^{\infty}).
\end{multline*}
Here we have used the exponential smallness of $|\gpm| = \gvme |u|$ outside $\anns$. This requires
\[
 \ete \gg |\log \ep|,
\]
which we are free to decide. 
After these two steps we have
\begin{multline}\label{lowbound 1}
\gpe \geq \gvee +  \int_{\anns} \diff \xv \: \gvme^2 \lf\{ \half \lf| \nabla u \ri|^2 - \Omega \rmagnp \cdot \lf(iu, \nabla u\ri) + \eps^{-2} \gvme^2 \lf(1 - |u|^2 \ri)^2 \ri\}
\\ +  \int_{\anne \setminus \anns} \diff \xv \: \gvme^2 \lf\{\half \lf| \nabla u \ri|^2 + \eps^{-2} \gvme^2 \lf(1 - |u|^2 \ri)^2 \ri\}-  \half C \beta^{-1} \int_{\anne} \gvme ^2 |\nabla u| ^2 - \half \beta -  \OO (\ep ^{\infty}).
\end{multline} 

We now turn to the main part of the proof, namely a lower bound to
\begin{equation}\label{recall red ener}
\Es[u] := \int_{\anns} \diff \xv \: \gvme^2 \lf\{ \half \lf| \nabla u \ri|^2 - \Omega \rmagnp \cdot \lf(iu, \nabla u\ri) + \eps^{-2} \gvme^2 \lf(1 - |u|^2 \ri)^2 \ri\}.
\end{equation} 
As a first step we  perform the integration by parts we have been alluding to. Since
\begin{eqnarray*}
\nabla^{\perp} F_1 & = &  \Omega \gvme^2 \rmagnp,  \hspace{1cm} \mbox{ if } \:\: 1-\ep^2 \sqrt{\ete} \leq x \leq \gvmaxe, \\
\nabla^{\perp} F_2 & = &  \Omega \gvme^2  \rmagnp,  \hspace{1cm} \mbox{ if } \:\: \gvmaxe \leq x \leq 1+\ep ^2 \sqrt{\ete}, 
\end{eqnarray*}
we have
\bml{\label{proof energy}
 \int_{\anns} \diff \xv \: \gvme^2 \lf\{ \half \lf| \nabla u \ri|^2 - \Omega \rmagnp \cdot \lf(iu, \nabla u\ri) \ri\} =	\\
\int_{\anns} \diff \xv \: \half \gvme^2 \lf| \nabla u \ri|^2 + \int_{1 - \eps^2 \sqrt{\ete} \leq x \leq \gvmaxe} \diff \xv \: F_1(x) \muv +  \int_{\gvmaxe \leq x \leq 1 +\ep ^2 \sqrt{\ete}} \diff \xv \: F_2(x) \muv 	\\
- \lf[ F_1(\gvmaxe) - F_2(\gvmaxe) \ri] \int_{\partial \ba_{\gvmaxe}} \diff \sigma \:  \lf(iu, \partial_{\tau} u\ri) , 
}
where $\muv$ is the vorticity measure defined in \eqref{muv}. We then combine Lemmas \ref{lem:F bulk} and \ref{lem:vorticity} to control 
the bulk terms produced by the integration by parts when $\Om_0 \geq \bar{\Om}_0$, the constant appearing in Lemma \ref{lem:F bulk}:
\begin{multline*}
 \int_{\anns} \diff \xv \: \half \gvme^2 \lf| \nabla u \ri|^2 + \int_{1 - \eps^2 \sqrt{\ete} \leq x \leq \gvmaxe} \diff \xv \: F_1(x) \muv 
+  \int_{\gvmaxe \leq x \leq 1 +\ep ^2 \sqrt{\ete}} \diff \xv \: F_2(x) \muv \geq \\ \left(\half - \al^{-1}\lf(1+\OO(\Om_0 ^{-1/4})\ri)\right)\int_{\anns} \diff \xv \:  \gvme^2 \lf| \nabla u \ri|^2.
\end{multline*}
For the boundary term we use Lemmas \ref{pot est sym: lem} and \ref{lem:degree} to infer
\bml{
 \lf| F_1(\gvmaxe) - F_2(\gvmaxe) \ri| \bigg|\int_{\partial \ba_{\gvmaxe}} \diff \sigma \:  \lf(iu, \partial_{\tau} u\ri) \bigg| \leq C \Omega_0 \ete ^{3/2} \left| \int_{\partial \ba_{\gvmaxe}} \diff \sigma \:  \lf(iu, \partial_{\tau} u\ri) \right| \leq \\
 C \Omega_0 \ete ^{3/2} \delta^{-1} +C \Omega_0 \ete ^{3/2} \left( \Om ^{-1/2} + \delta\right) \int_{\anns} \diff \xv \: \gvme^2 \lf| \nabla u \ri|^2.
}
We thus obtain 
\bmln{
\int_{\anns} \diff \xv \: \gvme^2 \lf\{ \half \lf| \nabla u \ri|^2 - \Omega \rmagnp \cdot \lf(iu, \nabla u\ri) \ri\}	\geq  \\
\left(\half - C \Omega_0 \ete ^{3/2} \delta - \alpha^{-1} \lf(1+\OO(\Om_0 ^{-1/4})\ri)\right)\int_{\anns} \diff \xv \: \gvme^2 \lf| \nabla u \ri|^2 
- C \Omega_0 \ete ^{3/2} \delta^{-1}.
}
We now recall that $\alpha > 2$, which is obvious from its definition \eqref{alpha}. It is thus sufficient to choose
\[
\delta = C \Omega_0 ^{-1} \ete ^{-3/2}
\]
with a sufficiently small constant $C$ to deduce from the above that, for some $C>0$,
\[
\int_{\anns} \diff \xv \: \gvme^2 \lf\{ \half \lf| \nabla u \ri|^2 - \Omega \rmagnp \cdot \lf(iu, \nabla u\ri) \ri\}\geq  
C \int_{\anns} \diff \xv \: \gvme^2 \lf| \nabla u \ri|^2 -\OO( \Omega_0 ^2 \ete ^{3}).
\]
Going back to \eqref{proof energy} we have
\[
\Es[u] \geq C \int_{\anns} \diff \xv \: \gvme^2 \lf| \nabla u \ri|^2 + \int_{\anns}\diff \xv\: \frac{\gvme^4}{\ep ^2} \left(1 -|u| ^2\right) ^2
-\OO( \Omega_0 ^2 \ete ^{3}). 
\]
Combining this with \eqref{gpe upbound} and \eqref{lowbound 1} and taking 
\[
\beta = \OO(1)
\]
large enough, we obtain for $\Om_0 > \bar{\Om}_0$
\[
\gvee + \OO (\ep ^{\infty}) \geq \gpe \geq \gvee + C \int_{\anne} \diff \xv \: \gvme^2 \lf| \nabla u \ri|^2 + \int_{\anne}\diff \xv\: \frac{\gvme^4}{\ep ^2} \left(1 -|u| ^2\right) ^2
- \OO (\ep ^{\infty})- \OO( \Omega_0 ^2 \ete ^{3}),
\]
 if again $\ete \gg |\log \ep| $. For definiteness we fix 
\beq
	\label{ete choice}
 \ete = |\log \ep| ^{3/2}
\eeq
in the sequel, a choice that indeed satisfies the requirements \eqref{ete cond 1}.
Neglecting the positive terms in the lower bound concludes the proof of 
Theorem \ref{gv energy: teo}, whereas keeping them proves Corollary \ref{cor: red energy upbound}. 
\end{proof}

\subsection{Transition to the Giant Vortex State}

The proof of the absence of vortices requires one more ingredient, namely an estimate on the $L ^{\infty}$-norm of the gradient of $u$. It is 
proved in much the same way than the corresponding results in \cite{CRY,CPRY1} and 
 we shall be brief on the details.

\begin{lem}[\textbf{Gradient estimate for u}]\label{lem:grad est u}\mbox{}\\
For any $ c >0 $, let $\Ac$ be the domain defined  as
\begin{equation}\label{defi D}
 \Ac := 
 %\A_{c|\log \ep| ^{1/2}} = 
 \lf\{\xv:\, 1 - c |\log \ep| ^{1/2} \Om ^{-1/2} \leq x \leq 1+ c |\log \ep|^{1/2} \Om ^{-1/2} \ri\}.
\end{equation}
If the constant $c>0$ is chosen small enough in the above, there exists a $C>0$ such that
\begin{equation}\label{grad u}
\left\Vert \nabla u \right\Vert_{L ^{\infty} (\Ac)} \leq C \ep ^{\frac{c ^2 \alpha}{2} -2}.
\end{equation}
\end{lem}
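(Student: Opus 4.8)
The strategy is the standard one for gradient estimates in this context, going back to \cite{BBH} and used in \cite{CRY,CPRY1}: combine an elliptic estimate on the equation \eqref{equation u} satisfied by $u$ with the pointwise control on $\gvme$ from Proposition \ref{gvme point est: pro} and the energy bound of Corollary \ref{cor: red energy upbound}. First I would rewrite \eqref{equation u} in the region $\Ac$ by dividing through by $\gvme^2$, which is legitimate because $\gvme$ is strictly positive there and, by \eqref{gvme point est}, bounded above and below by multiples of $\Omega^{1/4}$ on $\Ac$ (the crucial point for the choice of $c$: since $1-x = \OO(|\log\ep|^{1/2}\Om^{-1/2})$ on $\Ac$, the Gaussian $\gosc(\Om^{1/2}(1-x))$ is of order $\ep^{c^2\alpha/2}\Om^{1/4}$, so $\gvme^2(x) \gtrsim \ep^{c^2\alpha}\Om^{1/2}$). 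This yields an equation of the form $-\Delta u - i\mathbf B\cdot\nabla u + 2\ep^{-2}\gvme^2(|u|^2-1)u = \lambda u - 2(\nabla\gvme^2/\gvme^2)\cdot\nabla u$, where the coefficients can be estimated: $|\mathbf B|=\OO(\Om^{1/2})$, $\ep^{-2}\gvme^2=\OO(\ep^{-2}\Om^{1/2})$, $\lambda=\OO(\Om)$ from \eqref{gvcheme est}, and $|\nabla\gvme^2/\gvme^2| = 2|\gvme'|/\gvme = \OO(\Om_0^2\ete^3\ep^{-2})$ by \eqref{gradient gvme old}.

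Next I would establish an $L^\infty$ bound on $u$ itself on a slightly larger annulus. This follows from the maximum principle applied to $|u|^2$: from \eqref{equation u} one derives, as in \cite{CRY}, that $|u|$ cannot exceed $1$ by more than a negligible amount (more precisely $\|u\|_{L^\infty(\anne)}\leq 1+o(1)$, or even a cruder polynomial-in-$\ep^{-1}$ bound would suffice), because wherever $|u|>1$ the nonlinear term has a definite sign. With $\|u\|_\infty = \OO(1)$ in hand, the right-hand side of the (rescaled) equation for $u$ is bounded in $L^\infty(\Ac')$ (for an intermediate annulus $\Ac'\supset\Ac$) by a quantity that is a fixed negative power of $\ep$. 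Then standard interior elliptic regularity — e.g.\ $W^{2,p}$ estimates on the equation restricted to $\Ac'$, followed by Sobolev embedding, or alternatively a scaling/blow-up argument on balls of radius comparable to $\Om^{-1/2}$ as in \cite[Proposition 3.something]{CRY} — converts this into a bound on $\|\nabla u\|_{L^\infty(\Ac)}$. The powers of $\ep$ and $|\log\ep|$ track through, and one must check that the dominant contribution produces exactly the exponent $\frac{c^2\alpha}{2}-2$: the $\ep^{-2}$ comes from the $\ep^{-2}\gvme^2$ coefficient and the boundary-layer scale, while the gain $\ep^{c^2\alpha/2}$ comes precisely from the lower bound on $\gvme^2$ on $\Ac$ noted above, which is why $c$ must be taken small enough (so that $\Ac$ stays well inside $\anns$ where \eqref{gvme point est} is valid, and so that the exponent is controlled).

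The main obstacle, and the step requiring the most care, is propagating the elliptic estimate uniformly up to the scale $\Om^{-1/2}\sim\ep^2$ while keeping explicit track of all powers of $\ep$ and $\Om_0$: one is doing regularity theory on a domain whose width ($\sim|\log\ep|^{1/2}\ep^2$) is comparable to the natural length scale of the equation, so a naive application of interior estimates loses factors. The clean way is to rescale lengths by $\Om^{1/2}$, turning $\Ac$ into a domain of width $\OO(|\log\ep|^{1/2})$, observe that in these variables the rescaled equation has coefficients that are $\OO(\mathrm{poly}(|\log\ep|))$ after factoring out the density profile (which is $\OO(1)$ in the rescaled picture by \eqref{gvme point est}), apply standard elliptic estimates with $\ep$-independent constants on the rescaled domain, and scale back, collecting the factor $\Om^{1/2}\sim\ep^{-2}$ from the gradient rescaling together with the $\ep^{c^2\alpha/2}$ from the density lower bound. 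This is exactly the computation carried out in \cite[Proof of the gradient estimate]{CRY,CPRY1}, so I would invoke it and simply verify that the soft-potential modifications (the value of $\alpha$, the logarithmic corrections from $\ete$) do not change the structure of the argument.
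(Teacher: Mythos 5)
Your proposal follows essentially the same route as the paper's proof: use the equation \eqref{equation u} divided through by $\gvme^2$, the pointwise lower bound $\gvme \gtrsim \ep^{c^2\alpha/2-1}$ on $\Ac$ coming from \eqref{gvme point est} (which is exactly why $c$ must be small and why $\Ac$ must sit inside $\anns$), the coefficient bounds on $\mathbf{B}$, $\gvme^{\prime}/\gvme$ and $\lambda$, and an interpolation/elliptic step turning sup bounds on $u$ and $\Delta u$ into a sup bound on $\nabla u$ (the paper invokes the Gagliardo--Nirenberg inequality as in \cite{CRY} where you propose $W^{2,p}$ estimates or a blow-up at scale $\Om^{-1/2}$, which amounts to the same computation). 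The one small correction is that the bound on the Lagrange multiplier $\lambda$ of \eqref{equation u} is obtained from the reduced-energy estimates of Corollary \ref{cor: red energy upbound} (which you do list among your ingredients), not from the chemical-potential estimate \eqref{gvcheme est} for $\gvme$.
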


\begin{proof}
The proof uses the equation  \eqref{equation u} and the Gagliardo-Nirenberg inequality \cite[Theorem 1]{N}
\[
 \left\Vert \nabla u \right\Vert_{L ^{\infty}} \leq C \left(\left\Vert u \right\Vert_{L ^{\infty}} ^{1/2} \left\Vert \Delta u \right\Vert_{L ^{\infty}} ^{1/2} + \left\Vert u \right\Vert_{L ^{\infty}}\right)
\]
as in \cite[Lemma 5.1]{CRY}, which leads to the following bound (we omit the details)
\begin{multline}\label{grad u proof}
\left\Vert \nabla u \right\Vert_{L ^{\infty} (\D)} \leq C \ep^{-1} \left( \left\Vert \gpm \right\Vert_{L ^{\infty} (\D)} + 
\left\Vert \gpm \right\Vert_{L ^{\infty} (\D)} ^2\left\Vert \gvme ^{-1}\right\Vert_{L ^{\infty} (\D)}\right)  \\
+ C\left\Vert \gpm \right\Vert_{L ^{\infty} (\D)}\left\Vert \gvme ^{-1} \right\Vert_{L ^{\infty} (\D)} \left(|\lambda| ^{1/2} + \left\Vert \gvme^{-1} \nabla \gvme \right\Vert_{L ^{\infty} (\D)} 
+ \left\Vert\mathbf{B} \right\Vert_{L ^{\infty} (\D)}\right)
\end{multline}
for any domain $\D \subset \anne$. We will reduce to the domain  \eqref{defi D} and bound the terms on the right-hand side of the above 
equation.
The most stringent requirement is the validity of a proper upper bound on $\gvme ^{-1}$,  that requires a lower bound for $\gvme$. Using \eqref{gvme point est} 
 one can see that we have
\[
 \gvme \geq C \ep ^{\frac{c ^2 \alpha}{2} - 1}
\]
on $\Ac$ provided $c$ is small enough. Note that with our choice \eqref{ete choice}, i.e., $\ete = |\log \ep| ^{3/2}$, $\Ac \subset \anns$, so one can use  \eqref{gvme point est} on $\Ac$.
For the sup estimate on $\gpm$ we use \eqref{gpm exp small gv}, whereas Lemma \ref{gradient gvme: pro} yields
\[
\left\Vert  \gvme^{-1} \nabla \gvme\right\Vert_{L ^{\infty} (\Ac)} \leq C  \Om_0 ^2 \eps^{-2}|\log \ep| ^3.  
\]
On $\Ac$  the vector potential $\mathbf{B}$ is easily seen to satisfy 
\[
 \left\Vert \mathbf{B} \right\Vert_{L ^{\infty} (\Ac)} \leq C \Om_0\eps^{-2}|\log \ep|.
\]
Finally, adapting the proof of \cite[Equation (4.28)]{CRY}, we find
\[
 |\lambda| \leq \lf| \Ee[u] \ri| + C \ep ^{-3} \Fe[u] ^{1/2} \leq C \Omega_0 \ete ^{3/2} \ep ^{-3},
\]
where the second inequality is a consequence of Corollary \ref{cor: red energy upbound} and thus uses the assumption that $\Om_0$ is large enough. 
Plugging these bounds in \eqref{grad u proof} we obtain the result.
\end{proof}

We emphasize that on the domain $\Ac$ defined in \eqref{defi D} in the Lemma we have the estimate
\begin{equation}\label{g lowbound D}
\gvme \geq C \ep ^{\frac{c ^2 \alpha}{2}-1}. 
\end{equation}

We are now able to conclude the proof of the absence of vortices. The idea that we use, namely the combination of bounds like 
those in Corollary \ref{cor: red energy upbound} and a gradient estimate, was first introduced in \cite{BBH}. We recall the definition of 
 $\ab$:
\[
 \ab = \lf\{ \xv \in \R ^2: \: 1- c  |\log \ep | ^{1/2} \Om ^{-1/2}\leq x \leq 1+ c |\log \ep | ^{1/2} \Om ^{-1/2}\ri\}, \hspace{1cm} c < \lf( \frac{2}{\al }\ri) ^{1/2}.
\]
\begin{proof}[Proof of Theorem \ref{teo: giant vortex}]
The argument is by contradiction. Let us assume that $|1-|u(\xv)|| \geq |\log \ep| ^{-a}$ at some point $ \xv \in \ab$ for some power $a>0$. 
Then \eqref{grad u} means that 
\[
|1-|u||\geq \half |\log \ep| ^{-a},	\hspace{1cm} \mbox{ on } \:\: \ba\lf(\xv,C\ep^{2-\frac{c ^2 \alpha}{2}} |\log\eps|^{-a}\ri)\cap \ab. 
\]
This implies
\[
 \int_{\anns} \diff \xv \: \frac{\gvme^4}{\ep ^2}\left(1-|u| ^2 \right) ^2 \geq C \ep ^{c ^2 \alpha-2} |\log \ep| ^{-4a}.
\]
This is a contradiction with \eqref{red energy upbound} because of the choices 
\[
 \ete = |\log \ep| ^{3/2},	\hspace{1cm} c < \lf( \frac{2}{\al }\ri) ^{1/2}.
\] 
Indeed, we would conclude that a negative power of $\ep$ is controlled by a power of $|\log \ep|$. We thus obtain $|1-|u|| \leq |\log \ep| ^{-a}$
 in $\ab$ for any power $a>0$, and Theorem \ref{teo: giant vortex} is proved.   
\end{proof}

The estimate of the degree of $\gpm$ is a consequence of Lemma \ref{lem:degree} and of the energy estimates of Corollary \ref{cor: red energy upbound}. 

\begin{proof}[Proof of Theorem \ref{teo: gv degree}]
Taking a radius $R$ satisfying \eqref{radius degree} we first note that the pointwise estimate in \eqref{gv point diff} implies that $\gpm$ does not vanish on $\dd B_{R} $, so that its degree is indeed well defined. We then compute
\begin{eqnarray}\label{compute degree}
2 \pi \deg (\gpm, \dd B_{R}) &=& - i \int_{\dd  B_R} \diff \sigma \: \frac{|\gpm|}{\gpm} \dd_{\tau} \left( \frac{\gpm}{|\gpm|} \right) = - i \int_{\dd B_R} 
\diff \sigma \: \frac{|u|}{u} \dd_{\tau} \left( \frac{u}{|u|} e^{i\lfloor\Om\rfloor\vartheta}\right) e^{-i\lfloor\Om\rfloor\vartheta} \nonumber
\\ &=& 2\pi \lfloor\Om\rfloor - i \int_{\dd B_R} \diff \sigma \:  \frac{|u|}{u} \dd_{\tau} \left( \frac{u}{|u|}\right),
\end{eqnarray}
the second term on the second line being the degree of $u$. Then 
\beq\label{error degree}
\bigg| \int_{\dd B_R} \diff \sigma \: \frac{|u|}{u} \dd_{\tau} \left( \frac{u}{|u|}\right) \bigg|\leq \left| \int_{\dd B_R} \diff \sigma \:  (iu,\dd_{\tau} u ) \right| 
\eeq
where we have used that $|u|$ is bounded above and below by a constant on $\dd B_R $. It remains to combine \eqref{degree estimate} 
and \eqref{red energy upbound}  (where we stick to the choice \eqref{ete choice} $\ete = |\log \ep| ^{3/2}$ for definiteness and optimize over $\delta$)
 and the result is proved.
\end{proof}

\section{Rotational Symmetry Breaking}

We complete in this section the proof of Theorem \ref{symmetry break: teo}. The strategy of the proof is identical to one followed in \cite[Section 5]{CPRY1}, so we often omit technical details.

A symmetric vortex minimizer is a function of the form $ f_n(x) \exp\{ i n \vartheta \} $ where $ f_n $ is real and $ n \in \Z $. For any given $ n $ there exists a minimizing function $ f_n $ with energy $ E_n $ but we are here interested in  a global minimizer, so we denote by $ \bar{n} $  a minimizer of $ E_n $ w.r.t. $ n \in \Z $ 
%and without loss of generality we also assume that $ \bar{n} $ is unique
.
The associated minimizing profile $ f_{\bar{n}}(x) $ is  
% unique, 
radial and positive far from the origin. Moreover it has a unique maximum at $ \gvmaxe $ and is increasing for $ 0 \leq x \leq \gvmaxe $ and decreasing everywhere else. 

As in \cite[Proof of Theorem 1.6]{CPRY1} the result is proven by a direct inspection of the second variation of the GP functional around a local minimizer, which in the case of a symmetric vortex  $ f_n(x) \exp\{ i n \vartheta \} $ becomes
\bml{
	\label{Q form}
	\Q_n[\Xi] : = \int_{\R^2} \diff \xv \lf\{ \half \lf| \lf( \nabla - i \aavoo \ri) \Xi \ri|^2 + \gamma \Omega^2 W |\Xi|^2+ 4 \eps^{-2} f_n^2 |\Xi|^2 - \mu_n |\Xi|^2  \ri\} +	\\
	2 \eps^{-2} \Re \int_{\R^2} \diff \xv \: f_n^2 \Xi^2 \exp\lf\{-2 i n \vartheta \ri\},
}
where $ \Xi \in H^1(\R^2) $ and $ \mu_n $ is the chemical potential associated with $ f_n $, i.e., $ \mu_n = E_n + \eps^{-2} \| f_n \|_4^4 $. 

\begin{proof}[Proof of Theorem \ref{symmetry break: teo}]
	We assume that the symmetric vortex with degree $ \bar{n} $ is a global minimizer of the GP functional $ \gpfoo $ and then show that this yields a contradiction since one can find a function $ \Xi $ which makes the quadratic form $ \Q_{\bar{n}} $ negative.
	
	However we start by certain useful properties of $ f_{\bar{n}} $. Since we have assume that $ \gpm = f_{\bar{n}}(x) \exp\{i \bar{n} \vartheta\} $, the whole analysis contained in Section \ref{sec: GV GP} applies to $ f_{\bar{n}} $ too and a simple inspection shows that it can be extended to any $ \Omega \gtrsim \eps^{-4} $. In particular one can prove the estimates
	\beq
		\gpe \leq \OO(\Omega),		\hspace{1cm}	\lf\| f_{\bar{n}} \ri\|_{\infty}^{2} \leq \OO(\eps^2 \Omega),
		\hspace{1cm}
		\lf| f_{\bar{n}}(x) \ri|^2 \leq C \eps^{-2} \exp \lf\{- \Omega^{1/2} \lf|1 - x \ri| \ri\},
	\eeq
	for any $ \xv \in \R^2 $, which imply that $ f_{\bar{n}} $ is concentrated in the annulus 
	\beq
		\A_{\bar{n}} : = \lf\{ \xv \in \R^2 : \: 1 - \Omega^{-1/2}|\log\eps|^2 \leq x \leq 1 + \Omega^{1/2}|\log\eps|^2] \ri\},
	\eeq
	i.e., $ \| f_{\bar{n}} \|_{L^2(\A_{\bar{n}})} = 1 - o(1) $. The energy estimates \eqref{gvee asympt} and \eqref{gpe upbound} can as well be extended to angular velocities much larger than $ \eps^{-4} $, implying
	\bdm
 		\OO(\eps^{\infty}) \geq \half \int_{\A_{\bar{n}}} \diff \xv \lf( \bar{n} x^{-1} - \Omega x \ri)^2  f^2_{\bar{n}} \geq (1 - o(1)) \lf( \bar{n} - \Omega + \OO(\Omega^{1/2} |\log\eps|^2) \ri)^2,
	\edm
	which yields  $ \bar n = \Omega (1 + o(1)) $.

	Now we can introduce the test function which coincides with the one used in \cite[Proof of Theorem 1.6]{CPRY1}, i.e.,
	\beq
		\Xi(\xv) : = (A(x) + B(x)) e^{i(\bar n+d)\vartheta} + (A(x) - B(x)) e^{i(\bar n-d)\vartheta},
	\eeq
	\beq
		A(x) : = 
		\begin{cases}
			x^{d+1} f^{\prime}_{\bar{n}},	&	\mbox{for} \:\: 0 \leq x \leq \gvmaxe,	\\
			0,						&	\mbox{otherwise},
		\end{cases}
		\hspace{1cm}
		B(x) : = 
		\begin{cases}
			\bar n x^{d} f_{\bar{n}},		&	\mbox{for} \:\: 0 \leq x \leq \gvmaxe,	\\
			\bar n \gvmaxe^{d} f_{\bar{n}},	&	\mbox{otherwise}.
		\end{cases}
	\eeq
	A 
somewhat lengthy 
	computation yields
	\bml{
 		\Q_{\bar{n}}[\Xi] = 4 \pi \int_0^{\gvmaxe} \diff x \: x^{2d+2} f_{\bar{n}} f^{\prime}_{\bar{n}} \lf\{ 2(d+1) \mu_{\bar n} + 2 \Omega \bar n - (d+2) \Omega^2 x^2 - 2 \gamma (d+1) \Omega^2 W(x) +	\ri.	\\
		\lf. \gamma \Omega^2 x W^{\prime}(x) - 4 (d+1) \eps^{-2} f_{\bar n}^2 \ri\} + 2 \pi {\bar n}^2 d^2 \gvmaxe^{2d} \int_{\gvmaxe}^{\infty} \diff x \: x^{-1} f_{\bar n}^2.
	}
	Now using the facts
	\beq
		\bar n = \Omega (1 + o(1)), 	\hspace{1cm}	\gvmaxe = 1 + o(1),	\hspace{1cm} W(x) = o(1),	\hspace{1cm}	W^{\prime}(x) = o(1),
	\eeq
	for any $ \xv \in \A_{\bar{n}} $, as well as the monotonicity of $ f_{\bar n} $ for $ x \leq \gvmaxe $, we obtain
	\bml{
 		\Q_{\bar{n}}[\Xi] \leq  - 4 \pi \Omega^2 d (1 - o(1)) \int_0^{\gvmaxe} \diff x \: x^{2d+2} f_{\bar{n}} f^{\prime}_{\bar{n}} + C \Omega^2 d^2 \leq	\\
		- 2 \pi \Omega^2 d \lf[ (1 - o(1)) f^2_{\bar n}(\gvmaxe)  - C d \ri] < 0,
	}
	for $ \eps $ sufficiently small and any finite $ d > 0 $ since $  f^2_{\bar n}(\gvmaxe) = \OO(\eps^2\Omega) \gg 1 $.
\end{proof}

\bigskip

{\small {\bf Acknowledgements.} MC and NR acknowledge the hospitality of the Erwin Schr\"odinger Institute for Mathematical Physics in Vienna where part of this work was carried out. The work of NR was supported by the European Research Council under the European Community Seventh Framework Programme (FP7/2007-2013 Grant Agreement MNIQS no. 258023), the work of MC by the same programme under the Grant Agreement CoMBoS no. 239694.}

\end{document}